\newcommand{\st}{\ensuremath{.}}
\newcommand{\mytimes}{{\times}}
\newcommand{\allsymbol}{\ensuremath{\mathrm{A}}\xspace}
\newcommand{\anysymbol}{\ensuremath{\mathrm{E}}\xspace}
\newcommand{\xmodels}{\ensuremath{\models^\mathrm{X}}\xspace}
\newcommand{\allmodels}{\ensuremath{\models^\allsymbol}\xspace}
\newcommand{\anymodels}{\ensuremath{\models^\anysymbol}\xspace}
\newcommand{\xstable}{\ensuremath{\text{stable}^\mathrm{X}}\xspace}
\newcommand{\anystable}{\ensuremath{\text{stable}^\anysymbol}\xspace}
\newcommand{\allstable}{\ensuremath{\text{stable}^\allsymbol}\xspace}
\newcommand{\expequiv}{\ensuremath{\equiv}\xspace}
\newcommand{\xexpequiv}{\ensuremath{\equiv^\mathrm{X}}\xspace}
\newcommand{\anyexpequiv}{\ensuremath{\equiv^\anysymbol}}
\newcommand{\nmodels}{\ensuremath{\not\models}\xspace}
\newcommand{\nop}[1]{}
\newcommand{\mop}{\ensuremath{\mathsf{L}}\xspace}
\newcommand{\mopk}{\ensuremath{\mathsf{K}}\xspace}
\newcommand{\mopnot}{\ensuremath{\mathsf{not}}\xspace}
\newcommand{\limpl}{\ensuremath{\supset}}
\newcommand{\UNA}{\ensuremath{\mathrm{UNA}}\xspace}
\newcommand{\oga}{\ensuremath{\mathit{oga}}}
\newcommand{\ga}{\ensuremath{\mathit{ga}}}
\newcommand{\og}{\ensuremath{\mathit{og}}}
\newcommand{\PIA}{\ensuremath{\mathrm{PIA}}\xspace}
\newcommand{\lpor}{\mid}
\newcommand{\dnot}{\ensuremath{not\text{ }}}
\newcommand{\snot}{\ensuremath{\sim}\xspace}
\newcommand{\ainter}{\ensuremath{\langle\inter,T\rangle}\xspace}
\newcommand{\inter}{\ensuremath{w}\xspace}
\newcommand{\domain}{\ensuremath{U}\xspace}
\newcommand{\interfun}{\ensuremath{{\cdot^I}}\xspace}
\newcommand{\interfunj}{\ensuremath{{\cdot^{I_j}}}\xspace}
\newcommand{\funsym}{\ensuremath{I}}
\newcommand{\interdef}{\ensuremath{\inter = \langle \domain, \interfun
  \rangle}\xspace}
\newcommand{\lang}{\ensuremath{\mathcal{L}}\xspace}
\newcommand{\flang}{\ensuremath{\mathcal{L}}\xspace}
\newcommand{\signature}{\ensuremath{\Sigma}\xspace}
\newcommand{\fsymb}{\ensuremath{\mathcal{F}}\xspace}
\newcommand{\psymb}{\ensuremath{\mathcal{P}}\xspace}
\newcommand{\vsymb}{\ensuremath{\mathcal{V}}\xspace}
\newcommand{\gr}[1]{\ensuremath{gr(#1)}\xspace}
\newcommand{\lp}{\ensuremath{P}\xspace}
\newcommand{\fot}{\ensuremath{\Phi}\xspace}
\newcommand{\fmodl}{\ensuremath{\flang_{\mop}}\xspace}
\newcommand{\embop}{\ensuremath{\tau}\xspace}
\newcommand{\embvar}{\ensuremath{\chi}}
\newcommand{\emb}{\ensuremath{\embop_\embvar}\xspace}
\newcommand{\embhp}{\ensuremath{\embop_{\mathit{HP}}}\xspace}
\newcommand{\embeb}{\ensuremath{\embop_{\mathit{EB}}}\xspace}
\newcommand{\embeh}{\ensuremath{\embop_{\mathit{EH}}}\xspace}
\newcommand{\embdis}{\ensuremath{\embop_\embvar^\vee}\xspace}
\newcommand{\embgen}{\ensuremath{\embop_\embvar^{(\vee)}}\xspace}
\newcommand{\embhpdis}{\ensuremath{\embop_{HP}^\vee}\xspace}
\newcommand{\embebdis}{\ensuremath{\embop_{EB}^\vee}\xspace}
\newcommand{\embehdis}{\ensuremath{\embop_{EH}^\vee}\xspace}
\newcommand{\combop}{\ensuremath{\iota}\xspace}
\newcommand{\combvar}{\embvar\xspace}
\newcommand{\comb}{\ensuremath{\combop_\combvar}\xspace}
\newcommand{\combhp}{\ensuremath{\combop_{\mathit{HP}}}\xspace}
\newcommand{\combeb}{\ensuremath{\combop_{\mathit{EB}}}\xspace}
\newcommand{\combeh}{\ensuremath{\combop_{\mathit{EH}}}\xspace}
\newcommand{\combhpdis}{\ensuremath{\combop_{HP}^\vee}\xspace}
\newcommand{\combebdis}{\ensuremath{\combop_{EB}^\vee}\xspace}
\newcommand{\combehdis}{\ensuremath{\combop_{EH}^\vee}\xspace}
\newcommand{\combpar}{\ensuremath{(\fot,\lp)}\xspace}
\newcommand{\varsub}{\ensuremath{\beta}\xspace}
\newcommand{\varass}{\ensuremath{B}\xspace}
\newcommand{\names}{\ensuremath{\mathcal{N}}\xspace}
\newcommand{\shoin}{\ensuremath{\mathcal{SHOIN}}\xspace}
\newcommand{\shiq}{\ensuremath{\mathcal{SHIQ}}\xspace}
\newcommand{\sroiq}{\ensuremath{\mathcal{SROIQ}}\xspace}
\newcommand{\arbclassprg}{\ensuremath{\mathcal{LP}}\xspace}
\newcommand{\safeclass}{\ensuremath{\mathit{s}}\arbclassprg}
\newcommand{\groundclass}{\ensuremath{\mathit{g}}\arbclassprg}
\newcommand{\narbclassprg}{\ensuremath{\mathit{n}\mathcal{LP}}\xspace}
\newcommand{\nsafeclass}{\ensuremath{\mathit{sn}}\arbclassprg}
\newcommand{\ngroundclass}{\ensuremath{\mathit{gn}}\arbclassprg}
\newcommand{\arbclassth}{\ensuremath{\mathcal{F}\!\mathit{o}\hspace{-.2ex}\mathcal{L}}\xspace}
\newcommand{\uniclass}{\ensuremath{\mathcal{U}\hspace{-.2ex}\mathit{ni}}\xspace}
\newcommand{\ghornclass}{\ensuremath{\mathit{g}\mathcal{H}\hspace{-.2ex}\mathit{orn}}\xspace}
\newcommand{\hornclass}{\ensuremath{\mathcal{H}\hspace{-.2ex}\mathit{orn}}\xspace}
\newcommand{\propclass}{\ensuremath{\mathcal{P}\hspace{-.2ex}\mathit{rop}}\xspace}
\newcommand{\emptyclass}{\{\emptyset\}\xspace}
\newcommand{\dllog}{{\ensuremath{\mathcal{DL}{+}\mathit{log}}}\xspace}
\newcommand{\allog}{{\ensuremath{\mathcal{AL}\mbox{-}\mathit{log}}}\xspace}
\newcommand{\triple}[3]{\ensuremath{triple(\text{#1},} \ensuremath{\text{#2},} \ensuremath{\text{#3})}\xspace}
\newtheorem{theorem}{Theorem}[section]
\newtheorem{corollary}[theorem]{Corollary}
\newtheorem{proposition}[theorem]{Proposition}
\newtheorem{lemma}[theorem]{Lemma}
\title{Embedding Non-Ground Logic Programs into Autoepistemic Logic
for Knowledge Base Combination%
}
\begin{abstract}
  In the context of the Semantic Web, several approaches to the
  combination of ontologies, given in terms of theories of classical
  first-order logic and rule bases, have been proposed. They either
  cast rules into classical logic or limit the interaction between
  rules and ontologies. Autoepistemic logic (AEL) is an attractive
  formalism which allows to overcome these limitations, by serving as
  a uniform host language to embed ontologies and nonmonotonic logic
  programs into it. For the latter, so far only the propositional
  setting has been considered.  In this paper, we present three
  embeddings of normal and three embeddings of disjunctive non-ground
  logic programs under the stable model semantics into first-order
  AEL.  While the embeddings all correspond with respect to objective
  ground atoms, differences arise when considering non-atomic formulas
  and combinations with first-order theories.  We compare the
  embeddings with respect to stable expansions and autoepistemic
  consequences, considering the embeddings by themselves, as well as
  combinations with classical theories.  Our results reveal
  differences and correspondences of the embeddings and provide useful
  guidance in the choice of a particular embedding for knowledge
  combination.
\end{abstract}
\keywords{First-order autoepistemic logic, knowledge combination, ontologies,
rules, stable
model semantics}
\author{JOS DE BRUIJN\\
  Free University of Bozen-Bolzano
  \and
  THOMAS EITER and HANS TOMPITS\\
  Technische Universit\"{a}t Wien \and
  AXEL POLLERES\\
  Digital Enterprise Research Institute, National University of Ireland,
  Galway }
\begin{document}

\begin{bottomstuff}
A preliminary version of this article appeared in: \emph{Proc.\
20th International Joint Conference on Artificial Intelligence
{\rm (}IJCAI 2007{\rm )}}, AAAI Press, pp.\ 304--309.

Author's address: J.\ de Bruijn, Faculty of Computer Science, Free University
of Bozen-Bolzano, Piazza Domenicani 3, 39100 Bolzano, Italy; email: {\tt debruijn@inf.unibz.it}; T.\ Eiter and H.\ Tompits, Institut f\"ur
Informationssysteme 184/3, Technische Universit\"at Wien, Favoritenstra{\ss}e
9-11, A-1040 Vienna, Austria; email: {\tt
  \{eiter,tompits\}@kr.tuwien.ac.at}; A.\ Polleres, Digital Enterprise
Research Institute, National University of Ireland, Galway, IDA Business Park,
Lower Dangan Galway, Ireland; email: {\tt axel.polleres@deri.org}.

This work was partially funded by the European Commission projects Knowledge
Web (IST 507482), DIP (IST 507483), REWERSE (IST 506779), and ONTORULE (IST
231875), by the Austrian Science Fund~(FWF) projects P17212-N04 and P21840, by
the CICyT of Spain project TIC-2003-9001, and by the Science Foundation
Ireland under Grant No.~SFI/02/CE1/I131.
\end{bottomstuff}
\maketitle

\section{Introduction}

In the last years, significant effort has been devoted to bring
the vision of a Semantic Web closer to reality. Adopting a
layered architecture, a number of building blocks have been
proposed that serve different purposes, from low-level data
encoding to high-level semantic representation. In this
architecture, the building blocks for ontologies, rules, and query
languages play a prominent role. Furthermore, to ensure
interoperability and wide applicability, standard
representation formalisms are propagated by the World Wide Web
Consortium (W3C), including the Resource Description Framework
(RDF) \cite{Klyne+Carroll-ResoDescFram04,Hayes-Sema04}, the Web
Ontology Language (OWL)
\cite{Patel-Schneider+HayesETAL-OntoLangSemaAbst03,Motik+Patel-SchneiderETAL-OntoLangSpecFunc:09},
and the recent Rule Interchange Format Basic Logic Dialect (RIF
BLD)~\cite{Boley+Kifer-BasiLogiDial07}. In addition, the RIF
logical framework \cite{Kifer-rif08} lays the foundation for
Web rule languages extending RIF BLD with nonmonotonic
negation.

Each of these formalisms has a formal semantics, which is
either expressible in terms of classical logic or logic
programming
\cite{Bruijn-Heymans-LogiFoun-07,Horrocks+Patel-Schneiter-Redu03,Kifer-rif08}.
There is a need for combining these formalisms,
which is illustrated by the following simple example.

\begin{example}
  \label{ex:sw-intro} RDF is the basic data description language of
the Semantic Web. Atomic statements (\emph{triples}) in RDF are
of the form ({subject}, {predicate}, {object}).  Subjects and
objects, which may be resources on the Web (denoted by URIs),
may be  shared between triples, yielding a graph-based data
model. As  demonstrated by \citeN{Bruijn-Heymans-LogiFoun-07},
the semantics of  RDF \cite{Hayes-Sema04} can be captured in
first-order logic, in  terms of a formula where sets of such
triples are expressed as  conjunctions of facts. For purposes
of illustration and compatibility with description logics (DL),
we use here a simplified  notation for RDF triples where class
membership (rdf:type)  statements  are represented using unary
predicates, and all other statements  using  binary predicates.
For a full encoding of RDF in  first-order logic,
\citeN{Bruijn-Heymans-LogiFoun-07} use a ternary predicate
$triple$ to represent all RDF statements (see also
Section~\ref{sec:rdf}).

Consider a fictitious Web site \texttt{gangsterepics.com} that
publishes information about gangster movies:

{\small
\begin{tabbing}
\qquad $\exists x$.\=$title(\textrm{TheGodFather},\textrm{``The Godfather''}) \wedge$ \\
\> $title(\textrm{TheGodFather2},\textrm{``The Godfather: Part II''}) \wedge$\\
\> $title(\textrm{PulpFiction},\textrm{``Pulp Fiction''}) \wedge$ \=
 $director(\textrm{PulpFiction},\textrm{quentinTarantino}) \wedge$\\
\> $director(\textrm{TheGodFather},\textrm{$x$}) \wedge$  \>
 $director(\textrm{TheGodFather2},\textrm{$x$}) \wedge$\\
\> $mentionedAt(\textrm{PulpFiction},\textrm{{gangsterepics.com}})\wedge$\\
\> $mentionedAt(\textrm{TheGodFather},\textrm{{gangsterepics.com}})\wedge$\\
\> $mentionedAt(\textrm{TheGodFather2},\textrm{{gangsterepics.com}})$
\end{tabbing}
}
The creator of the page did not know the director of ``The
Godfather'' movies, but wanted to express the fact that both
parts had the same director.  To this end, he used an
existentially quantified variable, called ``blank node'' in
RDF. RDF Schema~\citeyear{rdfs} and OWL have the possibility to
express structural relations between predicates. For instance,
an OWL (DL)
ontology stating that all the movies
mentioned at \texttt{gangsterepics.com} are either epics or
gangster movies, and that $\mathit{director}$ is the inverse of
$\mathit{directorOf}$
can be expressed in terms of DL axioms
\[
\begin{array}{rl}
 \exists \mathit{mentionedAt}.\{\mbox{gangsterepics.com}\} & \sqsubseteq Epic \sqcup \mathit{GangsterMovie},\\
 \mathit{director}^- &\equiv \mathit{directorOf},
\end{array}
\]
which may be viewed as a set of first-order logic
sentences
\[
\begin{array}{r@{\,}l}
 \forall x.   \mathit{mentionedAt}(x,\textrm{{gangsterepics.com}}) \limpl &(\mathit{Epic}({x}) \vee \mathit{GangsterMovie}({x})),  \\
 \forall x,y.  \mathit{director}({x},{y}) \equiv & \mathit{directorOf}({y},{x}).
\end{array}
\]
Apart from classical first-order statements, it may be useful
to express nonmonotonic information, e.g., that any gangster
movie not mentioned on \url{gangsterepics.com} is an
independent movie.
For such nonmonotonic statements, logic
programming based rules languages that include negation are
better suited. That is, we may use the following rule:
\[
\begin{array}{ll}
\mathit{IndieMovie}({x}) \leftarrow \mathit{GangsterMovie}({x}), \dnot{{\mathit{mentionedAt}}({x},\textrm{{gangsterepics.com}})}.
\end{array}
\]
Consider now the following query, which asks for all directors, using the
$\mathit{directorOf}$ predicate, written using Datalog notation:
$$
\mathit{answer}(x)\leftarrow \mathit{directorOf}(y,x).
$$
Answering such a query essentially depends on how to interpret
and formally combine data (RDF), ontologies (OWL), and rule
bases. Given that each of these parts is expressible as either
a classical first-order theory or a logic program, the question
is how to combine logic programs and classical first-order
theories in a unifying logical framework.
\end{example}

A combination of this kind is not obvious, due to the very different setting
of classical logic and logic programming
\cite{bruijn06-repres-issues-about-combin-of,MotikHRS06}, and many proposals
for combination have been made (we review several of them in
Sections~\ref{sec:SW-application} and \ref{sec:related-work}). Like in the
previous example, an ontology in the form of a classical theory\footnote{While
  ontologies are not always first-order definable, for the purpose of this
  paper we confine ourselves to such ontologies.  We note that the Semantic
  Web ontology languages are first-order definable
  \cite{Sattler+CalvaneseETAL-RelawithOtheForm:03}.} and a logic program
should be viewed as complementary descriptions of the same domain. Therefore,
a separation between predicates defined in these two components should not be
enforced. Furthermore, it is desirable to neither restrict the interaction
between the classical and the rules components nor to impose syntactic or
semantic restrictions on the individual components. That is, the classical
component may be an arbitrary theory $\Phi$ of some first-order language with
equality, and the rules component may be an arbitrary non-ground normal or
disjunctive logic program $P$, interpreted using, e.g., the common stable
model
semantics~\cite{Gelfond+Lifschitz-StabModeSemaLogi:88,Gelfond+Lifschitz-ClasNegaLogiProg:91}.%
\footnote{For computational reasons, such restrictions (e.g., DL-safety) may
  be imposed, cf.\ Sections~\ref{sec:SW-application} and
  \ref{sec:related-work}.} The goal is a combined theory, $\combop(\Phi,P)$,
in a uniform logical formalism. Naturally, this theory should amount to $\Phi$
if $P$ is empty, and to $P$ if $\Phi$ is empty. Therefore, such a combination
must provide {\em faithful embeddings} $\sigma(\Phi)$ and $\tau(P)$ of $\Phi$
and $P$, respectively, such that $\sigma(\Phi) = \combop(\Phi,\emptyset)$ and
$\tau(P) = \combop(\emptyset,P)$. In turn, knowledge combination may be
carried out on top of $\sigma(\cdot)$ and $\tau(\cdot)$, where in the simplest
case one may choose $\combop(\Phi,P) = \sigma(\Phi)\cup \tau(P)$.

This raises the following questions: (a)~which uniform
formalism is suitable and (b)~which embeddings are suitable
and, furthermore, how do potentially suitable embeddings relate
to each other and behave under knowledge combination?

Concerning the first question,
\citeN{Motik+Rosati-FaitInteDescLogi:07} use a variant of
Lifschitz's bimodal nonmonotonic \emph{logic of minimal
knowledge and negation-as-failure}\/ (MKNF)
\cite{Lifschitz-NonmDataEpisQuer:91}. While the proposed
embeddings of the first-order (FO) theory and the logic program
are both faithful in the sense described above, the particular
combination proposed by Motik and Rosati is only one among many
possible methods and MKNF is only one possible underlying
formalism for such combinations (we discuss these issues in
more detail in Section~\ref{sec:discussion}). Indeed,
\citeN{Bruijn-PearceETAL-QuanEquiLogiHybr-07} use quantified
equilibrium logic (QEL) \cite{PV05}  as a host formalism.
Unlike Motik and Rosati, de Bruijn et al.~do not propose a new
semantics for combinations, but rather show that QEL can
capture the semantics of combinations by \citeN{rosati06} and
can be used, for example, to define notions of equivalence of
combinations.

\emph{Autoepistemic logic} (AEL) \cite{moore85:_seman}, which
extends classical logic with a single nonmonotonic modal belief
operator, being essentially the nonmonotonic variant of the
modal logic $ \mathbf{kd45} $
\cite{shvarts90:_autoep_modal_logic,Marek+Truszczynski-NonmLogi:93},
is an attractive candidate for serving as a uniform host
formalism for combinations. Compared to other well-known
nonmonotonic formalisms, like Reiter's default logic
\cite{reit-80}, FO-AEL offers a uniform language in which
(nonmonotonic) rules themselves can be expressed at the object
level. This conforms with the idea of treating an ontology and
a logic program together as a unified theory.  Furthermore, in
FO-AEL we can decide, depending on the context, whether (the
negation of) a particular atomic formula should be interpreted
nonmonotonically simply by including a modal operator.  This
enables us to use the same predicate in both a monotonic and a
nonmonotonic context.  This is in contrast to circumscription
\cite{McCarthy-ApplCircFormComm:86}, in which one has to
decide, for the entire theory, which predicates are to be
minimized.

Embedding a classical theory in AEL is trivial, and several
embeddings of logic programs in AEL have been described
\cite{Gelfond+Lifschitz-StabModeSemaLogi:88,marek93:_reflex_autoep_logic_logic_progr,lifschitz93:_exten,chen93:_minim,przymusinski91:_stabl_seman_disjun_progr}.
However, they all have been developed for the propositional
case only, whereas we need to deal with non-ground theories and
programs. This requires us to consider \emph{first-order
autoepistemic logic} (FO-AEL)
\cite{konolige91:_quant,kaminski02:_revis_quant_autoep_logic,levesque00:_logic_of_knowl_bases},
and non-ground versions of these embeddings. We consider the
semantics for FO-AEL as defined by \citeN{konolige91:_quant},
because it faithfully extends first-order logic with equality
(other variants are discussed in Section~\ref{sec:discussion}).

\medskip
\noindent Motivated by these issues, our contribution in this
paper is twofold:
\begin{longenum}
\item
We define several embeddings of non-ground logic programs into FO-AEL,
taking into account subtle issues of quantification in FO-AEL. In more
detail, we present three embeddings, $\embhp$, $\embeb$, and $\embeh$,
for normal logic programs which extend respective embeddings for the
propositional case
\cite{gelfond87:_strat_autoep_theor,Gelfond+Lifschitz-StabModeSemaLogi:88,marek93:_reflex_autoep_logic_logic_progr,chen93:_minim,Lifschitz-MiniBeliNegaFail:94},
and three embeddings, $\embhpdis$, $\embebdis$, and $\embehdis$, for
disjunctive logic programs, where $\embhpdis$ and $\embehdis$ extend
embeddings considered in the ground case
\cite{przymusinski91:_stabl_seman_disjun_progr,marek93:_reflex_autoep_logic_logic_progr}.
We show that all these embeddings are faithful in the sense that the
stable models of the logic program $\lp$ and the sets of objective
ground atoms in the stable expansions of the embeddings $\emb(\lp)$
($\embvar\in\{\mathit{HP},\mathit{EB},\mathit{EH}\}$) are in a one-to-one correspondence
(Theorem~\ref{cor:corr-ground-atomic}).  However, the embeddings
behave differently on formulas beyond ground atoms, in some cases
already for simple ground formulas. This, in turn, may impact the
behavior of the embeddings when used in combinations of logic
programs and classical theories.
This raises the question under which conditions the embeddings differ
and under which conditions they correspond.
Of particular interest for knowledge combination is
how these embeddings behave relatively to each other
in combinations with classical theories.

\item To answer these questions, we conduct two comparative studies
of the behavior of the various embeddings.  We consider three classes of
programs: ground, safe, and arbitrary
logic programs under the stable model semantics.

\begin{longenum}
\item We first determine correspondences between the stable expansions
  of different embeddings $\emb$ beyond ground atomic formulas
  (Propositions~\ref{prop:corr-ground-formulas}-\ref{prop:corr-ground-programs}),
  and present inclusion relations between the sets of consequences of
  the embeddings (Theorems~\ref{th:cons-og} and \ref{th:cons-oga}). These results already allow
  to draw a few conclusions on the behavior of embeddings in
  combinations.

\item We then determine correspondences between stable expansions for
  combinations of logic programs with classical theories. Here,
  we take the shape of
    the logic program, the shape of the classical theory, and the type
    of formulas of interest for the correspondence into
    account. To this end, we consider
  different fragments of classical logic that are important for
  knowledge representation, including Horn, universal, and generalized
  Horn theories.  The latter are of particular interest for
  ontologies, since they essentially include RDF Schema
  \cite{Bruijn-Heymans-LogiFoun-07}, Horn-\shiq
  \cite{Hustadt05datacomplexity}, and the OWL~2
  profiles QL, RL, and EL \cite{Motik+GrauETAL-OntoLang:08};
  furthermore, they essentially include also Tuple Generating
  Dependencies \cite{abiteboul}, which are a popular class of constraints in
  databases. Our main result for embeddings in combinations
  (Theorem~\ref{th:fo-ael-exp-correspondence}) gives a complete
  picture of the correspondences, which reveals that they behave
  differently in general, and shows the restrictions on the program or
  theory that give rise to correspondence.
\end{longenum}
\end{longenum}

The results of these studies not only deepen the understanding
of the individual embeddings, but also have practical
implications with respect to their use. They tell us in which
situations one embedding may be used instead of another.

Noticeably, the embeddings of logic programs we study
can be seen as building blocks for actual combinations of a classical
theory $\fot$ and a logic program $\lp$. The most straightforward
combination
is $ \combop(\fot, \lp)=\sigma(\fot)\cup \emb(\lp)$, where $
\sigma $ is the identity mapping and \emb is one of the
embeddings
we consider. One could also imagine adding axioms to, or
changing axioms in \fot; similarly, rules could be changed in,
or added to \lp before translating them (e.g., grounding rules
as customary in logic programming). If $ \fot' $ and $ \lp' $
are the thus obtained classical theory and logic program, our
results are still applicable to the combination $
\combop'(\fot, \lp)=\fot'\cup \emb(\lp')$. In fact, whenever
the combination is of the form $ \fot' \cup \emb(\lp')$,
regardless of \fot and \lp, the correspondences and differences
between the embeddings we establish hold. Furthermore, the
effect of different program rewritings $P'$ in combinations may
be assessed.

To illustrate the use of our results, we show applications to
the Semantic Web. More specifically, we show that the semantics
of existing combinations of ontologies and rules in this
context can be captured, and that via our correspondence result
properties of the semantics can be derived, as well as their
behavior in other (modified) combinations. Finally we show how
the embeddings we consider can be used to extend combinations
to richer languages, particularly extensions of rule languages
with nonmonotonic negation. However, while we focus here on the
Semantic Web, applications in other contexts (e.g., data
modeling languages like UML plus OCL) might be explored.


\medskip

\noindent The remainder of the paper is structured as follows.
We review the definitions of first-order logic and logic
programs in Section \ref{sec:preliminaries}.  We proceed to
describe first-order autoepistemic logic (FO-AEL) and present a
novel characterization of stable expansions for certain kinds
of theories in Section \ref{sec:first-order-auto}.  The
embeddings of normal and disjunctive logic programs and our
results about faithfulness of the embeddings are described in
Section \ref{sec:embedd-logic-progr}.  We investigate the
relationships between the embeddings themselves, and under
combination with first-order theories, in Sections
\ref{sec:prop-embedd} and \ref{sec:prop-comb}. We discuss applications to the
Semantic Web in Section~\ref{sec:SW-application} and further implications in
Section~\ref{sec:discussion}. We discuss related work in
Section~\ref{sec:related-work}, and conclude and outline future
work in Section \ref{sec:conclusions}.
%
Proofs of the results in Sections \ref{sec:prop-embedd}
and \ref{sec:prop-comb} can be found in the appendix.

\section{Preliminaries}
\label{sec:preliminaries}

Let us briefly recapitulate some basic elements of first-order
logic and logic programs as well as some relevant notation.

\subsection{First-Order Logic}
\label{sec:preliminaries-fol}

We consider first-order logic with equality.  A language $\flang$ is defined over a
signature $\signature=(\fsymb,\psymb)$, where \fsymb and \psymb are
countable sets of \emph{function} and \emph{predicate symbols},
respectively. Function symbols with arity $0$ are also called
\emph{constants}. Furthermore, \vsymb is a countably infinite set of
\emph{variables}. Terms and atomic formulas (atoms) are
constructed as usual. Ground
terms are also called \emph{names}; $\names_\signature$ denotes the set
of names of a given signature $\signature$. Complex formulas are
constructed as usual using the primitive symbols $\lnot$, $\land$,
 $\exists$,  `$($', and `$)$'.
As usual, $ \phi \lor \psi $ is short for $ \lnot(\lnot \phi \land \lnot \psi)$, $ \phi
\limpl \psi $ is short for $ \lnot \phi \lor \psi$, and $ \forall x \st \phi(x)$ is short
for $ \lnot \exists x \st \lnot \phi(x) $.
We sometimes write
$ t_1\neq t_2 $, where
$ t_1$ and $ t_2 $ are terms, as an abbreviation for $ \lnot (t_1 =t_2) $. The universal closure of a formula
$\phi$ is denoted by $(\forall) \;\phi$. $\flang_g$ is the restriction of
\flang to ground formulas and $\flang_{ga}$ is the restriction of
$\flang_{g}$ to atomic formulas. An \emph{FO theory} $\fot
\subseteq \flang$ is a set of
closed formulas,
i.e., every variable is bound by a quantifier.

An \emph{interpretation} of a language \flang~is a tuple \interdef,
where $U$ is a nonempty set, called the \emph{domain}, and \interfun
is a mapping which assigns to
every $n$-ary function symbol $f \in \fsymb$  a function $f^I:U^n \rightarrow U$ and to every $n$-ary predicate symbol $p \in \psymb$ a relation $p^I
\subseteq U^n$.  A
\emph{variable assignment} $B$ for \inter is a mapping that assigns to every
variable $x \in \vsymb$
an element $x^B \in U$.  A
  variable assignment $B'$ is an \emph{$x$-variant} of $B$ if $y^B =
  y^{B'}$ for every variable $y \in \vsymb$ such that $y \neq x$.
 The interpretation of a term $t$, denoted $t^{w,B}$, is defined as
usual; if $t$ is ground, we sometimes write $t^w$.

We call an individual $k$ \emph{named} if there is some name $t
\in \names$ such that $t^w = k$, and \emph{unnamed} otherwise.
Interpretations are \emph{named} if all individuals are named.
We say that the \emph{unique names assumption} applies to an
interpretation if all names are interpreted distinctly, and we
say that the \emph{standard names assumption} applies if, in
addition,
the interpretation is named.%
\footnote{We note here that the term ``standard names assumption'' is used with
various slightly different meanings in the literature; see
Section~\ref{sec:discussion} for further discussion.}

A \emph{name substitution} $\varsub$ is a partial function that
assigns variables in $\vsymb$ names from $\names$; we also
write $x/\varsub(x)$ for $(x,\varsub(x))$. As usual, $\varsub$
is \emph{total} if its domain is $\vsymb$.  Given a variable
assignment $\varass$ for an interpretation \inter, we define
the set of named variables in \varass as
$V^{\inter,\varass}_\names=\{x\mid x^B\text{ is named}\}$.  A
substitution \varsub is \emph{associated with} \varass if its
domain is $V^{\inter,B}_\names$ and $x^\varass =
\varsub(x)^{\inter}$, for each $x\in V^{\inter,B}_\names$.
%
The \emph{application} of a
name substitution $\varsub$ to some term, formula, or theory $ \chi $,
denoted by $
\chi\varsub$, is defined as syntactical
replacement, as usual. Clearly, if the unique names assumption
applies, each variable assignment has a unique associated
substitution; if the standard names assumption applies, each
associated substitution is total.

\begin{example}
  \label{ex:fo-equality}
  Consider a language \flang~with constants $\fsymb = \{a,$ $b,c\}$, and
  an interpretation $\interdef$ with $U = \{k,l,m\}$
  such that $a^w = k$, $b^w = l$, and $c^w = l$, and the variable
  assignment \varass: $x^B = k$, $y^B = l$, and $z^B =
  m$.  \varass has two associated name substitutions, $\varsub_{1}
  = \{x/a, y/b\}$ and $\varsub_{2} = \{x/a,y/c\}$, which are both not total.
\end{example}

\subsection{Logic Programs}
\label{sec:preliminaries-programs}

A \emph{disjunctive logic program} \lp~consists of rules of the form
\begin{equation}
\label{eq:rule} h_1 \lpor \dots \lpor h_l\ \leftarrow\ b_1,\ \dots,\ b_m,\
\dnot{c_{1}},\ \dots,\ \dnot{c_n}
\end{equation}
 where
$h_1,\dots,h_l,b_1,\dots,b_m,c_1,\dots,c_n$ are equality-free
atoms, with $ m, n \geq 0 $ and $ l \geq 1$. $H(r) = \{h_1,\dots,h_l\}$
is the set of \emph{head atoms}
of $r$, $B^+(r) = \{b_1,\dots,b_m\}$ is the set of \emph{positive
body atoms} of $r$, and $B^-(r)=\{c_{1},\dots,c_n\}$ is the set of
\emph{negated body atoms} of $r$. If $l = 1$, then $r$ is
\emph{normal}. If $B^-(r) = \emptyset$, then $r$ is \emph{positive}.
If every variable in $r$ occurs in $B^+(r)$, then $r$  is
\emph{safe}. If every rule $r \in \lp$ is normal (resp., positive, safe), then
\lp is normal (resp., positive, safe).

Each program \lp has a signature $\Sigma_\lp$, which contains
the function and predicate symbols that occur in \lp.
We assume that $\signature_\lp$ contains some
$0$-ary function symbol if it has predicate symbols of arity greater than $0$.
With $\flang_\lp$ we denote the first-order language over
$\signature_\lp$.  As
usual, \emph{Herbrand
interpretations} $M$ of \lp are subsets of the set of ground
atoms of $\flang_\lp$.

The \emph{grounding} of a logic program \lp, denoted $\gr{\lp}$, is
the union of all possible ground instantiations of \lp, obtained by
replacing each variable in a rule $r$ with a name in $\names_{\signature_\lp}$,
for each rule $r
\in \lp$.

Let \lp~be a positive program. A Herbrand interpretation $M$ of
\lp~is a \emph{Herbrand model} of \lp if, for every rule $r \in \gr{\lp}$,
$B^+(r) \subseteq M$ implies $H(r) \cap M \not= \emptyset$, and,
for every $ t \in \names_{\signature_\lp} $, $ t=t \in M$.
A Herbrand model $M$ is \emph{minimal} iff for every model $M'$ such
that $M' \subseteq M$, $M' = M$.

Following \citeN{Gelfond+Lifschitz-ClasNegaLogiProg:91}, the
\emph{reduct} of a logic program \lp~with respect to an
interpretation $M$, denoted $\lp^M$, is obtained from $\gr{\lp}$~by
deleting (i) each rule $r$ with $B^-(r) \cap M \neq \emptyset$ and
(ii) $\dnot c$ from the body of every remaining rule $r$ with $c \in
B^-(r)$. If $M$ is a minimal Herbrand model of $\lp^M$, then $M$ is
a \emph{stable model} of $\lp$.

\begin{example}
  Consider the program
$$\lp=\{p(a); \ p(b); \ q(x) \mid r(x) \leftarrow p(x), \dnot s(x)\}$$
and the interpretation $M_1=\{p(a),p(b),q(a),r(b)\}$.\footnote{For brevity,
we leave out equality atoms in the example.} The reduct
$$\lp^{M_1} = \{p(a); p(b); q(a) \mid r(a) \leftarrow p(a);
q(b) \mid r(b) \leftarrow p(b)\}$$
has $M_1$ as a minimal model, thus $M_1$ is a stable model of \lp.  The other stable models
of \lp are $M_2=\{p(a),$ $p(b),$ $q(a),$ $q(b)\}$, $M_3=\{p(a),$
$p(b),$ $q(b),$ $r(a)\}$, and $M_4=\{p(a),$ $p(b),$ $r(a),$ $r(b)\}$.
\end{example}


\section{First-order Autoepistemic Logic}
\label{sec:first-order-auto}

We adopt first-order autoepistemic logic (FO-AEL)
under the any- and all-name semantics of
\citeN{konolige91:_quant}.
These
semantics
allow quantification over arbitrary domains
and generalize classical first-order logic with equality, thereby
allowing a trivial embedding of first-order theories (with equality). Other
approaches like those by
\citeN{kaminski02:_revis_quant_autoep_logic} or \citeN{levesque00:_logic_of_knowl_bases}
require interpretations to follow the unique or standard names assumptions
and therefore do not allow such direct embeddings.

An {\em FO-AEL language \fmodl} is defined relative to a first-order
language \flang by allowing the unary modal operator \mop
in the construction of formulas---$\mop\phi$ is usually read as  ``$\phi$ is known'' or ``$\phi$ is believed''.
As usual, closed formulas, i.e., formulas without free variable occurrences, are
called \emph{sentences}; formulas of the form $\mop\phi$, where $\phi$ is a formula, are {\em modal atoms}; and $\mop$-free
formulas are \emph{objective}. \emph{Standard autoepistemic logic} is variable-free
FO-AEL.

To distinguish between semantic notions defined for the any-  resp.\ all-name semantics, we use the symbols \anysymbol (``\emph{E}xistence of name'') and \allsymbol (``for \emph{A}ll names'').

An \emph{autoepistemic interpretation} is a pair $\langle
\inter,\Gamma \rangle$, where \interdef is a first-order
interpretation and $\Gamma \subseteq \fmodl$ is a set of sentences,
called a \emph{belief set}. Satisfaction of a formula $\mop \phi$ in an interpretation $\langle
\inter,\Gamma\rangle$ with respect to a variable assignment \varass under
the \emph{any-name semantics}, denoted $(\inter,B) \anymodels_{\Gamma} \mop \phi$, is
defined as
\begin{itemize}
 \item[]
$(\inter,B) \anymodels_{\Gamma} \mop \phi$ iff,
for \emph{some}  name
      substitution $\varsub$ associated with $B$, $\phi\varsub$ is closed
and  $\phi\varsub \in \Gamma$.
\end{itemize}
Satisfaction of arbitrary formulas is then as follows, where $\phi ,\psi \in\fmodl$:
\begin{itemize}
\item $(\inter,B) \anymodels_{\Gamma} p(t_1,\dots,t_n)$ iff
$(t_1^{\inter,B},\dots,t_n^{\inter,B}) \in p^I$;

\item $(\inter,B) \anymodels_{\Gamma} t_1=t_2$ iff $t_1^{\inter,B} = t_2^{\inter,B}$;

\item $(\inter,B) \anymodels_{\Gamma} \lnot \phi$ iff $(\inter,B)
      \not\anymodels_{\Gamma} \phi$;

\item $(\inter,B) \anymodels_{\Gamma} \phi \land \psi$ iff $(\inter,B) \anymodels_{\Gamma} \phi $ and $(\inter,B) \anymodels_{\Gamma} \psi$;

\item $(\inter,B) \anymodels_{\Gamma} \exists x\st \phi$ iff for some
      $x$-variant $B'$ of $B$, $(\inter,B') \anymodels_{\Gamma} \phi$.
\end{itemize}
An interpretation $\langle \inter,\Gamma \rangle$ is a \emph{model}
                of $\phi$, denoted $\inter \anymodels_{\Gamma} \phi$,
if $(\inter,B) \anymodels_{\Gamma} \phi$
for every variable assignment $B$ for $\inter$. This extends to
sets of formulas in the usual way. A set of formulas $\fot \subseteq \fmodl$
\emph{entails} a formula $\phi\in \fmodl$  with respect to a belief set $\Gamma$,
denoted $\fot \anymodels_\Gamma \phi$, if for every interpretation $\inter$
such that $\inter \anymodels_\Gamma \fot$, $\inter \anymodels_\Gamma
\phi$.

The notions of satisfaction and entailment under the {\em all-name
semantics}, for which we use the symbol $\allmodels_{\Gamma}$, are analogous, with
the only difference that satisfaction of modal atoms is defined as
\begin{itemize}
\item[] $(\inter,B) \allmodels_{\Gamma} \mop \phi$ iff, for \emph{all} name
 substitutions  $\varsub$ associated with $B$, $\phi\varsub$ is closed
and    $\phi\varsub \in \Gamma$.
\end{itemize}
Note that the any- and all-name semantics always coincide for objective
formulas and, if the unique (or standard) names assumption applies,
also for arbitrary formulas in $\fmodl$; this was also observed by
\citeN{kaminski02:_revis_quant_autoep_logic}.
In such situations, i.e., where both semantics coincide, we sometimes
use $\models_\Gamma$ rather than $\anymodels_\Gamma$ or $\allmodels_\Gamma$.
Furthermore, when talking about entailment $\fot \models_\Gamma \phi$ under the standard names assumption, we mean entailment considering only interpretations for which the
standard names assumption holds. That is,  $\fot \models_\Gamma \phi$ under the
standard names assumption if for every
interpretation \inter such that the standard names assumption applies in \inter and $
\inter \models_\Gamma \fot $, $ \inter \models_\Gamma \phi $.

\begin{example}
  \label{ex:any-name}
  Consider the formula $\phi = \forall x(p(x) \limpl \mop
  p(x))$ and some interpretation $\langle \inter,\Gamma\rangle$.
Then, $\inter
  \anymodels_\Gamma \phi$ iff, for every variable assignment \varass, $(\inter,B)
  \anymodels_{\Gamma} p(x) \limpl \mop p(x)$, which in turn holds iff
   $(\inter,B)\not\anymodels_{\Gamma}
  p(x)$ or $(\inter,B)\anymodels_{\Gamma} \mop p(x)$.
  Now,
  $(\inter,B)\anymodels_{\Gamma} \mop p(x)$, with $x^B = k$, iff, for some $t \in
  \names_\signature$, $t^\inter = k$, and $p(t) \in \Gamma$. Thus, $\phi$ is
false (unsatisfied) in
  any interpretation where $p^I$ contains unnamed individuals. Analogous
  for the all-name semantics.
\end{example}
The following example illustrates the difference between the any- and all-name semantics.

\begin{example}
\label{ex:any-all} Consider a language with constant symbols $a,b$
and unary predicate symbol $p$, and an interpretation $\langle
\inter,\Gamma \rangle$ with $\inter = \langle
\{k\},\interfun\rangle$ and $\Gamma = \{p(a)\}$. Then,
$\inter \anymodels_\Gamma \exists x. \mop p(x)$, while
$\inter \not\allmodels_\Gamma \exists x. \mop p(x)$,
since $b^\inter =a^\inter = k$ but $p(b) \notin \Gamma$.
\end{example}

A stable expansion is a set of beliefs of an ideally
introspective agent (i.e., an agent with perfect reasoning
capabilities and with knowledge about its own beliefs), given
some theory $ \fot \subseteq \fmodl $. Formally, a belief set
$T \subseteq \fmodl$ is a \emph{\anystable expansion} of a
theory $\fot \subseteq \fmodl$ iff $T = \{\phi \mid \fot
\anymodels_T \phi\}\text{.}$ Similarly, $T$ is a
\emph{\allstable expansion} of $\fot$ iff $T = \{\phi \mid \fot
\allmodels_T \phi\}\text{.}$

Recall that $\flang_g $ and $ \flang_{ga}$
denote the restrictions of $\flang $ to ground and ground  atomic
formulas, respectively. Given a set of sentences $ \Gamma \subseteq \fmodl $, $\Gamma_o$, $\Gamma_{og}$, and $\Gamma_{\oga}$ denote the
restrictions of $\Gamma $ to objective, objective ground, and objective
ground atomic formulas, respectively, i.e.,~$\Gamma_o=\Gamma\cap \flang$,
$\Gamma_{og}=\Gamma\cap\flang_g$, and $\Gamma_{\oga}=\Gamma\cap\flang_{ga}$.

Every stable expansion $T$ of \fot is a \emph{stable set} \cite{stalnaker:93}, which means that it
satisfies the following conditions: (a) $T$ is closed under
first-order entailment, (b) if $\phi \in T$ then $\mop \phi \in T$,
and (c) if $\phi \notin T$ then $\lnot\mop \phi \in T$. Furthermore,
if $T$ is consistent, the converse statements of (b) and (c) hold.

\citeN{konolige91:_quant} shows that a stable expansion $T$
of a theory $ \fot \subseteq \fmodl $ is determined by its objective subset $T_o$,
also called the \emph{kernel} of $T$. He further obtained the following
result:

\begin{proposition}[\cite{konolige91:_quant}]
\label{prop:no-nesting-expansions}
Let $\fot \subseteq \fmodl$ be a theory without nested
modal operators, $\Gamma\subseteq\flang$ a set of objective formulas,
and $\mathrm{X}\in \{\mathrm{E},\mathrm{A}\}$.  Then, $\Gamma = \{
\phi \in \flang\mid \fot
\xmodels_{\Gamma} \phi\}$ iff $\Gamma=T_o$, for some
\xstable expansion $T$ of \fot.
\end{proposition}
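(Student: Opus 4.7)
The plan is to reduce the biconditional to two observations: (i)~for a theory $\fot$ without nested modal operators, every modal atom occurring in $\fot$ has the form $\mop\psi$ with $\psi$ objective, so satisfaction of $\fot$ in $(\inter,B)$ with respect to a belief set depends only on the objective part of that belief set; and (ii)~a stable set $T$ is uniquely determined by its kernel $T_o$, since conditions (b) and (c) (together with their converses, which hold for consistent stable sets, and first-order closure) recursively fix membership of modal formulas from the objective layer upwards.

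For the forward direction, suppose $T$ is an \xstable expansion of $\fot$, and put $\Gamma = T_o$. Intersecting the defining equation $T=\{\phi \mid \fot\xmodels_T \phi\}$ with $\flang$ gives $\Gamma = \{\phi\in\flang \mid \fot \xmodels_T \phi\}$. By observation~(i), satisfaction of each $\psi\in\fot$ in an interpretation depends on the belief set only through its objective part; and for the objective conclusion $\phi$, satisfaction is independent of the belief set altogether. Hence $\fot\xmodels_T\phi$ iff $\fot\xmodels_\Gamma\phi$ for every $\phi\in\flang$, giving $\Gamma=\{\phi\in\flang\mid \fot\xmodels_\Gamma\phi\}$.

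For the backward direction, assume $\Gamma$ satisfies the kernel equation. Define $T$ to be the unique stable set having $T_o=\Gamma$ (its existence and uniqueness follow from~(ii)). I would then verify that $T$ is an \xstable expansion, i.e., $T=\{\phi\mid \fot\xmodels_T\phi\}$. For objective $\phi$, observation~(i) and the kernel equation yield $\fot\xmodels_T\phi$ iff $\fot\xmodels_\Gamma\phi$ iff $\phi\in\Gamma=T_o$. For formulas with modal operators, I would argue by induction on modal depth, using the stable-set conditions to characterize membership in $T$ and using~(i) to reduce entailment questions of the form $\fot\xmodels_T\chi$ to what happens at lower modal depth in $T$, ultimately to the objective case already handled.

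The main obstacle is this last verification: while the reduction of $\fot\xmodels_T$ to $\fot\xmodels_\Gamma$ is immediate for objective conclusions via~(i), for genuinely modal conclusions $\phi$ the two semantics need not agree, and one must propagate the equivalence inductively along the structure of $\phi$ by exploiting that $T$ satisfies (b), (c) and their converses---so that the truth of $\mop\psi$ under $T$ tracks exactly membership of appropriate ground instances of $\psi$ in $T$, which in turn is determined by $\Gamma$ through~(ii). Once this propagation is established, the equation $T=\{\phi \mid \fot\xmodels_T\phi\}$ follows and $T$ is the desired \xstable expansion with $T_o=\Gamma$.
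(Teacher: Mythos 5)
Your forward direction is sound: since $\fot$ has no nested modal operators, every modal atom occurring in $\fot$ is of the form $\mop\psi$ with $\psi$ objective, so the models of $\fot$ relative to $T$ and relative to $\Gamma=T_o$ coincide, and objective conclusions do not consult the belief set at all; intersecting the fixed-point equation $T=\{\phi\mid\fot\xmodels_T\phi\}$ with $\flang$ then yields the kernel equation. (Note that the paper gives no proof of this proposition---it is imported from Konolige---so the comparison here is against what a correct argument must contain, not against a proof in the text.)

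The backward direction, however, rests on a premise that is false in FO-AEL. Your observation (ii)---that a stable set is uniquely determined by its kernel via conditions (b), (c), their converses, and first-order closure---is a propositional-AEL fact that does not lift to the first-order setting with quantifying-in. The paper points this out immediately after the proposition: the stable expansions $T$ of $\{\forall x\st p(x)\}$ and $T'$ of $\{\forall x\st \mop p(x)\}$ are two distinct stable sets with $T_o=T'_o$, differing on $\forall x\st\mop p(x)$. The reason is that (b) and (c) only fix membership of formulas of the form $\mop\phi$ and $\lnot\mop\phi$ for sentences $\phi$; they say nothing about quantified-in formulas such as $\forall x\st\mop p(x)$ or $\exists x\st\mop p(x)$, whose membership is not recoverable from the objective layer. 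Consequently ``the unique stable set with $T_o=\Gamma$'' is not well defined, and even picking \emph{some} stable set with that kernel does not suffice: whether $\forall x\st\mop p(x)$ belongs to the intended expansion depends on $\fot$, not on $\Gamma$ alone, so a wrong choice need not be a stable expansion of $\fot$ at all. The repair is to build $T$ from $\fot$ rather than from the stable-set axioms: define $T=\{\phi\in\fmodl\mid\fot\xmodels_T\phi\}$ and show this is well defined by stratifying on modal depth---satisfaction of a formula of modal depth $n{+}1$ relative to $T$ consults only the depth-$\le n$ part of $T$, and at depth $0$ the kernel equation together with your observation (i) pins the objective layer to exactly $\Gamma$. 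Your closing paragraph gestures at such an induction, but anchoring it to the stable-set closure conditions is precisely where it breaks.
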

We slightly adapt this result as follows:

 \begin{proposition}
 \label{prop:atoms-expansions}
Let $\fot \subseteq \fmodl$ be a theory with only objective
 atomic formulas in the scope of occurrences of $\mop$,
$\Gamma\subseteq\flang$ a set of objective formulas,
and $\mathrm{X}\in \{\mathrm{E},\mathrm{A}\}$.
 Then, $\Gamma = \{\phi \in \flang \mid \fot \xmodels_{\Gamma_{\oga}}\phi\}$
 iff $\Gamma=T_o$, for some \xstable expansion $T$ of \fot.
 \end{proposition}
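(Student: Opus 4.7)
The plan is to reduce Proposition~\ref{prop:atoms-expansions} to the already cited Proposition~\ref{prop:no-nesting-expansions} of Konolige, by exploiting the stronger hypothesis on the shape of $\fot$. Since $\fot$ contains $\mop$ only applied to objective atomic formulas, it certainly has no nested modal operators, so Konolige's result applies and gives:
\[
\Gamma = \{\phi \in \flang \mid \fot \xmodels_{\Gamma} \phi\} \iff \Gamma = T_o \text{ for some \xstable expansion } T \text{ of } \fot.
\]
Thus it suffices to show that, for any objective $\Gamma \subseteq \flang$ and any $\phi \in \flang$,
\[
\fot \xmodels_{\Gamma} \phi \iff \fot \xmodels_{\Gamma_{\oga}} \phi.
\]

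First I would establish a pointwise version of this equivalence at the level of modal atoms occurring in $\fot$. Let $\mop\psi$ be any subformula of a formula in $\fot$; by hypothesis $\psi$ is an objective atom. For any interpretation $\inter$, variable assignment $B$, and name substitution $\varsub$ associated with $B$, the instance $\psi\varsub$ is either open (in which case it is irrelevant to the semantic clause) or an objective ground atom, hence $\psi\varsub \in \flang_{ga}$. Therefore $\psi\varsub \in \Gamma$ iff $\psi\varsub \in \Gamma \cap \flang_{ga} = \Gamma_{\oga}$. Applying the semantic clause for $\mop$ under either the any- or the all-name reading then gives $(\inter,B) \xmodels_\Gamma \mop\psi$ iff $(\inter,B) \xmodels_{\Gamma_{\oga}} \mop\psi$.

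Next I would lift this equivalence to all formulas in $\fot$ by a straightforward induction on formula structure: the atomic, Boolean, and quantifier clauses of $\xmodels$ do not mention $\Gamma$ at all, so they depend on $\Gamma$ only through the modal atoms they contain, which have just been handled. Consequently, for every $\inter$ we have $\inter \xmodels_\Gamma \fot$ iff $\inter \xmodels_{\Gamma_{\oga}} \fot$, and the same equivalence holds for the conclusion $\phi$ (in fact trivially, since $\phi$ is objective, so neither $\Gamma$ nor $\Gamma_{\oga}$ plays a role on that side). This yields $\fot \xmodels_\Gamma \phi \iff \fot \xmodels_{\Gamma_{\oga}} \phi$, and plugging this into Proposition~\ref{prop:no-nesting-expansions} completes the proof.

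The main obstacle, though a minor one, is to make sure the reduction to $\Gamma_{\oga}$ is correct under \emph{both} the any- and the all-name semantics uniformly. The care needed is only that open instances $\psi\varsub$ are excluded by the semantic clause itself, so the restriction to ground atoms is legitimate; after this observation the argument is essentially bookkeeping on top of Konolige's theorem.
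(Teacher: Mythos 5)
Your proposal is correct and follows essentially the same route as the paper: both reduce the statement to Konolige's Proposition~\ref{prop:no-nesting-expansions} by observing that, since $\mop$ in $\fot$ scopes only over objective atoms, every closed instance $\psi\varsub$ relevant to the satisfaction clause is an objective ground atom, so $\fot\xmodels_{\Gamma}\phi$ iff $\fot\xmodels_{\Gamma_{\oga}}\phi$. Your write-up merely spells out in more detail (the case split on open instances and the induction on formula structure) what the paper asserts in one sentence.
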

  \begin{proof}
Since modal atoms in \fot contain only objective atomic formulas,
we obtain $\fot \xmodels_{\Gamma_o} \phi$ iff $\fot
\xmodels_{\Gamma_{\oga}} \phi$, because, by the definition of
satisfaction of modal formulas, non-ground and non-atomic formulas
in $\Gamma_o$ do not affect satisfaction of formulas in $\fot$. Thus,$\{
\phi \in \flang\mid \fot \xmodels_{\Gamma_{\oga}} \phi\}=\{ \phi \in
\flang\mid \fot \xmodels_{\Gamma_o} \phi\}$ follows.

Since there is no nesting of modal operators in $\fot $,
we
combine this result with Proposition \ref{prop:no-nesting-expansions} to
obtain $\Gamma_o = \{ \phi \in \flang\mid \fot \xmodels_{\Gamma_o}
\phi\}=\{ \phi \in \flang\mid \fot \xmodels_{\Gamma_{\oga}} \phi\}$ iff
$\Gamma_o=T\cap\flang$ is the kernel of a \xstable expansion $T$ of
$\fot$.%
%
  \end{proof}

We note here that, unlike in standard autoepistemic logic, in FO-AEL two different stable expansions may have the same objective subsets, both under
the any- and all-name semantics.  Consider, for example, the theories $
\fot=\{\forall x \st p(x) \} $ and $ \fot'=\{\forall x \st \mop p(x)\}$ and
their respective stable expansions $ T $ and $ T' $.  We have that $ T_o=T'_o $
is the closure under first-order entailment of $ \{\forall x \st p(x)\} $,
but we also have that $\forall x \st \mop p(x) \in T' $ but $ \forall x \st \mop p(x) \notin T $,
because $\forall x \st \mop p(x) $ is not satisfied in any interpretation
that has unnamed individuals.

\section{Embedding Non-Ground Logic Programs}
\label{sec:embedd-logic-progr}

We define an embedding $ \tau $ as a function that takes a logic program \lp
as its argument and returns a set of sentences in the FO-AEL
language obtained from $\signature_\lp$.

\citeN{janhunen99:_inter_non_logic} studied translations between nonmonotonic formalisms and formulated a number of desiderata for such translation functions, namely faithfulness, polynomiality, and modularity (FPM). We adapt these notions to our case of embedding logic programs into FO-AEL.

An embedding $ \tau $ is \emph{faithful} if, for any logic program $\lp$, there is a one-to-one
correspondence between the stable models of $\lp$ and the consistent stable expansions of $\tau(\lp)$
with respect to 
ground atomic formulas.

An embedding $\tau$ is \emph{polynomial} if, for
  any  logic program  $\lp$,
$\tau(\lp)$ can be computed in time polynomial in the size of $ \lp $.

An embedding $\tau$ is \emph{modular} if, for
  any two logic programs  $\lp_1$ and $\lp_2$,
$\tau(\lp_1 \cup
  \lp_2) = \tau(\lp_1) \cup
  \tau(\lp_2)$.
Furthermore, we call $\tau$ \emph{signature-modular} if, for any two logic programs
$\lp_1$ and $\lp_2$ with the same signature
\signature, $\tau(\lp_1 \cup \lp_2) = \tau(\lp_1)
\cup \tau(\lp_2)$.

Since the unique names assumption does not hold in FO-AEL in
general, it is necessary to axiomatize default uniqueness of names
(as introduced by \citeN{konolige91:_quant}) to assure
faithfulness of several of the embeddings. Given a signature \signature,
by $\UNA_\signature$ we
denote the set of axioms
\begin{description}
\item[\UNA] $\quad\lnot \mop(t_1=t_2) \limpl t_1\neq t_2\text{,}
 \quad\text{for all
  distinct
 }
t_1,t_2 \in \names_{\signature}\text{.}$
\end{description}
Default uniqueness, in contrast to
rigid uniqueness (i.e., $\UNA$ axioms of the form $t_1\neq t_2$),
allows first-order theories that are later combined with the
embedding to ``override'' such inequalities, rather than introducing
inconsistency. For example, the theory $\fot= \{\lnot \mop(a=b) \limpl a\neq b \} $
 has a single expansion that includes $a \neq b $; the single expansion of
  $ \fot \cup \{a=b \} $ is consistent and
 includes $ a=b $
 .

Observe that the $\UNA$ axioms depend on the signature. In addition, the union of the $\UNA$
axioms of two signatures is not necessarily the same as the set of $\UNA$ axioms of the
union of these two signatures: given two signatures $ \signature_1 $ and $ \signature_2
$ such that $ \fsymb_1\neq \fsymb_2 $, $ \UNA_{\signature_1} \cup \UNA_{\signature_2}
\neq \UNA_{\signature_1\cup \signature_2} $, i.e., the $\UNA$ axioms corresponding to
different signatures cannot be combined in a modular fashion.  This means that
embeddings that include such $\UNA$ signatures are not modular, but may be
signature-modular.

We first present the embeddings of normal programs and then proceed with the embeddings of disjunctive programs.

\subsection{Embedding Normal Logic Programs}

We consider three embeddings of non-ground logic programs into FO-AEL, denoted $\embhp$,
$\embeb$, and $\embeh$. ``$\mathit{HP}$'' stands for
``$\mathit{H}$orn for $\mathit{P}$ositive rules'' (positive rules
are translated to objective Horn clauses); ``$\mathit{EB}$'' stands
for ``$\mathit{E}$pistemic rule $\mathit{B}$odies'' (the body of a
rule can only become true if it is \emph{known} to be true); and
``$\mathit{EH}$'' stands for ``$\mathit{E}$pistemic rule
$\mathit{H}$eads'' (if the body of a rule is true, the head is
\emph{known} to be true).

The $\mathit{HP}$ embedding is an extension of the one which
originally led Gelfond and Lifschitz to the definition of the stable
model semantics
\cite{gelfond87:_strat_autoep_theor,Gelfond+Lifschitz-StabModeSemaLogi:88}.
The
$\mathit{EB}$ and $\mathit{EH}$ embeddings are extensions of embeddings by
\citeN{marek93:_reflex_autoep_logic_logic_progr}. The $\mathit{EH}$ embedding was
independently described by \citeN{lifschitz93:_exten}
and
by \citeN{chen93:_minim}. The original motivation for the $\mathit{EB}$ and $\mathit{EH}$ embeddings was the possibility to directly embed programs with
strong negation and disjunction.
Furthermore, Marek and
Truszczynski arrived at their embeddings through embeddings of
logic programs in  \emph{reflexive autoepistemic
logic}~\cite{Schwartz-ReflAutoLogi:92},
which is equivalent to McDermott's nonmonotonic modal
$\mathbf{sw5}$~\cite{mcdermott82:_nonmon_ii},
and the subsequent embedding of reflexive autoepistemic logic into standard
AEL.
Lifschitz and Schwarz arrived at the $\mathit{EH}$ embedding through an embedding of
logic programs in Lifschitz's
nonmonotonic logic of \emph{minimal
belief and
negation-as-failure}\/ (MBNF)~\cite{Lifschitz-MiniBeliNegaFail:94}
and the subsequent embedding
of MBNF into standard AEL.  Finally, Chen also
arrived at the $\mathit{EH}$ embedding
via MBNF, but
he subsequently embedded MBNF in Levesque's \emph{logic of only
knowing}~\cite{levesque90:_all_i},
a subset of which corresponds with standard AEL.

\begin{definition}
  \label{def:normal-embedding}
  Let $r$ be a normal rule of the
form \eqref{eq:rule}. Then,
\[
\begin{array}{r@{~}c@{~}l}
     \embhp(r) & = & (\forall)\; \textstyle{\bigwedge_i} b_i  \land
\textstyle{\bigwedge}_j
     \lnot \mop c_j \limpl h_1,\\[.5ex]
     \embeb(r) & = &
(\forall)\; \textstyle{\bigwedge_i} (b_i \land \mop b_i) \land
\textstyle{\bigwedge_j} \lnot \mop
c_j \limpl h_1, \\[.5ex]
     \embeh(r) & = &
(\forall)\;\textstyle{\bigwedge_i} (b_i \land \mop b_i) \land
\textstyle{\bigwedge_j} \lnot \mop
c_j \limpl h_1 \land \mop h_1.
  \end{array}
\]
Furthermore, given a normal logic program $P$, we define:
\[
\emb(\lp) = \{\emb(r)\mid r\in \lp\} \cup \UNA_{\signature_\lp}, \ \ \chi \in
    \{\mathit{HP},\mathit{EB},\mathit{EH}\}.
\]
\end{definition}

For all three embeddings, we assume
$\signature_{\emb(\lp)} = \signature_\lp$ (here and henceforth
``$\chi$'' ranges over $\mathit{HP}$,
$\mathit{EB}$, and $\mathit{EH}$). Furthermore, by $\emb^-$ we
denote the embedding $\emb$ \emph{without} the $\UNA$ axioms: given
a normal logic program \lp, $\emb^-(\lp)=\emb(\lp) -
\UNA_{\signature_\lp}$. The embeddings $\emb^-$ are modular and polynomial.
The embeddings  $\emb$ are signature-modular
and polynomial, provided $\names_{\signature_\lp}$ is polynomial in the size of $ \lp $
(e.g., if there are no function symbols with arity greater than $ 0 $).
In the examples of embeddings in the remainder of the paper we do not
write the $\UNA$ axioms explicitly.

A notable difference between the embedding \embhp, on the one hand,
and the embeddings $\embeb $ and $\embeh$, on the other,
is that, given a logic program \lp, the stable expansions
of $\embhp(\lp)$ include the ``contrapositives'' of the rules in \lp
(viewed classically and where $\neg\mop a$ is $\dnot{}a$), which is not
true for $\embeb(\lp)$ and $\embeh(\lp)$ in general.

\begin{example}
\label{ex:contrapositive}
  Consider
$\lp = \{p \leftarrow q, \dnot r\}$. The
  stable expansion of $\embhp(\lp)= \{q\land \lnot \mop r
  \limpl p\}$ includes $\neg p \limpl \neg q\lor \mop r$;
  the expansion of $\embeb(\lp)= \{q \land \mop q \land \lnot \mop r \limpl
p\}$
  includes $\neg p \limpl \neg \mop q
  \lor \neg q\lor \mop r$, but not $\neg p \limpl \neg q\lor \mop r$.
\end{example}

For standard AEL and ground logic programs, the following
faithfulness result straightforwardly extends results by
\citeN{Gelfond+Lifschitz-StabModeSemaLogi:88} and
\citeN{marek93:_reflex_autoep_logic_logic_progr}.
\begin{proposition}
  \label{prop:prop-emb}
A Herbrand interpretation $M$ of a ground normal logic
program \lp is a stable model of  \lp iff there exists a
consistent stable expansion $T$ of $\emb^-(\lp)$  in standard
AEL such that $M=T\cap\flang_{ga}$.
\end{proposition}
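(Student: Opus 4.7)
My plan is to reduce Proposition~\ref{prop:prop-emb} to the established propositional faithfulness results of \citeN{Gelfond+Lifschitz-StabModeSemaLogi:88} (for $\embhp$) and \citeN{marek93:_reflex_autoep_logic_logic_progr} (for $\embeb$ and $\embeh$) by observing that, for ground programs, the embeddings collapse to their propositional versions.

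First I would note that when $\lp$ is ground, every rule of $\lp$ is variable-free, so the universal closure $(\forall)$ in Definition~\ref{def:normal-embedding} is vacuous and $\emb^-(\lp)$ is a set of ground FO-AEL sentences. Moreover, every modal atom occurring in $\emb^-(\lp)$ has the form $\mop a$ for some objective ground atom $a$ of $\flang_\lp$. Hence $\emb^-(\lp)$ may be viewed as a theory of standard (propositional) AEL over the Herbrand base \hb; in particular, the any- and all-name semantics coincide for it, and by Proposition~\ref{prop:atoms-expansions} any consistent stable expansion $T$ of $\emb^-(\lp)$ is fully determined by its objective ground atomic part $M = T \cap \flang_{ga}$, which can be identified with a Herbrand interpretation of $\lp$.

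Next, for each $\chi \in \{\mathit{HP},\mathit{EB},\mathit{EH}\}$, I would establish that the ``classical reduction'' of $\emb^-(\lp)$ obtained by replacing $\mop a$ by $\top$ whenever $a \in M$ and by $\bot$ otherwise is equivalent to the Gelfond--Lifschitz reduct $\lp^M$ as far as minimal Herbrand models are concerned. For $\embhp$, the rule $\bigwedge_i b_i \land \bigwedge_j \lnot \mop c_j \limpl h_1$ reduces either to a rule of $\lp^M$ (when all $c_j \notin M$) or to a tautology. For $\embeb$, the extra conjuncts $\mop b_i$ contribute nothing at the objective level since in any stable expansion $\mop b_i \in T$ iff $b_i \in M$, so that $b_i$ already implies $\mop b_i$ on $M$. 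For $\embeh$, the head conjunct $\mop h_1$ similarly adds no new objective consequence beyond $h_1$, since $h_1 \in M$ iff $\mop h_1 \in T$. Applying the fixpoint characterization of Proposition~\ref{prop:atoms-expansions}, $T$ is a consistent stable expansion of $\emb^-(\lp)$ with $T \cap \flang_{ga} = M$ iff $M$ is a minimal Herbrand model of $\lp^M$, which is exactly the condition for $M$ to be a stable model of $\lp$.

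The main obstacle is verifying that the extra epistemic conjuncts in $\embeb$ and especially in $\embeh$ do not spuriously constrain or expand the objective kernel. For $\embeh$, the head conjunct $\mop h_1$ could in principle induce further rule firings through chains of modal atoms; however, this is precisely the situation handled by \citeN{marek93:_reflex_autoep_logic_logic_progr} via the passage through reflexive autoepistemic logic, whose objective kernels agree with those of standard AEL on the relevant class of theories. Once this argument is ported to the ground first-order setting (which is routine since no variables are present, so universal closures collapse), the correspondence $M \leftrightarrow T$ is bijective as claimed.
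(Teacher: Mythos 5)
The paper offers no proof of Proposition~\ref{prop:prop-emb} at all: it is asserted to ``straightforwardly extend'' the cited propositional results of Gelfond--Lifschitz and Marek--Truszczy\'nski, which is exactly the reduction you carry out, so your proposal matches the intended argument and is sound. The only spot worth tightening is the $\embeh$ case: if a rule body is satisfied by $M$ but $h_1\notin M$, the classical reduction yields $\bigwedge_i b_i\limpl\bot$ and hence an \emph{inconsistent} expansion rather than merely a failed fixpoint, so it is the hypothesis that $T$ be consistent (explicit in the statement) that excludes this, and your phrase ``$\mop h_1$ adds no new objective consequence'' should be qualified accordingly; likewise, for $\embeb$ the claim that the conjuncts $\mop b_i$ are harmless is best justified by noting that every rule actually fired in the bottom-up computation of the least model of $\lp^M$ has its body contained in $M$, so the corresponding $\mop b_i$ all evaluate to true.
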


Observe from the proposition that we do not require the UNA
axioms in the embeddings of ground programs.  These axioms are
required in the general case when embedding non-ground
programs, as illustrated by Example \ref{ex:emb-any-all} below.
The following example illustrates the embeddings for the case
of non-ground programs.

\begin{example}
\label{ex:general}
  Consider
$\lp = \{q(a);$ $p(x);$ $r(x) \leftarrow
  \dnot s(x),$ $ p(x)\}$, which has the single stable model $M =
  \{q(a),p(a),r(a)\}$. Likewise, each of the embeddings
$\emb(\lp)$ has a single consistent stable expansion $T^\embvar$:
\[
  \begin{array}{@{}r@{~}c@{~}l@{}}
    T^{HP} & = & \{q(a), p(a), \mop p(a), \dots,\forall x (p(x)),\lnot \mop \forall x (\mop p(x)),
      \forall x (\lnot\mop s(x) \limpl r(x)),
      \dots \},\\[1ex]
T^{EB} & = & \{q(a), p(a), \mop p(a), \dots,\forall x (p(x)),\lnot \mop \forall x
(\mop p(x)), \lnot\mop(\forall x (\lnot\mop s(x) \limpl r(x))), \dots \}, \\[1ex]
T^{EH} & = &
  \{q(a), p(a), \mop p(a), \dots, \forall x (p(x)),\forall x(\mop p(x)),
      \forall x (\lnot\mop s(x) \limpl r(x)),
  \dots \}.
\end{array}
\]
\end{example}

The stable expansions in Example \ref{ex:general} agree on
objective ground atoms, but not on arbitrary formulas. We now
extend Proposition \ref{prop:prop-emb} to the non-ground case.
To this end, we use the following two lemmas.

\begin{lemma}
  \label{lem:named-model}
Let \lp be a normal logic program, let $\mathrm{X}\in\{\mathrm{E},\mathrm{A}\}$,
let $ T $ be a \xstable expansion of
$\emb(\lp)$, and let $ \alpha $ be an objective ground atom. Then, $\emb(\lp)
  \xmodels_{T_{\oga}} \alpha$ iff
$\emb(\lp) \models_{T_{\oga}} \alpha$ under the standard names
assumption. Moreover, $\embhp^-(\lp)
\allmodels_{T_{\oga}} \alpha$ iff $\embhp^-(\lp) \models_{T_{\oga}} \alpha$
under the standard names assumption.
\end{lemma}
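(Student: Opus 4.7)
The forward direction of both biconditionals is immediate: SNA interpretations form a subclass of all FO-AEL interpretations, so entailment over arbitrary interpretations implies entailment under SNA. For the converse, the strategy is to build, from any interpretation $w$ satisfying the theory with belief $T_{\oga}$, an SNA companion $w'$ that also satisfies the theory and agrees with $w$ on ground atomic formulas; SNA-entailment of $\alpha$ then propagates from $w'$ back to $w$.

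For the first claim, I take $w'$ to be the Herbrand-like interpretation with domain $\names_{\signature_\lp}$, $t^{w'}=t$, and $p^{w'}(\vec t)$ iff $w\models p(\vec t)$; this is automatically SNA and agrees with $w$ on non-equality ground atoms by construction. To handle equality atoms $t_1=t_2$ between distinct names, I argue that $T_{\oga}$ cannot contain such an equality for any consistent stable expansion: rule heads are equality-free and the UNA axioms only yield inequalities, so no cross-name equality can enter the fixpoint defining $T$; combined with the UNA axiom $\neg\mop(t_1=t_2)\limpl t_1\neq t_2$, every model $w$ then has strict UNA on names and so agrees with $w'$ on equality atoms too. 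The remaining task is $w'\xmodels_{T_{\oga}}\emb(\lp)$. The UNA axioms hold trivially in $w'$ by its SNA property; each universally quantified embedded rule transfers because assignments in $w'$ correspond to the named assignments in $w$, atomic formulas evaluate identically, and strict UNA on $w$ forces both any-name and all-name evaluations of modal atoms to coincide with their evaluation in $w'$.

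For the second claim, the same Herbrand construction is used, but without appealing to UNA axioms. Agreement on non-equality ground atoms persists, and for equality atoms between distinct names both sides of the biconditional fail: $w'$ itself refutes them and thereby witnesses the failure of both general all-name and SNA entailment. The crucial step is verifying $w'\allmodels\embhp^-(\lp)$. Fixing a rule embedding and an assignment $B'$ in $w'$, let $B$ be the corresponding assignment in $w$ with $x^B=(x^{B'})^w$. Under the all-name semantics, $\mop c_j$ holding in $w$ at $B$ requires $c_j\varsub\in T_{\oga}$ for \emph{every} associated substitution $\varsub$, which maps each variable $x$ to some name of $x^B$; in particular it requires this for the substitution picking $x^{B'}$, so $\mop c_j$ in $w$ implies $\mop c_j$ in $w'$. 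Contrapositively, $\neg\mop c_j$ transfers from $w'$ to $w$, so any rule body satisfied in $w'$ is also satisfied in $w$, and $w\allmodels$ rule gives the head in $w$; since heads are equality-free, head satisfaction transfers back to $w'$ by construction.

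The principal obstacle is pinning down the direction in which modal-atom satisfaction transfers between $w$ and $w'$. The first claim sidesteps this by enforcing strict UNA on $w$ via the UNA axioms, which collapses any-name and all-name evaluations into a single condition invariant under the passage from $w$ to $w'$. The second claim, which lacks UNA axioms, specifically exploits the universal quantification over associated substitutions in the all-name semantics: this makes $\mop c_j$ in $w$ strictly more demanding than in the Herbrand model $w'$, which is precisely what is needed to transfer $\neg\mop c_j$ from $w'$ back to $w$ without any UNA assumption.
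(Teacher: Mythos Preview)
Your proof is correct and follows essentially the same approach as the paper's: from an arbitrary model $w$ of the embedding (with belief set $T_{\oga}$), build a Herbrand-style SNA companion $w'$ with domain $\names_{\signature_\lp}$ and predicates lifted from $w$; for the first statement, use the UNA axioms to force distinct names in $w$, so that associated name substitutions are unique and modal-atom evaluation coincides between $(w,B)$ and $(w',B')$; for the second statement, exploit that $\embhp^-$ has only \emph{negated} modal atoms in antecedents together with the universal quantification over associated substitutions in the all-name semantics, so that $\mop c_j$ at $(w,B)$ implies $\mop c_j$ at $(w',B')$ and hence $\lnot\mop c_j$ transfers from $w'$ back to $w$. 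This is exactly the construction and the direction-of-transfer analysis the paper carries out.

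One small remark on presentation: your claim for the second statement that, for equality atoms $t_1=t_2$ between distinct names, ``both sides of the biconditional fail'' tacitly assumes $\embhp^-(\lp)$ has a model with respect to $T_{\oga}$; otherwise both sides are vacuously \emph{true}. Once you have established $w'\models_{T_{\oga}}\embhp^-(\lp)$ from any general model $w$, this case is indeed handled (existence of $w$ yields an SNA model $w'$ refuting $t_1=t_2$), so the logic is sound---just slightly out of order in your write-up. The paper in fact glosses over the equality case for the second statement (it simply reuses ``agreement on objective ground atoms'', which fails for equality without UNA), so your explicit treatment is if anything more careful than the original.
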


  \begin{proof}
   We start with the first statement.

\noindent ($ \Rightarrow $)
This is obvious, as interpretations under the standard names
assumption are just special interpretations.

\medskip

 \noindent   ($ \Leftarrow $)  We start with the case of the any-names assumption.
 Assume, on the contrary, that $\emb(\lp)
  \models_{T_{\oga}} \alpha$ under the standard names assumption, but
$\emb(\lp)
  \not \anymodels_{T_{\oga}} \alpha$.  This means
that there is some interpretation $\interdef$ such that $
\inter \anymodels_{T_{\oga}}\emb(\lp) $, but $ \inter \not \anymodels_{T_\oga} \alpha $.

By the fact that the only occurrences of the equality symbol in $\emb(\lp)$
are in the $\UNA$ axioms, the only atoms in $
T_{\oga}$ involving equality are of the form $ t=t $, for $ t \in \names_{\signature_\lp} $. Consider two distinct names $ t_1, t_2\in \names_{\signature_\lp} $ and
the $\UNA$ axiom $ \lnot \mop t_1 =t_2 \limpl t_1\neq t_2\in \UNA_{\signature_\lp} $.
Since $ \langle \inter, T_{\oga} \rangle $ is a model of the axiom and $ t_1 =t_2\notin
T_{\oga} $, $ \inter \anymodels_{T_{\oga}}t_1\neq t_2$.  Consequently, it must be the
case that $\interfun$ maps every name
to a distinct individual in \domain.

We assume that the mapping $\cdot^{I}$ extends to ground terms in the natural way, i.e.,
$f(t_1,\dots, t_m)^I= f^I(t_1^I, \dots, t_m^I)$.
We construct the interpretation $\inter' =
\langle U',\cdot^{I'} \rangle$ as
    follows: $U' = \names$, $t^{I'}= t$, for $t \in \names$, and
    $\langle t_1,\dots,t_n \rangle \in p^{I'}$ if $\langle t_1^I,\dots,t_n^I
    \rangle \in p^{I}$ for $n$-ary predicate symbol $p$ and every $\langle t_1,\dots, t_n \rangle \in \names^n$. Clearly, the standard names
    assumption holds for $\inter'$, and $\inter$ and $\inter'$ agree on
objective
   ground atoms, i.e., $\inter\models \alpha$ iff $\inter'\models\alpha$ for any
$\alpha\in\flang_{ga}$. We now
show that $ \inter' \models_{T_\oga}\emb(\lp) $.

Clearly, $\langle \inter',T_{\oga}\rangle$
     satisfies the $\UNA$ axioms
      since the standard names
assumption holds for $ \inter' $ and
since $ T_{\oga} $ contains only the trivial equalities.
We first consider the embedding \embeh and some
\begin{center}
    $(\forall)\;\text{ } \bigwedge_{1\leq i \leq m}(b_i \land \mop b_i )\land \bigwedge_{1\leq j \leq n} (\lnot \mop c_j) \limpl h_1 \land
\mop h_1 \in \embeh(\lp)$.
\end{center}
  Since $\inter
    \models_{T_{\oga}} \embeh(\lp)$,
\begin{center}
    $(\inter,B) \models_{T_{\oga}} \bigwedge_{1\leq i \leq m}(b_i \land \mop b_i) \land \bigwedge_{1\leq j \leq n}( \lnot \mop c_j)\limpl h_1 \land
\mop h_1$
\end{center}
for every variable assignment $B$ of $\inter$.

Now, consider a variable
assignment $B'$ of $\inter'$ and the corresponding variable assignment
$B$ of $\inter$, which we define as follows:
$x^B = k$ iff there is a $ t \in \names_{\signature_\lp}$ such
that $x^{B'} = t$ and $t^I = k$. Observe that $ B $ assigns every variable
to a named individual. Consider a name substitution
\varsub which is associated with $ B $; since all names are interpreted as
distinct individuals (by the $\UNA$ axioms),
(\dag) \varsub is
unique. Moreover, by
construction of $B$,
\varsub is also the only substitution associated with $B'$ .

By construction of $\inter'$, and
since \varsub is the unique substitution associated with
$ B $ (and $B'$), we have, for every
objective atom $ \alpha $ such that $ B $ is defined for all variables
in $ \alpha $, that
$(\inter,B) \anymodels_{T_{\oga}}
\alpha $ iff $(\inter',B') \anymodels_{T_{\oga}}
    \alpha $ and  $(\inter,B) \anymodels_{T_{\oga}}
    \mop\alpha $ iff $(\inter',B') \anymodels_{T_{\oga}}
    \mop\alpha $.  Consequently,
 if $(\inter,B) \anymodels_{T_{\oga}}
    h_1\land \mop h_1$, then
    $(\inter',B') \models_{T_{\oga}} h_1\land \mop h_1$, and if $(\inter,B)
\not \anymodels_{T_{\oga}} \bigwedge b_i \land \mop b_i$,
then
    $(\inter',B') \nmodels_{T_{\oga}} \bigwedge b_i \land \mop b_i$.
Furthermore,
 $(\inter,B)
    \not \anymodels_{T_{\oga}}\bigwedge \lnot \mop c_j$ implies $c_i\varsub \in T_{\oga}$ for some $i\in\{1,\ldots,n\}$.  Hence, $(\inter',B') \nmodels_{T_{\oga}}\lnot \mop
c_1\land \dots \land \lnot \mop c_n$. So,
\begin{center}
$(\inter',B') \models_{T_{\oga}} b_1 \land \mop b_1 \land \dots \land b_m
\land\mop b_m
    \land \lnot \mop c_1\land \dots \land \lnot \mop c_n \limpl h_1\land\mop h_1$.
\end{center}
Thus, we obtain $\inter' \models_{T_{\oga}} \embeh(\lp)$. Since $\inter$ and
$\inter'$ agree on objective
    ground atoms, $\inter'\nmodels_{T_{\oga}} \alpha$, and thus
  $\embeh(\lp) \nmodels_{T_{\oga}} \alpha$ under the standard names
  assumption.  This contradicts the initial assumption. Therefore, $\embeh(\lp) \anymodels_{T_{\oga}} \alpha$.

The argument for the  embeddings
$\embeb$ and $\embhp$ is analogous: simply leave out the positive
occurrences of modal atoms in the consequents, respectively consequents and
antecedents, in the argument above.

Likewise, the argument for the case of the all-name semantics is analogous.
Observe that in the argument about variable
assignments (\dag),  \varsub is the only name substitution associated with $ B $;
hence, the any- and all-name semantics coincide, and the subsequent
arguments immediately apply also for the all-name semantics.

    \medskip

    \noindent
For the second statement, consider the above argument without
the part about the $\UNA$ axioms and the following simple
adaptation: if $(\inter,B) \not \allmodels_{T_{\oga}}
\bigwedge\lnot \mop c_j$, then for all associated name
substitutions \varsub, there is some $c_i\varsub \in T_{\oga}$,
$1\leq i\leq n$.  One of these name substitutions is the one
associated with $B'$; the remainder of the argument remains the
same. It follows that $\embhp^-(\lp)\allmodels_{T_{\oga}}
\alpha$ iff $\embhp^-(\lp) \models_{T_{\oga}} \alpha$.
  \end{proof}

The latter fails for the embeddings $ \embeb^-$ and $
\embeh^-$, as there may be several name substitutions
associated with the assignment $ B $ in the ``$ \Leftarrow $''
direction above, while there is a single substitution
associated with $ B' $ (see also Example \ref{ex:emb-any-all}).

\begin{lemma}
  \label{lem:reduction}
  Let \lp be a normal logic program and $\mathrm{X}\in\{\mathrm{E},\mathrm{A}\}$.
  There exists a
\xstable expansion $T$ of $\emb(\lp)$  iff
  there exists
  a \xstable expansion $T'$ of $\emb(\gr{\lp})$ such that $T'_{\oga}=T_{\oga}$.
The same result holds for $\embhp^-$ and stable$^\mathrm{A}$ expansions.
\end{lemma}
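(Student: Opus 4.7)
The plan is to apply Proposition \ref{prop:atoms-expansions} to both $\emb(\lp)$ and $\emb(\gr{\lp})$, which is legitimate because in every embedding each modal atom is of the form $\mop\alpha$ with $\alpha$ an objective atom (including the equalities inside the $\UNA$ axioms). That proposition reduces the existence of a \xstable expansion to the existence of a set $\Gamma\subseteq\flang_{ga}$ of objective ground atoms satisfying the fixed-point condition $\Gamma = \{\alpha\in\flang_{ga} \mid \emb(\cdot) \xmodels_{\Gamma} \alpha\}$, and the corresponding kernel is $T_o = \{\phi\in\flang \mid \emb(\cdot) \xmodels_{\Gamma} \phi\}$. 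Hence it is enough to show that for every such candidate $\Gamma$ and every objective ground atom $\alpha$,
\[
   \emb(\lp) \xmodels_{\Gamma} \alpha \quad\Longleftrightarrow\quad \emb(\gr{\lp}) \xmodels_{\Gamma} \alpha.
\]

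To establish this equivalence I would first note that $\signature_{\gr{\lp}}=\signature_{\lp}$, since grounding only substitutes variables by names already present, so the two embeddings share the same $\UNA$ axioms. Second, under the standard names assumption the two embeddings are logically equivalent: a sentence $(\forall)\phi\in\emb(\lp)$ holds in an SNA-interpretation iff each closed instance $\phi\varsub$ (with $\varsub$ a total ground substitution into $\names_{\signature_\lp}$) holds, and these closed instances are exactly the non-$\UNA$ members of $\emb(\gr{\lp})$. This yields the displayed equivalence under SNA for any $\Gamma$. Third, Lemma \ref{lem:named-model} lifts SNA-entailment of objective ground atoms to \xmodels-entailment: applied to $\emb(\lp)$ with the candidate expansion supplies one direction, applied to $\emb(\gr{\lp})$ supplies the other.

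Combining these pieces, a set $\Gamma$ satisfies the fixed-point equation for $\emb(\lp)$ precisely when it satisfies the one for $\emb(\gr{\lp})$; Proposition \ref{prop:atoms-expansions} then turns this into the stated one-to-one correspondence between \xstable expansions $T$ and $T'$ with $T_{\oga}=T'_{\oga}$. For the final sentence of the lemma, the same argument works for $\embhp^-$ under the all-name semantics, invoking the second (UNA-free) half of Lemma \ref{lem:named-model}; here SNA-equivalence of $\embhp^-(\lp)$ and $\embhp^-(\gr{\lp})$ is again immediate from the distribution of $(\forall)$ over ground instances.

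The principal subtlety is that Lemma \ref{lem:named-model} is stated for a set $T_{\oga}$ that is already known to be the atomic kernel of a stable expansion of the very theory under consideration, whereas I am comparing expansions across two theories. Inspection of that lemma's proof shows, however, that the only property used of $T_{\oga}$ is that it contains no inequality between distinct names, which is forced by the $\UNA$ axioms whenever $\langle \inter,T_{\oga}\rangle$ is a model of the theory. Thus the application remains sound when one switches between $\lp$ and $\gr{\lp}$, and the argument avoids circularity.
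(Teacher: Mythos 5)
Your proposal is correct and follows essentially the same route as the paper's proof: establish that $\emb(\lp)$ and $\emb(\gr{\lp})$ are equivalent under the standard names assumption, lift entailment of objective ground atoms via Lemma~\ref{lem:named-model}, and invoke Proposition~\ref{prop:atoms-expansions} to convert equality of the atomic fixed points into the correspondence of expansions. Your closing remark on why Lemma~\ref{lem:named-model} may be applied with a belief set $T_{\oga}$ coming from an expansion of the \emph{other} theory (only the triviality of the equality atoms, forced by the shared $\UNA$ axioms, is used) is exactly the point the paper's proof leaves implicit, and it is handled correctly.
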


  \begin{proof}
    We prove the first statement, first for the special case that the standard names
    assumption applies, and then use Lemma \ref{lem:named-model} to extend it
    to cases where the standard names assumption does not apply.

    Consider a belief set $\Gamma \subseteq \fmodl$ and an interpretation
$\inter$
    for which the standard names assumption holds. We claim that
($*$) $\inter \xmodels_\Gamma \emb(\gr{\lp})$
    iff $\inter \xmodels_\Gamma \emb(\lp)$.  By the standard names
    assumption, we have that
    $\inter\xmodels_\Gamma \emb(\lp)$ iff for every $\phi \in \emb(\lp)$,
$\inter
    \xmodels_\Gamma \phi$.  In turn, this holds iff for every variable assignment $B$, $(\inter,B)
    \xmodels_{\Gamma} \phi$, which in turn holds iff for the name substitution
    \varsub associated with $ B $ (which is unique and total, by the
    standard names assumption), $\inter \xmodels_\Gamma
    \phi\varsub$.
    By  definition, $\emb(\gr{\lp})$ contains all (and
only) the formulas
    of the form $\phi\varsub$ where $\phi \in \emb(\lp)$ and \varsub
    is a name substitution associated with some variable
    assignment $B$ for $\inter$; the claim ($*$) follows immediately from
    this.

\medskip

\noindent($\Rightarrow$) Let $ T $ be a stable expansion of $ \emb(\lp)$. By
Lemma~\ref{lem:named-model} and the
above we have:
 $$\{\phi\in\flang_{ga}\mid \emb(\lp)\xmodels_{T_{\oga}}\phi\} =
 \{\phi\in\flang_{ga}\mid
\emb(\gr{\lp})\xmodels_{T_{\oga}}\phi\}.$$
Hence by Proposition \ref{prop:atoms-expansions},
$$T'_o = \{\phi\in\flang\mid \emb(\gr{\lp}) \xmodels_{T_{\oga}}\phi\}$$
is the kernel of a stable expansion $T'$ of $\emb(\gr{\lp})$ and $ T'
\cap\lang_{ga}=T_{ga}$.

\medskip

\noindent The converse is
analogous. For  the second statement of the lemma, the same proof
using Lemma~\ref{lem:named-model} works.
\end{proof}

\begin{theorem}
  \label{the:embedd-stable}
  Let \lp be a normal logic program and $\mathrm{X}\in\{\mathrm{E},\mathrm{A}\}$.
  A Herbrand interpretation $M$
  is a stable model of \lp iff there exists a consistent \xstable
  expansion $T$ of $\emb(\lp)$  such that $M=T_{\oga}$.
The same result holds for $\embhp^-$ and stable$^\mathrm{A} $  expansions.
\end{theorem}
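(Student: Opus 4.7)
The plan is to reduce the non-ground case to the ground case and then invoke Proposition~\ref{prop:prop-emb} for standard AEL, by chaining the supporting lemmas. Concretely, Lemma~\ref{lem:reduction} shows that, for any $M \subseteq \flang_{ga}$, a consistent \xstable expansion $T$ of $\emb(\lp)$ with $T_{\oga} = M$ exists iff a consistent \xstable expansion $T'$ of $\emb(\gr{\lp})$ with $T'_{\oga} = M$ exists. Since the stable models of \lp are by definition the stable models of $\gr{\lp}$, it suffices to establish the theorem for the ground program $\gr{\lp}$.

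For the ground case, every modal atom occurring in $\emb(\gr{\lp})$ has ground content, so any name substitution associated with a variable assignment leaves the argument of $\mop$ unchanged, and the any- and all-name semantics coincide on $\emb(\gr{\lp})$. By Lemma~\ref{lem:named-model}, entailment of objective ground atoms over $\emb(\gr{\lp})$ can moreover be decided using only interpretations satisfying the standard names assumption; over such interpretations $\emb(\gr{\lp})$ behaves essentially as a propositional AEL theory over the Herbrand base of $\flang_\lp$. The \UNA axioms play no substantive role at this level: each axiom $\lnot \mop(t_1 = t_2) \limpl t_1 \neq t_2$ is satisfied in a Herbrand-style interpretation provided $T_{\oga}$ contains no nontrivial equalities, and since $\emb^-(\gr{\lp})$ provides no mechanism for deriving any nontrivial equality, the \UNA axioms contribute nothing to $T_{\oga}$ in a consistent expansion. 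Hence we may freely pass between $\emb(\gr{\lp})$ and $\emb^-(\gr{\lp})$ when computing $T_{\oga}$, and Proposition~\ref{prop:prop-emb} then yields that $M = T_{\oga}$ for some consistent stable expansion iff $M$ is a Herbrand stable model of $\gr{\lp}$.

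The second statement concerning $\embhp^-$ under the all-name semantics follows by the same route: the second parts of Lemmas~\ref{lem:named-model} and~\ref{lem:reduction} explicitly cover this setting, and the \UNA step becomes vacuous in the absence of \UNA axioms. The main obstacle will be the careful bookkeeping between the four semantic levels involved---any-name FO-AEL, all-name FO-AEL, FO-AEL under the standard names assumption, and standard propositional AEL---and checking that the \UNA axioms interact harmlessly with the rest of the embedding uniformly for $\embhp$, $\embeb$, and $\embeh$; once the supporting lemmas are in place, these verifications are essentially routine.
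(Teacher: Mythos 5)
Your proposal is correct and follows essentially the same route as the paper's proof: reduce to the ground case via Lemma~\ref{lem:reduction}, observe that the $\UNA$ axioms do not interact with the (equality-free) rule axioms so that $\emb(\gr{\lp})$ and $\emb^-(\gr{\lp})$ agree on the objective ground atoms of their expansions, and conclude with Proposition~\ref{prop:prop-emb}. The extra bookkeeping you flag about passing from ground FO-AEL to standard AEL is left implicit in the paper but is handled exactly as you describe.
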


\begin{proof}
  By Lemma \ref{lem:reduction}, we can reduce embeddability of
  non-ground logic programs to embeddability of ground logic programs.

  Consider an embedding $\emb(\gr{\lp})$ and a stable expansion $ T $. Clearly,
  there is no interaction between the $\UNA$ axioms
  and the axioms resulting from rules in \lp.  Therefore, $\emb^-(\gr{\lp})$
  has a stable expansion $ T' $ such that $ T'_{\oga}=T_{\oga}$, and vice versa.
  The theorem then follows immediately from Proposition \ref{prop:prop-emb}.
\end{proof}
 Note that this result does not extend to the embeddings $\emb^-$ under the
any-name semantics, nor does it extend to the embeddings $\embeb^-$ and $\embeh^-$ under the all-name
semantics, as illustrated by the following example.
\begin{example}
\label{ex:emb-any-all}
Consider $\lp = \{p(n_1); r(n_2); q \leftarrow
\dnot p(x)\}$ such that $\signature_\lp$ has only two names, $n_1$
and $n_2$. \lp has one stable model, $M = \{p(n_1), r(n_2),q\}$.
$\embhp^-(\lp) = \{p(n_1);r(n_2);\forall x(\lnot \mop p(x) \limpl
q)\}$ has one \anystable expansion, $T = \{p(n_1),$ $ r(n_2),$ $ \mop
p(n_1),$ $ \mop r(n_2),$ $ \lnot \mop p(n_2),\dots\}$. $T$ does
\emph{not} include $q$. To see why this is the case, consider an
interpretation
$\inter$ with only one individual $k$. $\mop p(x)$ is trivially true
under the any-name semantics, because there is some name for $k$
such that $p(t) \in T$ (viz.\ $t = n_1$). In the all-name semantics,
this situation does not occur, because for $\mop p(x)$ to be true,
$p(t)$ must be included in $T$ for \emph{every} name  ($t=n_1$ and
$t=n_2$) for $k$. One can similarly verify that the result does not
apply to the embeddings $\embeb^-$ and $\embeh^-$ under the all-name
semantics, by the positive modal atoms in the antecedents.
\end{example}

\subsection{Embedding Disjunctive Logic Programs}

The embeddings $\embhp$ and $\embeb$ cannot be straightforwardly
extended to the case of disjunctive logic programs, even for the
propositional case. Consider the program $\lp = \{a \lpor b
\leftarrow\}$, which has two stable models: $M_1 = \{a\}$ and $M_2 =
\{b\}$. However, a naive extension of $\embhp$,
$\embhp(\lp) = \{a \lor b\}$, has one stable expansion $T = \{a
\lor b, \mop(a \lor b), \lnot \mop a, \lnot \mop b,\dots\}$. In
contrast, $\embeh$ can be straightforwardly extended because of the
modal atoms in the consequent of the implication: $\embehdis(\lp) =
\{(a \land \mop a) \lor (b \land \mop b)\}$ has two stable
expansions $T_1 = \{a \lor b, a, \mop a, \neg \mop b,\dots\}$ and
$T_2 = \{a \lor b, b, \mop b, \neg \mop a,\dots\}$.

The so-called \emph{positive introspection axioms} ($\PIA$s)
\cite{przymusinski91:_stabl_seman_disjun_progr} remedy this
situation for defining extensions \embhpdis and \embebdis of \embhp and \embeb, respectively.
Let $\PIA_\signature$ be the set of axioms
\begin{description}
\item[$\PIA$] $\quad\alpha \limpl \mop \alpha\text{,} \hspace{0.5cm}  $for every objective
ground atom $\alpha$ of
$\lang_\signature$.
\end{description}
Each $\PIA$ ensures that a consistent stable expansion contains either
$\alpha$ or $\lnot \alpha$.

It would have been possible to define the $\PIA$s in a different way:
$(\forall)\; \phi \limpl\mop\phi$ for any objective atomic formula
$\phi$. This would, however, effectively close the domain of the
predicates in $\signature_\lp$ (see Example \ref{ex:any-name}). We deem
this aspect undesirable in combinations with FO theories.

\begin{definition}
  \label{def:disjunctive-embedding}
Let $r$ be a rule of form \eqref{eq:rule}. Then:
\[
\begin{array}{r@{~}c@{~}l}
     \embhpdis(r) & = & (\forall)\; \textstyle{\bigwedge_i} b_i  \land
\textstyle{\bigwedge_j}
     \lnot \mop c_j \limpl \textstyle{\bigvee_k} h_k,\\[.5ex]
     \embebdis(r) & = &
(\forall) \; \textstyle{\bigwedge_i} (b_i \land \mop b_i) \land
\textstyle{\bigwedge_j} \lnot \mop
c_j \limpl \textstyle{\bigvee_k} h_k, \\[.5ex]
     \embehdis(r) & = &
(\forall)\;\textstyle{\bigwedge_i} (b_i \land \mop b_i) \land
\textstyle{\bigwedge_j} \lnot \mop c_j \limpl \bigvee_k (h_k \land
\mop h_k).
\end{array}
\]
Given a disjunctive logic program $P$, we define:
\[
\begin{array}{r@{~}c@{~}l}
     \embhpdis(\lp) & = & \{\embhpdis(r)\mid r\in
    \lp\} \cup \PIA_{\Sigma_\lp}\cup \UNA_{\Sigma_\lp}, \\[.5ex]
     \embebdis(\lp) & = & \{\embebdis(r)\mid r\in
    \lp\} \cup \PIA_{\Sigma_\lp}\cup \UNA_{\Sigma_\lp}, \\[.5ex]
     \embehdis(\lp) & = & \{\embehdis(r)\mid r\in \lp\}\cup \UNA_{\Sigma_\lp}.
  \end{array}
\]
\end{definition}
\noindent As before, by $\emb^{\lor-}$ we denote the embedding
$\embdis$ \emph{without} the $\UNA$ axioms. Note that the observations about
modularity of the embeddings \emb extend to the disjunctive
embeddings \embdis; the $\PIA$s do not compromise modularity. However, polynomiality
of embeddings with $\PIA$s
is lost if the size of $\lang_{ga}$ is not polynomial in the size of \lp.
We do not write the $\UNA$
and $\PIA$ axioms explicitly in the examples below.

For standard AEL and ground disjunctive logic programs, the
correspondence between the stable models of $P$ and the stable
expansions $\embhpdis(P)$ and $\embehdis(P)$, respectively, is
due to \citeN{przymusinski91:_stabl_seman_disjun_progr} and
\citeN{marek93:_reflex_autoep_logic_logic_progr}.

\begin{proposition}
\label{prop:faithful-dis}
  A Herbrand interpretation $M$ of a ground disjunctive logic
  program \lp is a stable model of \lp iff there is a
  consistent stable expansion $T$ of $\embhp^{\lor-}(\lp)$
  {\rm (}resp., $\embeh^{\lor-}(\lp)$\/{\rm )} in standard
  AEL such that $M=T_{\oga}$.
\end{proposition}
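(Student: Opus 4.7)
The plan is to reduce the statement to its propositional counterpart and then invoke the cited results. Since $\lp$ is ground, the $\UNA$ axioms play no semantic role (as was already observed after Proposition~\ref{prop:prop-emb}), and $\embhp^{\lor-}(\lp)$ and $\embeh^{\lor-}(\lp)$ are bona fide propositional autoepistemic theories. For these, the correspondences between stable models of $\lp$ and consistent stable expansions have been established by \citeN{przymusinski91:_stabl_seman_disjun_progr} for the $\mathit{HP}$-style embedding and by \citeN{marek93:_reflex_autoep_logic_logic_progr} for the $\mathit{EH}$-style embedding. What remains is to verify that our syntactic definitions agree, up to trivial rewriting, with those used in the cited works.

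I would set this up as follows. By Proposition~\ref{prop:atoms-expansions}, every consistent stable expansion $T$ of $\emb^{\lor-}(\lp)$ is determined by $T_{\oga}$, and $T_{\oga}=\{\alpha\in\flang_{ga}\mid \emb^{\lor-}(\lp)\models_{T_{\oga}}\alpha\}$. Writing $M:=T_{\oga}$, the task becomes to show that this fixed-point condition on $M$ is equivalent to $M$ being a minimal Herbrand model of the Gelfond--Lifschitz reduct $\lp^M$.

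For $\embhp^{\lor-}$, the $\PIA$ axioms $\alpha\limpl\mop\alpha$ combined with the stable-set property $\alpha\notin T\Rightarrow\lnot\mop\alpha\in T$ yield $\mop\alpha\in T$ iff $\alpha\in M$. Hence the modal occurrences in the rule bodies $\lnot\mop c_j$ reduce to ``$c_j\notin M$'', and the rules of $\embhp^{\lor-}(\lp)$ whose negated bodies hold under $T$ are precisely the disjunctive clauses obtained from $\lp^M$ after deleting default-negated atoms. For $\embehdis$, the modal atoms $\mop h_k$ in the heads force the same agreement on head atoms that actually support a firing rule, and the analogous reduction to $\lp^M$ is obtained; here the $\PIA$ axioms are no longer needed because the head-side introspection already guarantees $\mop h_k\in T$ whenever $h_k\in M$.

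The main obstacle I foresee is verifying the minimality direction in detail: to show that $M$ cannot contain atoms beyond those forced by the reduct. For $\embehdis$, one argues that a strictly smaller $M'\subsetneq M$ satisfying $\lp^{M}$ would yield an alternative belief set agreeing with $T$ on the ``$\lnot\mop c_j$'' conditions and supporting all firing rules, contradicting $T$ being a fixed point of the stable-expansion operator. This is exactly the content of the propositional arguments of Przymusinski and of Marek and Truszczynski, and rather than re-deriving it I would appeal to their results for the ground case once the syntactic equivalence of the embeddings has been noted.
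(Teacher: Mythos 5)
Your proposal is correct and matches the paper's treatment: the paper states this proposition without proof, presenting it as a recollection of the known ground/propositional results of Przymusinski and of Marek and Truszczy\'nski for the $\mathit{HP}^\lor$ and $\mathit{EH}^\lor$ embeddings, respectively, which is exactly the appeal you make after reducing to the propositional setting. Your additional sketch of how the fixed-point condition on $T_{\oga}$ relates to the reduct $\lp^M$ is sound in outline and goes beyond what the paper records, but it is not needed for the paper's argument.
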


We generalize this result to the case of FO-AEL and non-ground
programs, and additionally for $\embebdis$, similar to the case of normal
programs.

\begin{lemma}
  \label{lem:fo-ael-dis-named-model}
Let \lp be a logic program, let $\mathrm{X} \in \{\mathrm{E}, \mathrm{A} \}$,
let $ T $  be a \xstable expansion of
  $\embdis(\lp)$, and let $ \alpha $ be an objective ground atom. Then, $\embdis(\lp)
  \xmodels_{T_{\oga}} \alpha$ iff
  $\embdis(\lp) \models_{T_{\oga}} \alpha$ under the standard names
  assumption. Moreover, $\embeh^{\lor-}(\lp)
  \allmodels_{T_{\oga}} \alpha$ iff $\embeh^{\lor-}(\lp)
  \models_{T_{\oga}} \alpha$.
\end{lemma}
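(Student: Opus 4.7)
The plan is to mirror the structure of the proof of Lemma~\ref{lem:named-model}, with the disjunctive head and the extra $\PIA$ axioms handled essentially as passengers. The $(\Rightarrow)$ direction is immediate, because interpretations satisfying the standard names assumption form a special class of first-order interpretations. For the $(\Leftarrow)$ direction of the first statement, I would assume towards a contradiction that $\embdis(\lp)\models_{T_{\oga}}\alpha$ under standard names but $\embdis(\lp)\not\xmodels_{T_{\oga}}\alpha$, fix a countermodel $\inter=\langle U,\cdot^I\rangle$ of $\alpha$ that satisfies $\embdis(\lp)$ with respect to $T_{\oga}$, and use the $\UNA$ axioms exactly as in Lemma~\ref{lem:named-model} to argue that $\cdot^I$ sends distinct names to distinct elements of $U$ (since no nontrivial equality $t_1=t_2$ lies in $T_{\oga}$). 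Then I would build the ``Herbrand-like'' interpretation $\inter'=\langle\names_{\signature_\lp},\cdot^{I'}\rangle$ with $t^{I'}=t$ and $\langle t_1,\dots,t_n\rangle\in p^{I'}$ iff $\langle t_1^I,\dots,t_n^I\rangle\in p^I$, so that $\inter'$ satisfies the standard names assumption and agrees with $\inter$ on $\flang_{ga}$; in particular $\inter'\not\models_{T_{\oga}}\alpha$.

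The core task is then to verify $\inter'\models_{T_{\oga}}\embdis(\lp)$. The $\UNA$ axioms hold trivially in $\inter'$ (standard names, and $T_{\oga}$ contains only equalities $t=t$). For $\embhpdis$ and $\embebdis$, the added $\PIA$ axioms $\alpha\limpl\mop\alpha$ mention only objective ground atoms, so agreement of $\inter$ and $\inter'$ on $\flang_{ga}$ transfers their satisfaction with respect to $T_{\oga}$ from one interpretation to the other. For each rule axiom, given a variable assignment $B'$ of $\inter'$ I would take the corresponding $B$ of $\inter$ defined by $x^B=t^I$ when $x^{B'}=t$; by distinctness of names under $\cdot^I$, the name substitution $\varsub$ associated with $B$ is unique and is also the unique substitution associated with $B'$. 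Since $b_i\varsub$, $c_j\varsub$, $h_k\varsub$ are ground atoms, satisfaction of $b_i$, $\mop b_i$, $\lnot\mop c_j$, $h_k$, and $\mop h_k$ with respect to $T_{\oga}$ transfers between $(\inter,B)$ and $(\inter',B')$; the disjunctive head $\bigvee_k h_k$ (respectively, $\bigvee_k(h_k\land\mop h_k)$ for $\embehdis$) transfers because satisfaction of a disjunction reduces to satisfaction of some disjunct, which was already handled atom-by-atom. This contradicts $\inter'\not\models_{T_{\oga}}\alpha$ and yields the claim.

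For the second statement, concerning $\embeh^{\lor-}(\lp)$ under the all-name semantics without $\UNA$, I would drop the step that forces distinct interpretations of names and instead repeat the construction without arguing injectivity of $\cdot^I$. The only place where distinctness was used is in concluding that the substitution associated with $B$ is unique; this is replaced by the standard all-name argument from the end of the proof of Lemma~\ref{lem:named-model}: if $(\inter,B)\not\allmodels_{T_{\oga}}\bigwedge_j\lnot\mop c_j$, then for every associated substitution $\varsub$ some $c_i\varsub\in T_{\oga}$, and in particular this holds for the substitution associated with $B'$ in $\inter'$, so the antecedent fails there as well. The positive modal atoms $\mop b_i$ and $\mop h_k$ behave likewise under the all-name reading, so the transfer argument carries through for $\embehdis^-$.

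I expect the main obstacle to be bookkeeping around the disjunctive head of $\embehdis$, where the consequent $\bigvee_k(h_k\land\mop h_k)$ mixes an objective atom with a modal atom: one must check that whenever the disjunction is satisfied in $(\inter,B)$ relative to $T_{\oga}$, the \emph{same} disjunct continues to witness satisfaction in $(\inter',B')$, which requires the uniqueness of the associated substitution established via $\UNA$ (first statement) or, in the $\UNA$-free second statement, the restriction to $\embehdis^-$ where only one of the embeddings is covered precisely because of the asymmetry in how positive modal atoms in heads versus bodies interact with multiple associated substitutions.
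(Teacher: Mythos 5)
Your proof of the first statement matches the paper's own argument exactly: the published proof is, verbatim, ``a straightforward adaptation of the $(\Leftarrow)$ direction of Lemma~\ref{lem:named-model}: replace the consequent $h_1\land\mop h_1$ by the disjunction $(h_1\land\mop h_1)\lor\dots\lor(h_l\land\mop h_l)$, and observe that the $\PIA$ axioms are satisfied in $\langle\inter',T_{\oga}\rangle$ whenever they are satisfied in $\langle\inter,T_{\oga}\rangle$, since $\inter$ and $\inter'$ agree on ground atomic formulas.'' Your more explicit bookkeeping---reducing satisfaction of the disjunctive head to satisfaction of a single disjunct and transferring that disjunct atom-by-atom via the unique associated substitution guaranteed by the $\UNA$ axioms, and noting that each $\PIA$ axiom $\alpha\limpl\mop\alpha$ involves only a ground atom and a membership test in $T_{\oga}$ that is independent of the interpretation---is correct and fills in exactly the details the paper leaves implicit.

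The soft spot is the second statement. Your claim that ``the positive modal atoms $\mop b_i$ and $\mop h_k$ behave likewise under the all-name reading'' conflates two transfer directions that are genuinely asymmetric. For a consequent atom $\mop h_k$ one needs $(\inter,B)\allmodels_{T_{\oga}}\mop h_k$ to imply $(\inter',B')\allmodels_{T_{\oga}}\mop h_k$, which holds because the substitution associated with $B'$ is among those associated with $B$. For an antecedent atom $\mop b_i$ one needs the converse direction, and this can fail when $B$ has several associated substitutions, only one of which places $b_i\varsub$ in $T_{\oga}$---precisely the phenomenon the paper invokes to exclude $\embeb^-$ and $\embeh^-$ from the all-name result after Lemma~\ref{lem:named-model}. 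Your closing paragraph gestures at this asymmetry but does not actually resolve it for the positive body atoms of $\embeh^{\lor-}$. To be fair, the paper's own proof of this lemma is entirely silent on the second statement, so on this point your proposal is no less complete than the original; but if you want a self-contained argument you must either restrict attention to rules without positive bodies or supply an additional reason why the failure of some $\mop b_i$ in $(\inter,B)$ forces a corresponding failure in $(\inter',B')$.
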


  \begin{proof}
  ($ \Rightarrow $)  Trivial (cf.~the ``$ \Rightarrow $''
direction in
Lemma \ref{lem:named-model}).

 \noindent   ($ \Leftarrow $) The argument is a straightforward adaptation of
the argument in the ``$ \Leftarrow $'' direction in the proof of Lemma
\ref{lem:named-model}: simply replace the consequent $ h_1 \land \mop h_1 $ with
the disjunction $(h_1\land \mop h_1)\lor \dots \lor (h_l \land \mop h_l) $.
Furthermore, it is also easy to see that, as $ \inter $ and $ \inter'
$ agree on ground atomic formulas, if the $\PIA$ axioms are satisfied in $
\langle \inter, T_{\oga}\rangle $, then they are satisfied in
 $ \langle \inter', T_{\oga}\rangle $.
  \end{proof}

\begin{lemma}
  \label{lem:fo-ael-dis-reduction}
  Let \lp be a logic program and let
  $\mathrm{X} \in \{\mathrm{E}, \mathrm{A} \}$.  There exists a
\xstable expansion $T$ of $\embdis(\lp)$  iff
  there exists
  a \xstable expansion $T'$ of $\embdis(\gr{\lp})$
  with $T'_{\oga}=T_{\oga}$.
The same result holds for $\embeh^{\lor-}$  and stable$^\mathrm{A} $
expansions.
\end{lemma}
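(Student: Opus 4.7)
The plan is to imitate the structure of the proof of Lemma~\ref{lem:reduction} (the normal-program case), using Lemma~\ref{lem:fo-ael-dis-named-model} in place of Lemma~\ref{lem:named-model} and Proposition~\ref{prop:atoms-expansions} to identify stable expansions through their ground-atomic kernels. The only extra ingredient relative to the normal case is the presence of the $\PIA$ axioms in $\embhpdis(\lp)$ and $\embebdis(\lp)$, but this is a non-issue: $\PIA_{\Sigma_\lp}$ consists of ground formulas (universally indexed by $\alpha \in \flang_{ga}$), so grounding leaves it unchanged, and the same $\PIA_{\Sigma_\lp}$ appears in both $\embdis(\lp)$ and $\embdis(\gr{\lp})$. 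Likewise, $\UNA_{\Sigma_\lp}$ is ground and shared.

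First I would establish the analogue of the key observation $(*)$ from Lemma~\ref{lem:reduction}: for any belief set $\Gamma \subseteq \fmodl$ and any interpretation $\inter$ for which the standard names assumption holds, $\inter \xmodels_\Gamma \embdis(\gr{\lp})$ iff $\inter \xmodels_\Gamma \embdis(\lp)$. The argument is identical to the one in Lemma~\ref{lem:reduction}: under the standard names assumption, every variable assignment $B$ has a unique total associated name substitution $\varsub$, and $\embdis(\gr{\lp})$ is exactly the set of instances $\phi\varsub$ for $\phi$ coming from the rule translations of $\embdis(\lp)$; the $\UNA$ and, where applicable, $\PIA$ axioms are already ground, hence agree on the two sides.

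Next I would invoke Lemma~\ref{lem:fo-ael-dis-named-model} to lift this from interpretations under the standard names assumption to arbitrary interpretations when evaluating satisfaction of objective ground atoms against $T_{\oga}$. Concretely, for every objective ground atom $\alpha$,
\[
\embdis(\lp) \xmodels_{T_{\oga}} \alpha \ \Longleftrightarrow\ \embdis(\lp) \models_{T_{\oga}} \alpha \text{ under standard names} \ \Longleftrightarrow\ \embdis(\gr{\lp}) \xmodels_{T_{\oga}} \alpha,
\]
where the first equivalence uses Lemma~\ref{lem:fo-ael-dis-named-model} for $\embdis(\lp)$, the middle equality uses $(*)$, and the last uses Lemma~\ref{lem:fo-ael-dis-named-model} again for $\embdis(\gr{\lp})$ (note that $\gr{\lp}$ over $\signature_\lp$ is still a logic program in the sense of the lemma). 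Since both $\embdis(\lp)$ and $\embdis(\gr{\lp})$ contain only objective atomic formulas inside the scope of $\mop$, Proposition~\ref{prop:atoms-expansions} applies to both: the kernel $T_o$ of any \xstable expansion is characterized by $T_o = \{\phi \in \flang \mid \embdis(\lp) \xmodels_{T_{\oga}} \phi\}$, and similarly for $\embdis(\gr{\lp})$. The displayed equivalence yields that these two fixpoint conditions agree when read off $T_{\oga}$, so stable expansions of one correspond to stable expansions of the other, with matching $T_{\oga}$.

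The main potential obstacle I anticipate is only bookkeeping, namely checking that the argument survives the disjunction in rule heads and the $\PIA$s. The disjunctive head is already absorbed into Lemma~\ref{lem:fo-ael-dis-named-model}, so no new work is required at the reduction stage; and because the $\PIA$s are ground, they play the same role as $\UNA$ axioms and do not interact with grounding. Finally, for the second statement concerning $\embeh^{\lor-}$ under the all-name semantics, the same proof works verbatim once one drops the $\UNA$-axiom steps and invokes the second (``moreover'') part of Lemma~\ref{lem:fo-ael-dis-named-model}; no $\PIA$s are present in $\embeh^{\lor-}$, so nothing further is needed.
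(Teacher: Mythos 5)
Your proposal is correct and follows essentially the same route as the paper, which itself proves Lemma~\ref{lem:fo-ael-dis-reduction} by taking the proof of Lemma~\ref{lem:reduction} verbatim, substituting $\embdis$ for $\emb$ and Lemma~\ref{lem:fo-ael-dis-named-model} for Lemma~\ref{lem:named-model}. Your additional observation that the $\PIA$ axioms are ground and hence invariant under grounding (and shared by $\embdis(\lp)$ and $\embdis(\gr{\lp})$) is exactly the bookkeeping needed to justify that the adaptation goes through.
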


  \begin{proof}
    The proof is obtained from the proof of Lemma
\ref{lem:reduction} by replacing occurrences of \emb with \embdis and
using Lemma~\ref{lem:fo-ael-dis-named-model} in place of Lemma~\ref{lem:named-model}.
\end{proof}

\begin{theorem}
  \label{the:embedd-disj}
  Let \lp be a logic program and let
  $\mathrm{X} \in \{\mathrm{E}, \mathrm{A} \}$.
  A Herbrand interpretation $M$
  is a stable model of \lp iff there exists a consistent \xstable
  expansion $T$ of $\embdis(\lp)$  such that $M=T_{\oga}$.
The same result holds for $\embeh^{\lor-}$ and stable$^\mathrm{A} $ expansions.
\end{theorem}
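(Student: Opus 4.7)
The plan is to mirror the proof of Theorem \ref{the:embedd-stable}, invoking the disjunctive counterparts of the supporting lemmas and then handling the new embedding $\embebdis$ separately. First, I would apply Lemma \ref{lem:fo-ael-dis-reduction} to reduce the non-ground statement to the ground setting: it suffices to show the correspondence between stable models of $\gr{\lp}$ and the objective ground atomic parts of consistent \xstable expansions of $\embdis(\gr{\lp})$. The second clause of the theorem, concerning $\embeh^{\lor-}$ under the all-name semantics, is reduced analogously via the second statement of the same lemma.

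Next, in the ground setting I would dispose of the $\UNA$ axioms exactly as in the proof of Theorem \ref{the:embedd-stable}: since rule atoms are equality-free, the $\UNA$ axioms share no predicate with the embedded rules or with the $\PIA$ axioms, so the passage from $\embdis(\gr{\lp})$ to $\emb^{\lor-}(\gr{\lp})$ preserves the objective ground atomic parts of all consistent stable expansions. For $\embhp^{\lor-}$ and $\embeh^{\lor-}$ in standard AEL, the correspondence then follows directly from Proposition \ref{prop:faithful-dis}; the same application takes care of the $\allstable$ clause for $\embeh^{\lor-}$ in the second statement of the theorem.

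The genuinely new work lies in the case of $\embebdis$. Here I would show that $\embeb^{\lor-}(\gr{\lp})$ and $\embhp^{\lor-}(\gr{\lp})$ are semantically equivalent as FO-AEL theories, whence Proposition \ref{prop:faithful-dis} applied to $\embhp^{\lor-}$ yields the desired correspondence. The key observation is that both theories contain the $\PIA$ axioms $\alpha\limpl\mop\alpha$ for every objective ground atom $\alpha$, and in any autoepistemic interpretation satisfying $\PIA$, the equivalence $b \equiv (b\land\mop b)$ holds for each body atom $b$ (the nontrivial direction being supplied by $\PIA$, the other being trivial). Hence each $\embebdis$-rule reduces, modulo $\PIA$, to its $\embhpdis$ counterpart, and the two theories have the same models and therefore the same stable expansions.

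I expect this $\PIA$-based body collapse for the $\embebdis$ case to be the main, though fairly mild, obstacle; everything else is routine bookkeeping paralleling Theorem \ref{the:embedd-stable}. A minor subtlety to verify carefully is that the replacement of $b\land\mop b$ by $b$ is valid not only classically but inside the scope of the outer implications and disjunctions of the embedded rules, which is immediate because the equivalence holds in every autoepistemic interpretation that satisfies $\PIA$, i.e., in every candidate model of either theory.
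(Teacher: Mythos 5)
Your proposal is correct and follows essentially the same route as the paper: reduce to the ground case via Lemma \ref{lem:fo-ael-dis-reduction}, invoke Proposition \ref{prop:faithful-dis} for $\embhpdis$ and $\embehdis$, and handle $\embebdis$ by using the $\PIA$ axioms to collapse $b_i \land \mop b_i$ to $b_i$ in every model, so that $\embebdis(\gr\lp)$ and $\embhpdis(\gr\lp)$ have the same stable expansions. Your explicit treatment of stripping the $\UNA$ axioms is a detail the paper leaves implicit (it is spelled out only in the proof of Theorem \ref{the:embedd-stable}), but the argument is the same.
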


  \begin{proof}
    The reduction of embeddability of non-ground programs in
    FO-AEL to ground logic programs in standard AEL follows from Lemma
\ref{lem:fo-ael-dis-reduction}.

Embeddability of $\gr{\lp}$
    using $\embhpdis$ and $\embehdis$ follows from
    Proposition \ref{prop:faithful-dis}.
Embeddability of $\gr{\lp}$ using $\embebdis$ then follows
    from the embeddability of $\embhpdis$, combined with the $\PIA$ axioms
    ($ \alpha \limpl \mop \alpha $): consider a formula
    $ \bigwedge(b_i \land \mop b_i)\land \bigwedge(\lnot \mop c_j)\limpl \bigvee h_k $
    in $\embebdis(\gr{\lp})$ and some $b_i$.  If some model
    $ \langle \inter, \Gamma \rangle $ of $\embebdis(\gr{\lp})$ satisfies $b_i$,
    then $ \mop b_i $ must also be satisfied in
    $ \langle \inter, \Gamma \rangle $ (by the $\PIA$ axioms). Therefore, the
    stable expansions of $\embhpdis(\lp)$ and $\embebdis(\lp)$ must be the same.
  \end{proof}
A notable difference between the embeddings \embhpdis and \embebdis,
on the one hand, and $\embehdis$, on the other, is the presence, respectively
absence, of the $\PIA$ axioms, as illustrated in the following example.

\begin{example}
\label{ex:pia}
Consider $\lp = \{p \!\lpor\! q \leftarrow\; \}$.
Then,
  $\embhpdis(\lp) = \{p \lor q\} \cup \PIA_{\Sigma_\lp}$ has the stable expansions $T_1^{HP} = \{p, \neg q, \mop
p,\lnot
  \mop q, \dots\}$ and $T_2^{HP} = \{q, \neg p, \mop q,\lnot \mop p,
\dots\}$, while
 $\embehdis(\lp) = \{\, (p \land \mop p) \lor (q \land \mop q) \,\}$ has the stable expansions $T_1^{EH} = \{p,\mop p, \lnot \mop q,$ $
\dots\}$ and $T_2^{EH} = \{q, \mop p,\lnot \mop p,$ $ \dots\}$; the latter
include neither $\neg q$ nor $\neg p$.
\end{example}
Note that the embedding $\embhp$ cannot be naively
extended to logic programs with strong (``classical'') negation
\snot \cite{Gelfond+Lifschitz-ClasNegaLogiProg:91}, even for the
propositional case.  Take, for example, the logic program $\lp = \{p
\leftarrow \snot p\}$; it has
one stable model, namely $M = \emptyset$. The naive extension of
\embhp treats strong negation as negation in classical logic and
the embedding of \lp yields $\{\neg p \limpl p\}$, which has one
stable expansion, which includes $p$. It was shown by
\citeN{marek93:_reflex_autoep_logic_logic_progr} that, for the
propositional case, the embeddings $\embeb$ and $\embeh$ \emph{can} be
naively extended to the case of logic programs with strong negation:
consider a rule of the form (\ref{eq:rule}) such that
$h_i,b_j,c_k$ are either atoms or
strongly negated atoms, and an extension of the embeddings
$\embeb,\embeh$ such that $\snot$ is translated to
classical negation $\neg$; then, Proposition \ref{prop:prop-emb}
straightforwardly extends to these extended versions of
$\embeb$ and $\embeh$~\cite{marek93:_reflex_autoep_logic_logic_progr}. These
results can be straightforwardly extended to the non-ground case.
Embedding of logic programs with strong negation using $\embhp$
can be done by rewriting \lp to a logic program $\lp'$ without strong
negation and subsequently embedding $ \lp' $; see \cite{Gelfond+Lifschitz-ClasNegaLogiProg:91} for such
a rewriting.

\section{Relationships between the Embeddings}
\label{sec:prop-embedd}

In this section, we explore correspondences between the embeddings
presented in the previous section.
We compare
the stable expansions of the
individual embeddings and, at the level of inference, we compare the
sets of autoepistemic consequences.
To this end we introduce the following
notation:

\begin{definition}
Let $\fot_1,\fot_2\subseteq \fmodl$ be FO-AEL theories and $\mathrm{X}\in \{\mathrm{E},\mathrm{A}\}$.  We write $\fot_1\xexpequiv
\fot_2$
if $\fot_1$ and $\fot_2$ have the same \xstable expansions.
For $\gamma \in \{o, \og, \oga \}$ we write $\fot_1\xexpequiv_{\gamma} \fot_2$
if, for each \xstable expansion $T$ of $\fot_1$, there exists some
\xstable expansion $T'$ of $\fot_2$ such that $T_\gamma =
T'_\gamma$, and vice versa.
\end{definition}
Note the implication chain $\fot_1\xexpequiv \fot_2$ $\Rightarrow$
$\fot_1\xexpequiv_{o} \fot_2$
$\Rightarrow$ $\fot_1\xexpequiv_{\og} \fot_2$
$\Rightarrow$ $\fot_1\xexpequiv_{\oga} \fot_2$.
\begin{definition}
A formula $\phi$ is an \emph{autoepistemic$^\mathrm{X}$ consequence} of a theory
 $\fot\subseteq \fmodl$,
$\mathrm{X}\in \{\mathrm{E},\mathrm{A}\}$, if
$\phi$ belongs to every \xstable expansion of $\fot$. $Cn^{\mathrm{X}}(\fot)$
denotes the set of all autoepistemic$^\mathrm{X}$ consequences of $\fot$.
\end{definition}
The properties stated in this section holds regardless of
whether $ \mathrm{X}=\mathrm{E} $ or $ \mathrm{X}=\mathrm{A} $
is considered.
Therefore, we omit the superscript $\mathrm{X}$ from $ \xmodels$, $\xexpequiv_\gamma$, $ Cn^{\mathrm{X}}$, \xstable, etc.
Furthermore, we write $Cn_\gamma(\fot)$ for
$Cn(\fot)_\gamma$ (= $Cn(\fot)\cap {\fmodl}_\gamma$).

In our analysis, we consider different classes of logic programs.  With the
symbols
$\arbclassprg$,  $\safeclass$, and $\groundclass$ we denote the classes of
arbitrary, safe, and ground disjunctive logic programs, respectively.   Observe the
following inclusions between the classes:
$$
\groundclass\subseteq  \safeclass\subseteq \arbclassprg.
$$
We use
the letter $ n $ to denote the restriction of the respective classes to the case of
normal programs: $\narbclassprg$,  $\nsafeclass$, and $\ngroundclass$.

We start in Section~\ref{sec:relat-betw-stable} with an investigation of the correspondences between stable expansions and subsequently consider in Section~\ref{sec:relat-betw-models} correspondences
between sets of consequences.
Note
that while $\fot_1 \expequiv_\gamma \fot_2$ implies
$Cn_\gamma(\fot_1)=
Cn_\gamma(\fot_2)$,
the converse is not true in general.  Thus, for
applications based on consequence rather than stable expansions,
more flexibility between the choice of equivalent embeddings
can be expected as one-to-one correspondence between
stable expansions is not required.
In order not to interrupt the flow of reading, the proofs of
most of the results in this section can be found in the appendix.

\subsection{Relationships between Stable Expansions of Embeddings}
\label{sec:relat-betw-stable}

%
From Theorems \ref{the:embedd-stable} and \ref{the:embedd-disj}
we immediately obtain the following result concerning
correspondence of stable expansions, which is our main result
in this regard.

\begin{theorem}
\label{cor:corr-ground-atomic}
For every $ \lp\in\arbclassprg$,  $\embop(\lp)\expequiv_{\oga}\embop'(\lp)$ for all
$\embop, \embop'\in\{\embhpdis, \embebdis,$ $\embehdis\}$, and
if $ \lp\in\narbclassprg$, then
$\embop(\lp)\expequiv_{\oga}\embop'(\lp)$ for all
$\embop,\embop'\in\{\embhp, \embeb, \embeh, \embhpdis,$ $\embebdis, \embehdis\}$.
\end{theorem}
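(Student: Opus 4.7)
The statement is essentially a corollary of the faithfulness results, so the strategy is to chain the equivalences through the common notion of a Herbrand stable model of $\lp$. Concretely, I plan to argue as follows.

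First, for the disjunctive case, fix any $\lp\in\arbclassprg$ and any two embeddings $\embop,\embop'\in\{\embhpdis,\embebdis,\embehdis\}$. By Theorem~\ref{the:embedd-disj} applied to $\embop$, the map $T\mapsto T_{\oga}$ is a bijection between the consistent \xstable expansions of $\embop(\lp)$ and the Herbrand stable models of $\lp$; the same holds for $\embop'$ with the \emph{same} target set (the Herbrand stable models of $\lp$ are a property of $\lp$ alone and do not depend on the embedding). Composing one bijection with the inverse of the other yields, for every consistent \xstable expansion $T$ of $\embop(\lp)$, a consistent \xstable expansion $T'$ of $\embop'(\lp)$ with $T_{\oga}=T'_{\oga}$, and vice versa. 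This is precisely $\embop(\lp)\expequiv_{\oga}\embop'(\lp)$ restricted to consistent expansions.

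Second, for the normal case with $\lp\in\narbclassprg$, I would combine Theorem~\ref{the:embedd-stable} for $\embop,\embop'\in\{\embhp,\embeb,\embeh\}$ with Theorem~\ref{the:embedd-disj} viewed on $\lp$ as a (trivially) disjunctive program for $\embop,\embop'\in\{\embhpdis,\embebdis,\embehdis\}$. Again all six bijections share the same codomain (the Herbrand stable models of $\lp$), so pairwise agreement on the $(\cdot)_{\oga}$-level follows in the same way. Note that in moving from normal to disjunctive embeddings one acquires the $\PIA$ axioms; these do not affect the argument because Theorem~\ref{the:embedd-disj} already accounts for them on the disjunctive side.

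Third, I need to handle inconsistent stable expansions, since $\expequiv_{\oga}$ quantifies over \emph{all} stable expansions. The inconsistent stable expansion, if any, is unique and equals $\fmodl$, so two theories that both admit it trivially agree on $(\cdot)_{\oga}$. I would therefore show that $\embop(\lp)$ has an inconsistent \xstable expansion iff $\embop'(\lp)$ does, by observing that inconsistency of a stable expansion $T$ of any of these embeddings is controlled by the objective part $T_{\oga}$ (via Propositions~\ref{prop:no-nesting-expansions} and \ref{prop:atoms-expansions}, since the embeddings have only objective atoms inside $\mop$), and the objective content imposed by $\embop(\lp)$ and $\embop'(\lp)$ on ground atoms coincides (each embedding forces the same set of ground atoms to hold in any stable expansion). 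I expect this last step to be the main, though minor, obstacle: it requires a careful check that none of the variant constructs ($\mop b_i$ in antecedents, $\mop h_k$ in consequents, the $\PIA$s) introduces inconsistency in one embedding without forcing it in the others. Since all embeddings agree on ground atomic consequences (from the first two steps), and inconsistency of a stable expansion reduces to the objective kernel being first-order inconsistent, this forces the inconsistent cases to line up, completing the proof of $\embop(\lp)\expequiv_{\oga}\embop'(\lp)$.
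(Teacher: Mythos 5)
Your proposal is correct and takes essentially the same route as the paper, which obtains the theorem directly by composing the bijections of Theorems~\ref{the:embedd-stable} and \ref{the:embedd-disj} through the common set of stable models of \lp. Your third step on inconsistent expansions addresses a point the paper leaves implicit, and it can be discharged even more directly: none of the theories $\embgen(\lp)$ ever has an inconsistent stable expansion, since relative to the all-encompassing belief set every negated modal atom is false and every positive modal atom is true, so an interpretation making all ground atoms true satisfies the embedded rules together with the $\UNA$ and $\PIA$ axioms.
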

Thus, all embeddings may be used interchangeably when concerned with
ground
atoms.  This does not hold for the case of arbitrary objective ground
formulas.
\begin{example}
\label{ex:HP-EB}
 Consider the logic program $\lp=\{a \leftarrow b\}$. Then
$\embhp(\lp) = \{b \limpl a\}$ has a single stable expansion, which
contains $b \limpl a$; also $\embeb(\lp) = \{b \land \mop b \limpl
a\}$ has a single stable expansion, but it does not contain $b \limpl
a$.  Note that while the latter contains $\mop b \limpl b$, it does
not contain $ b \limpl \mop b$ (which would enable obtaining $b
\limpl a$).
\end{example}
The situation changes for the embeddings \embhpdis and
\embebdis{}
due to the $\PIA$ axioms.

\begin{proposition}
  \label{prop:corr-ground-formulas}
For every $ \lp \in\narbclassprg$,
$\embeb(\lp)\expequiv_{og}\embeh(\lp)$,
and for every $ \lp \in\arbclassprg$,
$\embhpdis(\lp)\expequiv_{og}\embebdis(\lp)$.
\end{proposition}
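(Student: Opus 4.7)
The plan is to treat both parts of the proposition uniformly. Fix a stable model $M$ of $\lp$ and let $T, T'$ be the consistent stable expansions of the two embeddings in question corresponding to $M$ by Theorem~\ref{the:embedd-stable} (for Part~1) or Theorem~\ref{the:embedd-disj} (for Part~2); thus $T_{\oga} = T'_{\oga} = M$. Since in each of $\embeb$, $\embeh$, $\embhpdis$, $\embebdis$ the modal operator $\mop$ occurs only inside objective atomic scopes, Proposition~\ref{prop:atoms-expansions} gives the kernels as $T_o = \{\phi \in \flang \mid \tau(\lp) \models_M \phi\}$ (and analogously for $T'_o$), so it suffices to show that the two embeddings entail the same formulas in $\flang_g$ relative to the common belief set $M$.

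One inclusion in each part is immediate from a semantic implication between the embeddings: $\embeh(r)$ strengthens the consequent of $\embeb(r)$ by $\mop h_1$, and $\embebdis(r)$ strengthens the antecedent of $\embhpdis(r)$ by $\bigwedge_i \mop b_i$. Therefore every model (relative to $M$) of the embedding with the stronger first-order content is also a model of the other, so it has at least the same consequences. For the reverse direction I would use a named-substructure construction: given any model $\inter$ of the ``weaker'' embedding (i.e., $\embeb(\lp)$ in Part~1, $\embebdis(\lp)$ in Part~2) with belief set $M$, let $\inter'$ be the restriction of $\inter$ to its set of named individuals, which is closed under function application. Then $\inter$ and $\inter'$ agree on every $\phi \in \flang_g$, and the $\UNA_{\signature_\lp}$ axioms---combined with the fact that $M$ contains no nontrivial equalities---force distinct names to be interpreted distinctly, so every variable assignment $B'$ in $\inter'$ carries a unique total associated substitution $\varsub$. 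Consequently each modal atom $\mop \phi$ occurring in the rules reduces to the test $\phi\varsub \in M$, and universal modal formulas of this restricted shape transfer downward from $\inter$ to $\inter'$.

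The remaining task is to verify that $\inter'$ is actually a model of the stronger embedding, from which the reverse inclusion follows by pulling back to $\inter$ any objective ground $\phi$ entailed by the stronger embedding. In Part~1, whenever the antecedent of $\embeh(r)$ fires at $(\inter', B')$ with associated $\varsub$, the ground rule $r\varsub$ lies in the reduct $\lp^M$ with $B^+(r\varsub) \subseteq M$, so since $M$ is a model of $\lp^M$ one obtains $h_1\varsub \in M$, whence $\mop h_1$ holds at $(\inter', B')$; the conjunct $h_1$ itself is inherited from $\inter' \models_M \embeb(\lp)$. In Part~2, the PIA axioms yield $\inter' \models b_i\varsub \Rightarrow b_i\varsub \in M \Rightarrow \mop b_i$ at $(\inter', B')$, so whenever the antecedent of $\embhpdis(r)$ fires at $(\inter', B')$ so does the antecedent of $\embebdis(r)$, and $\inter' \models_M \embebdis(\lp)$ then delivers the disjunctive head. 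The main obstacle will be making the downward-inheritance step precise---showing that the set of name substitutions associated with a given variable assignment is identical in $\inter$ and $\inter'$, so that the modal content of universal rules survives the passage to the named substructure; once this is in place, the appeals to stability of $M$ (Part~1) and to the PIA axioms (Part~2) close the argument.
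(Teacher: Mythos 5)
Your proof is correct, but it takes a genuinely different route from the paper's. The paper fixes the common belief set $T_{\oga}$ and shows directly that the ground-objective kernel of any stable expansion of $\embeb(\lp)$ or $\embeh(\lp)$ equals $\{\phi\in\flang_{g}\mid T_{\oga}\models\phi\}$: it reduces ground formulas to clauses and, for any clause not already entailed by $T_{\oga}$, falsifies it in some model by flipping the truth values of its atoms, using the guarding modal atoms $\mop b_i$ (resp.\ $\mop h_1$) to argue that the flipped interpretation still satisfies the embedding; the disjunctive case is then immediate because the $\PIA$ axioms make the expansion complete on ground atoms, so $T_{og}$ is again determined by $T_{\oga}$ alone. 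You instead argue by a model transformation: every model of the weaker embedding restricts to a named substructure which, thanks to the $\UNA$ axioms (forcing a unique total associated substitution for each assignment) together with either the reduct/stable-model property of $M$ (to supply $\mop h_1$) or the $\PIA$ axioms (to supply $\mop b_i$), is a model of the stronger embedding agreeing with the original on all ground formulas. Both arguments are sound and of comparable length. The paper's route yields as a by-product an explicit syntactic description of $T_{og}$ as the classical closure of $T_{\oga}$ (resp.\ of the complete literal base induced by the $\PIA$s), which is reused in later results; yours recycles the named-substructure machinery of Lemma~\ref{lem:named-model} and makes visible exactly where stability of $M$ enters. The one step you flag as outstanding---that the set of associated name substitutions for a given assignment is the same in $\inter$ and its named restriction $\inter'$---does go through, since $B'$ ranges only over named individuals and ground terms evaluate identically in the induced substructure, so the modal atoms (and hence the antecedents of embedded rules) transfer in both directions as you need.
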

For non-ground formulas we obtain the following result.

\begin{proposition}
  \label{prop:corr-safe-formulas}
For every $ \lp\in \nsafeclass$, $\embeb(\lp)\expequiv\embeh(\lp)$.
\end{proposition}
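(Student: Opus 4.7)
My plan is to prove the stronger statement that, for safe $\lp$, a belief set $T$ is a stable expansion of $\embeb(\lp)$ if and only if it is a stable expansion of $\embeh(\lp)$; the claimed equivalence $\embeb(\lp) \expequiv \embeh(\lp)$ follows immediately. I would reduce this to showing that for any such candidate $T$, the interpretations satisfying $\embeb(\lp)$ with respect to $T$ coincide with those satisfying $\embeh(\lp)$ with respect to $T$: since the stable expansion is determined as $\{\phi \mid \fot \models_T \phi\}$ in both cases, equal model classes immediately yield equal fixpoints. The inconsistent belief set is always a stable expansion of every theory, so it is harmless.

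One inclusion of model classes is trivial: every $\inter \models_T \embeh(r)$ satisfies $\embeb(r)$, because $\embeh(r)$ has the same antecedent as $\embeb(r)$ and the strictly stronger consequent $h_1 \land \mop h_1$. The nontrivial converse is the heart of the proof. Fix $\inter \models_T \embeb(\lp)$, a safe rule $r \in \lp$, and a variable assignment $B$ under which the common EB/EH-body $\bigwedge_i (b_i \land \mop b_i) \land \bigwedge_j \lnot \mop c_j$ is satisfied; the consequent $h_1$ is then immediate from $\embeb(r)$, so it suffices to derive $\mop h_1$ under $B$. The argument rests on three observations: (i) each satisfied $\mop b_i$ forces every variable of $b_i$ to be assigned to a named individual under $B$ (otherwise no associated name substitution would render $b_i\varsub$ closed), and safety then yields that every variable of $h_1$ is also named under $B$; (ii) since $\embeb(\lp)$ contains $\UNA_{\signature_\lp}$ and $\lp$ is equality-free, $T$ contains no equality $t_1 = t_2$ between distinct names, so $\inter$ (a model of $T$) interprets distinct names distinctly and the associated name substitution $\varsub$ is unique, making the any-name and all-name semantics coincide on the modal atoms of the rule; (iii) hence $b_i\varsub \in T$ and $c_j\varsub \notin T$ for this unique $\varsub$.

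It then remains to show $h_1\varsub \in T$, which immediately yields $(\inter,B) \models \mop h_1$. For this, take any $\inter' \models_T \embeb(\lp)$: since $T$ is a stable expansion of $\embeb(\lp)$, we have $\embeb(\lp) \models_T T$, hence $\inter' \models_T T$, so the objective atoms $b_i\varsub \in T$ are satisfied in $\inter'$, while $\mop b_i\varsub$ and $\lnot \mop c_j\varsub$ hold in $\inter'$ because they are determined entirely by $T$ (using the stable-set properties $\phi \in T \Rightarrow \mop\phi \in T$ and $\phi \notin T \Rightarrow \lnot\mop\phi \in T$). The ground instance $\embeb(r)\varsub$, a classical consequence of the quantified $\embeb(r)$, then forces $\inter' \models h_1\varsub$; thus $\embeb(\lp) \models_T h_1\varsub$, and stability of $T$ gives $h_1\varsub \in T$. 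The converse overall direction, that every stable expansion of $\embeh(\lp)$ is a stable expansion of $\embeb(\lp)$, is proved by the same template, except that $h_1\varsub \in T$ is now obtained by applying the ground instance $\embeh(r)\varsub$ in a model of $\embeh(\lp)$ and invoking stability of $T$ for $\embeh(\lp)$. I expect the main obstacle to be precisely this bridging step---translating the modal assertion $(\inter,B) \models \mop h_1$ into the objective membership $h_1\varsub \in T$---which must combine UNA-forced uniqueness of $\varsub$, safety (so that all variables of $h_1$ are named), and the fixpoint property of $T$ in one argument that applies uniformly under both any-name and all-name semantics.
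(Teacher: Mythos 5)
Your argument is correct and rests on the same three mechanisms as the paper's proof: safety together with the satisfied modal body atoms $\mop b_i$ forces every variable of the rule to be assigned a named individual; the $\UNA$ axioms (together with the absence of nontrivial equalities in $T$) make the associated name substitution $\varsub$ unique, so the any- and all-name readings of the modal atoms collapse to membership of $b_i\varsub$, $c_j\varsub$, $h_1\varsub$ in $T$; and the fixpoint property of $T$ turns $b_i\varsub\in T$ and $c_j\varsub\notin T$ into $h_1\varsub\in T$, whence $\mop h_1$ holds. The paper packages the first point as an explicit reduction of entailment to named interpretations and obtains $h\varsub\in T_{\oga}$ via Theorem~\ref{cor:corr-ground-atomic}, whereas you argue pointwise on the satisfying assignment and extract $h_1\varsub\in T$ directly from an arbitrary model of the same embedding; these are organizational differences only.

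Two small points need repair. First, the claim that the inconsistent belief set is always a stable expansion of every theory is false (e.g., $\{p\}$ has a single, consistent expansion; $\fmodl$ is an expansion of $\fot$ only when $\fot$ has no model with respect to $\fmodl$), and your main argument does not cover $T=\fmodl$, since it needs $t_1=t_2\notin T$ to activate the $\UNA$ axioms. The case is nevertheless harmless, but for the opposite reason: the all-true Herbrand interpretation over $\names_{\signature_\lp}$ satisfies both $\embeb(\lp)$ and $\embeh(\lp)$ with respect to $\fmodl$, so neither theory has an inconsistent stable expansion. Second, you should justify that $T$ contains no equality between distinct names rather than deduce it from equality-freeness of $\lp$ alone: the same all-true Herbrand interpretation witnesses $\embeb(\lp)\not\models_T t_1=t_2$, and in the converse direction, where the modal consequent $\mop h_1$ blocks this particular witness, one can instead appeal to Theorem~\ref{the:embedd-stable}, which identifies $T_{\oga}$ with a stable model of $\lp$ and hence excludes nontrivial equalities.
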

For arbitrary normal programs,
the embeddings $\embeb$ and $\embeh$ differ.

\begin{example}
\label{ex:EB-EH-nlp} Consider $\lp = \{p(a);p(x); q(x)
\leftarrow p(x)\}$. Then, the embedding $\embeh(\lp) = \{p(a)\land
\mop p(a), $ $\forall x\st p(x)\land \mop p(x),$ $\forall x\st
p(x) \land \mop p(x)\limpl q(x) \land \mop q(x)\}$ has one
stable expansion, which contains $\forall x\st q(x)$, while
$\embeb(\lp) = \{p(a), $ $\forall x\st p(x), \forall x\st p(x) \land \mop
p(x)\limpl q(x)\}$ has one
stable expansion, which does not contain $\forall x\st q(x)$,
because $\forall x\st \mop p(x)$ is not
necessarily true when $\forall x\st p(x)$ is true; in other words, the converse
Barcan formula ($\mop\forall x\st \phi(x)\limpl \forall x\st \mop \phi(x)$) is not
universally valid, which is a property
of FO-AEL under both the any- and all-name semantics~\cite{konolige91:_quant}.
\end{example}
Note that the result also does not extend to the  embeddings \embhp
and \embhpdis.

\begin{example}
\label{ex:HP-HPv}
Consider
$\lp = \{q(x) \leftarrow p(x)\}$. Then, $\embhp(\lp)=\{\forall
x\st p(x) \limpl q(x)\}$ has one stable expansion, which
contains $\forall x\st p(x)\limpl q(x)$, while
$\embeb(\lp)=\{\forall x\st p(x) \land \mop p(x) \limpl q(x)\}$
has one stable expansion which does not contain $\forall x\st
p(x) \limpl q(x)$. This difference is caused by the fact that
$\mop p(x)$ will be false in case an unnamed individual is
assigned to $x$. Similar observations hold for $\embhpdis$; the $\PIA$ axioms do
not help, since they are only concerned with ground atoms and
thus do not apply to unnamed individuals.
\end{example}
\begin{proposition}
  \label{prop:corr-ground-programs}
If $ \lp\in \groundclass $, then
  $\embhpdis(\lp)\expequiv\embebdis(\lp)$.
\end{proposition}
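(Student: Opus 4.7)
The plan is to show that, for ground programs, the difference between $\embhpdis$ and $\embebdis$ — namely the extra conjunct $\mop b_i$ on each positive body atom — is neutralized by the $\PIA$ axioms $\alpha\limpl\mop\alpha$, which both embeddings include. Since for a ground rule $r$ each $b_i$ is an objective ground atom, the $\PIA$ axioms make $b_i$ and $b_i\land\mop b_i$ equivalent in every relevant model, so the rule translations $\embhpdis(r)$ and $\embebdis(r)$ become equivalent, and the stable expansions coincide.

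More concretely, first I would observe that any stable expansion $T$ of either $\embhpdis(\lp)$ or $\embebdis(\lp)$ contains the entire theory, in particular all axioms in $\PIA_{\signature_\lp}$, because stable expansions are closed under first-order consequence and contain the base theory. Hence for any ground atom $\alpha$ in $\flang_\lp$, $T$ contains $\alpha\limpl\mop\alpha$.

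Next I would prove a key lemma: for any belief set $T$ containing $\PIA_{\signature_\lp}$ and any interpretation $\inter$, if $\inter\models_T \PIA_{\signature_\lp}$, then for every ground atom $\alpha$ we have $\inter\models_T\alpha$ iff $\inter\models_T\alpha\land\mop\alpha$. This is immediate from the semantics of $\limpl$ and $\mop$. Consequently, for each ground rule $r$ of \lp and each such $\inter$, $\inter\models_T \embhpdis(r)$ iff $\inter\models_T \embebdis(r)$, because the bodies $\bigwedge_i b_i\land\bigwedge_j\lnot\mop c_j$ and $\bigwedge_i (b_i\land\mop b_i)\land\bigwedge_j\lnot\mop c_j$ agree in $\inter$, while the heads $\bigvee_k h_k$ are identical. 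Since $\embhpdis(\lp)$ and $\embebdis(\lp)$ share their $\PIA$ and $\UNA$ axioms and differ only on rule translations, the models of $\embhpdis(\lp)$ with respect to $T$ are exactly the models of $\embebdis(\lp)$ with respect to $T$, provided $T\supseteq\PIA_{\signature_\lp}$.

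Finally, I would conclude with the fixpoint argument: for any $T$ containing the $\PIA$ axioms,
\[
\{\phi\mid \embhpdis(\lp)\models_T\phi\}=\{\phi\mid \embebdis(\lp)\models_T\phi\},
\]
so $T=\{\phi\mid \embhpdis(\lp)\models_T\phi\}$ iff $T=\{\phi\mid \embebdis(\lp)\models_T\phi\}$, i.e., $T$ is a stable expansion of $\embhpdis(\lp)$ iff it is a stable expansion of $\embebdis(\lp)$. Since any stable expansion of either theory automatically contains $\PIA_{\signature_\lp}$, the restriction on $T$ is harmless, and we obtain $\embhpdis(\lp)\expequiv\embebdis(\lp)$. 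The main subtlety — and the only real obstacle — is being careful that the $\PIA$ axioms are in $T$ before exploiting their effect, which is why groundness is essential: it guarantees that the relevant $b_i$ are objective ground atoms to which a $\PIA$ axiom actually applies (as seen in Example~\ref{ex:HP-HPv}, the argument fails in the non-ground case precisely because $\PIA$s do not apply to unnamed individuals).
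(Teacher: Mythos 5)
Your proof is correct and follows essentially the same route as the paper's: both neutralize the extra conjuncts $\mop b_i$ in the antecedents of $\embebdis(r)$ via the $\PIA$ axioms, so that the two theories have the same models with respect to any relevant belief set and hence identical stable expansions (your version just spells out the fixpoint step that the paper leaves implicit). The only minor remark is that your proviso $T\supseteq\PIA_{\signature_\lp}$ is not actually needed, since the key lemma only uses $\inter\models_T\PIA_{\signature_\lp}$, which holds for every model of either embedding regardless of $T$.
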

Note that this result does not extend to the embedding \embehdis;
it does
not include the $\PIA$ axioms, and thus the argument used in the proof of
Proposition~\ref{prop:corr-ground-programs} does not apply.

\subsection{Relationships between Consequences of Embeddings}
  \label{sec:relat-betw-models}

In order to investigate the relationships between the embeddings
with respect to autoepistemic consequences, we first compare the
embeddings with respect to their autoepistemic models. Recall that
an autoepistemic interpretation $\langle \inter,T\rangle$ consists of a
first-order interpretation $\inter$ and a belief set $T\subseteq
\fmodl$.

\begin{proposition}
\label{prop:models-norm} For every $\lp\in\narbclassprg$ and every interpretation
\ainter,
$\inter \models_T \embeh(\lp)$ implies $\inter \models_T
\embeb(\lp)$ and $\inter \models_T \embhp(\lp)$ implies $\inter
\models_T \embeb(\lp)$.
 \end{proposition}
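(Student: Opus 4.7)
The plan is to prove the two implications separately by inspecting the embeddings rule-by-rule. Both $\embhp(\lp)$, $\embeb(\lp)$, and $\embeh(\lp)$ contain the same $\UNA_{\signature_\lp}$ axioms by Definition~\ref{def:normal-embedding}, so these contribute identically to satisfaction and can be ignored. What remains is to show that, for each rule $r \in \lp$ of the form~\eqref{eq:rule}, the formula $\embeb(r)$ is satisfied in $\ainter$ whenever $\embeh(r)$ (resp.\ $\embhp(r)$) is.

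Writing $A = \bigwedge_i b_i$, $A' = \bigwedge_i (b_i \land \mop b_i)$, $N = \bigwedge_j \lnot \mop c_j$, $H = h_1$, and $H' = h_1 \land \mop h_1$, the three translations of $r$ are $(\forall)(A \land N \limpl H)$, $(\forall)(A' \land N \limpl H)$, and $(\forall)(A' \land N \limpl H')$, respectively. For the first implication, observe that $H' \limpl H$ is a propositional tautology, so the consequent of $\embeh(r)$ is stronger than that of $\embeb(r)$ while the antecedents coincide; hence $\embeh(r) \limpl \embeb(r)$ is a first-order validity, and in particular $(\inter,B) \models_T \embeh(r)$ implies $(\inter,B) \models_T \embeb(r)$ for every variable assignment $B$. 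For the second implication, $A' \limpl A$ is a propositional tautology, so the antecedent of $\embeb(r)$ is stronger than that of $\embhp(r)$ while the consequents coincide; hence $\embhp(r) \limpl \embeb(r)$ is a first-order validity, and the same pointwise argument applies.

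There is essentially no obstacle: the argument is purely logical and does not interact with the subtleties of the any- versus all-name semantics, because the key implications $H' \limpl H$ and $A' \limpl A$ hold at the level of the satisfaction clauses for $\land$ and do not depend on how modal atoms are interpreted. Thus the result holds uniformly under both $\anymodels$ and $\allmodels$. Assembling the pointwise conclusions over all $r \in \lp$ and combining with the shared $\UNA$ axioms yields $\inter \models_T \embeb(\lp)$ in both cases, completing the proof.
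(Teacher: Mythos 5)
Your proposal is correct and matches the paper's own argument: the first implication is immediate because $\embeh(r)$ only strengthens the consequent of $\embeb(r)$ to $h_1\land\mop h_1$, and the second because $\embeb(r)$ only strengthens the antecedent of $\embhp(r)$, with the shared $\UNA$ axioms playing no role. The extra detail about the argument being pointwise in the variable assignment and independent of the any-/all-name distinction is a fine elaboration but not a different route.
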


\begin{proposition}
\label{prop:models-disj} For every $\lp\in\arbclassprg$
and every interpretation \ainter, $\inter \models_T \embhpdis(\lp)$ implies
$\inter\models_T \embebdis(\lp)$.
Furthermore, if $\lp$ is safe, then $\inter \models_T \embebdis(\lp)$ implies $\inter \models_T
\embehdis(\lp)$.
  \end{proposition}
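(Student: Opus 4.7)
The plan is to verify both implications by examining each conjunct of the embeddings separately; since $\UNA_{\Sigma_\lp}$ appears in all three embeddings and $\PIA_{\Sigma_\lp}$ appears in both $\embhpdis(\lp)$ and $\embebdis(\lp)$, only the translated rules require scrutiny in each case.

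For the first implication, I would fix a rule $r\in\lp$ and observe that $\embebdis(r)$ has the same consequent $\bigvee_k h_k$ as $\embhpdis(r)$ but an antecedent $\bigwedge_i (b_i \land \mop b_i) \land \bigwedge_j \lnot \mop c_j$ that is strictly stronger than $\bigwedge_i b_i \land \bigwedge_j \lnot \mop c_j$. Since propositionally $A \limpl C$ yields $(A \land A') \limpl C$, this entailment lifts uniformly through the universal closure and through both the any- and all-name satisfaction clauses, so any model of $\embhpdis(r)$ is automatically a model of $\embebdis(r)$.

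For the second implication, I would fix a rule $r$, an interpretation $\langle w, T \rangle$ satisfying $\embebdis(\lp)$, and a variable assignment $B$ at which the $\embehdis(r)$ antecedent $\bigwedge_i (b_i \land \mop b_i) \land \bigwedge_j \lnot \mop c_j$ holds. Since $\embebdis(r)$ has exactly this antecedent and the consequent $\bigvee_k h_k$, some $h_k$ is satisfied at $(w, B)$; the task is then to upgrade this to $(w, B) \models_T h_k \land \mop h_k$. Safety enters here: the satisfaction clause for $\mop b_i$ (under either semantics) forces every free variable of $b_i$ to be named under $B$, since otherwise no name substitution $\varsub$ associated with $B$ can make $b_i\varsub$ closed. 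By safety, every variable of $r$, in particular every variable of $h_k$, appears in some $b_i \in B^+(r)$, so every variable of $h_k$ is named under $B$. Consequently, for any name substitution $\varsub$ associated with $B$, $h_k\varsub$ is a ground objective atom with $(h_k\varsub)^w = h_k^{w,B}$, so $w \models_T h_k\varsub$. Applying the PIA axiom $h_k\varsub \limpl \mop h_k\varsub$ from $\embebdis(\lp)$ yields $h_k\varsub \in T$, and since this holds for every associated $\varsub$, I conclude $(w, B) \models_T \mop h_k$ under both semantics.

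The main obstacle is exactly this final step and the pivotal role of safety: without safety, a variable of $h_k$ could be assigned an unnamed individual under $B$, leaving $h_k$ irreducible to a ground instance and blocking the use of PIA, which is confined to ground atoms. This is essentially the same phenomenon that already prevents the analogous correspondence from extending to arbitrary programs in Example~\ref{ex:HP-HPv}.
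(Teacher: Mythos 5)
Your proof is correct, and for the first implication it coincides with the paper's argument: $\embhpdis(\lp)$ and $\embebdis(\lp)$ share the $\UNA$ and $\PIA$ axioms and differ only in that each $\embebdis(r)$ has a stronger antecedent than $\embhpdis(r)$ with the same consequent. For the second implication the paper uses the same two ingredients as you do---safety and the $\PIA$ axioms---but routes the argument through the reduction to named interpretations borrowed from the proof of Proposition~\ref{prop:corr-safe-formulas}, and then notes that on a named interpretation every satisfied ground head atom is pushed into $T$ by $\PIA$. You instead argue directly on an arbitrary interpretation: satisfaction of the modal atoms $\mop b_i$ in the (shared) antecedent forces every variable of $B^+(r)$, hence by safety every variable of the head, to be assigned a named individual, after which $\PIA$ applied to the ground instances $h_k\varsub$ yields $\mop h_k$ under both the any- and all-name semantics. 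This inlines the named-individuals observation rather than invoking it as a separate reduction, and has the small benefit of establishing the model inclusion literally for every interpretation, as the proposition states; the one point worth making explicit under the all-name semantics is that at least one associated name substitution always exists, so an unnamed variable in some $b_i$ genuinely falsifies $\mop b_i$ instead of rendering the condition vacuously true.
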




We now consider the relative behavior of the embeddings with respect
to autoepistemic consequences. In order to present our results in a compact and
accessible way, we show a small (yet sufficient) number of relationships between the sets
of consequences in a graph (Figure~\ref{fig:rel-cons}).  Every
particular relationship between embeddings can be easily derived from
paths in this graph.

Specifically, in Figure~\ref{fig:rel-cons}(a), $C_\chi^{(\vee)}$
is short for $Cn_o(\emb^{(\vee)}(\lp))$, the straight arrow
$\longrightarrow$ represents set inclusion ($\subseteq$), and
the dotted arrow $\dashrightarrow$ represents set inclusion in
case \lp is safe. Since $ \longrightarrow $ implies $
\dashrightarrow$, dotted arrows are only shown if straight
arrows are absent. Similarly, in Figure~\ref{fig:rel-cons}(b),
$C_\chi^{(\vee)}$ is short for $Cn_{\og}(\emb^{(\vee)}(\lp))$,
and $\longrightarrow$ represents set inclusion.

The main results visible from Figure~\ref{fig:rel-cons} are
that with respect to all objective consequences, $\embeb$ is
the weakest embedding (yielding a smallest set of conclusions)
while $\embhpdis$ and $\embehdis$ are strongest; if the
embedded program is safe, then $\embehdis$ is the strongest
embedding and $\embeb$ the weakest, collapsing with $\embeh$
and $\embehdis$. With respect to ground objective consequences,
$\embhpdis$ collapses with $\embebdis$ and is the strongest
embedding, while $\embeb$ is the weakest and again collapses
with $\embeh$ and $\embehdis$; safety of the program does not
change the picture. Note that with respect to objective ground
atomic consequences, all embeddings collapse (cf.\
Theorem~\ref{cor:corr-ground-atomic}).

The following lemma states the correctness of
Figure~\ref{fig:rel-cons}(a).

\begin{lemma}
\label{lem:rel-cons} If $C_\chi^{(\vee)} \,{\longrightarrow}\,
C_\gamma^{(\vee)}$ {\rm (}resp., $C_\chi^{(\vee)}
\,{\dashrightarrow}\, C_\gamma^{(\vee)}${\rm )} in the graph of
Figure~\ref{fig:rel-cons}{\rm (}a{\rm )}, then $Cn_o(\emb^{(\vee)}(\lp))
\,{\subseteq}\, Cn_o(\tau_\gamma^{(\vee)}(\lp))$ for every $\lp
\,{\in}\, \narbclassprg$
{\rm (}resp., for every $\lp \,{\in}\, \nsafeclass${\rm )}.
Furthermore, if $C_\chi^{\vee}
\,{\longrightarrow}\,
C_\gamma^{\vee}$ {\rm (}resp., $C_\chi^{\vee} \,{\dashrightarrow}\,
C_\gamma^{\vee}${\rm )}, then $Cn_o(\emb^{\vee}(\lp)) \,{\subseteq}\,
Cn_o(\tau_\gamma^{\vee}(\lp))$ for every $\lp \,{\in}\, \arbclassprg$
{\rm (}resp., for  every $\lp \,{\in}\, \safeclass${\rm )}.
\end{lemma}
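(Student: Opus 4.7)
The plan is to verify each arrow in Figure~\ref{fig:rel-cons}(a) by a common template that reduces the claimed inclusion of autoepistemic consequences to a model inclusion between the embeddings concerned, which is then supplied either by Propositions~\ref{prop:models-norm} and~\ref{prop:models-disj}, by a direct syntactic comparison of the embeddings' axiom sets, or (for certain dotted arrows) by Proposition~\ref{prop:corr-safe-formulas}.

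Template: suppose there is an arrow $C_\chi^{(\vee)} \to C_\gamma^{(\vee)}$ in the figure, and we want to show $Cn_o(\emb^{(\vee)}(\lp)) \subseteq Cn_o(\tau_\gamma^{(\vee)}(\lp))$ (for $\lp$ arbitrary or safe, normal or disjunctive, as appropriate). Take any $\phi$ in $Cn_o(\emb^{(\vee)}(\lp))$ and any stable expansion $T'$ of $\tau_\gamma^{(\vee)}(\lp)$, and put $M = T'_{\oga}$. By Theorem~\ref{cor:corr-ground-atomic} there is a stable expansion $T$ of $\emb^{(\vee)}(\lp)$ with $T_{\oga} = M$. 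Since each embedding places only objective atomic formulas inside $\mop$, Proposition~\ref{prop:atoms-expansions} characterizes the kernels by $T_o = \{\psi \in \flang \mid \emb^{(\vee)}(\lp) \models_M \psi\}$ and $T'_o = \{\psi \in \flang \mid \tau_\gamma^{(\vee)}(\lp) \models_M \psi\}$. It therefore suffices to establish that every model of $\tau_\gamma^{(\vee)}(\lp)$ w.r.t.\ $M$ is also a model of $\emb^{(\vee)}(\lp)$ w.r.t.\ $M$; from this inclusion we obtain $T_o \subseteq T'_o$, so $\phi \in T'_o$, and since $T'$ was arbitrary, $\phi \in Cn_o(\tau_\gamma^{(\vee)}(\lp))$.

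The model inclusions are then justified arrow by arrow. Proposition~\ref{prop:models-norm} yields the inclusions for arrows out of $C_{\mathit{EB}}$ in the normal fragment of the graph, and Proposition~\ref{prop:models-disj} yields those for $\embhpdis \to \embebdis$ (solid) and $\embebdis \to \embehdis$ (dotted, safe case). For arrows between a normal embedding $\emb$ and its disjunctive counterpart $\embdis$ applied to a normal $\lp$, the inclusion is syntactic: $\embhpdis(\lp) = \embhp(\lp) \cup \PIA_{\Sigma_\lp}$, $\embebdis(\lp) = \embeb(\lp) \cup \PIA_{\Sigma_\lp}$, and $\embehdis(\lp) = \embeh(\lp)$ (since $l=1$ trivializes the head disjunction and no $\PIA$ is added in the $\embeh$ case). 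Adding axioms shrinks the class of models, yielding the required inclusion. For any dotted arrow collapsing $C_{\mathit{EB}}$ with $C_{\mathit{EH}}$ on safe normal programs, Proposition~\ref{prop:corr-safe-formulas} gives $\embeb(\lp) \expequiv \embeh(\lp)$, so stable expansions (and hence $Cn_o$) coincide and both inclusions hold trivially.

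The main thing to keep straight is the direction of the inclusions: a subset relation between model classes is reversed when one passes to $\models_M$-consequences, so an arrow $C_\chi^{(\vee)} \to C_\gamma^{(\vee)}$ is witnessed by an inclusion of models of $\tau_\gamma^{(\vee)}(\lp)$ into those of $\emb^{(\vee)}(\lp)$, not the other way round. Once this bookkeeping is settled, each arrow in Figure~\ref{fig:rel-cons}(a) becomes a routine instance of the template, and no further technical obstacles arise.
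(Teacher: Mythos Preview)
Your proposal is correct and follows essentially the same approach as the paper: both set up the identical template (correspondence of stable expansions on ground atoms via Theorem~\ref{cor:corr-ground-atomic}, then kernel characterization via Proposition~\ref{prop:atoms-expansions}, reducing the consequence inclusion to a model inclusion with the belief set fixed to $T_{\oga}$), and then discharge each arrow using exactly the same ingredients---Propositions~\ref{prop:models-norm} and~\ref{prop:models-disj} for the model inclusions, the syntactic containments $\emb(\lp)\subseteq\embdis(\lp)$ for the normal-to-disjunctive arrows, the identity $\embeh(\lp)=\embehdis(\lp)$ on normal programs, and Proposition~\ref{prop:corr-safe-formulas} for the dotted arrow $C_{EH}\dashrightarrow C_{EB}$. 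Your explicit remark about the reversal of direction when passing from model inclusion to consequence inclusion is a helpful clarification that the paper leaves implicit.
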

Note that by transitivity of $\subseteq$, paths in the graph yield
further relations; e.g., $Cn_o(\embeb(\lp)) \subseteq
Cn_o(\embhpdis(\lp))$ since $C_{EB}$ reaches  $C_{HP}^\vee$
via a path with straight edges.

We now show that the graph exactly
characterizes the containment relationships via paths. To this end, we
first note some negative relationships between embeddings.

\begin{lemma}
\label{lem:not-rel-cons}
The following inclusion relations do not hold: $Cn_o(\tau_{EH}^{\vee}(\lp)) \subseteq
Cn_o(\tau_{HP}^{\vee}(\lp))$, for every $\lp\in\arbclassprg$;
$Cn_o(\tau_{EB}^{\vee}(\lp)) \subseteq
Cn_o(\tau_{HP}(\lp))$, for every $\lp\in\nsafeclass$; and
$Cn_o(\tau_{HP}(\lp)) \subseteq
Cn_o(\tau_{EB}^{\lor}(\lp))$, for every $\lp\in\nsafeclass$.
\end{lemma}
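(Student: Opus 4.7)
The plan is to establish each of the three non-inclusions by exhibiting an explicit program $\lp$ together with an objective witness formula $\phi$. In every case Theorems~\ref{the:embedd-stable} and \ref{the:embedd-disj} pin down the objective ground atoms of the unique consistent stable expansion of the embedding, so to argue $\phi \in Cn_o$ for one embedding it suffices to show $\phi$ is either an axiom or semantically forced by the axioms together with the belief set, while for $\phi \notin Cn_o$ for the other it suffices to exhibit a single model satisfying the embedded theory yet refuting $\phi$.

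For the first statement, take the non-safe program $\lp_1 = \{\,p(x)\leftarrow\,;\ q(a)\leftarrow\,\}$ with $\signature_{\lp_1} = \{p, q, a\}$ (so that $a$ is the only name and UNA is vacuous), and set $\phi_1 = \forall x.\,x = a$. The axiom $\forall x.(p(x) \land \mop p(x))$ of $\embehdis(\lp_1)$ forces every domain element to be named: at an assignment $x^B = k$ with $k$ unnamed, no associated name substitution makes $p(x)$ closed, so $\mop p(x)$ is false at $B$ under both any- and all-name semantics. Hence every model has domain $\{a^I\}$, giving $\phi_1 \in Cn_o(\embehdis(\lp_1))$. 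On the other hand, $\embhpdis(\lp_1) = \{\forall x.\,p(x); q(a)\} \cup \PIA_{\signature_{\lp_1}}$ admits a model with an additional unnamed element $k$ for which $p$ is set true, so $\phi_1 \notin Cn_o(\embhpdis(\lp_1))$.

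For the second statement, take the safe normal program $\lp_2 = \{p(a); q(b)\}$ and $\phi_2 = \neg p(b)$. By Theorem~\ref{the:embedd-disj}, the unique consistent stable expansion $T$ of $\embebdis(\lp_2)$ satisfies $T_{\oga} = \{p(a), q(b)\}$, whence $p(b) \notin T$ and $\neg \mop p(b) \in T$; combined with the PIA axiom $p(b) \limpl \mop p(b)$ this yields $\neg p(b) \in T$. However, $\embhp(\lp_2) = \{p(a); q(b)\} \cup \UNA_{\signature_{\lp_2}}$ has no PIA, and the model with domain $\{a^I, b^I\}$ ($a^I \ne b^I$) and $p^I = \{a^I, b^I\}$ satisfies all axioms while falsifying $\neg p(b)$, so $\phi_2 \notin Cn_o(\embhp(\lp_2))$. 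For the third statement, take $\lp_3 = \{q(a); p(x) \leftarrow q(x)\}$ (safe, normal, $a$ the only name) and $\phi_3 = \forall x.\,q(x) \limpl p(x)$. Since $\phi_3$ is literally an axiom of $\embhp(\lp_3)$, it lies in the stable expansion. For $\embebdis(\lp_3)$ the rule becomes $\forall x.\,q(x) \land \mop q(x) \limpl p(x)$, and one constructs a model with domain $\{a^I, l\}$, $q^I = \{a^I, l\}$, and $p^I = \{a^I\}$: at the unnamed element $l$ the antecedent $\mop q(x)$ is false, so the modal rule is vacuously satisfied; at $a^I$ both $\mop q(x)$ and $p(x)$ hold. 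The (vacuous) UNA and the PIA axioms on the names $\{a\}$ also hold, while $\phi_3$ fails at $l$, giving $\phi_3 \notin Cn_o(\embebdis(\lp_3))$.

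The main obstacle is to verify carefully that the constructed first-order interpretations really are models of the \emph{full} embedded theory (including the UNA and, where applicable, PIA axioms) with respect to the appropriate stable expansion's belief set, and to ensure the argument holds uniformly for both the any- and all-name semantics. The latter turns out to cost nothing in the examples above, because the signatures are minimal: at every assignment mapping a variable to a named element the associated name substitution is unique, so both semantics coincide there, while at unnamed elements no substitution closes the free variables, and hence $\mop$-formulas with free variables are false under either semantics.
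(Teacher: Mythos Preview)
Your proof is correct and follows essentially the same approach as the paper: exhibit an explicit program and an objective witness formula for each non-inclusion, exploiting respectively that (i) $\mop$-atoms in $\embehdis$-heads force all individuals to be named, (ii) the $\PIA$ axioms in $\embebdis$ yield negations of atoms absent from the expansion, and (iii) the $\mop$-atoms in $\embebdis$-bodies vacuously satisfy embedded rules at unnamed individuals. Your concrete programs and witnesses differ slightly from the paper's---for instance, you use $\forall x.\,x = a$ in place of the paper's $\forall x.\,q(x)$ for the first item, and ground facts $\{p(a);q(b)\}$ with witness $\neg p(b)$ in place of the paper's $\{b \leftarrow a\}$ with witness $\neg a$ for the second---but the underlying mechanism is identical, and your choices are if anything a bit cleaner (you avoid the paper's schematic ``$q(c)$ for all $c\in\names$'' and make the required constant explicit in the third example).
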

From these negative relationships, combined with the positive
ones above, we can infer further negative relationships. For
example, from $C^\lor_{EH} \not\subseteq C^\lor_{EB}$ and
$C^\lor_{EH} \subseteq C_{EH}$, we infer $C_{EH} \not\subseteq
C^\lor_{EB}$. Exploiting this, we show the following result.

\nop{******** old figure, submit version *******
\begin{narrowfig}{0.52\linewidth}
      $$\xymatrix{
        & \ar[r] C_{EB}^\vee & C_{HP}^\vee   \\
        C_{EH}^\vee \ar@/^/[r] \ar@{.>}[ur] & \ar@/^/[l] C_{EH}
        \ar@/^/@{.>}[r]  & \ar@/^/[l] \ar[ul] C_{EB}
        \ar[r] & C_{HP} \ar[ul]
      }$$
\caption{Containment relationships between sets of objective consequences.}
\label{fig:rel-cons}
\end{narrowfig}
******}
\begin{figure}

\centerline{$
\begin{array}{@{}c@{\qquad}c@{}}
\xymatrix{
         \ar[r] C_{EB}^\vee & C_{HP}^\vee   \\
        C_{EH}^\vee = C_{EH}\ar@{.>}[u] \ar@/^/@{.>}[r]   & \ar@/^/[l] \ar[lu] C_{EB}
        \ar[r] & C_{HP} \ar[lu]
      }
&
\xymatrix{
         C_{EB}^\vee = C_{HP}^\vee &  \\
        C_{EH}^\vee = C_{EH} = C_{EB} \ar[u] \ar[r] & C_{HP} \ar[lu]
      }
\\
~\\
(\mbox{a}) &  (\mbox{b})
\end{array}
$}

\caption{Inclusion between sets of (a) all objective
  and (b) all objective ground consequences.}
\label{fig:rel-cons}
\end{figure}
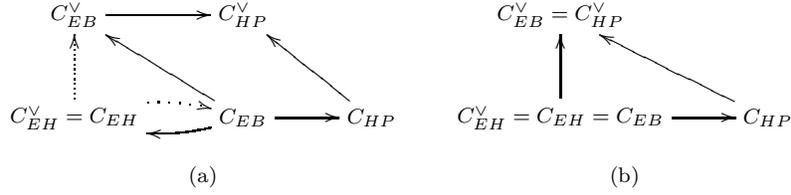


\begin{theorem}
\label{th:cons-og} For $\lp\,{\in}\,\narbclassprg$ {\rm (}resp.,
$\lp\,{\in}\,\nsafeclass${\rm )}, $Cn_o(\emb^{(\vee)}(\lp))
{\subseteq} Cn_o(\tau_\gamma^{(\vee)}(\lp))$ iff
$C^{(\lor)}_\gamma$ is reachable from $C^{(\lor)}_\chi$ in the
graph in Figure~\ref{fig:rel-cons}{\rm (}a{\rm )} on a path with
$\longrightarrow$ arcs {\rm (}resp., with arbitrary arcs{\rm )}.

Likewise, for  $\lp\,{\in}\,\arbclassprg$ {\rm (}resp., $\lp\,{\in}\,\safeclass${\rm )}, we have that $Cn_o(\emb^{\vee}(\lp)) \,{\subseteq}\,
Cn_o(\tau_\gamma^{\vee}(\lp))$ iff $C^{\lor}_\gamma$ is reachable from $C^{\lor}_\chi$
 on a path with $\longrightarrow$ {\rm (}resp.,  with arbitrary arcs{\rm )}.
\end{theorem}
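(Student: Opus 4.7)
\textbf{Proof plan for Theorem~\ref{th:cons-og}.}

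The plan is to handle the two directions separately and to reduce the ``only if'' direction to the three base non-inclusions given by Lemma~\ref{lem:not-rel-cons}.

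For the ``if'' direction, suppose $C^{(\lor)}_\gamma$ is reachable from $C^{(\lor)}_\chi$ in the appropriate version of the graph in Figure~\ref{fig:rel-cons}(a): by solid arcs for $\lp\in\narbclassprg$ (resp.\ $\lp\in\arbclassprg$), and by arbitrary arcs (solid or dotted) for $\lp\in\nsafeclass$ (resp.\ $\lp\in\safeclass$). By Lemma~\ref{lem:rel-cons}, each arc along such a path yields a containment $Cn_o(\tau^{(\lor)}_{\chi'}(\lp))\subseteq Cn_o(\tau^{(\lor)}_{\gamma'}(\lp))$ that holds uniformly for all programs $\lp$ in the class in question. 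Chaining these containments by transitivity of $\subseteq$ delivers the required inclusion $Cn_o(\emb^{(\lor)}(\lp))\subseteq Cn_o(\tau^{(\lor)}_\gamma(\lp))$.

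For the ``only if'' direction, I use the following transport principle: if $C_\alpha\not\subseteq C_\beta$ is witnessed by some program $\lp_0$ (in the class under consideration) via a formula $\phi_0\in Cn_o(\tau^{(\lor)}_\alpha(\lp_0))\setminus Cn_o(\tau^{(\lor)}_\beta(\lp_0))$, and if there are paths $\alpha\rightsquigarrow\chi$ and $\gamma\rightsquigarrow\beta$ in the relevant subgraph, then $\lp_0$ also witnesses $C_\chi\not\subseteq C_\gamma$: by the ``if'' direction applied to these two paths, $\phi_0\in Cn_o(\tau^{(\lor)}_\chi(\lp_0))$ and $\phi_0\notin Cn_o(\tau^{(\lor)}_\gamma(\lp_0))$. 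Lemma~\ref{lem:not-rel-cons} supplies three base non-inclusions: $C^\vee_{EH}\not\subseteq C^\vee_{HP}$, $C^\vee_{EB}\not\subseteq C_{HP}$, and $C_{HP}\not\subseteq C^\vee_{EB}$, each with an explicit witness class. The task is then to enumerate, for each of the four considered settings, the pairs $(\chi,\gamma)$ with no path in the pertinent subgraph and, for each such pair, to choose one of the three base non-inclusions $(\alpha,\beta)$ together with the two paths $\alpha\rightsquigarrow\chi$ and $\gamma\rightsquigarrow\beta$ that must exist in that subgraph.

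The main obstacle is the bookkeeping involved in this case analysis: we must verify simultaneously (i) which arcs are available (solid only vs.\ solid plus dotted), (ii) which nodes are admissible (only the disjunctive nodes $C^\vee_{HP},C^\vee_{EB},C^\vee_{EH}$ when $\lp\in\arbclassprg$ or $\safeclass$), and (iii) that the witness program for the chosen base non-inclusion actually lies in the class of programs currently under consideration (e.g.\ the witness for $C^\vee_{EH}\not\subseteq C^\vee_{HP}$ must be usable for both the $\arbclassprg$ and $\narbclassprg$ rows). Once each unreachable pair has been matched with a triple (base non-inclusion, upward path, downward path), contraposition together with the ``if'' direction completes the characterization. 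A small table listing, for each of the four settings, the unreachable pairs and the chosen $(\alpha,\beta)$ plus paths is a natural way to present the analysis concisely.
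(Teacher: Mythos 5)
Your overall strategy coincides with the paper's: the ``if'' half is exactly Lemma~\ref{lem:rel-cons} plus transitivity of $\subseteq$, and the ``only if'' half propagates the three base non-inclusions of Lemma~\ref{lem:not-rel-cons} along known inclusions---your transport principle is precisely the paper's two monotonicity properties of non-inclusion, and both proofs defer the exhaustive pair-by-pair check. The gap is in how you constrain the transport: you require the two paths $\alpha\rightsquigarrow\chi$ and $\gamma\rightsquigarrow\beta$ to lie ``in the relevant subgraph'', i.e., the subgraph attached to the class currently being characterized. Under that restriction several unreachable pairs cannot be handled at all. For instance, for $\lp\in\narbclassprg$ (solid arcs only) the theorem asserts $C_{HP}\not\subseteq C_{EH}$, but no base non-inclusion $(\alpha,\beta)$ of Lemma~\ref{lem:not-rel-cons} admits a solid path $\alpha\rightsquigarrow C_{HP}$ together with a solid path $C_{EH}\rightsquigarrow\beta$: the solid ancestors of $C_{HP}$ are only $C_{HP}$ and $C_{EB}$, while the only solid descendant of $C_{EH}$ is $C_{EH}=C^\vee_{EH}$ itself. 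The same obstruction occurs for $C^\vee_{EB}\not\subseteq C_{EH}$ and $C^\vee_{HP}\not\subseteq C_{EH}$, and in the disjunctive settings for pairs such as $C^\vee_{HP}\not\subseteq C^\vee_{EB}$ and $C^\vee_{EB}\not\subseteq C^\vee_{EH}$, where every usable base non-inclusion involves the non-disjunctive node $C_{HP}$, which is not in your ``relevant subgraph'' at all.

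The repair---and what the paper's properties (i)/(ii) together with its footnote actually exploit---is to transport along inclusions that hold for the \emph{particular witness program} of the base non-inclusion, not only those valid uniformly over the class being characterized. The witness for $C_{HP}\not\subseteq C^\vee_{EB}$ is the safe program $\{q(x)\leftarrow p(x)\}$, for which the dotted inclusion $C_{EH}\subseteq C^\vee_{EB}$ is available; this yields $C_{HP}\not\subseteq C_{EH}$ even in the unrestricted normal setting (alternatively, the program of Example~\ref{ex:HP-EB} separates $C_{HP}$ from $C_{EH}$ directly). Likewise, since all three witnesses are normal programs, they are a fortiori disjunctive, so inclusions such as $C_{HP}\subseteq C^\vee_{HP}$ may be used when establishing the purely disjunctive non-inclusions (this is the content of the paper's footnote). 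With the transport principle relaxed in this way---paths taken in the graph appropriate to the witness, while point (iii) of your bookkeeping only needs the witness to belong to the class being characterized---your enumeration goes through and the argument matches the paper's proof.
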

\begin{proof}
By Lemmas~\ref{lem:rel-cons} and \ref{lem:not-rel-cons}, the
respective containment relationships are correct. Clearly,
by reflexivity and transitivity of set
inclusion, paths in the graph of
Figure~\ref{fig:rel-cons}(a) are sound with respect to positive
containments. Their completeness, for both arbitrary $\lp$ and
safe $\lp$, is established using the following basic properties of
non-inclusion: (i) $A \not\subseteq B$ and $C\subseteq
B$ implies $A\not\subseteq C$ and (ii) $A\not\subseteq B$ and
$A\subseteq C$ implies $C\not\subseteq B$.\footnote{Note that
non-inclusion for normal programs implies non-inclusion for
disjunctive programs, since every normal program is \emph{a fortiori} a
disjunctive program.} Exhaustive application to the
(non-)containments in
Lemmas~\ref{lem:rel-cons} and \ref{lem:not-rel-cons}  (e.g., using a simple logic program)
yields one of $C^{(\lor)}_\chi \subseteq
C^{(\lor)}_\gamma$ and $C^{(\lor)}_\chi \not\subseteq C^{(\lor)}_\gamma$ for each pair
$C^{(\lor)}_\chi$, $C^{(\lor)}_\gamma$.
\end{proof}
Accordingly, $C_{EB} \subseteq C^\lor_{EH}$ and
$C_{EB}\subseteq C^\lor_{HP}$ are the only nontrivial
inclusions for arbitrary programs besides those in
Figure~\ref{fig:rel-cons}(a);
for safe programs, there are more.
We note that the figure is minimal, in the sense that if any of the arcs
is removed (or turned from solid into dashed), the theorem no longer
holds.

The containment relationships in Figure~\ref{fig:rel-cons}(b)
are easily obtained from the already established results.
\begin{theorem}
\label{th:cons-oga} For every $\lp\,{\in}\,\narbclassprg$,
$Cn_\og(\emb^{(\vee)}(\lp)) \subseteq
Cn_\og(\tau_\gamma^{(\vee)}(\lp))$ holds iff $C^{(\lor)}_\gamma$ is
reachable from $C^{(\lor)}_\chi$ in the graph in
Figure~\ref{fig:rel-cons}{\rm (}b{\rm )}.
Furthermore, for every $\lp\,{\in}\,\arbclassprg$,
$Cn_\og(\emb^{\vee}(\lp)) \subseteq
Cn_\og(\tau_\gamma^{\vee}(\lp))$ holds iff $C^{\lor}_\gamma$ is
reachable from $C^{\lor}_\chi$.
\end{theorem}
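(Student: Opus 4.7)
The plan is to prove Theorem~\ref{th:cons-oga} by combining the $og$-level equivalences established in Proposition~\ref{prop:corr-ground-formulas} with the structure already developed in the proof of Theorem~\ref{th:cons-og}. The positive containments then come nearly for free, and only a handful of ground counterexamples are needed to close off the remaining non-edges in Figure~\ref{fig:rel-cons}(b).

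For the ``if'' direction I would first justify the three equalities that collapse nodes in Figure~\ref{fig:rel-cons}(b). The equality $C_{HP}^\vee = C_{EB}^\vee$ at the $og$-level is immediate from $\embhpdis(\lp) \expequiv_{og} \embebdis(\lp)$ (Proposition~\ref{prop:corr-ground-formulas}); likewise, $C_{EH} = C_{EB}$ comes from $\embeb(\lp) \expequiv_{og} \embeh(\lp)$. The equality $C_{EH}^\vee = C_{EH}$ holds syntactically on normal programs: for a normal rule $r$ (so $|H(r)|=1$), the consequent $\bigvee_k (h_k \land \mop h_k)$ of $\embehdis(r)$ degenerates to $h_1 \land \mop h_1$, which is exactly the consequent of $\embeh(r)$; since $\embehdis(\lp)$ and $\embeh(\lp)$ both carry only $\UNA_{\Sigma_\lp}$ (no $\PIA$), the embeddings coincide literally on normal $\lp$. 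The three directed arrows in Figure~\ref{fig:rel-cons}(b) then lift from Theorem~\ref{th:cons-og}: any $Cn_o$ inclusion restricts to a $Cn_\og$ inclusion by intersecting with $\flang_g$.

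For the ``only if'' direction I would exhibit two ground counterexamples and propagate non-inclusions along the positive containments just established, exactly in the spirit of the proof of Theorem~\ref{th:cons-og}. To witness $C_{HP}^\vee \not\subseteq C_{HP}$ I would use $\lp = \{p(a) \leftarrow p(a)\}$: the $\PIA$ axiom $p(a) \limpl \mop p(a)$ in $\embhpdis(\lp)$, together with $p(a) \notin T$, forces $\neg p(a)$ into the unique stable expansion of $\embhpdis(\lp)$, whereas the stable expansion of $\embhp(\lp)$ contains neither $p(a)$ nor $\neg p(a)$. To witness $C_{HP} \not\subseteq C_{EB}$ (equivalently $C_{HP} \not\subseteq C_{EH}$, by the established equality $C_{EH} = C_{EB}$), I would use the propositional $\lp = \{q \leftarrow p\}$: the stable expansion of $\embhp(\lp) = \{p \limpl q\}$ entails the objective ground formula $p \limpl q$, whereas the stable expansion of $\embeb(\lp) = \{p \land \mop p \limpl q\}$ does not, since the autoepistemic interpretation with $p$ true, $q$ false, and $\neg \mop p$ in the belief set satisfies the axiom yet refutes $p \limpl q$. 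Combining these counterexamples with the elementary arithmetic ``$A \not\subseteq B$ and $C \subseteq B$ imply $A \not\subseteq C$'' kills every remaining non-edge (e.g., $C_{HP}^\vee \not\subseteq C_{EH}$ follows from $C_{HP}^\vee \not\subseteq C_{HP}$ together with $C_{EH} \subseteq C_{HP}$). For the purely disjunctive case only $C_{HP}^\vee \not\subseteq C_{EH}^\vee$ needs to be ruled out, and the ground program $\lp = \{p \leftarrow p\}$ suffices: $\embhpdis(\lp)$ derives $\neg p$ via $\PIA$, while $\embehdis(\lp)$ reduces to a tautology modulo $\UNA$ and derives neither $p$ nor $\neg p$.

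The main obstacle I anticipate is not depth but care. The $Cn_o$-level non-inclusions of Lemma~\ref{lem:not-rel-cons} do not descend automatically to $Cn_\og$, because their witnessing formulas may be non-ground or quantified. One therefore has to construct fresh objective \emph{and} ground counterexamples from scratch, and verify that the short list above, together with the established positive containments, really does close off every pair of distinct equivalence classes in Figure~\ref{fig:rel-cons}(b).
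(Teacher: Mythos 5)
Your overall strategy coincides with the paper's: collapse nodes via Proposition~\ref{prop:corr-ground-formulas} and the syntactic identity of $\embeh$ and $\embehdis$ on normal programs, lift the remaining arcs from Theorem~\ref{th:cons-og}, and kill the non-edges with ground counterexamples plus propagation. Your negative side is correct and in fact more explicit than the paper's (which merely points to Example~\ref{ex:HP-EB} and the proof of Lemma~\ref{lem:not-rel-cons}); the witnesses $\{p(a)\leftarrow p(a)\}$, $\{q\leftarrow p\}$ and $\{p\leftarrow p\}$ all do what you claim.

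There is, however, a genuine gap in your ``if'' direction for the second (purely disjunctive) part of the statement. The arc from the node containing $C^\lor_{EH}$ to the node $C^\lor_{EB}=C^\lor_{HP}$ asserts $Cn_\og(\embehdis(\lp))\subseteq Cn_\og(\embebdis(\lp))$ for \emph{every} $\lp\in\arbclassprg$, including unsafe non-ground programs. This is precisely the place where Figure~\ref{fig:rel-cons}(b) strengthens a dotted (safe-only) arc of Figure~\ref{fig:rel-cons}(a), so it does \emph{not} ``lift from Theorem~\ref{th:cons-og} by intersecting with $\flang_g$'': at the $Cn_o$ level the inclusion is guaranteed only for safe programs, and it provably fails for unsafe ones (first item of Lemma~\ref{lem:not-rel-cons} combined with $C^\lor_{EB}\subseteq C^\lor_{HP}$), so there is no $Cn_o$-inclusion to restrict. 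For normal $\lp$ your collapse chain $C^\lor_{EH}=C_{EH}=C_{EB}\subseteq C^\lor_{EB}$ covers it, but for genuinely disjunctive unsafe $\lp$ a separate argument is needed, for instance: pair each stable expansion $T'$ of $\embebdis(\lp)$ with the expansion $T$ of $\embehdis(\lp)$ having $T_{\oga}=T'_{\oga}$ (Theorem~\ref{cor:corr-ground-atomic}); by the \PIA{} axioms, $T'_{og}$ contains every ground formula true under the complete ground diagram determined by $T'_{\oga}$; and every formula of $T_{og}$ is true in the named (Herbrand-style) interpretation whose diagram is exactly $T_{\oga}$, which is a model of $\embehdis(\lp)$ with respect to $T_{\oga}$ because $T_{\oga}$ is a stable model of $\lp$. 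Hence $T_{og}\subseteq T'_{og}$, which gives the missing consequence inclusion; with that added, your proof closes.
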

\begin{proof}
By Theorem~\ref{th:cons-og} and
Proposition~\ref{prop:corr-ground-formulas}, $C^\lor_{EH}$,
$C_{EH}$, and  $C_{EB}$ collapse, and by
Proposition~\ref{prop:corr-ground-formulas}, $C^\lor_{EB}$ and
$C^\lor_{HP}$ collapse.
That the other relationships in
Figure~\ref{fig:rel-cons}(a) remain unchanged follows from
Example~\ref{ex:HP-EB}, the proof of
Lemma~\ref{lem:not-rel-cons}, and reasoning about subsets as in
the proof of Theorem~\ref{th:cons-og}.
\end{proof}
\nop{*** With no effort, we can add a second figure that shows
the inclusions of objective ground (og) consequences, so that
we have a complete picture for (1) oga (all collapse, by
Theorem ~\ref{cor:corr-ground-atomic}), (2) o
(Figure~\ref{fig:rel-cons}) and og:  By Theorem~\ref{th:cons-og}
and Proposition~\ref{prop:corr-ground-formulas}, $C^\lor_{EH}$,
$C_{EH}$, and  $C_{EB}$ collapse, and by
Proposition~\ref{prop:corr-ground-formulas} $C^\lor_{EB}$ and
$C^\lor_{HP}$ collapse; that the other relationships in
Figure~\ref{fig:rel-cons} remain unchanged follows from
Example~\ref{ex:HP-EB}, the proof of
Lemma~\ref{lem:not-rel-cons}, and reasoning about subsets as in
the proof of Theorem~\ref{th:cons-og}. ***}

\section{Combinations with First-Order Theories}
\label{sec:prop-comb}

In this section we explore correspondences between the logic program
embeddings from Section~\ref{sec:embedd-logic-progr} in combinations
with FO theories. To this end, we consider a basic combination of
logic programs \lp and FO theories \fot defined as
\[
\combop_\chi^{(\lor)}(\fot,\lp)=\fot\ \cup\
\embgen(\lp) \subseteq \fmodl
\]
where $\signature_{\fmodl}$
is the union of the signatures $\signature_\fot$ and
$\signature_\lp$. More involved combinations
(e.g., which augment $\lp$ and \fot with further rules and axioms,
respectively) might be recast to such basic combinations.

In the preceding sections we have considered both the any- and all-name
semantics, both in the definition of the embeddings and in our analysis of the
differences between the embeddings of logic programs.  It turned out that the
embeddings are faithful for both semantics (cf.~Theorems
\ref{the:embedd-stable} and \ref{the:embedd-disj}), implying correspondence
with respect to objective ground atoms between the two semantics for all
embeddings \embgen, and the relationships between the embeddings
stated in the
previous section hold for both semantics.  However,
in combinations with FO theories, the two semantics diverge since
names from the first-order part may not be provably identical to or
different from other names.
The following  example illustrates  differences between the semantics in the
face of positive and negative occurrences of the modal operator.

\begin{example}
\label{ex:any-all-combination-2} Consider the logic program \lp:
\begin{align*}
q(a),\\
r & \leftarrow p(x), \dnot q(x),\\
s(x) & \leftarrow p(x),
\end{align*}
and the FO theory $ \fot=\{p(b)\} $.  We note here that the signature of \lp
contains only one function symbol, the constant constant $ a $. Consequently, $
\UNA_{\signature_\lp}=\emptyset $.

$ \combeb(\fot, \lp)$ has one \anystable expansion $
T^\anysymbol $ and one \allstable expansion $ T^\allsymbol $. $
T^\anysymbol $ contains $ q(a)$, but not $ q(b)$; both contain
$ p(b) $, but not $ p(a)$.  Consider an interpretation
\interdef such that $ a^\inter=b^\inter=k $, $ k \in
p^\funsym,k \in q^\funsym, k\in s^\funsym $, and $ \inter \not
\models r $, and a variable assignment $ B $ such that $ x^B=k
$.  Then, $ \varsub=\{x/a \} $ is an associated name
substitution and $ q(x)\varsub \in T^\anysymbol $, and so $
(\inter, B) \anymodels_{T^\anysymbol} \mop q(x)$. Another
associated name substitution is $ \varsub=\{x/ b \} $, and so $
(\inter, B) \anymodels_{T^\anysymbol} \mop p(x)$.  So, $\inter
\anymodels_{T^\anysymbol} \combeb\combpar$ and thus $ r \notin
T^\anysymbol $. One can straightforwardly argue that
$\combeb\combpar\anymodels_T s(b)$, and therefore $s(b)\in
T^\anysymbol $.

Consider an interpretation $\inter'$ that is like \inter,
except that $ k \not \in s^\funsym $ and $ \inter' \models r $.
Since $ T^\allsymbol $ does not contain $ p(a) $, $ (\inter',
B) \not\allmodels_{T^\allsymbol} \mop p(x)$. $ T^\allsymbol $
does not contain $ q(b)$, but $(\inter, B)
\allmodels_{T^\allsymbol} \mop q(x)$  holds only if $
q(x)\varsub \in T^\allsymbol$ for \emph{every} \varsub
associated with $ B $, including $ \varsub=\{x / b \} $, and so
$(\inter', B) \allmodels_{T^\allsymbol}\lnot\mop q(x)$ and
$\inter' \allmodels_{T^\allsymbol} \combeb\combpar$.
Consequently, $s(b)\not\in T^\allsymbol $. It is
straightforward to verify that $ r \in T^\allsymbol $.
\end{example}
In order to avoid a proliferation of results, following
\citeN{konolige91:_quant}, we concentrate in this section on the
any-name semantics. In the following section we discuss our results in the light
of the standard names assumption, for which the any- and all-name semantics coincide.

In our analysis we consider the same syntactic
classes of programs as in the previous section and we consider the following
classes of objective theories:
\begin{quote}
%
arbitrary ($\arbclassth=2^\lang $), universal ($\uniclass$),
Horn ($\hornclass$), generalized Horn
($\ghornclass)$,
propositional ($\propclass$), and empty ($\emptyclass$, i.e.,
in semantic terms, tautological) FO theories.
\end{quote}
Generalized Horn formulas have the form $(\forall)
b_1\land\cdots\land b_n\limpl \exists \vec{y}.h$ where all
$b_i$ and $h$ are atomic and variables $\vec{y}$ occur only in
$h$, which may be absent. Note that similar formulas $(\forall)
b_1\land\cdots\land b_n\limpl \exists \vec{y}.(h_1 \land \dots
\land h_m)$ where all $b_i$ and $h_j$ are atomic, can be easily
cast to \ghornclass formulas by replacing the conjunction $h_1
\land \dots \land h_m$ with $p(\vec{x})$, where $\vec{x}$ are
its free variables and $p$ is a new predicate symbol, and
including formulas $\forall \vec{x}.p(\vec{x}) \limpl h_j$ for
each $h_j$; the resulting theory is equivalent with respect to
the original signature. The class \ghornclass captures RDF
Schema \cite{Bruijn-Heymans-LogiFoun-07}, DLs such as
Horn-$\mathcal{SHIQ}$ \cite{Hustadt05datacomplexity} and the
OWL~2 profiles EL, QL, and RL, as well as Tuple Generating
Dependencies (TGDs), which play an important role in relational
databases; \citeN{Cali+GottlobETAL-:09} survey decidable
fragments of TGDs that amount to decidable fragments of
\ghornclass (see also Sections~\ref{sec:SW-application} and
\ref{sec:lp-dl-extensions}).

The order diagram is as follows (arrows stand for set inclusion):
$$
\xymatrix @R=0mm @H=3mm{
 & \propclass\ar[r] & \uniclass\ar[dr]\\
\emptyclass \ar[ur] \ar[dr]
&  & & \arbclassth\\
 & \hornclass\ar[r]\ar[uur] & \ghornclass\ar[ur]\\
}
$$
For all pairs of classes of logic programs and FO theories, we
determine the relationships between stable expansions of different
combinations $\combop_\chi^{(\lor)}(\fot,\lp)$ and
$\combop_\gamma^{(\lor)}(\fot,\lp)$ at different levels of
granularity.
As in Section~\ref{sec:relat-betw-stable}, we
concentrate here on correspondences of stable expansions $\anyexpequiv_\mathrm{x} $%
; they imply that relative to the class $\mathrm{x}$ of formulas, the embeddings $\embop_\chi^{(\lor)}(\lp)$ and
$\embop_\gamma^{(\lor)}(\lp)$ are
interchangeable in combinations.

In Section \ref{sec:relationships-combinations}, we state our main result on the
relationships between stable expansions of combinations and
make several observations. In Section \ref{sec:derivation-combinations}, we establish the
partial results necessary for deriving our main result.
The proofs of the partial results can be found in the appendix.

\subsection{Relationships between Stable Expansions of Combinations}
\label{sec:relationships-combinations}

Our results are summarized in Table~\ref{tab:summary-exp},
which gives a complete picture of the correspondences, where
each entry represents a {\em most general correspondence},
i.e., neither the correspondence $\anyexpequiv_\mathrm{x} $ nor
the logic program or FO theory class may be relaxed. This is
formally stated in the main theorem of this section (Theorem~\ref{th:fo-ael-exp-correspondence}).
In brief, our central results are that several of the embeddings become
interchangeable when considering positive normal programs
combined with \ghornclass or \hornclass theories (cf.\
Proposition~\ref{prop:positive-horn}) as well as the correspondences
for combinations with ground logic programs, even allowed to
contain negation (cf.\ the rightmost column of Table~\ref{tab:summary-exp}).

We call $\fot_1 \anyexpequiv_\mathrm{x}
\fot_2$ a \emph{trivial inference} from a set $Q$ of
equivalences if it is derivable from $Q$ by the fact that
$\fot_1\anyexpequiv \fot_2$ implies $\fot_1\anyexpequiv_{og} \fot_2$ and $\fot_1\anyexpequiv_{og} \fot_2$
implies $\fot_1\anyexpequiv_{\oga} \fot_2$, as well as by reflexivity, transitivity, and
symmetry of $\anyexpequiv_\mathrm{y}$, $\mathrm{y} \,{\in}\,\{\epsilon, \og, \oga \}$.

\begin{acmtable}{260pt}
\centering
\renewcommand{\arraystretch}{1.1}
  \begin{tabular}{|@{\ \mbox{}}c@{\ \mbox{}}|@{\ \mbox{}}c@{\ \mbox{}}|@{\
\mbox{}}c@{\ \mbox{}}|@{\ \mbox{}}c@{\ \mbox{}}|c@{\ \mbox{}}|}
     $\Phi$ $\;\backslash\;$ $\lp$
 & \arbclassprg & \safeclass &  \groundclass\\
    \hline
\hline\arbclassth &
$\combop_{EH}\anyexpequiv\combop_{EH}^\lor$ &   &
$\combop_{EB}\anyexpequiv\combop_{EH}$\\ &&&
$\combop_{HP}^\lor\anyexpequiv\combop_{EB}^\lor$\\
\hline\uniclass &
& & \\
\hline\ghornclass & & &
$\combop_{HP}\anyexpequiv_{\oga}\combop_{EH}$ \\
                  & & & $\combop_{HP}^\lor\anyexpequiv_{\oga}\combop_{EH}^\lor$\\
\hline \hornclass &
& & \\
\hline\propclass & $\combop_{HP}^\lor\anyexpequiv_{og}\combop_{EB}^\lor$
& $\combop_{EB}\anyexpequiv\combop_{EH}$ &\\
& $\combop_{EB}\anyexpequiv_{og}\combop_{EH}$
&  &\\
\hline $\emptyclass$ &
 \begin{tabular}{c}
    $\combop_{HP}\anyexpequiv_{\oga}\combop_{EB}$\\
    $\combop_{HP}\anyexpequiv_{\oga}\combop_{EH}$\\
    $\combop_{HP}^\lor\anyexpequiv_{\oga}\combop_{EH}^\lor$
\end{tabular}
     & &
  \end{tabular}
\caption{Correspondences between stable expansions of combinations
(on programs in the joint definition range);
  $\combop_x^{(\lor)}$ is short for
  $\combop_x^{(\lor)}(\fot,\lp)$.
  }
\label{tab:summary-exp}
\end{acmtable}

\begin{theorem}
\label{th:fo-ael-exp-correspondence}
Let $\mathcal{X}$ be a class of FO theories,
let $\mathcal{Y}$ be a class of programs, and let $\mathrm{x}\,{\in}\,
            \{\epsilon,$ $\og,$ $\oga \} $.  Then
$\combop_\chi(\fot, \lp) \anyexpequiv_\mathrm{x}
\combop_\gamma(\fot, \lp)$ holds for all $\fot\,{\in}\,\mathcal{X}$ and all $\lp\,{\in}\,
\mathcal{Y}$ iff
$\combop_\chi(\fot, \lp) \anyexpequiv_\mathrm{x}
\combop_\gamma(\fot, \lp)$  follows for cell $(\mathcal{X},
\mathcal{Y})$ in Table~\ref{tab:summary-exp} by trivial inferences,
where  $ \chi, \gamma \in \{\mbox{}_{HP}^\lor ,$ $\mbox{}_{EB}^\lor, \mbox{}_{EH}^\lor \}$ if $\lp\in\arbclassprg$ and $ \chi, \gamma \in \{\mbox{}_{HP}, \mbox{}_{EB},
\mbox{}_{EH},\mbox{}_{HP}^\lor , \mbox{}_{EB}^\lor, \mbox{}_{EH}^\lor \}$ if $\lp \,{\in}\, \narbclassprg$.
\end{theorem}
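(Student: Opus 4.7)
The plan is to prove this theorem as the assembly of two ingredients: (a)~the partial correspondence results for combinations $\combop_\chi^{(\lor)}(\fot,\lp)$ that will be established in Section~\ref{sec:derivation-combinations}, whose entries populate Table~\ref{tab:summary-exp}, and (b)~a systematic collection of counterexamples that rule out every remaining correspondence. Since the statement is a biconditional, I would separate the two directions.

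For the positive (``if'') direction, each tabulated equivalence $\combop_\chi^{(\lor)}\anyexpequiv_\mathrm{y}\combop_\gamma^{(\lor)}$ in cell $(\mathcal{X},\mathcal{Y})$ is, by construction, a restatement of a proposition from Section~\ref{sec:derivation-combinations}. Closing the set of tabulated equivalences under the trivial inferences---namely $\anyexpequiv \Rightarrow \anyexpequiv_\og \Rightarrow \anyexpequiv_\oga$ together with reflexivity, symmetry, and transitivity of each $\anyexpequiv_\mathrm{y}$, and the implicit monotonicity that an equivalence holding for more general classes of $\fot$ and $\lp$ also holds for their subclasses---produces further valid correspondences. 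I would thus verify cell by cell that Table~\ref{tab:summary-exp} is correctly populated by invoking the relevant partial result, and then observe that trivial closure yields all positive claims of the theorem.

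For the negative (``only if'') direction, the task is to exhibit, for every pair $(\chi,\gamma)$ and every cell $(\mathcal{X},\mathcal{Y})$ in which no correspondence is claimed by the closure above, a witnessing pair $(\fot,\lp)$ with $\fot\in\mathcal{X}$ and $\lp\in\mathcal{Y}$ such that the stable expansions of $\combop_\chi^{(\lor)}(\fot,\lp)$ and $\combop_\gamma^{(\lor)}(\fot,\lp)$ disagree on the granularity $\mathrm{x}$ under consideration. I would exploit the hierarchies $\emptyclass\subseteq\propclass\subseteq\uniclass\subseteq\arbclassth$ and $\emptyclass\subseteq\hornclass\subseteq\ghornclass\subseteq\arbclassth$ on the theory side, and $\groundclass\subseteq\safeclass\subseteq\arbclassprg$ on the program side: a failure in a more restrictive cell automatically propagates upward, so it suffices to place each counterexample in the smallest applicable cell. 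Many cases reduce immediately to the comparison of bare embeddings by setting $\fot=\emptyset$, at which point the examples and non-inclusion results of Section~\ref{sec:prop-embedd}, in particular Examples~\ref{ex:HP-EB}, \ref{ex:EB-EH-nlp}, and~\ref{ex:HP-HPv} together with Lemma~\ref{lem:not-rel-cons}, supply the needed witnesses via $\combop_\chi^{(\lor)}(\emptyset,\lp)=\embgen(\lp)$. The remaining cases, where taking $\fot=\emptyset$ would already yield a correspondence but a richer $\fot$ breaks it, are handled by combinations in the style of Example~\ref{ex:any-all-combination-2}, where an objective first-order fact couples with rule bodies through modal atoms differently for the various embeddings.

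The main obstacle is the combinatorial bookkeeping: with three granularities $\mathrm{x}\in\{\epsilon,\og,\oga\}$, up to fifteen unordered pairs of embeddings in the normal case (six in the purely disjunctive case), and eighteen cells in Table~\ref{tab:summary-exp}, one must carefully partition the cases into those settled by a tabulated correspondence, those derivable by trivial closure, and those requiring a fresh counterexample, and then pick the witness in each uncovered cell at the most restrictive level possible so that propagation upward dispatches the other cells. Organizing the argument by traversing the hierarchy from $(\emptyclass,\groundclass)$ outward, and re-using each counterexample in every larger cell it covers, is what keeps the case analysis bounded and the proof manageable.
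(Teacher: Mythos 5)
Your proposal follows essentially the same route as the paper: the ``if'' direction is exactly the assembly of the partial results of Section~\ref{sec:derivation-combinations} (together with Theorem~\ref{cor:corr-ground-atomic}, $\combeh\combpar=\combehdis\combpar$ for normal programs, the class inclusions, and the trivial-inference closure), and the ``only if'' direction is exactly a catalogue of counterexamples placed in minimal cells and propagated upward, which is what the paper packages as Proposition~\ref{prop:counterexamples} (including, as you anticipate, empty-theory witnesses recycled from Section~\ref{sec:prop-embedd} and nonempty propositional/Horn theories for the cases where $\fot=\emptyset$ would not suffice, e.g.\ at the $\oga$ level). The proof is correct in structure and matches the paper's argument.
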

We will establish the results of Table~\ref{tab:summary-exp}
and provide some intuitive explanations about partial results
in the next subsection.

Note that removing any statement from Table
\ref{tab:summary-exp} or modifying any correspondence type
invalidates the theorem. We do not explicitly consider
correspondence of stable expansions with respect to objective
formulas, i.e., $ \anyexpequiv_o $.  Clearly, $
\fot_1\anyexpequiv \fot_2 $ implies $ \fot_1\anyexpequiv_o
\fot_2 $; in addition, all the counterexamples to $
\fot_1\anyexpequiv \fot_2 $ presented in  the following
subsection also apply to $ \fot_1\anyexpequiv_o \fot_2 $.
Hence, $\anyexpequiv_o$ coincides with $\anyexpequiv$.

The use of negation is essential for establishing
non-correspondence in some cases, as we have the following
result for positive programs and Horn theories.

\begin{proposition}
\label{prop:positive-horn} For every $(\fot, \lp) \in
\ghornclass\mytimes\nsafeclass $ such that \lp is positive,
$\combeb\combpar\anyexpequiv_{oga}\combeh\combpar$, and for
every $ (\fot, \lp) \in \hornclass\mytimes\narbclassprg $ such
that \lp is positive,
$\combhp\combpar\anyexpequiv_{oga}\combeb\combpar
\anyexpequiv_{oga}\combeh\combpar$.
\end{proposition}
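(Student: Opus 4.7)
The plan is to exploit the positivity of $\lp$: since $B^-(r)=\emptyset$ for every rule $r$, the only modal atoms appearing in the embeddings are positive occurrences $\mop b_i$ (in $\embeb$ and $\embeh$) and $\mop h_1$ (only in the consequent of $\embeh$), with $\mop$ applied to objective atomic formulas. Hence Proposition~\ref{prop:atoms-expansions} applies, and a candidate $\Gamma\subseteq \flang_{ga}$ is the $\oga$-part of a stable expansion of $\combop_\chi\combpar$ iff $\Gamma=\{\alpha\in\flang_{ga}\mid \fot\cup \embop_\chi(\lp)\anymodels_\Gamma\alpha\}$. It thus suffices to show that, for the relevant pairs of embeddings, the corresponding fixpoint operator on $\flang_{ga}$ coincides.

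For the first claim ($\combeb\combpar\anyexpequiv_{\oga}\combeh\combpar$ with $\fot\in\ghornclass$ and $\lp$ safe positive), the antecedents of $\embeb(r)$ and $\embeh(r)$ are identical, so each ground instance $r\beta$ fires under exactly the same conditions under both embeddings. The consequents differ only by the additional conjunct $\mop h_1$ in $\embeh$. Safety ensures that every variable of $h_1$ already appears in some $b_i$, so any name substitution $\beta$ making $\bigwedge_i(b_i\beta\wedge \mop b_i\beta)$ satisfied covers all variables of $h_1$, and $h_1\beta$ is ground. Hence $\mop h_1\beta$ is satisfied relative to $\Gamma$ iff $h_1\beta\in\Gamma$. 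Since $h_1\beta$ is already an immediate consequence of the fired rule and thus lies in $\Gamma$ at a fixpoint, $\mop h_1\beta$ holds automatically, and the two ground consequence operators agree.

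For the second claim ($\combhp\combpar\anyexpequiv_{\oga}\combeb\combpar\anyexpequiv_{\oga}\combeh\combpar$ with $\fot\in\hornclass$ and $\lp\in\narbclassprg$ positive), the preceding argument handles $\embeb$ vs.\ $\embeh$ (safety is not required here, since $h_1$ is a single atom and, as soon as $h_1\beta$ is in $\Gamma$, $\mop h_1\beta$ is again automatic). For $\embhp$ vs.\ $\embeb$, I would invoke the fact that $\fot\cup\gr{\lp}$, being Horn, admits a least Herbrand model $M$ on the names of $\signature_\fot\cup\signature_\lp$. Standard least-fixpoint reasoning shows that the $\flang_{ga}$-part of the consequences of $\combhp\combpar$ relative to any $\Gamma$ equals $M$ (note that $\embhp$ contains no modal atoms for positive $\lp$). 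For $\combeb\combpar$ relative to $\Gamma=M$, each ground $\mop b_i\beta$ is satisfied iff $b_i\beta\in M$, so exactly the $\embhp$-fireable rule instances fire under $\embeb$, producing the same least model; verifying that $M$ is indeed the unique fixpoint for both combinations completes the argument.

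The main obstacle is handling the any-name semantics rigorously in the presence of interpretations with unnamed individuals. This is precisely what forces the $\ghornclass$ vs.\ $\hornclass$ split: existential quantifiers in a $\ghornclass$ theory introduce unnamed witnesses through which $\embhp$'s universal rules can derive ground atoms (e.g., $q(a)\leftarrow r(x,a)$ applied to a witness for $\exists y.(p(y)\wedge r(y,a))$), whereas $\embeb$ cannot, since $\mop r(x,a)$ fails on variable assignments to unnamed individuals. Under the $\hornclass$ restriction, every ground atom forced via an unnamed witness is already forced via a named one in the least Herbrand model, which is what rescues the $\combhp\anyexpequiv_{\oga}\combeb$ part; similarly, safety in the first claim rules out the pathological situation where $\mop h_1\beta$ is ill-defined because $\beta$ does not cover all variables of $h_1$, which would otherwise break the equivalence between $\embeb$ and $\embeh$ in combinations.
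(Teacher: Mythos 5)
There is a genuine gap, and it sits exactly at the step you describe as ``$\mop h_1\beta$ holds automatically.'' Under the any-name semantics, satisfaction of the antecedent $\bigwedge_i(b_i\land\mop b_i)$ at a variable assignment $B$ does \emph{not} give you a single name substitution $\beta$: each conjunct $\mop b_i$ may be witnessed by a \emph{different} substitution associated with $B$, because distinct names coming from $\fot$ (for which there are no $\UNA$ axioms) may denote the same individual. So the ``fired rule'' need not correspond to any entailed ground instance, and there need not be any associated $\nu$ with $h_1\nu\in\Gamma$. Concretely, your argument for the first claim never uses $\fot\in\ghornclass$, so it would prove $\combeb\combpar\anyexpequiv_{\oga}\combeh\combpar$ for \emph{arbitrary} $\fot$ and safe positive $\lp$ --- and that is false. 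Take $\fot=\{p(a),\,q(b),\,s\lor a{=}b\}$ and $\lp=\{h(x)\leftarrow p(x),q(x)\}$ (safe, positive; $\lp$ has no constants, so $\UNA_{\signature_\lp}=\emptyset$). Then $\Gamma=\{p(a),q(b)\}$ (plus trivial equalities) is the ground-atomic part of a stable expansion of $\combeb\combpar$: a countermodel for $s$ identifies $a$ and $b$ and satisfies the embedded rule by witnessing $\mop p(x)$ via $x/a$ and $\mop q(x)$ via $x/b$, putting the merged individual into $h$. But relative to this same $\Gamma$, $\combeh\combpar$ \emph{entails} $s$: in any model falsifying $s$ we must have $a^\inter=b^\inter$, the antecedent is true at that individual, yet $\mop h(x)$ fails since no $h$-atom is in $\Gamma$. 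By Proposition~\ref{prop:atoms-expansions}, $\combeh\combpar$ therefore has no expansion with ground-atomic part $\Gamma$. So the equivalence genuinely depends on the shape of $\fot$, and any proof must use it; the same conflation of ``every model of $\combeb$ relative to a fixpoint is a model of $\combeh$'' (false; see also unsafe $\lp=\{p(x)\}$, where an EB-model with unnamed individuals is never an EH-model) with ``the two have the same ground atomic consequences at fixpoints'' (true under the hypotheses, but much harder) recurs in your treatment of the second claim.

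For comparison, the paper does not argue at the level of rule firing at all. For the $\hornclass$ part it first proves a grounding-invariance statement: it fixes the equalities over the relevant names and the values of the modal atoms according to the expansion $T$, turning the combination into a classical Horn theory, and uses the minimal-model/intersection property (Lemma~\ref{lem:intersection}) to show that replacing $\emb(r)$ by its groundings does not change ground-atomic consequences; the claim then follows from Propositions~\ref{prop:hp-eh-ground} and~\ref{prop:eb-eh-safe}, which handle ground programs. For the $\ghornclass$ part it reduces to the Horn case by relativizing the rules with a fresh domain predicate $d$ (exploiting safety) and skolemizing $\fot$, with a careful argument that Skolem terms cannot ``leak'' into $d$, i.e., new ground-atomic consequences in the original language, which is exactly the point where generalized-Horn-ness is used. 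Your HP-vs-EB sketch via the least Herbrand model of $\fot\cup\gr\lp$ is closer in spirit to a workable argument (Horn-ness is actually used there), but even it needs repair: the consequences of $\combhp\combpar$ relative to an arbitrary $\Gamma$ are not $M$ (the $\UNA$ axioms depend on $\Gamma$ and can create inconsistency), the equality/congruence handling is elided, and the uniqueness of the fixpoint for both combinations is asserted rather than proved. As it stands, the proposal does not establish the proposition.
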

This result does not extend to the disjunctive embeddings,
because there are no \PIA axioms for the atoms involving names
not in \lp, and it does not extend to more general formulas
(cf., e.g., Example \ref{ex:contrapositive}).

\nop{**** hide (TE) ***** Remarkable are the correspondences
for the case of ground programs, as they show the effects of
grounding the program in the combination. In particular, the
embeddings \embeb and \embeh are interchangeable in
combinations of ground programs with arbitrary first-order
theories, and \embhp and \embeh are interchangeable in
combinations with \ghornclass theories. Also remarkable is the
fact that \embhp and \embeh are interchangeable in combinations
with \hornclass theories, when considering non-ground positive
programs. *****} We make the following further observations.

\begin{itemize}
\item The various combinations behave differently in the
    general case. Only two of them, $\mbox{}_{EH}$ and
    $\mbox{}^\lor_{EH}$, are always equivalent (they
    coincide on normal programs).
\item For combinations with arbitrary FO theories, further
    correspondences are only present for ground logic
    programs in some cases.  Narrowing to any of the
    classes that allow predicates of arity $> 0 $
    ($\hornclass$, $\ghornclass$, $\uniclass$) does not
    change the picture.

\item For arbitrary logic programs, only in case of
    propositional theories do some combinations behave
    equivalently. Requiring safety leads only for
    propositional theories and in one
  case ($\mbox{}_{EB}$ and $\mbox{}_{EH}$) to a stronger correspondence.

 \item $\uniclass$ and $\hornclass$ show no most general
     correspondences, which means that with respect to more
     general or more restrictive classes, their change in
     syntax does not affect equivalence.
 \item  In contrast, the important class $\ghornclass$ has
     maximal correspondences for ground programs. Thus, for
     combinations with FO theories from the classes
     $\hornclass$, $\ghornclass$, and $\uniclass$, we have
     equivalent behavior only for ground programs in some
     cases (apart from $\mbox{}_{EH}$ and
     $\mbox{}^\lor_{EH}$).

   \item The effect of program grounding (which is a
       customary technique applied by non-monotonic rule
       engines) in combinations is thus beneficial for the
       embeddings \embeb and \embeh, making them
       interchangeable; this is similar for \embhpdis and
       \embebdis and for \embhp,
 \embeb, \embeh, resp.\ \embhpdis, \embebdis, \embehdis, in
       combinations with a \ghornclass first-order part
       with respect to ground atomic formulas. In fact, the
       interchangeability of \embeb and \embeh lifts by
       Proposition~\ref{prop:positive-horn} to the
       non-ground level for \ghornclass FO parts and safe
       positive programs (which are at the core of rule
       bases in practice), with respect to ground atomic
       formulas. Importantly, the proof of
       Proposition~\ref{prop:positive-horn} shows that the
       combination is invariant under program grounding, so
       this technique can be readily applied. We get a
       similar picture for \embhp, \embeb, \embeh in
       combinations of \hornclass FO parts with positive
       programs. (For applications, see
       Section~\ref{sec:SW-application}.)

 \end{itemize}
We illustrate the use of the result in Theorem~\ref{th:fo-ael-exp-correspondence}
with an example. Note that if the stable expansions of two embeddings or combinations
correspond with respect to a certain class formulas, then
the embeddings, resp.\ combinations,
also agree on autoepistemic consequences for these classes.

\begin{example}
Consider
$\lp = \{q(a);$ $p(x);$ $r(x) \leftarrow
  \dnot s(x),$ $ p(x)\}$ from Example~\ref{ex:general}, which is
neither safe nor ground. Hence, to determine correspondence between
embeddings, we use the first column of Table~\ref{tab:summary-exp}. As \lp is normal, all equations in the column
are applicable. We have that, e.g.,
$\embeb(\lp)\anyexpequiv_{og}\embeh(\lp)$,
$\embhpdis(\lp)\anyexpequiv_{og}\embebdis(\lp)$, and $\embhp(\lp)\anyexpequiv_\oga\embeb(\lp)$.  Let \fot be a propositional
theory; then we also have
$\combeb\combpar\anyexpequiv_{\og}\combeh\combpar$ and
$\combhpdis\combpar\anyexpequiv_\og\combebdis\combpar$, but
not $\combhp(\lp)\anyexpequiv_\oga\combeb(\lp)$.
Furthermore, we can conclude that $\combeb(\lp)$ and $\combeh(\lp)$, and also
 $\combhpdis(\lp)$ and $\combebdis(\lp)$, agree on objective ground autoepistemic
consequences.
\end{example}

%

\nop{******* Hide this paragraph ******
Considering autoepistemic consequence, most
of the relations given in Figure~\ref{fig:rel-cons} on page
\pageref{fig:rel-cons} do not extend to combinations with FO theories. Consider,
e.g., $\lp = \{r \leftarrow \dnot p,\dnot q\}$ and $\fot = \{p \lor q\}$. Then,
$\embhp(\lp) = \{\lnot\mop p\land\lnot\mop q \limpl r\}$ and $\embhpdis(\lp) =
\{\lnot\mop p\land\lnot\mop q \limpl r\}\cup \PIA_{\signature_\lp}$ both have a
single stable expansion; the stable expansions of $\embhp(\lp)$ and
$\embhp(\lp)$ both contain $\lnot\mop p$, $\lnot\mop q$ and $r$. The
combination $\embhp(\lp)\cup \fot$ has one stable expansion which includes
$\lnot\mop p\text{, }\lnot\mop q\text{, and }r$; $\embhpdis(\lp)\cup \fot$ has
two stable expansions $\{p,\mop p,\lnot\mop q, \dots\}$ and $\{q, \mop
q,\lnot\mop p, \dots\}$, neither of which includes $r$. Thus, $r$ is an
autoepistemic consequence of \combhp\combpar, but not of \combhpdis\combpar.
Therefore, $Cn_o(\combhp\combpar)\not\subseteq Cn_o(\combhpdis\combpar)$.\comment{jb: do we want to include this discussion?  Is this the appropriate location?}
*********** END HIDE }

\subsection{Derivation of the Results}
\label{sec:derivation-combinations}

We start with the positive results. Trivially, $\combeh$ and
$\combehdis$ coincide for arbitrary FO theories, and  the
the equivalence results for empty \fot in Table~\ref{tab:summary-exp}
carry over from the respective results on embeddings in Section~\ref{sec:prop-embedd}.

We show that for ground programs, the $\embeb$ and
$\embeh$ embeddings are interchangeable in any combination with an FO
theory.

\begin{proposition}
\label{prop:eb-eh-safe}
For every $(\fot, \lp) \in \arbclassth\mytimes\ngroundclass $,
$\combeb\combpar\anyexpequiv\combeh\combpar$.
\end{proposition}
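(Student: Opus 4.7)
The plan is to prove that, for any belief set $T$ which is a stable expansion of either $\combeb(\fot,\lp)$ or $\combeh(\fot,\lp)$, the two combinations have exactly the same autoepistemic models relative to $T$. By the fixed-point characterization $T = \{\phi \mid \fot' \anymodels_T \phi\}$ of a stable expansion of $\fot'$, this immediately yields $\combeb(\fot,\lp) \anyexpequiv \combeh(\fot,\lp)$.

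One direction is trivial: every model of $\combeh(\fot,\lp)$ relative to $T$ also satisfies $\combeb(\fot,\lp)$, since each $\embeh$-axiom logically entails the corresponding $\embeb$-axiom (by dropping the $\mop h_1$ conjunct in the consequent). For the reverse direction, I would fix a stable expansion $T$ of one of the combinations and an interpretation $\inter$ with $\inter \anymodels_T \combeb(\fot,\lp)$, and then show $\inter \anymodels_T \combeh(\fot,\lp)$. To this end I take an arbitrary ground rule $r \in \lp$ and suppose its body $\bigwedge_i(b_i \land \mop b_i) \land \bigwedge_j \neg \mop c_j$ is satisfied in $\inter$ relative to $T$. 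Because $r$ is ground, $\mop b_i$ reduces to the condition $b_i \in T$ and $\neg \mop c_j$ to $c_j \notin T$, so this satisfaction amounts to: $\inter \anymodels_T b_i$ for every $i$, $b_i \in T$ for every $i$, and $c_j \notin T$ for every $j$. Satisfaction of $\embeb(r)$ in $\inter$ then gives $\inter \anymodels_T h_1$; what remains is to derive $h_1 \in T$ so that $\mop h_1$ holds in $\inter$ as well.

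For this key step I would argue $\fot' \anymodels_T h_1$, where $\fot'$ denotes the combination ($\combeb(\fot,\lp)$ or $\combeh(\fot,\lp)$) of which $T$ is a stable expansion. Given any $\inter' \anymodels_T \fot'$: since $b_i \in T$ and $T$ is a stable expansion of $\fot'$, we have $\fot' \anymodels_T b_i$, hence $\inter' \anymodels_T b_i$. Combined with $\inter' \anymodels_T \mop b_i$ (from $b_i \in T$) and $\inter' \anymodels_T \neg \mop c_j$ (from $c_j \notin T$), this shows that the body of the rule's embedding in $\fot'$ is satisfied in $\inter'$, so its head (which has $h_1$ as a conjunct in either embedding) is satisfied in $\inter'$. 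Thus $\fot' \anymodels_T h_1$, and the fixed-point property yields $h_1 \in T$, as required.

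The main obstacle is precisely this ``lift'' of a rule's head from local truth in the particular model $\inter$ to membership in $T$, and this step is where groundness of $\lp$ is essential: for a non-ground rule, $\mop b_i$ is evaluated via name substitutions associated with the current variable assignment, and the simple reduction ``$\mop b_i$ is true iff $b_i \in T$'' breaks down, related to non-validity of the converse Barcan formula (cf.~Example~\ref{ex:EB-EH-nlp}). The $\UNA$ axioms are identical in both embeddings and therefore play no role in the argument.
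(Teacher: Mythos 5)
Your proposal is correct and follows essentially the same route as the paper's proof: fix a stable expansion $T$ of either combination, note the $\combeh$-to-$\combeb$ direction is trivial, and for the converse use groundness of $\lp$ to read off $b_i \in T$ and $c_j \notin T$ from satisfaction of the antecedent, then lift the head $h_1$ into $T$ so that $\mop h_1$ holds. The only (immaterial) difference is that you derive $h_1 \in T$ semantically via the fixed-point characterization, whereas the paper invokes closure of the stable expansion under first-order entailment directly; your uniform treatment of both choices of $\fot'$ matches the paper's ``the other case is analogous.''
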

Intuitively, this holds because only named
individuals matter in rules, and hence the modal atoms $\mop h$ in embedded
rule heads do not matter. However, this does not generalize from
ground to safe programs, as the evaluation of literals $\lnot \mop
p(x)$ in the rule bodies does not amount to grounding (see Proposition~\ref{prop:counterexamples}(\ref{c5})).

Also the $\embhpdis$ and $\embehdis$ embeddings
are interchangeable in combinations with arbitrary FO theories if
the logic program is ground.

\begin{proposition}
\label{prop:hpdis-ebdis-ground}
For every  $(\fot, \lp) \in \arbclassth\mytimes\groundclass $,
$\combhpdis\combpar\anyexpequiv\combebdis\combpar$.
\end{proposition}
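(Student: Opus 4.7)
The plan is to show that the theories $\fot \cup \embhpdis(\lp)$ and $\fot \cup \embebdis(\lp)$ have the same set of models with respect to any belief set $\Gamma$; this immediately yields equality of their stable expansions, since a stable expansion is a fixpoint $T = \{\phi \mid \fot' \anymodels_T \phi\}$ that depends only on $\fot'$ through its models. The key observation, which essentially reiterates the argument at the end of the proof of Theorem~\ref{the:embedd-disj}, is that the $\PIA$ axioms in both embeddings collapse the difference between $\embhpdis$ and $\embebdis$ on ground rules.

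First, I would note that since $\lp$ is ground, every positive body atom $b_i$ occurring in a rule $r \in \lp$ is an objective ground atom of $\lang_{\signature_\lp}$, and hence $b_i \limpl \mop b_i$ appears in $\PIA_{\signature_\lp}\subseteq \embhpdis(\lp)\cap\embebdis(\lp)$. Let $\Gamma\subseteq \fmodl$ be an arbitrary belief set and let $\inter$ be an interpretation such that $\langle \inter,\Gamma\rangle$ satisfies $\PIA_{\signature_\lp}$. Then for every ground body atom $b_i$, $\inter \anymodels_\Gamma b_i$ implies $\inter \anymodels_\Gamma \mop b_i$, so
\[
\inter \anymodels_\Gamma \textstyle{\bigwedge_i}\, b_i \quad\text{iff}\quad \inter \anymodels_\Gamma \textstyle{\bigwedge_i}\, (b_i \land \mop b_i).
\]
Since $\embhpdis(r)$ and $\embebdis(r)$ coincide in all other conjuncts and in their disjunctive heads, this entails $\inter \anymodels_\Gamma \embhpdis(r)$ iff $\inter \anymodels_\Gamma \embebdis(r)$ for every $r \in \lp$.

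Second, any model of $\fot \cup \embhpdis(\lp)$ (resp.\ $\fot \cup \embebdis(\lp)$) with respect to $\Gamma$ satisfies $\PIA_{\signature_\lp}$, since $\PIA_{\signature_\lp}$ is part of both theories. Combining this with the previous step, the set of interpretations $\inter$ with $\inter \anymodels_\Gamma \fot \cup \embhpdis(\lp)$ equals the set of interpretations $\inter$ with $\inter \anymodels_\Gamma \fot \cup \embebdis(\lp)$, for every $\Gamma$. Consequently, for every formula $\phi\in\fmodl$ and every $\Gamma$, $\fot \cup \embhpdis(\lp) \anymodels_\Gamma \phi$ iff $\fot \cup \embebdis(\lp) \anymodels_\Gamma \phi$.

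Finally, taking $\Gamma=T$, this means $T = \{\phi \mid \fot \cup \embhpdis(\lp) \anymodels_T \phi\}$ iff $T = \{\phi \mid \fot \cup \embebdis(\lp) \anymodels_T \phi\}$, so the two combinations have exactly the same \anystable expansions, establishing $\combhpdis\combpar \anyexpequiv \combebdis\combpar$. There is no real obstacle here: the proof rides entirely on the presence of $\PIA_{\signature_\lp}$ together with groundness of $\lp$ (which ensures every $b_i$ is an objective ground atom to which $\PIA$ applies); it is precisely this that fails in the non-ground case, where $b_i$ may contain variables and $\mop b_i$ may be sensitive to unnamed domain elements.
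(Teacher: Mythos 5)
Your proof is correct and follows essentially the same route as the paper: the paper's proof simply refers back to the $\PIA$-based elimination of the modal atoms $\mop b_i$ in the rule bodies (the argument for Proposition~\ref{prop:corr-ground-programs}), which is exactly the observation you exploit. You merely spell this out semantically, showing that models with respect to any belief set coincide (and hence the fixpoints defining stable expansions coincide), which is a faithful and slightly more explicit rendering of the paper's terse argument, including the correct identification of groundness as the reason the $\PIA$ axioms cover every body atom.
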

The reason is that we can eliminate all modal atoms $\mop b$ from
rule bodies with the $\PIA$ axioms in $\combebdis\combpar$ and obtain
$\combhpdis\combpar$.  Such elimination is not possible in the non-ground
case, since the $\PIA$s only apply to atoms from $ \signature_\lp $.

Moving now to fragments of $\arbclassth$, i.e., down the rows in
Table~\ref{tab:summary-exp}, we first have:

\begin{proposition}
\label{prop:hp-eh-ground}
For $(\fot, \lp) \in \ghornclass\mytimes\groundclass $,
$\combhpdis\combpar\anyexpequiv_{\oga}\combehdis\combpar$, and if
$\lp \in \ngroundclass$, then $\combhp
\combpar\anyexpequiv_{\oga}\combeh\combpar$.
\end{proposition}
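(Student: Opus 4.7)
The plan is to reduce both claims to previously established equivalences and then argue via a structural property of $\ghornclass$. For the disjunctive claim, Proposition~\ref{prop:hpdis-ebdis-ground} gives $\combhpdis\combpar\anyexpequiv\combebdis\combpar$, so it suffices to establish $\combebdis\combpar\anyexpequiv_\oga\combehdis\combpar$; for the normal claim, Proposition~\ref{prop:eb-eh-safe} gives $\combeb\combpar\anyexpequiv\combeh\combpar$, so it suffices to establish $\combhp\combpar\anyexpequiv_\oga\combeb\combpar$. In each reduction the only remaining difference between the two embeddings is the placement of the modal operator in rule bodies/heads and/or the presence or absence of the \PIA axioms.

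Since $\lp$ is ground, every modal atom occurring in the embedded rules has objective ground atomic scope, so Proposition~\ref{prop:atoms-expansions} applies and stable expansions are in one-to-one correspondence with their oga-kernels. Hence it is enough to show that, for every fixed $\Gamma\subseteq\flang_{ga}$, the sets $\{\alpha\in\flang_{ga}\mid \combop_\chi\combpar\anymodels_\Gamma\alpha\}$ and $\{\alpha\in\flang_{ga}\mid\combop_\gamma\combpar\anymodels_\Gamma\alpha\}$ coincide. The strategy is a \emph{modal reduction under $\Gamma$}: because each modal atom is of the form $\mop\alpha$ with $\alpha$ ground atomic, its truth is fully determined by whether $\alpha\in\Gamma$, and each embedded rule collapses either to a classical (modal-free) rule, matching the corresponding clause of the Gelfond--Lifschitz reduct $\lp^\Gamma$, or to a trivially satisfied formula. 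The surviving classical rules are the same across HP/EB/EH (respectively HPdis/EBdis/EHdis), modulo whether firing additionally requires the positive body atoms to lie in $\Gamma$ (EB/EH), whether each chosen head disjunct must lie in $\Gamma$ (EHdis), and whether the \PIA axioms constrain the interpretations of ground atoms (HPdis/EBdis).

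The main obstacle, and the heart of the argument, is to show that these syntactic differences do not change ground atomic consequences when $\fot\in\ghornclass$. I would exploit the fact that \ghornclass formulas are Horn-like on their atomic content: together with the activated classical rule heads, they admit a minimal-model characterization whose ground-atomic projection is uniquely determined and closed under all classical ghorn consequences. For any $\Gamma$ that is the kernel of a consistent stable expansion, this ground-atomic minimum lies inside $\Gamma$; hence (i)~the \PIA axioms in HPdis/EBdis add no additional restriction on ground atomic consequences, because they are automatically satisfied by this minimum, and (ii)~the $\mop h_k$ conjunct in EHdis heads forces the chosen head disjunct to lie in $\Gamma$ exactly as \PIA would, so the two minima coincide. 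In the normal case for $\combhp$ versus $\combeb$, the extra $\mop b_i$ conjuncts in $\embeb$-bodies restrict rule firing to rules whose positive body atoms lie in $\Gamma$, but any body atom forced by $\fot$ or by earlier rule application lies in the same minimum, hence in $\Gamma$, so the effectively firing rule set under HP and EB coincides and the ground atomic consequences agree.
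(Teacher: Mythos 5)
Your skeleton is reasonable and partly matches the paper: reducing the disjunctive claim via Proposition~\ref{prop:hpdis-ebdis-ground} and the normal claim via Proposition~\ref{prop:eb-eh-safe}, working at the level of oga-kernels through Proposition~\ref{prop:atoms-expansions}, and evaluating the (ground) modal atoms against a fixed belief set is exactly the paper's device of replacing each $\mop\alpha$ by $\top$ or $\bot$ according to $T$ (legitimate only because $\lp$ is ground). But the step you yourself call the heart of the argument is where the proof is missing, and the justification you give rests on a false premise. You appeal to ``a minimal-model characterization whose ground-atomic projection is uniquely determined'' for $\fot$ together with the activated rules. No such object exists here: (a) \ghornclass formulas have existentially quantified heads, so they are not closed under intersection of models and admit no least (named) model --- the paper must first pass to \hornclass by skolemization and then separately prove that skolemization commutes with the combination and that Skolem terms do not contaminate the ground atoms of the original signature; (b) in the disjunctive case the reduced rules retain genuine ground disjunctions (e.g., when several head disjuncts lie in $\Gamma$), so your claims (i) that the \PIA-derived negative facts $\lnot\alpha$, $\alpha\notin\Gamma$, ``add no additional restriction'' and (ii) that ``the two minima coincide'' are assertions about minima that need not exist. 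Claim (i) is in fact the entire nontrivial content of the proposition: adding negative facts to a non-Horn theory can create new atomic consequences, and ruling this out is what has to be proved. (Also note that your initial sufficiency claim ``for every fixed $\Gamma\subseteq\flang_{ga}$'' is literally false --- e.g.\ $\Gamma=\emptyset$ with $\fot=\{b\}$, $\lp=\{a\leftarrow b\}$ separates HP from EB --- so the later restriction to kernels of stable expansions is not optional but essential.)

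What closes the gap in the paper is an intersection-of-countermodels argument (Lemma~\ref{lem:intersection}, a lift of the Horn intersection property to named first-order models), applied \emph{asymmetrically} in two cases depending on whether $\Gamma$ is the kernel of an expansion of the HP-side or the EH-side combination: the stable-expansion fixpoint property of $\Gamma$ is what resolves the residual head disjunctions (every head disjunct lying in $\Gamma$ is itself entailed under $\Gamma$, or is pinned down in a second named model of the \PIA-equipped combination), and the passage from \hornclass to \ghornclass is done by skolemization plus a commutation argument. Your intuition for the normal HP-versus-EB case (``body atoms forced by $\fot$ or earlier rule applications already lie in $\Gamma$'') is correct and can be turned into a least-model argument after skolemizing $\fot$, but as written it covers only one direction of the kernel correspondence and does not extend to the disjunctive claim, which is where the real work lies.
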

Intuitively, in the first case, we can add
modal atoms $\mop b$ in the bodies and $\mop h$ in the
heads of $\embhp(\lp)$, by the
$\PIA$ axioms, thereby obtaining $\combehdis\combpar$. To go from \embeh
to \embhp is possible if $\fot$ is not disjunctive with respect to
atoms $h$. This is the case for a Horn $\fot$, and similarly for a
generalized Horn $\fot$, as we can apply skolemization. In the second
case, there are no $\PIA$ axioms, but we can similarly apply skolemization and
obtain a disjunction-free theory that is Horn modulo modal
atoms. Skolemization does not work for
non-ground programs in this case, as previously unnamed individuals are
named by skolem terms.
%

We note that, combined with previous results, we can infer from
Proposition~\ref{prop:hp-eh-ground} that
Theorem~\ref{cor:corr-ground-atomic} generalizes from embeddings to
combinations with generalized Horn theories for the case of ground
logic programs.

For propositional theories, we obtain a result
symmetric to Proposition~\ref{prop:hpdis-ebdis-ground} for arbitrary logic programs.

\begin{proposition}
\label{prop:hpdis-ebdis-prop}
For every $(\fot, \lp) \in \propclass\mytimes\arbclassprg $,
$\combhpdis\combpar\anyexpequiv_{og}\combebdis\combpar$.
\end{proposition}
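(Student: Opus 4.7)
My plan is to extend the $\PIA$-based elimination argument of Proposition~\ref{prop:hpdis-ebdis-ground} from ground programs to arbitrary programs, exploiting the propositionality of $\fot$. The key insight is that the distinction between the $\embhpdis$- and $\embebdis$-rule bodies (the extra $\mop b_i$ conjuncts) is neutralized by $\PIA$ whenever the body variables are instantiated by named individuals; and since objective ground formulas refer only to named individuals, this suffices to obtain $\anyexpequiv_{\og}$.

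The first step is a model-theoretic lemma: for any belief set $T$ and any objective ground formula $\phi$,
\[
\combhpdis\combpar \anymodels_T \phi \quad\text{iff}\quad \combebdis\combpar \anymodels_T \phi.
\]
The $(\Leftarrow)$ direction is immediate, since every model of $\combhpdis\combpar$ is a model of $\combebdis\combpar$ (the $\embhpdis$-rule logically implies its $\embebdis$-counterpart, as the latter's body has an extra conjunct). For $(\Rightarrow)$, given a counter-model $\inter$ of $\combebdis\combpar$ with $\inter \not\anymodels_T \phi$, I would form $\inter'$ by restricting $\inter$ to its subdomain of named individuals. This is well-defined because every ground term denotes a named individual, so the subdomain is closed under function application. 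Since $\fot$ involves only 0-ary predicate symbols, $\inter' \models \fot$ automatically; the $\UNA$ and $\PIA$ axioms are preserved as they only involve ground atoms. Crucially, every variable assignment in $\inter'$ maps into named individuals, so by $\PIA$, $b_i$ true under any such $B$ implies $\mop b_i$ true under $B$. Hence each $\embebdis$-rule body instance satisfied in $\inter'$ reduces to the corresponding $\embhpdis$-rule body, yielding $\inter' \anymodels_T \combhpdis\combpar$. Since $\phi$ is $\og$ and $\inter,\inter'$ agree on all ground atoms, $\inter' \not\anymodels_T \phi$, contradicting the hypothesis $\combhpdis\combpar \anymodels_T \phi$.

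The second step lifts the lemma to stable expansions via Proposition~\ref{prop:atoms-expansions}. Both $\combhpdis\combpar$ and $\combebdis\combpar$ contain only objective atomic formulas in the scope of $\mop$, so any stable expansion $T$ is characterized by $T_o = \{\phi \in \flang \mid \fot \cup \embdis(\lp) \anymodels_{T_\oga} \phi\}$. Given a stable expansion $T$ of $\combhpdis\combpar$, I define $\Gamma = \{\phi \in \flang \mid \combebdis\combpar \anymodels_{T_\oga} \phi\}$. Applying the lemma restricted to ground atoms yields $\Gamma_\oga = T_\oga$, so $\Gamma = \{\phi \in \flang \mid \combebdis\combpar \anymodels_{\Gamma_\oga} \phi\}$; Proposition~\ref{prop:atoms-expansions} then supplies a stable expansion $T'$ of $\combebdis\combpar$ with $T'_o = \Gamma$. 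Applying the lemma again at the $\og$-level gives $T'_\og = \Gamma_\og = T_\og$. The reverse direction is entirely symmetric.

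The main obstacle is ensuring the named-individual restriction construction behaves correctly, i.e.\ that all components of $\combebdis\combpar$ remain satisfied in $\inter'$. This depends critically on propositionality of $\fot$ (an $\fot$ with quantifiers or predicates of arity $>0$ could demand unnamed witnesses) and on $\PIA$ covering every ground atom arising by instantiating a rule body with named individuals. The argument breaks down for richer classes of $\fot$, which is consistent with the fact that the correspondence is known to fail beyond propositional theories.
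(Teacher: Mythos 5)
Your proposal is correct, but it is organized rather differently from the paper's own argument, which disposes of this proposition in one line by appealing to the proof of Proposition~\ref{prop:corr-ground-formulas}: there, the $\PIA$ axioms guarantee that every ground atom over $\signature_\lp$ or its negation belongs to any stable expansion, so the ground part of an expansion is determined by its ground-atomic part, and the observation that a propositional $\fot$ contributes no names and only $0$-ary predicates (those occurring in rules being $\PIA$-covered) lets the $\mop b_i$ conjuncts be disregarded. You instead prove a fresh model-level lemma---that relative to a fixed kernel $T_\oga$ the two combinations entail the same objective ground formulas---by restricting a counter-model of $\combebdis\combpar$ to its named subdomain, and then lift it to stable expansions via Proposition~\ref{prop:atoms-expansions}; technically this is closer to Lemma~\ref{lem:named-model} and Proposition~\ref{prop:corr-safe-formulas} than to the proof the paper cites. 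What your route buys is self-containedness: in particular it delivers the $\oga$-level agreement of the two \emph{combinations} as a byproduct, a step the paper's one-liner leaves to the reader (Theorem~\ref{cor:corr-ground-atomic} covers only the bare embeddings), whereas the paper's route buys brevity by reusing the expansion-level $\PIA$-determination machinery. Two points you should make explicit when writing this up: (i) concluding $\inter'\anymodels_{T}\combhpdis\combpar$ also needs that $\inter'$ still satisfies the $\embebdis$ rule axioms; this holds because every assignment over $\inter'$ ranges over named individuals and has the same associated name substitutions in $\inter$ and $\inter'$, so objective and modal atoms evaluate identically---it is precisely the place where the modal semantics could in principle misbehave, and propositionality of $\fot$ is what guarantees that all names come from $\signature_\lp$ so the instantiated body atoms are $\PIA$-covered; and (ii) the degenerate case where the combined signature has no names at all (then $\lp$ is propositional, the restriction is unavailable, but the claim is immediate since the $\mop b_i$ can be eliminated outright as in Proposition~\ref{prop:corr-ground-programs}). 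Neither point affects the soundness of your plan.
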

Intuitively,
this holds because a propositional $\fot$ cannot interfere with names
of individuals---it has no names. Therefore, as in the case
of the embeddings $\embhpdis$ and $\embebdis$, we can eliminate 
all modal atoms $\mop b$ from rule bodies in
$\combebdis\combpar$ to obtain $\combhpdis\combpar$.  For similar
reasons, also correspondence results for the embeddings $\embeb$ and
$\embeh$ extend to combinations with propositional theories.

\begin{proposition}
\label{prop:eb-eh-prop-merged}
For every $(\fot, \lp) \in \propclass\mytimes\narbclassprg $,
$\combeb\combpar\anyexpequiv_{og}\combeh\combpar$, and if $\lp \in
\nsafeclass$, then $\combeb\combpar\anyexpequiv\combeh\combpar$.
\end{proposition}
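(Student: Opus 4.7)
The plan is to derive both claims by extending the corresponding embedding-only results---Proposition~\ref{prop:corr-ground-formulas} (giving $\embeb(\lp)\anyexpequiv_\og\embeh(\lp)$ for normal~$\lp$) for the first part and Proposition~\ref{prop:corr-safe-formulas} (giving $\embeb(\lp)\expequiv\embeh(\lp)$ for safe normal~$\lp$) for the second---to the combination setting, exploiting the fact that a propositional theory~$\fot$ consists of ground formulas over 0-ary predicate symbols only, contains no modal operator, and cannot introduce new names or non-trivial first-order interactions with~$\signature_\lp$. Since every occurrence of~$\mop$ in $\combeb\combpar$ and $\combeh\combpar$ is around an objective atomic formula, Proposition~\ref{prop:atoms-expansions} applies, and stable expansions are characterised by their objective ground atomic kernels~$T_\oga$.

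For Part~1, I aim to show that for every $\Gamma\subseteq\flang_{ga}$ the sets $\{\phi\in\flang_{ga}\mid \combeb\combpar\anymodels_\Gamma\phi\}$ and $\{\phi\in\flang_{ga}\mid \combeh\combpar\anymodels_\Gamma\phi\}$ coincide, which by Proposition~\ref{prop:atoms-expansions} pairs the stable expansions of the two combinations via matching~$T_\oga$. The direction~$\subseteq$ is immediate from Proposition~\ref{prop:models-norm} (every model of~$\embeh(\lp)$ is a model of~$\embeb(\lp)$) together with the shared~$\fot$. For~$\supseteq$, I follow the argument of Proposition~\ref{prop:corr-ground-formulas}: the extra $\mop h_1$ conjunct in an $\embeh$-head is only meaningful for an assignment~$B$ whose associated substitution~$\beta$ (unique by $\UNA_{\signature_\lp}$) is defined, since $\mop b_i$ in the body forces namedness; then $h_1\beta$ is a ground atom and $\mop h_1\beta$ is automatic once $h_1\beta\in\Gamma$, so no new ground atoms beyond those derived by $\combeb\combpar$ are generated. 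To lift this bijection to $\anyexpequiv_\og$, I observe that $T\cap\flang_g$ consists exactly of the classical consequences of $\fot\cup T_\oga$ that lie in $\flang_g$: every non-ground formula in $T_o$ contributes to $T\cap\flang_g$ only via its ground instances, which already lie in~$T_\oga$.

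For Part~2, safety of~$\lp$ makes every variable of every rule~$r$ appear in some positive body atom, so the $\mop b_i$ conjuncts together with $\UNA_{\signature_\lp}$ ensure that any body-satisfying assignment~$B$ has a unique associated ground substitution~$\beta$, whence the head instance $h_1\beta$ is a ground atom. Consequently, whenever a stable expansion~$T$ forces $h_1\beta\in T$ (as both combinations do), the stable-set property yields $\mop h_1\beta\in T$, so the extra conjunct $\mop h_1$ in the $\embeh$-head is automatically satisfied in any such~$T$; this is the heart of the proof of Proposition~\ref{prop:corr-safe-formulas}. Since a propositional~$\fot$ introduces neither new variables nor modal atoms, the argument transfers verbatim to the combination: $T$ is a stable expansion of $\combeb\combpar$ iff it is one of $\combeh\combpar$.

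The main obstacle, shared by both parts, is that $\embeb(\lp)$ and $\embeh(\lp)$ genuinely diverge on quantified modal formulas such as $\forall x.\mop h(x)$ (cf.\ Example~\ref{ex:EB-EH-nlp}); one has to check that once $\fot$ is added this divergence still does not reach the objective ground level (Part~1) nor the full language under safety (Part~2). The resolution in both cases is that under $\UNA_{\signature_\lp}$ and the any-name semantics the relevant $\mop h_1$ conjuncts collapse to statements about unique ground instances $h_1\beta$ whose membership in~$T$ is already forced by the plain~$h_1$ together with stability, and the purely propositional~$\fot$ cannot reopen this collapse because it introduces neither new names nor modal atoms of its own.
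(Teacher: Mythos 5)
There is a genuine gap in your Part~1. Your key step asserts that the extra conjunct $\mop h_1$ in an $\embeh$-head ``is only meaningful for an assignment $B$ whose associated substitution is defined, since $\mop b_i$ in the body forces namedness.'' That is only true when every variable of the rule occurs in a positive body atom, i.e., when the rule is safe --- but Part~1 is precisely about arbitrary $\lp\in\narbclassprg$. For a rule such as $p(x)\leftarrow$ or $q(x)\leftarrow \dnot s(x)$, the embedding $\embeh$ yields $\forall x\,(\ldots\limpl h(x)\land\mop h(x))$, and the conjunct $\mop h(x)$ is very much meaningful on unnamed individuals: it forces every model (with respect to $T_\oga$) to be named, a genuine semantic divergence from $\embeb$ (cf.\ Example~\ref{ex:EB-EH-nlp} and the first item of Lemma~\ref{lem:not-rel-cons}). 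Your argument therefore only covers the safe case and gives no reason why this domain-closing effect cannot produce additional objective ground consequences once a propositional $\fot$ is added. The paper closes exactly this hole by a counter-model argument: given $\inter\anymodels_{T}\combeb\combpar$ with $\inter\not\anymodels_T\phi$ for ground $\phi$, one passes to a \emph{named} interpretation by adapting Lemma~\ref{lem:named-model} (this is where propositionality of $\fot$ enters: satisfaction of $\fot$ and of ground $\phi$ is unaffected by discarding unnamed individuals), and then shows this named model satisfies $\combeh\combpar$ using stability of $T$, as in part (\ddag) of the proof of Proposition~\ref{prop:corr-safe-formulas}. Relatedly, your opening claim that the ground-atomic consequence operators coincide ``for every $\Gamma\subseteq\flang_{ga}$'' is false (take $\lp=\{a\leftarrow b\}$, $\fot=\{b\}$, $\Gamma=\{b\}$: the $\embeh$-side becomes inconsistent and entails every atom); the equality only holds at kernels of stable expansions, where closure of $T$ is what puts the head instances into $T_\oga$.

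Your lifting from $\anyexpequiv_\oga$ to $\anyexpequiv_\og$ is also not established: the claim that $T\cap\flang_g$ equals the classical consequences of $\fot\cup T_\oga$ is not literally correct (the \UNA axioms put $t_1\neq t_2$ into $T_\og$ although these do not follow classically from $\fot\cup T_\oga$), and, more importantly, the assertion that non-ground formulas of the combination contribute to $T_\og$ only via their ground instances is exactly the non-trivial content that needs the named-model argument above. The paper avoids this detour altogether: it proves directly that $\combeb\combpar$ and $\combeh\combpar$ entail the same \emph{ground} formulas with respect to $T$, and then Proposition~\ref{prop:atoms-expansions} pairs the expansions with matching $T_\oga$ and hence matching $T_\og$. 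Your Part~2 is essentially the paper's argument (rerun the proof of Proposition~\ref{prop:corr-safe-formulas} with $\fot$ added, noting that a propositional $\fot$ is insensitive to the domain), and is fine.
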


It turns out that the results in the preceding propositions cannot be
extended to more general classes of programs or theories, or larger
subsets of stable expansions.

\begin{proposition}
\label{prop:counterexamples}
There are pairs $ (\fot, \lp)$ in
\begin{flushleft}
\begin{enumerate}
\item\label{c1}
$ \propclass\mytimes\ngroundclass $  such that
$\combop\chi\combpar\not\anyexpequiv_{\oga}\combop\gamma\combpar$, for
$(\chi,\gamma)\in \{
(_{HP},_{EB}),$ $
(_{HP},_{HP}^\lor),$ $
(_{HP}^\lor,_{EH}^\lor)\}$;
\item\label{c2}
$ \emptyclass\mytimes\ngroundclass $ such that
$\combop\chi\combpar\not\anyexpequiv_{\og}\combop\gamma\combpar$, for
$(\chi,\gamma)\in\{(_{HP},_{EB}),$ $
(_{HP},_{HP}^\lor),$ $
(_{HP}^\lor,_{EH}^\lor)\}$;
\item\label{c3}
$  \emptyclass\mytimes\nsafeclass $ such that $\combhpdis(\fot, \lp)
\not \anyexpequiv\combebdis(\fot, \lp)$;
\item\label{c5}
$ \hornclass\mytimes\nsafeclass $ such that
$\combop\chi\combpar\not\anyexpequiv_{\oga}\combop\gamma\combpar$, for
$(\chi,\gamma)\in\{(_{HP}, \mbox{}_{EB}),$ $
(_{HP}, \mbox{}_{EH}),$ $
(_{EB}, \mbox{}_{EH}),$ $
(_{HP}, \mbox{}^\lor_{HP}),$ $
(_{HP}, \mbox{}^\lor_{EB}),$ $
(_{EB}, \mbox{}^\lor_{EB}),$ $
(_{EB}, \mbox{}^\lor_{HP}),$ $
(^\lor_{HP}, \mbox{}^\lor_{EB}),$ $
(_{EH}, \mbox{}^\lor_{EB}),$ $
(_{EH}, \mbox{}^\lor_{HP})\}$; and
\item\label{c8}
$ \emptyclass\mytimes\narbclassprg $ such that
$\combeb(\fot, \lp) \not \anyexpequiv\combeh(\fot, \lp)$.
\end{enumerate}
\end{flushleft}
\end{proposition}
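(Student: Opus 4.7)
The proposition is established by exhibiting explicit counterexample pairs $(\fot,\lp)$ for each of the five items, then computing the stable expansions of the two combinations and verifying they disagree at the stated granularity. Three systematic sources of divergence drive the constructions: the positive body conjunct of $\embhp$ is just $b_i$ while that of $\embeb$ and $\embeh$ is $b_i\land\mop b_i$, so the implication embedded by $\embhp$ admits its classical contrapositive in any stable expansion, whereas the one embedded by $\embeb/\embeh$ does not; $\embhpdis$ and $\embebdis$ contain the $\PIA$ axioms $\alpha\limpl\mop\alpha$ for objective ground atoms, while $\embhp$, $\embeb$, $\embeh$, and $\embehdis$ do not; and on non-ground rules, $\mop b(x)$ evaluates to false on unnamed elements under the any-name semantics, so universally-quantified implications with $\mop$-decorated atoms are strictly weaker than their purely objective counterparts.

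For items~(\ref{c1}) and~(\ref{c2}), short propositional/ground examples suffice. For $(_{HP},_{EB})$ in~(\ref{c1}) I would take $\lp=\{p\leftarrow q,\dnot r\}$ with $\fot=\{\lnot p,\lnot r,s\lor q\}$: in $\combhp\combpar$ the contrapositive of $q\land\lnot\mop r\limpl p$ combined with $\lnot p$ and $\lnot\mop r$ forces $\lnot q$, hence $s\in T^{HP}_\oga$ via $s\lor q$, whereas in $\combeb\combpar$ the candidate expansion only carries $\lnot\mop q$ and the disjunction $s\lor q$ remains unresolved. For the pairs with $_{HP}^\lor$, I would take $\lp=\{p\leftarrow\dnot q\}$ with $\fot=\{p\lor q\}$: the $\PIA$ axioms in $\embhpdis$ split the stable expansions into two with oga $\{p\}$ and $\{q\}$, while $\embhp$ and $\embehdis$ each have a unique expansion with oga $\{p\}$. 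The same programs with $\fot=\emptyset$ establish~(\ref{c2}) at og level, since $q\limpl p\in T^{HP}\setminus T^{EB}$ by Example~\ref{ex:contrapositive}, and $\lnot q\in T^{HPv}$ but neither in $T^{HP}$ nor in $T^{EHv}$ by the $\PIA$ axioms.

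Items~(\ref{c3}) and~(\ref{c8}) reuse non-ground examples from Section~\ref{sec:prop-embedd}. For~(\ref{c8}), Example~\ref{ex:EB-EH-nlp} with $\lp=\{p(a);\,p(x);\,q(x)\leftarrow p(x)\}$ places $\forall x.q(x)$ into $T^{EH}$ via the head conjunct $\mop p(x)$, which forces named-only models, but not into $T^{EB}$, because the converse Barcan formula fails in FO-AEL under both semantics. For~(\ref{c3}), the safe normal program $\lp=\{p(a);\,q(x)\leftarrow p(x)\}$ separates $\embhpdis$ from $\embebdis$: $T^{HPv}$ contains the purely objective $\forall x.p(x)\limpl q(x)$, while $T^{EBv}$ contains only the $\mop$-weakened implication, and the $\PIA$ axioms (restricted to ground atoms) cannot compensate on unnamed individuals.

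The main obstacle is item~(\ref{c5}), the Horn case at the oga level, because the disjunction-based mechanism of~(\ref{c1}) is unavailable. Here I plan to combine safe non-ground rules containing $\dnot$ with Horn $\fot$ consisting of universally-quantified Horn implications and goal clauses (e.g.\ $\forall x.p(x)\limpl q(x)$ together with $\lnot q(a)$), so that the systematic divergences identified above propagate into ground atomic conclusions---either through contrapositive chains inside $\fot$ that HP follows but EB/EH cannot (the $\mop b_i$-conjunct breaks the chain at the rule level), or through $\PIA$-induced additional expansions available to the disjunctive embeddings only. Each of the ten pairs requires a tailored construction and a case-by-case verification of candidate stable expansions; the tedious part is keeping $\fot$ strictly within the Horn fragment while still forcing the oga-level discrepancy, which will likely require chaining several universal Horn implications in $\fot$ with auxiliary safe rules in $\lp$ to bridge the negative-to-positive gap that a disjunctive $\fot$ would close directly.
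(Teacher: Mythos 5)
Your constructions for items~(\ref{c1}), (\ref{c2}), (\ref{c3}), and (\ref{c8}) are fine: the two propositional/ground examples do separate the required pairs at the $\oga$ resp.\ $\og$ level (they differ from the paper's single three-rule example but work), and for (\ref{c3}) and (\ref{c8}) you use exactly the programs the paper uses. The genuine gap is item~(\ref{c5}), which is the bulk of the proposition (ten pairs over $\hornclass\mytimes\nsafeclass$ at the $\oga$ level): you give no counterexamples there, only a plan, and the plan rests on the wrong mechanisms. The paper's counterexample hinges on a constant $b$ that occurs only in $\fot$ and hence has \emph{no} $\UNA$ axiom $\lnot\mop(a{=}b)\limpl a\neq b$: models of the combination may or may not identify $a$ and $b$, and this hidden case split interacts with $\dnot p(x)$ in a safe rule --- if $a^I=b^I$ then $q$ follows from the Horn implication $p(b)\limpl q$, while if $a^I\neq b^I$ then $\lnot\mop p(x)$ fires the rule at $b$ --- so $\combhp$ derives $s$ in every model while $\combeb$/$\combeh$ cannot, and the $\PIA$ axioms resolve the split in yet another way (forcing $a\neq b$), which is what separates the disjunctive variants. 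Your sketch never mentions this equality/naming interaction; indeed the paper's discussion (Section~\ref{sec:discussion}) points out that this counterexample \emph{fails} under the unique names assumption, so the lever is essential and not just a convenience.

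Concretely, the two mechanisms you propose cannot deliver the needed $\oga$-level discrepancies against a Horn $\fot$. Contrapositives of embedded rules and the $\PIA$-derived literals only add \emph{negative} ground information, and adding negative ground literals to a consistent Horn theory never yields new atomic consequences (Horn sets have the least-model property), so a chain ``$\lnot q(a)$, hence $\lnot p(a)$, hence \dots'' cannot put a fresh ground atom into one expansion but not the other; it also does not change the valuation of any modal atom, since $\mop\alpha$ is fixed by the belief set, not by the model. The paper's example works precisely because a negative literal (e.g.\ $\lnot p(b)$ from $\PIA$, or the case hypothesis) changes which \emph{name substitutions} witness $\mop p(x)$ at the individual denoted by $b$ under the any-name semantics --- a modal/equality effect, not a Horn-logical one. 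Without importing that idea, the ``tailored constructions'' you defer to are unlikely to exist along the route you describe, and in any case they are not supplied, so the hardest item of the proposition remains unproved. Note also that the paper covers all ten pairs with essentially one example (plus the small modification replacing $q,q'$ by $q(x),q'(x)$), rather than ten separate constructions.
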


\begin{proof}[of Theorem \ref{th:fo-ael-exp-correspondence}] Correctness
of the table (i.e., the
`$ \Leftarrow $' direction of the theorem) follows from
the fact that  $\embeh(\lp)$ and
$\embehdis(\lp)$ are identical for normal programs $\lp$
(thus $\combeh\combpar\,{=}\,\combehdis\combpar$), Theorem~\ref{cor:corr-ground-atomic},
Propositions~\ref{prop:eb-eh-safe}--\ref{prop:eb-eh-prop-merged},
the inclusion relations between the classes of programs and FO theories, and
the properties of the $\anyexpequiv_\mathrm{x}$ relation. Completeness (i.e., the `$ \Rightarrow $' direction) is shown
analogously, by exploiting the counterexamples in
Proposition~\ref{prop:counterexamples}.
One can verify with little yet tedious effort that the table is
complete and that no entries can be relaxed
(e.g., using a simple logic program).
\end{proof}

\section{Application to the Semantic Web}
\label{sec:SW-application}

The original motivation for our work was the interest of
combinations of rules and ontologies in the Semantic Web. Below
we illustrate how our results may be applied in this context.
Briefly, some uses are:
\begin{itemize}
\item to capture the semantics of proposed combinations in a uniform language;
\item to derive properties of such and other combinations;
\item to design semantics for 
combinations
, such that the ontology and rule parts are faithfully captured, and 
controlling the effect of aspects like grounding and working with open vs.\
closed domains.
\end{itemize}
We note that, apart from an obvious relationship to the open vs.\ closed
domain issue, grounding rules is important from a practical
perspective, since many rule engines in use today employ grounding. In
fact, powerful rule engines like smodels%
\footnote{\url{http://www.tcs.hut.fi/Software/smodels/}},
DLV%
\footnote{\url{http://www.dbai.tuwien.ac.at/proj/dlv/}}, and clasp%
\footnote{\url{http://www.cs.uni-potsdam.de/clasp/}}, along with many others that offer
stable and/or well-founded semantics are essentially based on
evaluation of ground programs. Thus, aspects such as the invariance of a
combination with respect to grounding the rules prior to evaluation
(formally captured in Definition~\ref{def:grounding-invariance} below)
are also important from a practical perspective.

We concentrate  on two prominent Semantic Web languages, namely
(i) RDF and its extension RDF Schema (RDFS)
\cite{Klyne+Carroll-ResoDescFram04}, and (ii) OWL DL
(Version~2)
\cite{Motik+Patel-SchneiderETAL-OntoLangSpecFunc:09}. For reasons of clarity
we restrict ourselves to normal programs.

In the remainder of this section, we consider the combinations \comb, which are
defined as $\comb\combpar=\fot\cup\emb(\lp)$, with
$\embvar\in\{\mathit{HP},\mathit{EB},\mathit{EH}\}$, where \fot is an FO
theory (the ontology) and \lp is a normal logic program.

\subsection{Grounding Invariance and Closed Domains}
\label{sec:grounding-closed}

In order to state our
results for RDF and OWL concerning grounding and open vs.\ closed domains, we
first formally define grounding invariance and closed domain semantics. In the
following, $\mathcal{X}$ is a class of FO theories and $\mathcal{Y}$ is a
class of normal logic programs.

\begin{definition}[Grounding Invariance]
\label{def:grounding-invariance}
A combination $\comb$ \emph{fulfills grounding invariance} (or is
\emph{invariant under grounding}) for $\mathcal{X},\mathcal{Y}$, if
$\comb(\fot,\lp)\anyexpequiv_\oga\comb(\fot,\gr\lp)$, for every $(\fot,\lp)\in
\mathcal{X}\times\mathcal{Y}$.
\end{definition}
When speaking about open and closed domain semantics in the context of
combinations of rules and ontologies, we are interested in the effective
domain of quantification of the variables in the rules. In the open domain
semantics, variables quantify over arbitrary domains, while in the closed
domain semantics variables quantify over a fixed domain, e.g., the set of
ground terms obtained from the constants and function symbols appearing in the
rules or ontologies.

Recall that, given a normal program \lp and a rule $r\in\lp$, the embedding
$\emb(r)$ is a formula of the form $(\forall)\; b_r\limpl h_r$.

\begin{definition}[Closed Domain]
  \label{def:closed-domain}
  A combination $\comb$ is \emph{closed-domain} for
  $\mathcal{X},\mathcal{Y}$ if for every $(\fot,\lp)\in
  \mathcal{X}\times\mathcal{Y}$ and every stable expansion $T$ of
  $\comb\combpar$ the following property holds. For every interpretation \inter such that $\inter\anymodels_T
  \comb\combpar$ and variable assignment $\varass$, whenever
  $\inter,\varass\anymodels_T b_r$ for some rule $r\in\lp$, then $\varass$ assigns every variable $x$ in
  $r$ to a named individual, i.e., $x^\varass=t^\funsym$, for some name
  $t$.    
 Otherwise, $\comb$ is \emph{open-domain} for $\mathcal{X},\mathcal{Y}$.
\end{definition}
Essentially, a combination is closed domain if rules can only be applied
(i.e., the body is satisfied in a model and variable assignment) if all
variables are assigned to named individuals. 

We have that combinations involving only ground logic programs are trivially
closed-domain. 

\begin{proposition}
\label{cor:closed-domain-ground}
  Combinations defined as
  $\comb(\fot,\gr\lp)$ are closed-domain for every
  $\mathcal{X},\mathcal{Y}$.
\end{proposition}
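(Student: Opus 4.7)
The plan is essentially immediate from the definitions. By the definition of $\gr{\lp}$ given in Section~\ref{sec:preliminaries-programs}, the grounding $\gr{\lp}$ is obtained from $\lp$ by replacing each variable in every rule $r \in \lp$ with a name from $\names_{\signature_\lp}$. Consequently, every rule $r' \in \gr{\lp}$ contains no variables whatsoever, and the embedded formulas in $\emb(\gr{\lp})$ (which, ignoring the \UNA axioms, are obtained by applying \emb rule-by-rule) are likewise variable-free in their antecedent $b_{r'}$ and consequent $h_{r'}$.

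Therefore, to verify the closed-domain property of Definition~\ref{def:closed-domain}, I would fix an arbitrary pair $(\fot,\lp) \in \mathcal{X} \times \mathcal{Y}$, an arbitrary stable expansion $T$ of $\comb(\fot,\gr{\lp})$, an arbitrary interpretation $\inter$ with $\inter \anymodels_T \comb(\fot,\gr{\lp})$, and an arbitrary variable assignment $\varass$. For any rule $r' \in \gr{\lp}$ such that $\inter,\varass \anymodels_T b_{r'}$, the universal condition ``for every variable $x$ in $r'$, $x^\varass = t^{\funsym}$ for some name $t$'' quantifies over the empty set of variables in $r'$ and hence is vacuously satisfied.

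Since the choice of $(\fot,\lp)$ was arbitrary within $\mathcal{X} \times \mathcal{Y}$, this establishes closed-domain behaviour for any classes $\mathcal{X}$ and $\mathcal{Y}$, completing the proof. There is no real obstacle here --- the result is a direct consequence of the fact that grounding eliminates all variable occurrences, so the universally quantified requirement in Definition~\ref{def:closed-domain} holds vacuously; the only thing to be careful about is to state explicitly that the definition is to be read as applied to the program $\gr{\lp}$ (i.e., to rules $r' \in \gr{\lp}$), consistent with how $\comb(\fot,\gr{\lp})$ is constructed.
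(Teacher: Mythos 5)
Your proposal is correct and matches the paper's reasoning: the paper states the result as trivial precisely because every rule of $\gr{\lp}$ is variable-free, so the closed-domain condition of Definition~\ref{def:closed-domain} is vacuously satisfied. Your explicit remark that the definition must be read with respect to the rules of $\gr{\lp}$ (the program actually occurring in the combination) is exactly the right point of care.
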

Since combinations that are invariant under grounding are
equivalent (with respect to ground atoms) to the combination obtained by
grounding the program, grounding invariance essentially implies
closed-domain. The following observations follow straightforwardly from the
definitions of the respective combinations and the properties of autoepistemic
logic. 

\begin{proposition}
  \label{pro:closed-domain} The combinations \combeb and \combeh are
  closed-domain for FO theories and safe logic programs. The combination
  \combhp is open-domain already for empty theories and positive safe normal logic
  programs.
\end{proposition}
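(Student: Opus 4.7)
The plan is to handle the two parts separately. The closed-domain claim for $\combeb$ and $\combeh$ will follow from the fact that their embedded rule bodies place each positive body atom $b_i$ under the modal operator $\mop$, combined with the safety assumption.

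Concretely, fix a stable expansion $T$, an interpretation $\inter$ with $\inter\anymodels_T \comb\combpar$ for $\embvar\in\{\mathit{EB},\mathit{EH}\}$, a rule $r\in\lp$ of the form~\eqref{eq:rule}, and a variable assignment $B$ satisfying the body of $\embvar(r)$, namely $\bigwedge_i(b_i\land\mop b_i)\land\bigwedge_j\lnot\mop c_j$. For each positive body atom $b_i$, I would invoke the any-name semantics definition of modal atoms: $(\inter,B)\anymodels_T \mop b_i$ requires some name substitution $\varsub$ associated with $B$ such that $b_i\varsub$ is closed and $b_i\varsub \in T$. Since the domain of any $\varsub$ associated with $B$ is $V^{\inter,B}_\names=\{x\mid x^B\text{ is named}\}$, closedness of $b_i\varsub$ forces every variable of $b_i$ to be assigned a named individual by $B$. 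Safety of $r$ then guarantees that every variable of $r$ appears in some $b_i\in B^+(r)$, so every variable of $r$ is assigned to a named individual, yielding closed-domain.

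For the open-domain claim about $\combhp$, I would exhibit a counterexample. Take $\fot=\emptyset$ and the positive safe normal program $\lp=\{p(x)\leftarrow q(x)\}$ over a signature $\signature_\lp$ containing a single constant $a$ (so $\UNA_{\signature_\lp}$ is empty) and unary predicates $p, q$. Then $\combhp(\fot,\lp)=\{\forall x(q(x)\limpl p(x))\}$. Let $T$ be its (unique) consistent stable expansion, and consider the interpretation $\inter$ with domain $\{k,k'\}$, $a^I=k$, and $p^I=q^I=\{k'\}$. Then $\inter\anymodels_T\combhp(\fot,\lp)$ because $\forall x(q(x)\limpl p(x))$ holds in $\inter$, and the assignment $B$ with $x^B=k'$ satisfies the body $q(x)$ of $\embhp(r)$, yet $k'$ is unnamed in $\inter$ since $a^I=k\neq k'$.

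The main subtlety lies in the first part: correctly unpacking that, under the any-name semantics, $(\inter,B)\anymodels_T \mop b_i$ forces all variables of $b_i$ to be assigned to named individuals, because otherwise no associated name substitution $\varsub$ can render $b_i\varsub$ closed and hence no witness for $\mop b_i$ is available in $T$. Once this observation is made precise, the rest of the argument is a direct application of safety and of the definitions of $\embeb$, $\embeh$, and closed-domain, while the open-domain counterexample only requires verifying that the chosen $\inter$ is genuinely a model of the (purely first-order) combination.
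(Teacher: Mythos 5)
Your proof is correct and follows exactly the argument the paper treats as immediate: the positive modal atoms $\mop b_i$ in the antecedents of $\embeb(r)$ and $\embeh(r)$ can only be satisfied when the variables of $b_i$ (hence, by safety, all variables of $r$) are assigned to named individuals---the same observation the paper exploits in the proofs of Propositions~\ref{prop:corr-safe-formulas} and \ref{prop:positive-horn}---while the counterexample with $\fot=\emptyset$, $\lp=\{p(x)\leftarrow q(x)\}$ and an unnamed individual witnesses open-domain behavior of $\combhp$. No gaps; this matches the paper's intended (straightforward) justification.
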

Of particular interest to combinations of rules and ontologies on the Semantic
Web are \emph{DL-safe} programs
\cite{Motik+SattlerETAL-QuerAnswOWL-with:05}, which yield grounding
invariance (for positive programs), thereby effectively imposing a
closed-domain semantics.  An atom $p(\vec{t})$ is a \emph{rule atom} if $p$
appears only in \lp. We call the program \lp \emph{DL-safe}, if \lp is safe
and every variable in every rule $r$ of $P$ appears in a rule atom in $B^+(r)$.

The next propositions follow straightforwardly from the proof of
Proposition~\ref{prop:positive-horn}.

\begin{proposition}
\label{pro:grounding-invariance-1}
Let \fot be a \ghornclass theory and \lp a DL-safe positive normal
 program. Then, $\combhp(\fot,\lp)\anyexpequiv_\oga\combeb(\fot,\lp)
\anyexpequiv_\oga\combeh(\fot,\lp)$.
\end{proposition}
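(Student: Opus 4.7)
The plan is to extend the proof of Proposition~\ref{prop:positive-horn} from \hornclass to \ghornclass theories by exploiting DL-safety. First, the correspondence $\combeb(\fot,\lp) \anyexpequiv_\oga \combeh(\fot,\lp)$ follows immediately from the first part of Proposition~\ref{prop:positive-horn}, since every DL-safe program is safe by definition and \lp is positive by assumption. The real work lies in establishing $\combhp(\fot,\lp) \anyexpequiv_\oga \combeb(\fot,\lp)$, which Proposition~\ref{prop:positive-horn} gives only for \hornclass theories; here we must lift this from \hornclass to \ghornclass using DL-safety as the extra leverage.

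The central observation is that DL-safety constrains rule firings to involve only named individuals, regardless of the existential witnesses generated by the generalized Horn axioms in \fot. In detail, fix a rule $r \in \lp$ and a variable $x$ occurring in $r$; by DL-safety, $x$ occurs in some rule atom $p(\vec{t}) \in B^+(r)$ whose predicate $p$ does not appear in \fot. Since \fot contains no axioms mentioning $p$, the only ground atoms of the form $p(\vec{s})$ that can belong to the kernel of a stable expansion $T$ of $\comb(\fot,\lp)$ are those derived via rules from \lp. By induction on derivation depth (using positivity of \lp), all such derived ground atoms mention only names from $\names_{\signature_\lp}$, and in particular not the unnamed Skolem witnesses that \ghornclass axioms may introduce on predicates shared with \lp.

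With this observation, the equivalence of \combhp and \combeb on ground atoms is argued as in Proposition~\ref{prop:positive-horn}. Take any stable expansion $T$ of either combination, an interpretation $\inter \anymodels_T \comb\combpar$, and a variable assignment $B$ satisfying the positive body of a rule $r$. The rule atom $p(\vec{t})$ forces each variable of $r$ to be assigned to a named individual, so a (total) associated name substitution $\varsub$ exists. Under the any-name semantics, whenever the ground instance $b_i\varsub$ lies in $T_\oga$ one has $(\inter,B) \anymodels_T \mop b_i$, and conversely. Hence the antecedents of $\embhp(r)$ and $\embeb(r)$ are equisatisfied on exactly those variable assignments that can fire the rule, so the two combinations produce the same objective ground atomic consequences, yielding $\combhp(\fot,\lp) \anyexpequiv_\oga \combeb(\fot,\lp)$.

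The main obstacle is making precise the claim that unnamed individuals introduced by the existential heads of \ghornclass axioms do not contribute to rule firings. One must verify that any such individual cannot satisfy any rule atom (its predicate is not in \fot, so no \ghornclass axiom can assert it) and that DL-safety blocks it from anonymously satisfying a non-rule body atom (since every variable is pinned down by some rule atom). Once this invariant is formalized and combined with the standard fixed-point characterization used implicitly in Proposition~\ref{prop:positive-horn}, the remainder of the argument reduces to the Horn case already handled there.
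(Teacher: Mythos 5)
Your decomposition matches the paper's intent: the paper derives this proposition from the proof of Proposition~\ref{prop:positive-horn}, taking $\combeb\combpar\anyexpequiv_\oga\combeh\combpar$ from its first part and using DL-safety so that rule atoms play the role that the guard predicate (there, the fresh domain predicate $d$, protected by $\mop d(x)$) plays in that proof. However, your execution of the step $\combhp\combpar\anyexpequiv_\oga\combeb\combpar$ has a genuine gap. You argue that in any model $\inter\anymodels_T\comb\combpar$ with an assignment $B$ satisfying the positive body, ``the rule atom $p(\vec t)$ forces each variable of $r$ to be assigned to a named individual'' because $p$ does not occur in $\fot$ and hence ``no \ghornclass axiom can assert it''. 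That confuses what the axioms derive with what a model may satisfy: models of $\combhp\combpar$ are arbitrary interpretations, and nothing prevents $p^I$ from containing unnamed individuals, nor does satisfaction of the objective body at a named assignment imply that the ground instances $b_i\varsub$ belong to $T_\oga$ (which is what you need for the corresponding $\mop b_i$ in $\embeb(r)$ to hold). Both facts are exactly where the \ghornclass hypothesis must enter, via the intersection/minimal-model property of (skolemized) Horn theories (Lemma~\ref{lem:intersection}) and the fixed-point construction $\Gamma^i$ in the proof of Proposition~\ref{prop:positive-horn}: one must build, from a countermodel for $\combeb\combpar$ (or from the fixed point), a model of $\combhp\combpar$ in which rule predicates hold only of derived, named tuples. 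Your sketch never actually uses Horn-ness in this comparison, and indeed the claim is false without it: for $\fot=\{A(a),\ \forall x. A(x)\limpl B(x)\lor C(x)\}$ and the ground (hence trivially DL-safe) positive program $\lp=\{q\leftarrow B(a);\ q\leftarrow C(a)\}$, $\combhp\combpar$ entails $q$ while $\combeb\combpar$ does not (cf.\ Example~\ref{ex:owl-ground-emb}); the obstruction is not unnamed individuals at all.

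A secondary imprecision: the unnamed witnesses of $\fot$'s existential heads are domain elements of the combination, not Skolem terms---skolemization is a device introduced inside the proof of Proposition~\ref{prop:positive-horn}, and to use it you must also show (as that proof does) that skolemizing $\fot$ commutes with the combination and that rule-predicate atoms over Skolem terms can only enter an expansion together with equalities to genuine names (the analogue of the fact labelled (\dag) there for the predicate $d$). Your appeal to ``the standard fixed-point characterization used implicitly'' points at the right machinery, but supplying it---not DL-safety alone---is the actual content of the proof.
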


\begin{proposition}
\label{pro:grounding-invariance-2-4} The following combinations
fulfill grounding invariance:
\begin{enumerate}
\item\label{itm:grounding-1} $\combeb$ and $\combeh$ for \ghornclass theories
  and safe positive normal programs;
\item\label{itm:grounding-2} $\combhp$, $\combeb$, and $\combeh$ for
  \hornclass theories and safe positive normal programs; and
\item\label{itm:grounding-3} $\combhp$, $\combeb$, and $\combeh$ for
  \ghornclass theories and DL-safe positive normal programs.
\end{enumerate}
\end{proposition}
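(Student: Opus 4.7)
The plan is to derive each clause directly from the technique used in the proof of Proposition~\ref{prop:positive-horn}. Grounding invariance amounts to showing $\comb(\fot,\lp)\anyexpequiv_\oga\comb(\fot,\gr\lp)$, and since $\emb(\gr\lp)=\{\emb(r\varsub)\mid r\in\lp,\ \varsub\text{ total ground}\}\cup\UNA_{\signature_\lp}$, it suffices to argue that, in every first-order model relevant for the stable expansions, a non-ground embedded rule $\emb(r)$ holds iff all its ground instances $\emb(r\varsub)$ do. Equivalently, it is enough to show that, in any such model, the variables of every rule body are effectively confined to named individuals from $\names_{\signature_\fot\cup\signature_\lp}$.

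For item~1, let \fot be \ghornclass and \lp safe and positive; consider $\embvar\in\{\mathit{EB},\mathit{EH}\}$. Each positive body atom $b_i$ is accompanied by $\mop b_i$ in the embedded body, and by the reasoning of Example~\ref{ex:any-name}, $\mop b_i$ can only be satisfied under a variable assignment whose associated name substitution is defined on all variables of $b_i$. Safety forces every variable of $r$ to occur in some such $b_i$, so $\emb(r)$ is satisfied in a model exactly when all ground instances $\emb(r\varsub)$ are. For item~2 (\hornclass \fot, safe positive \lp), \fot has no existentials and hence introduces no unnamed witnesses; the \hornclass step inside the proof of Proposition~\ref{prop:positive-horn} shows that one may work with named models, where for any safe rule and each of \combhp, \combeb, \combeh the variables range only over names. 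Item~3 (\ghornclass \fot, DL-safe positive \lp) treats \combhp together with \combeb and \combeh: DL-safety requires every variable of every rule to occur in a positive body atom whose predicate appears only in \lp. Such predicates are populated solely by rule applications, and a straightforward induction on the derivation of these atoms in a stable expansion of $\comb(\fot,\lp)$ shows that every rule-predicate atom has named arguments; hence variables are again confined to names.

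Once variables in rule bodies are confined to named individuals, satisfying a non-ground embedded rule in a first-order model is equivalent to satisfying the conjunction of all its ground instances, so $\comb(\fot,\lp)$ and $\comb(\fot,\gr\lp)$ induce the same entailment $\xmodels_{T_\oga}$ of objective ground atoms for every belief set $T$. By Proposition~\ref{prop:atoms-expansions}, they therefore have matching stable expansions modulo $\flang_\oga$, i.e., grounding invariance. The main obstacle is the induction underlying item~3: one must rule out that unnamed existential witnesses introduced by \ghornclass axioms leak into rule-predicate atoms through rule applications. DL-safety blocks this precisely---if a rule head derived an atom with an unnamed argument $k$, then $k$ would have to appear in some positive body rule atom, contradicting the induction hypothesis that such atoms have only named arguments. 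This is exactly the inductive step already carried out in the \ghornclass case of the proof of Proposition~\ref{prop:positive-horn}, so the present proposition rests on reusing that induction rather than developing new machinery.
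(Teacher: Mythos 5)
Your top-level strategy---getting all three items by reusing the proof of Proposition~\ref{prop:positive-horn}---is exactly the paper's (its proof consists of precisely that remark), but the bridge you articulate does not hold up. The claim that it suffices to show that, in every relevant model, $\emb(r)$ holds iff all its ground instances do, and that this is ``equivalent'' to confining rule variables to named individuals, is false under the any-name semantics: two names may denote the same individual without being believed equal, and the witnessing name substitutions for different modal atoms need not be uniform. For instance, if $a^\inter=c^\inter=k$ with $k\in p^I\cap q^I$, and $p(a),q(c)\in T$ but $p(c),q(a)\notin T$, then the body of $\embeb(h(x)\leftarrow p(x),q(x))$ is satisfied at $k$ while every ground instance is vacuously true, so the grounding is strictly weaker model-by-model; dually, for \embeh the head conjunct $\mop h$ can make a ground instance (say with $x/c$) fail although the non-ground rule is satisfied via the name $a$. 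Moreover, for \combhp (items~2 and~3) there are no modal body atoms, so variables are not confined to named individuals in arbitrary models at all, and your scheme simply does not apply to it. The argument that actually works is not per-model but per-candidate-expansion: fix the equalities among names and the truth values of modal atoms according to $T$, reduce to a classical FO Horn theory (skolemizing \fot in the \ghornclass case), and use the named intersection-model property (Lemma~\ref{lem:intersection}) together with Proposition~\ref{prop:atoms-expansions} to conclude that ground-atomic entailment is unaffected by grounding the rules over the combined signature---which is the machinery of the proof of Proposition~\ref{prop:positive-horn}, not a pointwise equivalence of a rule with its instances.

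For item~3, ``induction on the derivation of these atoms in a stable expansion'' is not a defined notion in FO-AEL; its rigorous counterpart is the fixpoint iteration $\Gamma^i$ in the proof of Proposition~\ref{prop:positive-horn}, where DL-safety lets the rule atoms play exactly the role that the domain predicate $d$ (resp.\ the positive modal body atoms) plays there in blocking Skolem witnesses, so your intuition is right but the work is done by that construction, not by a generic derivation argument. Alternatively, item~3 for \combhp follows compositionally with no new induction: $\combhp\combpar\anyexpequiv_\oga\combeb\combpar$ by Proposition~\ref{pro:grounding-invariance-1}, $\combeb\combpar\anyexpequiv_\oga\combeb(\fot,\gr\lp)$ by item~1 (DL-safe programs are safe), $\combeb(\fot,\gr\lp)\anyexpequiv\combeh(\fot,\gr\lp)$ by Proposition~\ref{prop:eb-eh-safe}, and $\combeh(\fot,\gr\lp)\anyexpequiv_\oga\combhp(\fot,\gr\lp)$ by Proposition~\ref{prop:hp-eh-ground}.
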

Even under DL-safety, a generalization of
this result from positive to normal
logic programs fails, and the combinations $\combhp$,
$\combeb$, and $\combeh$ behave differently---this can be
shown by replacing in the proof of
Proposition~\ref{prop:counterexamples}(4) the theory $\fot$
with $\{ p(b),$ $p(a)\limpl q \}$; the program \lp is then
DL-safe.  From Theorem~\ref{th:fo-ael-exp-correspondence}, we
can then conclude the following:
\begin{proposition}
\label{pro:grounding-non-invariance}
  The combinations \combhp, \combeb, and \combeh are not invariant under
  grounding for \hornclass theories and DL-safe normal logic programs.
\end{proposition}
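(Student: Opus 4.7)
The strategy is to combine the explicit counterexamples behind Proposition~\ref{prop:counterexamples}(\ref{c5}) with the grounded-program correspondences supplied by Theorem~\ref{th:fo-ael-exp-correspondence}. First I would instantiate the counterexample: take the safe normal program $\lp$ used in the proof of Proposition~\ref{prop:counterexamples}(\ref{c5}) and replace its accompanying Horn theory by $\fot = \{p(b),\ p(a) \limpl q\}$, as already suggested in the remark preceding the statement. One then checks, by inspecting the predicates of $\lp$ against those occurring in this new $\fot$, that the pair $(\fot,\lp)$ is DL-safe, because each variable of $\lp$ remains covered by a positive body atom whose predicate does not appear in $\fot$.

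Next I would verify that the modified pair $(\fot, \lp)$ still witnesses pairwise non-correspondence of the three combinations on objective ground atoms, i.e.\ $\combop_\chi(\fot,\lp) \not\anyexpequiv_{\oga} \combop_\gamma(\fot,\lp)$ for each pair $(\chi,\gamma)$ among $\{HP,EB,EH\}$. This amounts to re-running the stable-expansion calculation from the original proof of Proposition~\ref{prop:counterexamples}(\ref{c5}), paying attention to the fact that $b \notin \signature_\lp$ and hence is not constrained by $\UNA_{\signature_\lp}$. Simultaneously, since $\gr{\lp} \in \groundclass$ and $\fot \in \hornclass \subseteq \ghornclass$, applying Theorem~\ref{th:fo-ael-exp-correspondence} in the cells $(\ghornclass,\groundclass)$ and $(\arbclassth,\groundclass)$ yields $\combhp(\fot,\gr{\lp}) \anyexpequiv_{\oga} \combeh(\fot,\gr{\lp})$ and $\combeb(\fot,\gr{\lp}) \anyexpequiv \combeh(\fot,\gr{\lp})$, so the three grounded combinations induce a single common $\oga$-behavior.

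Finally I conclude by comparing the two pictures. Since the three non-grounded combinations have pairwise distinct $\oga$-parts of stable expansions while the three grounded combinations share one common $\oga$-part, at most one of the non-grounded $\combop_\chi(\fot,\lp)$ can agree on $\oga$-formulas with the common grounded behavior; hence at least two of $\combhp$, $\combeb$, $\combeh$ fail grounding invariance on this single example. To obtain an individual witness for the remaining combination, I would either read off from the explicit expansions computed in Step~2 that each $\combop_\chi(\fot,\lp)$ directly disagrees on $\oga$ with the common grounded behavior, or supplement with a second counterexample (obtained by the same modification of one of the other pairs listed in Proposition~\ref{prop:counterexamples}(\ref{c5})) that locates the failure in the leftover combination.

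The main obstacle is Step~2: one has to confirm that the replacement theory $\fot=\{p(b), p(a)\limpl q\}$ both makes $\lp$ DL-safe and preserves the pairwise separation of stable expansions under the any-name semantics. In particular, one must track how modal subformulas of the embedded rules behave over a signature in which $b$ is not subject to $\UNA_{\signature_\lp}$, and check that none of the distinguishing objective ground atoms is accidentally collapsed by this change of the theory.
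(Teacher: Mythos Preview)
Your proposal is correct and follows essentially the same approach as the paper: the paper's entire argument is the remark immediately preceding the proposition, namely to take the program $\lp$ from Proposition~\ref{prop:counterexamples}(\ref{c5}), replace $\fot$ by $\{p(b),\ p(a)\limpl q\}$ (which makes $\lp$ DL-safe since $r$ becomes a rule predicate), observe that the three combinations still differ pairwise on $\oga$, and then invoke Theorem~\ref{th:fo-ael-exp-correspondence} for the grounded side. You are in fact more scrupulous than the paper on one point: you correctly observe that pairwise distinctness of the three non-grounded $\oga$-behaviors together with coincidence of the three grounded ones only forces \emph{at least two} failures of grounding invariance, and that the third must be checked directly against the common grounded behavior (which, for this concrete example, has $q',s$ in its unique expansion, while e.g.\ $\combeb(\fot,\lp)$ does not); the paper leaves this step implicit.
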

Observe that this result, combined with Proposition \ref{pro:closed-domain},
shows that closed-domain does not imply grounding invariance. In contrast,
Proposition~\ref{cor:closed-domain-ground} shows that the converse effectively
holds, as long as one is interesting only in ground atoms.

A weaker notion of safety, namely \emph{weak DL-safety} \cite{rosati06}
(see also Section~\ref{sec:hybrid}) plays also an important role in the
Semantic Web context, because of the possibility to write conjunctive queries
over DL ontologies. A program \lp is \emph{weakly DL-safe}, if it is safe
and for every rule $r$ and every variable $x$ in $r$, $x$ either appears only
in non-rule atoms in $B^+(r)$ or $x$ appears in a rule atom in $B^+(r)$.  As
weak DL-safety is stronger than ordinary safety, clearly
Proposition~\ref{pro:grounding-invariance-2-4}(\ref{itm:grounding-1}) and
Proposition~\ref{pro:grounding-invariance-2-4}(\ref{itm:grounding-2}) extend to weakly DL-safe programs.
However, Proposition~\ref{pro:grounding-invariance-2-4}(\ref{itm:grounding-3}) does
not, and the same happens also to
Proposition~\ref{pro:grounding-invariance-1}.

\begin{proposition}
\label{pro:weak-dl-safe}
There is a pair $(\fot,\lp)\in\ghornclass\times\nsafeclass$ such that \lp is
weakly DL-safe and positive,
$\combhp\combpar\not\anyexpequiv_\oga\combeb\combpar$, and
$\combhp\combpar\not\anyexpequiv_\oga\combeh\combpar$.
\end{proposition}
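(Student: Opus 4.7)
The plan is to construct a concrete counterexample. Let $\fot=\{q(a),\ \forall x.\,q(x)\limpl \exists y.\,p(x,y)\}$, which is clearly a \ghornclass theory, and let $\lp$ consist of the two safe, positive rules
\[
r(x)\leftarrow p(a,x), \qquad s\leftarrow r(x).
\]
Since $p$ also occurs in $\fot$, the atom $p(a,x)$ is a non-rule atom, so the variable $x$ in the first rule occurs only in a non-rule atom; in the second rule the variable $x$ appears in the rule atom $r(x)$. Hence $\lp$ is weakly DL-safe (but not DL-safe, because of the first rule). So the pair $(\fot,\lp)$ lies in $\ghornclass\times\nsafeclass$ with the required properties.

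Next I would show $s\in T^{HP}_\oga$ for the (essentially objective) combination $\combhp\combpar$. The embedding adds the two formulas $\forall x.\,p(a,x)\limpl r(x)$ and $\forall x.\,r(x)\limpl s$, which together with $\fot$ classically entail $s$ (from $q(a)$ one infers $\exists y.\,p(a,y)$, hence $\exists y.\,r(y)$, hence $s$). As there are no modal atoms, the unique stable expansion is the first-order closure, and $s$ belongs to it.

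For $\combeb\combpar$ and $\combeh\combpar$ I would then argue that $s$ lies in \emph{no} consistent stable expansion. Using Proposition~\ref{prop:atoms-expansions}, it suffices to work with candidate kernels $\Gamma\subseteq\flang_{ga}$. Take $\Gamma=\{q(a)\}\cup\{t=t\mid t\in\names_{\signature}\}$ (plus the trivially forced equalities). Since $\Gamma$ contains no ground atom of the form $p(a,t)$, no name substitution $\varsub$ associated with any variable assignment makes $p(a,x)\varsub\in\Gamma$; hence $\mop p(a,x)$ is never satisfied, and the first embedded rule is vacuous in $\combeb\combpar$ and $\combeh\combpar$. Consequently $\mop r(x)$ is never satisfied either, so the second rule is also vacuous, and the set of ground atomic consequences of $\combeb\combpar$ (resp.\ $\combeh\combpar$) relative to $\Gamma$ is just $\Gamma$ itself, confirming that $\Gamma$ is the kernel of a stable expansion. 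To rule out other stable expansions, I would observe that the existential $\exists y.\,p(a,y)$ forces no \emph{named} witness: for any candidate $\Gamma'\supsetneq\Gamma$ containing some $p(a,t)$ (and the atoms it triggers), one can build a model $\inter$ of $\combeb\combpar$ (or $\combeh\combpar$) with $p^I$ satisfying the existential via some other element, witnessing $p(a,t)\notin T'$ and collapsing $\Gamma'$ back to $\Gamma$. Thus $s\notin T^{EB}_\oga$ and $s\notin T^{EH}_\oga$, establishing the two required non-correspondences.

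The only non-routine step is the uniqueness argument for the stable expansion of $\combeb\combpar$ and $\combeh\combpar$; the remainder follows directly from the semantics of modal atoms under the any-name reading (a substitution $\varsub$ associated with $B$ is only defined on variables mapped to named individuals, and $\phi\varsub$ must be \emph{closed} to count), which is precisely why the existential witness contributed by $\fot$ cannot trigger the modal preconditions in $\embeb$ and $\embeh$, while it does trigger $\embhp$.
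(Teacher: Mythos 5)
Your construction is correct and rests on exactly the same mechanism as the paper's proof: an existential in the \ghornclass theory supplies an unnamed witness that satisfies the body of the $\embhp$-embedded rule (so the atomic conclusion follows classically), while the positive modal atoms in $\embeb$/$\embeh$ require a \emph{named} instance in the belief set and hence never fire. The paper's own counterexample is just a minimal instance of this idea, namely $\fot=\{\exists x\st p(x)\}$ and $\lp=\{q\leftarrow p(x)\}$, with $q$ concluded by \combhp\combpar{} but not by \combeb\combpar{} or \combeh\combpar.
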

\begin{proof}
  Consider $\fot =\{ \exists x\st p(x) \}$ and $\lp = \{ q \leftarrow p(x)
  \}$: \combhp\combpar allows to conclude $q$, whereas \combeb\combpar and
  \combeh\combpar do not.
\end{proof}

\subsection{RDF, RDF Schema, and Rules}
\label{sec:rdf} Recall that RDF is the basic data description
language of the Semantic Web, in which atomic statements have
the form
\triple{\textit{subject}}{\textit{predicate}}{\textit{object}}.
RDFS has further axioms about the meaning of certain triples;
for example, that the facts \triple{a}{\textrm{rdfs:subClassOf}}{b} and
\triple{b}{\textrm{rdfs:subClassOf}}{c} imply
\triple{a}{\textrm{rdfs:subClassOf}}{c}.

As shown by \citeN{Bruijn-Heymans-LogiFoun-07}, (finite) RDF graphs $S$ are
essentially \ghornclass theories of the form $\fot=\{\exists
\vec{x}. \bigwedge S\}\cup\Psi$, where the free variables in $S$ are among
$\vec{x}$ and $\Psi$ is a set of function-free Horn logic formulas, which capture
the RDFS semantics \cite{Hayes-Sema04}.

Combinations of RDF graphs with rules---e.g., the RIF RDF and OWL
compatibility recommendation~\cite{Bruijn-Comp:08} and Jena\footnote{\url{http://jena.sourceforge.net/}}---are
common, because of the flexibility to manipulate data that rules offer. Note
that in this context it is not possible to make a strict separation between
ontology and rule predicates, as the $triple$ predicate is ``defined'' by both
the ontology and the rules.

Current combinations of RDFS with rules are typically limited to positive Horn
rules---a notable exception being the work by
\citeN{DBLP:journals/jair/AnalytiADW08}. For example, the
RIF-RDFS~\cite{Bruijn-Comp:08} semantics essentially defines the combination
of an RDF graph \fot and a set of positive normal rules \lp as the first-order
logic theory $\combhp(\fot,\lp) = \fot\cup\embhp(\lp)$.%
\footnote{We avoid here to go into unnecessary and tedious detail
concerning the RIF-RDFS semantics specification, which does not
give further insight.}

This semantics can be straightforwardly extended to normal
rules \lp by interpreting the FO-AEL theory
$\fot\cup\embhp(\lp)$ using the any- or all-name semantics.
Such an extension keeps the spirit of the RIF-RDFS semantics by having
an \emph{open domain}, i.e., not only the constants,
but also the existentially quantified variables in the RDF graphs
matter (see
also Proposition \ref{pro:closed-domain}).

However, rules typically have a closed domain semantics.  One
may thus argue that combinations should respect this semantics
and enforce a closed domain in the interaction between the RDF
statements and the rules; examples of such combinations are
$\fot\cup\emb(\gr\lp)$ and $\fot\cup\embeh(\lp)$: the former enforces closing
of the domain through grounding, while the latter forces closing through the
use of the modal operator \mop in the rules (cf.\
Proposition~\ref{cor:closed-domain-ground} and Proposition \ref{pro:closed-domain}). Note
that, by Theorem~\ref{th:fo-ael-exp-correspondence}, the
embeddings $\embhp$, \embeb, and \embeh may be used
interchangeably in the combination $\fot\cup\emb(\gr\lp)$ (as
long as we are interested only in ground atomic consequences),
as \fot is in \ghornclass and $\gr\lp$ is in \ngroundclass. The
following example illustrates the difference between
combinations with open and with closed domain semantics,
respectively.

\begin{example}
\label{ex:rdf-closed-domain} Consider the RDF graph
$$\fot=\{\exists  x.  \triple{$x$}{director}{TheGodfather}\}\cup\Psi$$
encoding the fact that there is a director of the film ``The
Godfather''. Consider also the program $\lp =
\{\mathit{hasDirector}(x)\leftarrow \triple{$y$}{director}{$x$}\}$
encoding that whenever someone directs a film, then this film
has a director. We have
\begin{align*}
\embhp(\lp) = & \{\forall
x,y.\triple{$y$}{director}{$x$}\limpl \mathit{hasDirector}(x) \}\text{ and}\\
\embeh(\lp) = & \{\forall x,y.\triple{$y$}{director}{$x$}\land
\mop\triple{$y$}{director}{$x$}\limpl \\
 & \phantom{\{\forall x,y. } \mathit{hasDirector}(x)\land
\mop \mathit{hasDirector}(x) \}\text{.}
\end{align*}
Clearly, $\mathit{hasDirector}(\text{TheGodFather})$ is a consequence of
$\fot\cup\embhp(\lp)$, but not of $\fot\cup\embeh(\lp)$, as
there is no constant $c$ such that
\triple{c}{director}{TheGodFather} is included in the single
stable expansion of $\fot\cup\embeh(\lp)$.

Similarly, $\mathit{hasDirector}(\text{TheGodFather})$ is not a
consequence of $\fot\cup\emb(\gr\lp)$, since there is no
constant representing the director.
\end{example}
%
%
Proposition~\ref{pro:grounding-invariance-2-4}
shows that grounding $\lp$ or not, prior to combination with
$\fot$, does not matter if $\fot$ is an RDF graph without blank
nodes (as then $\fot$ is in \hornclass) and $\lp$ is positive---in particular,
$\fot\cup\emb(\gr\lp)\anyexpequiv_\oga
\fot\cup\emb(\lp)$. Similarly for $\embeb$ and
$\embeh$, if $\fot$ is an arbitrary RDF graph, and $\lp$ is
safe and positive. If $\lp$ is moreover {DL-safe}
(see Section~\ref{sec:grounding-closed}), this invariance under
grounding for arbitrary RDF graphs and safe positive programs
also extends to the \embhp embedding, and thus \embeb, \embeh, and
\embhp are all interchangeable. We furthermore have that the open and
closed domain semantics coincide in the cases mentioned in this
paragraph.

If $S$ is an RDF graph, we define
$\comb(S,\lp)=\fot\cup\emb(\lp)$, where $\fot=\{\exists\vec{x}.\bigwedge
S\}\cup\Psi$, as before.  
An RDF graph $S$ is \emph{ground} if it does not contain free variables. Such
a graph is equivalent to the Horn theory $\fot=\{\bigwedge S\}\cup\Psi$.
From Propositions~\ref{pro:grounding-invariance-1} and
\ref{pro:grounding-invariance-2-4} we then obtain:

\begin{corollary}
  \label{cor:rdf-grounding-invariance}
  The combinations \combeb and \combeh fulfill grounding invariance for RDF
  graphs and safe positive normal programs, and \combhp fulfills it for ground
  RDF graphs and safe positive normal programs, as well as for RDF graphs and
  DL-safe positive normal programs. Moreover, for RDF graphs \fot and DL-safe
  positive normal programs \lp, it holds that $\combhp\combpar
  \anyexpequiv_\oga\combeb\combpar \anyexpequiv_\oga\combeh\combpar$.
\end{corollary}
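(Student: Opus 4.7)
The plan is to derive the corollary directly from Propositions~\ref{pro:grounding-invariance-1} and \ref{pro:grounding-invariance-2-4} by observing that RDF graphs fit into the relevant syntactic classes \ghornclass and \hornclass introduced in Section~\ref{sec:prop-comb}. The only real content is to verify this syntactic classification; once that is done, the results cited from Proposition~\ref{pro:grounding-invariance-2-4} apply essentially mechanically, and the ``moreover'' clause is an immediate consequence of Proposition~\ref{pro:grounding-invariance-1}.

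First I would recall (following \citeN{Bruijn-Heymans-LogiFoun-07}, cited in Section~\ref{sec:rdf}) that an arbitrary RDF graph $S$ translates to an FO theory $\fot = \{\exists \vec{x}.\bigwedge S\}\cup\Psi$, where $\bigwedge S$ is a conjunction of atoms (triples) whose free variables (blank nodes) are collected in $\vec{x}$, and $\Psi$ is a fixed set of function-free Horn RDFS axioms. After prenexing and skolemizing (or just inspecting the form directly), $\{\exists\vec{x}.\bigwedge S\}$ can be rewritten as a conjunction of generalized Horn formulas $(\forall)\; \top \limpl \exists\vec{y}.h$, so $\fot \in \ghornclass$. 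If $S$ is ground (contains no blank nodes) then no existential quantifiers arise and $\fot \in \hornclass$.

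Given this classification, the three claims of the corollary follow by matching each case with the corresponding item in Proposition~\ref{pro:grounding-invariance-2-4}: grounding invariance of $\combeb$ and $\combeh$ for arbitrary RDF graphs and safe positive normal programs is item~(\ref{itm:grounding-1}) applied to $\fot \in \ghornclass$; grounding invariance of $\combhp$ for ground RDF graphs and safe positive normal programs is item~(\ref{itm:grounding-2}) applied to $\fot \in \hornclass$; grounding invariance of $\combhp$ for arbitrary RDF graphs and DL-safe positive normal programs is item~(\ref{itm:grounding-3}) applied to $\fot \in \ghornclass$. For the ``moreover'' clause, since $\fot \in \ghornclass$ and $\lp$ is DL-safe and positive, Proposition~\ref{pro:grounding-invariance-1} directly gives $\combhp\combpar \anyexpequiv_\oga \combeb\combpar \anyexpequiv_\oga \combeh\combpar$.

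There is no real obstacle here; the only thing that requires care is making sure the RDFS axioms $\Psi$ do not break the \ghornclass (respectively \hornclass) membership, which follows because they are explicitly chosen function-free Horn formulas in \citeN{Bruijn-Heymans-LogiFoun-07}, and thus lie a fortiori in both classes. Everything else is a one-line invocation of prior propositions.
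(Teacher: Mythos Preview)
Your proposal is correct and follows essentially the same approach as the paper: the paper simply states that the corollary is obtained from Propositions~\ref{pro:grounding-invariance-1} and~\ref{pro:grounding-invariance-2-4}, after noting just before the corollary that RDF graphs are \ghornclass theories and ground RDF graphs are \hornclass theories. Your write-up makes the matching of cases to items of Proposition~\ref{pro:grounding-invariance-2-4} explicit, but this is exactly the intended derivation.
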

A notable further consequence of
Proposition~\ref{pro:grounding-invariance-2-4} is the following
observation concerning the standard RIF-RDFS semantics
\cite{Bruijn-Comp:08}.
\begin{corollary}
  \label{cor:rif-rdf-grounding-invariance}
  The RIF-RDFS combination semantics fulfills grounding invariance for DL-safe
  positive normal programs.
\end{corollary}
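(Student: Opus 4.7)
The plan is to observe that this corollary is essentially a direct specialization of Proposition~\ref{pro:grounding-invariance-2-4}(\ref{itm:grounding-3}) to the combination semantics chosen by RIF-RDFS. So the proof reduces to verifying that (i) the RIF-RDFS combination can be identified with $\combhp$, and (ii) an RDF graph sits inside the class \ghornclass, after which the grounding-invariance property is inherited wholesale.

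First I would recall from Section~\ref{sec:rdf} that the RIF-RDFS semantics for an RDF graph $S$ and a set of positive normal rules \lp is, modulo presentational detail, the FO-AEL theory $\combhp(\fot,\lp)=\fot\cup\embhp(\lp)$, where $\fot=\{\exists\vec{x}.\bigwedge S\}\cup\Psi$ is the first-order encoding of $S$ together with the RDFS axiomatization $\Psi$. Next I would invoke the observation (due to de~Bruijn and Heymans, and already used above in the discussion preceding Corollary~\ref{cor:rdf-grounding-invariance}) that $\fot$ is a \ghornclass theory: the graph part $\{\exists\vec{x}.\bigwedge S\}$ is a single conjunction of atoms under an existential prefix, which fits the \ghornclass shape $(\forall)\, b_1\land\cdots\land b_n\limpl \exists\vec{y}.h$ with empty body (after absorbing conjunctive heads via auxiliary predicates, as described in the introduction to \ghornclass), and $\Psi$ consists of function-free Horn rules, which are in \hornclass\,$\subseteq$\,\ghornclass.

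Given (i) and (ii), the hypotheses of Proposition~\ref{pro:grounding-invariance-2-4}(\ref{itm:grounding-3}) are met for every pair $(\fot,\lp)$ with $\fot$ an RDF graph (viewed as above) and \lp a DL-safe positive normal program. The proposition then yields $\combhp(\fot,\lp)\anyexpequiv_\oga\combhp(\fot,\gr\lp)$, which by Definition~\ref{def:grounding-invariance} is exactly grounding invariance of the RIF-RDFS combination semantics for the class of DL-safe positive normal programs. There is no real obstacle here beyond bookkeeping; the substantive work is packed into Proposition~\ref{pro:grounding-invariance-2-4}(\ref{itm:grounding-3}), which in turn rests on Proposition~\ref{prop:positive-horn} and the skolemization-style argument alluded to after Proposition~\ref{prop:hp-eh-ground}. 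The only care point is to state the identification of the RIF-RDFS combination with $\combhp$ at the level of granularity (objective ground atoms) at which the correspondence is claimed, which is precisely what $\anyexpequiv_\oga$ captures.
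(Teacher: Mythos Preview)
Your proposal is correct and follows exactly the approach the paper takes: the corollary is presented in the paper as an immediate consequence of Proposition~\ref{pro:grounding-invariance-2-4}, and you have simply spelled out the two bookkeeping observations (RIF-RDFS is $\combhp$, RDF graphs lie in \ghornclass) that make item~(\ref{itm:grounding-3}) applicable. There is nothing to add or correct.
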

For a possible use case scenario, suppose the ontology \fot is a ground RDF
graph. Now, suppose the user wants to add a set of DL-safe positive
normal rules and follow the standard RIF-RDFS combination semantics
\cite{Bruijn-Comp:08}. If the user is interested only in ground atomic
consequences, Corollary~\ref{cor:rif-rdf-grounding-invariance} tells
us that this semantics is invariant under grounding and thus
essentially closed-domain, by Proposition~\ref{cor:closed-domain-ground}. Even
when extending the graph with variables, the combination remains
invariant under grounding and thus closed-domain. However, grounding
invariance may be lost when extending the program with negation, by
Proposition~\ref{pro:grounding-non-invariance}.

\subsection{OWL DL and Rules}
\label{sec:owl}

The Web Ontology Language OWL DL is based on Description Logics (DLs);
Version~1 \cite{Patel-Schneider+HayesETAL-OntoLangSemaAbst03} is based on the
DL \shoin and Version~2 \cite{Motik+Patel-SchneiderETAL-OntoLangSpecFunc:09}
on the DL \sroiq.  Both DLs can be viewed as subsets of first-order logic
\cite{Sattler+CalvaneseETAL-RelawithOtheForm:03}. An influential proposal for
combining OWL DL ontologies with positive normal rules is the Semantic Web
Rules Language (SWRL) \cite{Horrocks+Patel-SchneiderETAL-SWRL04}, which gives
a standard first-order semantics to their union.

A SWRL theory \fot consists of a set of DL axioms and a set of Horn-like
formulas. We obtain the following correspondence with \combhp{} combinations.

\begin{proposition}
  \label{pro:swrl} Let \fot be a SWRL theory. Then, there are an FO theory
  $\fot'$ and safe positive normal logic program \lp such that
  $\fot\models\alpha$ iff $\alpha$ is a consequence of $\combhp(\fot,\lp)$,
  for every objective ground atom $\alpha$.
\end{proposition}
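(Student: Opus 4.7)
The plan is to exhibit $\fot'$ and $\lp$ by decomposing the SWRL theory $\fot$ into its ontological and its rule-based component. By definition, $\fot = \fot_{DL} \cup \fot_R$, where $\fot_{DL}$ is a set of DL axioms (viewed as a first-order theory via the standard translation) and $\fot_R$ is a set of SWRL rules of the shape $B_1 \wedge \cdots \wedge B_n \limpl H$ satisfying the SWRL variable-occurrence restriction that every variable of the rule occurs in some $B_i$. Put $\fot' := \fot_{DL}$ and let $\lp$ consist of the normal logic programming rules $H \leftarrow B_1, \ldots, B_n$ obtained from the rules in $\fot_R$. Then $\lp$ is a safe positive normal program, since the SWRL variable-occurrence restriction coincides with our notion of safety.

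Next I observe that for a positive rule $r$, the embedding $\embhp(r)$ is, after eliminating the empty conjunction $\bigwedge_j \lnot \mop c_j$, exactly the classical first-order reading $(\forall)\,\bigwedge_i b_i \limpl h_1$ of $r$. Therefore
\[
\combhp(\fot', \lp) \;=\; \fot \cup \UNA_{\signature_\lp},
\]
and every occurrence of $\mop$ in this theory is inside an objective atom of the shape $t_1 = t_2$. Consequently Proposition~\ref{prop:atoms-expansions} applies and yields a unique consistent stable expansion $T$ whose objective kernel is the classical first-order closure of $\fot$ enriched by the inequalities $t_1 \neq t_2$ for every pair of distinct names with $\fot \not\models t_1 = t_2$. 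Invoking Proposition~\ref{prop:positive-horn}, whose proof shows that the combinations with positive safe normal programs behave classically on ground atomic formulas for Horn-like first-order parts, then gives the backbone of the correspondence: modulo the default-UNA axioms, autoepistemic consequence on ground atoms of $\combhp(\fot',\lp)$ coincides with classical consequence of $\fot$.

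The main obstacle is to handle the gap introduced by default UNA: since SWRL does not commit to unique names, the inequalities supplied by $\UNA_{\signature_\lp}$ could in principle strengthen the ground atomic consequences beyond those of $\fot$. To close this gap I would, in the concrete construction, take $\fot'$ to be $\fot_{DL}$ together with the explicit inequalities $\{t_1 \neq t_2 \mid t_1, t_2 \in \names_{\signature_\fot} \text{ distinct and } \fot \not\models t_1 = t_2\}$; this augmentation leaves the ground atomic consequences over the original predicates unchanged (any SWRL model of $\fot$ can be factored through a canonical named model in which distinct names denote distinct elements, without altering the truth of non-equality ground atoms), and it renders the UNA axioms in $\embhp(\lp)$ either derivable from $\fot'$ or vacuously satisfied. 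Combining this observation with the faithfulness of $\embhp$ on the rule part (Theorem~\ref{the:embedd-stable}) then yields $\fot \models \alpha$ iff $\alpha$ is an autoepistemic consequence of $\combhp(\fot', \lp)$ for every objective ground atom $\alpha$, completing the proof.
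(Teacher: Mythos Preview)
The paper states this proposition without an explicit proof, treating it as immediate from the observation that $\embhp$ maps positive rules to objective Horn formulas. Your decomposition and the identification $\combhp(\fot_{DL},\lp) = \fot \cup \UNA_{\signature_\lp}$ correctly capture this core.

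Two steps in your argument do not go through. First, invoking Proposition~\ref{prop:positive-horn} and Theorem~\ref{the:embedd-stable} is misplaced: the former requires the first-order part to lie in $\hornclass$ or $\ghornclass$, which an arbitrary OWL~DL ontology does not, and the latter concerns $\embhp(\lp)$ in isolation rather than combinations. Neither is needed anyway: once the combination is purely objective, it has a unique stable expansion whose kernel is its classical closure, so autoepistemic and classical consequence on ground atoms coincide directly.

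Second, and more seriously, your fix for the $\UNA$ gap is incorrect. The claim that one can add $t_1 \neq t_2$ for distinct names with $\fot\not\models t_1{=}t_2$ without affecting non-equality ground atomic consequences fails in OWL~DL because of nominals. Take $\fot_{DL}$ containing $D \sqsubseteq \{a\} \sqcup \{b\}$, $D(c)$, and $P(b)$, with the rule part mentioning $a$ and $c$. Then $\fot \not\models a{=}c$ and $\fot \not\models P(c)$, yet $\fot \cup \{a \neq c\} \models c{=}b$ and hence $P(c)$. So enforcing default $\UNA$---whether via the modal axioms in $\embhp(\lp)$ or via your explicit inequalities in $\fot'$---strictly enlarges the set of ground atomic consequences. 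A correct route that stays close to your construction is to ensure $\signature_\lp$ has at most one name, so that $\UNA_{\signature_\lp}$ is empty: replace each constant $a$ occurring in a rule by a fresh variable $x_a$ guarded by a new body atom $\mathit{is}_a(x_a)$, and add the fact $\mathit{is}_a(a)$ to $\fot'$. The resulting $\lp$ is still safe and positive, $\combhp(\fot',\lp)$ is purely objective, and it is classically equivalent to $\fot$ over the original signature.
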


An approach similar to SWRL was adopted
by the RIF working group for positive normal RIF rules
\cite{Bruijn-Comp:08}. If \fot is the FOL-equivalent of an OWL DL
ontology and \lp is a set of positive normal rules, the semantics of
RIF-OWL DL combinations is given by the FO theory $\combhp(\fot,\lp) =
\fot\cup\embhp(\lp)$. By Proposition~\ref{pro:swrl}, this is equivalent to
SWRL.

Regarding open versus closed domains, similar considerations as
in Section~\ref{sec:rdf} apply to combinations of OWL DL with
rules: $\fot\cup\embhp(\lp)$ yields an open domain, while
$\fot\cup\emb(\gr\lp)$ and $\fot\cup\embeh(\lp)$ yield a closed
domain on the rule side (see
Example~\ref{ex:rdf-closed-domain}). However,
interchangeability and invariance of the embeddings $\embhp$,
$\embeb$, and $\embeh$  under grounding may not be guaranteed, as
$\fot$ need not be in $\ghornclass$.

\begin{example}
\label{ex:owl-ground-emb} Consider
$\fot = \{A(a),
\forall x.  A(x) \limpl B(x) \lor C(x) \}$, which captures a simple
OWL~DL ontology, and
$\lp = \{q \leftarrow B(x); q \leftarrow C(x) \}$.  Now,
\[
\begin{array}{r@{~}l}
\embhp(\gr\lp)=&\{B(a) \limpl q, C(a) \limpl q\}\text{ and} \\[1ex]
\embeh(\gr\lp)=&\{B(a)\land \mop B(a) \limpl q\land\mop q,
C(a)\land \mop C(a) \limpl q\land\mop q\}.
\end{array}
\]
We have that $q$ is a consequence of $\fot\cup\embhp(\gr\lp)$,
but not of $\fot\cup\embeh(\gr\lp)$, since neither $B(a)$ nor
$C(a)$ is included in the single stable expansion of
$\fot\cup\embeh(\gr\lp)$.
\end{example}
There are important fragments of OWL DL that are essentially included in
\ghornclass, such as the OWL~2 profiles EL, QL and RL
\cite{Motik+GrauETAL-OntoLang:08}, and the fragment corresponding to
Horn-\shiq \cite{Hustadt05datacomplexity}.  As was the case with RDF, when
considering the combination $\fot \cup \emb(\gr\lp)$ (see
Propositions~\ref{pro:grounding-invariance-1} and
\ref{pro:grounding-invariance-2-4}), the embeddings \embhp, \embeb, and \embeh
may be used interchangeably, and, for safe positive programs, $\embeb$ and
$\embeh$ are invariant under grounding.  Furthermore, OWL~2 RL is essentially
in \hornclass. Therefore, when considering combinations of OWL~2 RL ontologies
with positive normal programs, \combhp\combpar, \combeb\combpar, and
\combeh\combpar may be used interchangeably, by
Proposition~\ref{prop:positive-horn}, and the combinations are invariant under
grounding of the rules, by
Proposition~\ref{pro:grounding-invariance-2-4}. Moreover, they are
closed-domain.

 As shown by
\citeN{Motik+SattlerETAL-QuerAnswOWL-with:05}, reasoning with OWL DL
plus DL-safe rules (i.e., SWRL having DL-safe rules) is
decidable. 
From Propositions~\ref{pro:grounding-invariance-1} and
\ref{pro:grounding-invariance-2-4}, we obtain the following corollary.
Here, OWL \ghornclass theories are theories of OWL~2 EL, OWL~2 QL, OWL~2 RL, or
Horn-\shiq.

\begin{corollary}
  \label{cor:owl-grounding-invariance} The combinations \combeb and \combeh{}
  fulfill grounding invariance for OWL \ghornclass theories and safe positive
  normal programs, and \combhp fulfills it for OWL~2 RL theories and safe
  positive normal programs, as well as for OWL \ghornclass theories and
  DL-safe positive normal programs. Moreover, if \fot is an OWL \ghornclass
  theory and \lp a DL-safe positive normal program, then $\combhp\combpar
  \anyexpequiv_\oga\combeb\combpar \anyexpequiv_\oga\combeh\combpar$.
\end{corollary}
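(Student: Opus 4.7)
The plan is to derive this corollary as a direct specialization of Propositions~\ref{pro:grounding-invariance-1} and \ref{pro:grounding-invariance-2-4} to the specific DL fragments mentioned in the statement. The only nontrivial work is the syntactic observation that the OWL fragments appearing in the hypothesis fall into the abstract classes \ghornclass and \hornclass that those propositions require.

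First, I would invoke the classification facts already noted in the paper (and referenced to \citeN{Bruijn-Heymans-LogiFoun-07}, \citeN{Hustadt05datacomplexity}, and \citeN{Motik+GrauETAL-OntoLang:08}): each OWL~\ghornclass theory (i.e., an OWL~2 EL, OWL~2 QL, OWL~2 RL, or Horn-\shiq ontology) has, under the standard translation into first-order logic, a representation as a theory in the class \ghornclass as defined in Section~\ref{sec:prop-comb}. Moreover, because OWL~2 RL has no existentials on the right-hand side of its implications, OWL~2 RL theories belong even to the smaller class \hornclass.

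Second, I would apply Proposition~\ref{pro:grounding-invariance-2-4} pointwise to each claim. Item~(\ref{itm:grounding-1}) of that proposition gives grounding invariance of \combeb and \combeh for \ghornclass theories and safe positive normal programs, which by the first step yields the claim for OWL~\ghornclass theories. Item~(\ref{itm:grounding-2}) gives grounding invariance of \combhp for \hornclass theories and safe positive normal programs, which covers OWL~2 RL. Item~(\ref{itm:grounding-3}) gives grounding invariance of \combhp for \ghornclass theories and DL-safe positive normal programs, which covers OWL~\ghornclass theories with DL-safe positive rules. Finally, for the $\anyexpequiv_\oga$ equivalences between \combhp\combpar, \combeb\combpar, and \combeh\combpar, I would apply Proposition~\ref{pro:grounding-invariance-1} directly, again using that an OWL~\ghornclass theory is in \ghornclass.

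There is no real obstacle here beyond the bookkeeping of quoting the correct propositions against the correct DL fragment; the main point that needs a sentence of justification is the inclusion of the OWL fragments in \ghornclass (resp.\ \hornclass for OWL~2 RL), which is established by the cited translations. Everything else is transfer from the abstract syntactic classes to their named DL instantiations.
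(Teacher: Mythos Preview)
Your proposal is correct and matches the paper's approach: the paper derives the corollary directly from Propositions~\ref{pro:grounding-invariance-1} and~\ref{pro:grounding-invariance-2-4}, using the observation that the OWL~\ghornclass fragments translate into \ghornclass (and OWL~2~RL into \hornclass). There is nothing to add.
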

Consider a scenario in which the ontology \fot is in both OWL~2 RL and
OWL~2 EL and one wants to add positive rules that are safe (but
not DL-safe), using the standard RIF-OWL combination semantics
\cite{Bruijn-Comp:08}. Corollary~\ref{cor:owl-grounding-invariance}
tells us that one may employ any of the considered
combinations \comb and may ground the rules, as long as one is only
interested in atomic formulas. However, if we were to extend the
ontology \fot towards full OWL~2 EL by introducing existentially
quantified variables, we may no longer use the \combeb or \combeh
combinations, if we want to remain faithful to the RIF-OWL
combination semantics, as illustrated by Example
\ref{ex:rdf-closed-domain}. In addition, we may not ground the
rules prior to reasoning. Therefore, if such a future extension
towards OWL~2 EL is likely, one should choose the \combhp
embedding rather than \combeb or \combeh, and should not rely on
grounding for reasoning.







\section{Discussion}
\label{sec:discussion}

In this section we discuss implications of the results in the
previous sections.  We
first discuss the implications of our results on the relationships between the
embeddings, and make a number of observations about those relationships.
We then discuss how the results in this paper can be used in
the context of combining classical theories (ontologies) with logic programs
(rules).  Specifically, how the embeddings studied in this paper can be used as
building blocks for such combinations.  Finally, we discuss our choice of
FO-AEL as the underlying formalism, and compare the semantics for
quantification (quantifying-in) with other approaches to quantifying-in in
autoepistemic logic \cite{levesque90:_all_i,levesque00:_logic_of_knowl_bases,kaminski02:_revis_quant_autoep_logic}.

\subsection{Relationships between the Embeddings}

Using the results obtained in Sections \ref{sec:prop-embedd} and
\ref{sec:prop-comb}, we can make a number of
observations about the embeddings:

\begin{longenum}
\item The differences between the embeddings by themselves
do not depend on the use of negation in the
program. Generally speaking, the differences originate from the
positive use of the modal operator in the antecedent and the
consequent, and the use of the $\PIA$ axioms. However, in combinations with FO theories, the interaction between names in \fot for which there are no \UNA axioms and negation in the rules gives rise to different behavior of the embeddings (see Proposition \ref{prop:counterexamples}(\ref{c5})).

\item The stable expansions of embeddings with and without the
$\PIA$s generally tend to differ. However, we can note that the former
are generally stronger in terms
of autoepistemic consequences (cf.~Figure~\ref{fig:rel-cons} and Example \ref{ex:pia}).

\item The embeddings $\embhp$ and $\embhpdis$ are generally
the strongest in terms of consequences (see Figure
\ref{fig:rel-cons}), when comparing to other embeddings without and with
$\PIA$s, respectively. They allow to derive the contrapositive of
rules (cf.\ Example~\ref{ex:contrapositive}) and the bodies of rules
are applicable to unnamed individuals, whereas the antecedents of
the axioms in the other embeddings are only applicable to named
individuals, because of the positive modal atoms in the bodies.

\item For unsafe programs, the embeddings $\embeh$ and $\embehdis$
are generally not comparable with the others; embeddings of unsafe
rules may result in axioms of the form $\forall x.\mop p(x)$
(cf.~Example~\ref{ex:general}), which result in all individuals being
named.

\item If names in \fot that lack \UNA axioms
 and rules in \lp (e.g., \fot is propositional or \lp is ground) do
not interact, then \embeb and \embeh
 are in most cases interchangeable.
\end{longenum}
Special care needs to be taken if one selects an embedding that
includes the $\PIA$ axioms (i.e., $\embhpdis$ and $\embebdis$). These axioms of the form
$\alpha \limpl \mop \alpha$ ensure that $\alpha$ or $\neg \alpha$ is included in every stable expansion, for
every ground atom of $\signature_\lp$. Note that the $\PIA$ axioms have
no effect when considering individuals that are not named by
ground terms in $\signature_\lp$.

The $\UNA$ axioms in embeddings, which serve to make individuals
different by default, may interact with the FO theory in a combination. For example,
consider $\lp = \{p(a);\; p(b)\}$ and $\fot = \{a \,{\neq}\, b \limpl
r,\; a\,{\neq}\, c \limpl s\}$.  Then, every stable expansion of
$\comb\combpar$, for any embedding $\emb$ we considered, contains $r$
as $a\neq b$ is concluded by default, but not $s$ (as $c$ is
unknown in $P$). To shortcut such (possibly undesired) inequality transfers from $\lp$ to $\fot$,
the unique names or even the standard names assumption may be
adopted {\em a priori}.  Recall that the results on the embeddings in
Section~\ref{sec:embedd-logic-progr} were obtained by stepping through
the standard names assumption, and thus they also hold
under the unique names or standard names assumption, as shown by \citeN{debr-etal-08}. On the one hand, this should
extend to the positive results about correspondences in
Sections~\ref{sec:prop-embedd} and \ref{sec:prop-comb}, whose proofs
rely on named interpretations and no equalities between individuals are
enforced. On the other hand, some counterexamples for correspondences
fail, including those for the first item in Lemma~\ref{lem:not-rel-cons} and
Proposition~\ref{prop:counterexamples}(\ref{c5}), and thus further
correspondences may hold. An in-depth study of the effect of
unique names and standard names assumptions on the correspondences and
differences between the embeddings is an interesting subject
for further work.

\subsection{Different Embeddings and Combinations}

Recall the general setting for combining a first-order theory \fot and a logic
program \lp in a unifying formalism (FO-AEL) that we sketched in the introduction.
The combination operator $ \combop $ takes as arguments the theory \fot and the
program \lp, and returns an FO-AEL theory $ \combop(\fot, \lp)$.  The operator
provides two embedding functions: $ \sigma $ and $ \tau $ map first-order theories,
respectively logic programs, to FO-AEL theories.  We also mentioned that in the simplest
case the combination is the union of the two individual embeddings: $
\combop(\fot, \lp)=\sigma(\fot)\cup \tau(\lp)$.

In Section \ref{sec:embedd-logic-progr} we investigated several candidates for
the embedding function for logic programs, $ \tau $. All these embedding
functions are faithful, in the sense that the stable models of the program \lp
correspond to the sets of objective ground atomic formulas in the stable
expansions of the embedding $ \tau(\lp) $.  In Section \ref{sec:prop-embedd} we
investigated the relationships between the stable expansions of these
embeddings when considering more general formulas.  It turned out that there
are already significant differences between the expansions when considering
non-ground or non-atomic formulas.

Now, in Section \ref{sec:prop-comb}, we investigated the relationships between
the expansions when considering combinations of the embeddings with first-order
theories.  We have found that, under certain circumstances---namely, when the
first-order theory and program are of particular shapes and we are interested
in a particular kind of formulas (e.g., ground formulas)---certain embeddings
can be used interchangeably (cf.\ Table~\ref{tab:summary-exp}).
For example, if the program is normal and ground ($
\lp \in \ngroundclass$), the theory is generalized Horn ($ \fot \in \ghornclass $),
and we are
interested in objective formulas, we can use the embeddings \embeb and \embeh
interchangeably: $ \combeb \expequiv\combeh $ for $ \lp \in \ngroundclass $
and $ \fot \in \arbclassth$, according to Table \ref{tab:summary-exp}, as $ \hornclass \subseteq
\arbclassth$ and the set of objective formulas is a subset of the set of formulas.

Our results are not limited to combinations of the form $ \combop(\fot, \lp)=\fot
\cup \embgen(\lp)$, where \embgen is one of the embeddings investigated in this
paper.  One could imagine adding axioms to \fot or rules to \lp to achieve the
desired interoperation between the two components, or even changing the axioms
or rules (e.g., by grounding), obtaining a first-order
theory $ \fot' $ and program $ \lp' $.
In this more general setting, the combination is defined as
$$ \combop(\fot, \lp)=\fot' \cup \embgen(\lp'),$$
where $ \fot' $ and $ \lp' $ are obtained from \fot and \lp by
adding and/or replacing axioms and rules. The results of
Section~\ref{sec:prop-comb} can be applied, provided that
$\fot'$ and $ \lp' $ are in the respective classes of theories
and programs, independent of the shapes of \fot and \lp (see
also Section~\ref{sec:SW-application}).

As discussed in Section~\ref{sec:embedd-logic-progr}, embeddings that
include the \UNA axioms are not modular in general, but only signature-modular.
This can be remedied by instead using the single axiom
$$ (\forall) \; \mop x=x \land \mop y=y \land \lnot \mop
x=y \limpl x \neq y$$
which has the same effect for embeddings. However, using
this axiom would entail default uniqueness on all names in a
combination, not only those from the signature of the program
(if desired, such default uniqueness can be easily accomplished by
just mentioning respective terms in the logic program). As
a consequence, also the combinations behave differently.

\subsection{Quantifying-in in First-Order Autoepistemic Logic}

We consider here FO-AEL, with the semantics
for quantifying-in as defined by \citeN{konolige91:_quant}, as an underlying
formalism for combinations of first-order theories and logic programs.
However, further semantics for quantifying-in have been proposed
in the literature.

\citeN{levesque90:_all_i} defined the logic of only knowing
(see also the subsequent work by
\citeN{levesque00:_logic_of_knowl_bases}), which is essentially
a superset of FO-AEL. Levesque's semantics for quantifying-in
is slightly different from the one of \citeN{konolige91:_quant}
that we used in this paper. He adopted a standard names
assumption that amounts to a special case of the notion in
Section \ref{sec:preliminaries-fol}; there is a countably
infinite number of constant symbols in the language, but there
are no (other) function symbols. Likewise, the variant of
FO-AEL by \citeN{kaminski02:_revis_quant_autoep_logic} also
employs a standard names assumption, although under a somewhat
different guise: the domain of every interpretation is an
extended Herbrand interpretation, i.e., it is a superset of the
set of constant symbols in the theory; function symbols are not
considered. The semantics of Konolige does not impose such
restrictions, e.g., the domain may be infinite, while the
number of constants is finite, and function symbols are
allowed.

It is well known that reasoning in standard first-order logic
can be reduced to reasoning in first-order logic with the standard names
assumption, as long as there are sufficiently many constant symbols
available~\cite{Fitting-FirsOrdeLogiAuto:96}.

Different from \citeN{levesque90:_all_i},
\citeN{kaminski02:_revis_quant_autoep_logic} did not consider
equality in the language. However, equality in first-order
logic with standard names behaves quite differently from
equality in standard first-order logic.  In the latter case,
two constant symbols may be interpreted as the same element in
the domain, whereas in the former case, all constant symbols
are interpreted distinctly, e.g., $ a=b $ cannot be satisfied
if $ a $ and $ b $ are distinct constant
symbols.\footnote{\citeN{levesque00:_logic_of_knowl_bases}
extend the logic of only knowing by allowing the use of
constants and function symbols different from standard names;
several ground terms may be associated with one \emph{standard
name}, and for any constant symbols $ a $ and $ b $ with this
property, $ a=b $ is satisfied.} It is, however, possible to
reduce reasoning in standard first-order logic with equality to
reasoning in first-order logic with standard names using a
special congruence predicate~\cite[Theorem
9.3.9]{Fitting-FirsOrdeLogiAuto:96}.
\citeN{Motik+Rosati-FaitInteDescLogi:07} use such a  predicate
in their variant of the logic MKNF
\cite{Lifschitz-NonmDataEpisQuer:91,Lifschitz-MiniBeliNegaFail:94},
as do \citeN{debr-etal-08} in a variant of FO-AEL with standard
names; see Section~\ref{sec:related-combinations} for further
discussion about this work.

\section{Related Work}
\label{sec:related-work}

We review here two areas of related work: extensions of logic programming
and description logic semantics with open domains and nonmonotonicity,
respectively, and approaches to combining rules and ontologies.

\subsection{Extensions of LP and DL Semantics}
\label{sec:lp-dl-extensions}
We have studied the combination of logic programs and
ontologies using embeddings in a unifying formalism
(FO-AEL). One could imagine, in contrast, extensions of the semantics
of logic programs or ontologies to incorporate (parts of) the other
formalism. One such extension of logic programming semantics is that
of open domains
\cite{gelfond93:_reason_open_domain}.
Such
extended semantics can be used to accommodate incomplete knowledge,
an important aspect of ontology languages.

\citeN{van97} define \emph{open logic programs}, which are
combinations of sets of rules and first-order logic formulas;
the set of predicate symbols is partitioned into a set of
\emph{open} and a set of \emph{closed} predicates. The
semantics of the program is the first-order theory consisting
of Clark's completion of the closed predicates and the
first-order formulas in the open program.  They then discuss
how description logics can be embedded in such open logic
programs and they discuss the correspondence between abduction
in open programs and reasoning in description logics.

\citeN{Heymans+NieuwenborghETAL-ConcLogiProg:06}  describe an
extension of the stable model semantics with open domains,
called \emph{open answer set programming} (\emph{OASP}).  They show
how the expressive DL \shiq{} can be embedded in this language
and \citeN{Heymans-BruijnETAL-GuarHybrKnowBase-08} show how
OASP can be used for combinations of rules and ontologies,
following the \dllog semantics \cite{rosati06} (see Section
\ref{sec:related-combinations}).

Recently, \citeN{Cali+GottlobETAL-:09} presented Datalog$^\pm$
as a language that, similarly as OASP, can be used to enhance
ontologies with rules. In essence, Datalog$^\pm$ amounts to a
skolemized form of \ghornclass in a relational setting, where
for decidability rules must satisfy a guardedness condition. As
reported by \citeANP{Cali+GottlobETAL-:09}, various DLs can be
encoded into Datalog$^\pm$, and thus, like in OASP, combination
of rules and ontologies can be achieved by adding rules to this
encoding. Furthermore, \citeANP{Cali+GottlobETAL-:09} present a
semantics for Datalog$^\pm$ programs with stratified negation
that generalizes the usual notion of stratified programs, which
thus enables combinations with nonmonotonic rules. An embedding
of (stratified) Datalog$^\pm$ into FO-AEL via the embedding
$\embhp$ seems easily possible, such that its (operational)
semantics can be reconstructed in logical terms, as well of the
combination with the DLs described. Moreover, the embedding can
be used to give semantics to unstratified Datalog$^\pm$
programs via FO-AEL, and the results of this paper can be
exploited to derive properties. Investigating this in detail
remains for future work.

Several nonmonotonic extensions
of description logics
have been defined in the literature
\cite{baader95:_embed,Donini+LenzeriniETAL-episoperdesclogi:98,Donini+NardiETAL-DescLogiMiniKnow:02,bonatti06:_expres_non_monot_descr_logic_based_circum}.
These might be further extended to accommodate logic programs by well-known
correspondences of the latter to nonmonotonic formalisms.
In more detail, extensions of DL semantics with defaults and circumscription have been
described by \citeN{baader95:_embed} and
\citeN{bonatti06:_expres_non_monot_descr_logic_based_circum}, respectively.
Extensions with nonmonotonic modal operators, inspired by the logic MKNF \cite{Lifschitz-NonmDataEpisQuer:91}, have
been described by
\citeANP{Donini+LenzeriniETAL-episoperdesclogi:98} \citeNN{Donini+LenzeriniETAL-episoperdesclogi:98,Donini+NardiETAL-DescLogiMiniKnow:02}.
Both works mention a notion of
procedural or  default rules, which are rules involving description logic
concepts. \citeN{Donini+LenzeriniETAL-episoperdesclogi:98} allow rules of the
form $ C \Rightarrow D $, where $ C $ and $ D $ are DL concepts (i.e., unary
predicates); such rules are intuitively read ``if  an
individual  is  proved  to  be  an  instance  of  $C$,  then  derive  that  it
is  also  an  instance  of  $D$''. The default rules considered by
\citeN{Donini+NardiETAL-DescLogiMiniKnow:02} are a generalization; they are of
the form $C_0, \dnot C_1, \dots, \dnot C_n \Rightarrow D $, $n\geq 0$,
where all $ C_i$ and $D$ are DL concepts.  Intuitively, ``if  an
individual  is  proved  to  be  an  instance  of  $C_0$ and is not proved to be
an instance of $ C_1 $, \dots, or $ C_n $,  then  derive  that  it  is  also
an  instance  of  $D$''.
The work of \citeANP{Donini+LenzeriniETAL-episoperdesclogi:98}
inspired some more advanced formalisms for combining rules and
ontologies, which we consider next.

\subsection{Combinations of Rules and Ontologies}
\label{sec:related-combinations}

Roughly speaking, we can distinguish between three kinds of
combinations of rules and ontologies: (1) \emph{uniform combinations}
(e.g., CARIN \cite{Levy+Rousset-CombHornruledesc:98} and SWRL
\cite{Horrocks+Patel-SchneiderETAL-rule:05}), (2) \emph{hybrid
combinations} (e.g., dl-programs
\cite{eite-etal-07} and \dllog
\cite{rosati06}), and (3) \emph{embedding combinations} (e.g., the
MKNF combination by \citeN{Motik+Rosati-FaitInteDescLogi:07} and a
combination based on quantified equilibrium logic \cite{Bruijn-PearceETAL-QuanEquiLogiHybr-07}); for
more discussion, see, e.g., the works of \citeN{eite-etal-07} and \citeN{bruijn06-repres-issues-about-combin-of}.
We also note the recent approach by \citeN{debr-etal-08} for embeddings of dl-programs, \dllog, and MKNF into FO-AEL.

\subsubsection{Uniform Combinations}

With uniform combinations we mean combinations of ontologies that are
essentially classical first-order theories and of Horn logic
formulas that are essentially positive rules.  The combined theory,
which is the set-theoretic union of the formulas in the ontology and
the Horn formulas, is interpreted under the standard first-order logic
semantics.

In the CARIN approach \cite{Levy+Rousset-CombHornruledesc:98},
the ontologies are theories of the description logic
$\mathcal{ALCNR}$ and the rules are Datalog rules, i.e., safe
positive normal rules as defined in Section~\ref{sec:preliminaries-programs}, with the further restriction
that predicates which occur in the ontology may not be used in
rule heads.  \citeANP{Levy+Rousset-CombHornruledesc:98} show
that reasoning with these combinations is undecidable in
general, but becomes decidable when suitably restricting either
the ontology or the rules. As discussed in Section~\ref{sec:SW-application},
\citeN{Motik+SattlerETAL-QuerAnswOWL-with:05} demonstrated
decidability of SWRL---the combination of OWL DL with normal
positive rules---restricted to DL-safe rules.

\subsubsection{Hybrid Combinations}
\label{sec:hybrid}

Hybrid approaches combine logic programs with nonmonotonic
negation (usually, under the stable model semantics or the
well-founded semantics) with a description logic knowledge base
or, in more abstract terms, theories in first-order logic.  The
two most prominent such approaches are dl-programs
\cite{eite-etal-07} and \dllog \cite{rosati06}. The main
difference between them is the way in which the interaction
between the individual components (the logic program and the
ontology) is managed.  For both, we assume that the ontology
component is a DL theory and the logic program is function-free
and safe.

In dl-programs, the interoperation between the program and the
ontology is achieved by \emph{DL queries}, which are queries to
the DL ontology, in the bodies of the rules; prior to
evaluation, information from the program may be temporarily
added to the ontology for a query. Eiter et
al.~\citeyear{eite-etal-07} show that query answering in
dl-programs is decidable as long as reasoning in the individual
components (ontology and logic program) is decidable.
HEX-programs \cite{eiter05:_unifor_integ_higher_order_reason}
generalize dl-programs to more general external evaluations
that are not limited to queries on DL ontologies.

\dllog
 makes a distinction between
 ontology and rules predicates;
rules predicates may not occur in the ontology, but the ontology
predicates may occur in the rules.  The combination is interpreted by
a single first-order interpretation, but the part of the
interpretation concerned with the rules predicates is subject to
stability conditions corresponding to the usual definition of stable
models.
Thus, the
interoperation is based on single models, resulting in a broad
interface between the program and the ontology.  \citeN{rosati05}
shows that if the rules are DL-safe and satisfiability checking in the
ontology component is decidable, then reasoning with the combination
is decidable.  \citeN{rosati06} shows that reasoning is decidable if
the problem of containment of conjunctive queries in unions of
conjunctive queries is decidable for the underlying DL, provided that
the rules are \emph{weakly DL-safe}; this notion dispenses DL-safety
for variables that occur only in ontology predicates in rule bodies,
which makes it possible to access unnamed individuals in rules.
\allog \cite{Donini+LenzeriniETAL-AL-l:98} can be seen as a precursor
of \dllog that considers only positive programs and that allows
(unary) ontology predicates only in rule bodies and effectively
requires DL-safety.
The differences between the underlying principles of dl-programs
and \dllog are discussed in more detail by \citeN{bruijn06-repres-issues-about-combin-of}.

Since we did not distinguish between rule and ontology
predicates in our em\-bed\-dings---indeed, in the introduction we claimed
this is undesirable---there is no straightforward correspondence
between any of the embeddings we considered and the mentioned hybrid
approaches.
The embeddings we considered in this paper can be used to construct
combinations that have a tight integration between the components and
that do not have a separation between ontology and rules
predicates. In fact, the \dllog approach can be reconstructed by an
extension of simple combinations \comb\combpar = $\fot\cup \lp$ with {\em
classical interpretation axioms}, which, loosely speaking, fix the
value of classical predicates for stable expansions; we refer to
\citeN{debr-etal-08} for details.

\subsubsection{Embedding Combinations}

\noindent \citeN{Motik+Rosati-FaitInteDescLogi:07} propose a combination of DL
ontologies and nonmonotonic logic programs through an embedding into the bimodal
nonmonotonic logic MKNF \cite{Lifschitz-NonmDataEpisQuer:91}, which uses
the modal operators \mopk, which
stands for ``knowledge'', and \mopnot, which stands for ``negation as
failure''.
The
variant of MKNF used by Motik and Rosati
employs a standard names assumption similar to the approach of \citeN{levesque90:_all_i}:
there is a one-to-one correspondence between the countably many constant symbols in the
language and elements in the domains of interpretations (functions
symbols are not considered).
The equality symbol of
first-order logic ($=$) is embedded using a special binary predicate symbol $
\approx $ and the usual congruence axioms \cite[Chapter
9]{Fitting-FirsOrdeLogiAuto:96} are added.  Logic programs are
embedded into MKNF using the transformation described by
\citeN{Lifschitz-MiniBeliNegaFail:94}: a rule $r$ of form (\ref{eq:rule})
is embedded as the formula
\[
\textstyle \tau_{MKNF}(r)=\bigwedge_i \mopk b_i \land \bigwedge_j \mopnot\, c_j \limpl
\bigvee_k \mopk h_k.
\]
A classical theory \fot is
embedded as a conjunction comprising all the formulas in the theory, preceded
by the modal operator \mopk: $ \sigma_{MKNF}(\fot)=\mopk(\bigwedge \fot) $.
Finally, the combination of the logic program $\lp$ and the
first-order theory $\fot$ is simply $\combop_{MKNF}=\tau_{MKNF}(\lp)\cup \{\sigma_{MKNF}(\fot) \}$.

Comparing 
$ \tau_{MKNF}$ to the embeddings in
Section~\ref{sec:embedd-logic-progr}, we can see that it is close in spirit to
the embedding \embehdis; both embeddings feature modal belief operators in
front of positive atoms in both the body and the head of the rule. In fact,
it turns out that, when using a variant of FO-AEL with standard names, there
is a one-to-one correspondence between the stable expansions of $
\tau^{\lor-}_{EH}(\lp)$ and the MKNF models of $
\tau_{MKNF}(\lp)$ (recall that $ \tau^{\lor-}_{EH}(\lp)$ is $
\tau^{\lor}_{EH}(\lp)$ without the $\UNA$ axioms); however, this correspondence does
not extend to combinations with FO theories, as shown by \citeN{debr-etal-08}.

Besides the obvious differences between MKNF and autoepistemic logic---il\-lus\-trat\-ed
by the differences between the $ \tau_{MKNF}$ and $
\tau^{\lor-}_{EH}$ embedding functions---there is a difference in the
semantics for quantifying-in between the variant of MKNF used by
\citeN{Motik+Rosati-FaitInteDescLogi:07} and Konolige's any- and all-name
semantics that we used in this paper.  Since FO-AEL permits arbitrary
interpretations, we needed to utilize $\UNA$ axioms.
Motik and Rosati employ the standard names assumption and thus do not need such
axioms.

\medskip

\noindent
As already pointed out, \citeN{Bruijn-PearceETAL-QuanEquiLogiHybr-07} used another nonmonotonic logic for combining ontologies
and logic programs, namely {quantified equilibrium logic}\/ (QEL)
\cite{PV05}. While FO-AEL and MKNF are nonmonotonic modal logics, QEL
is based on the nonclassical \emph{logic of here-and-there}, which is
an intermediate logic between classical and intuitionistic logic.
Negation in QEL is nonmonotonic; however, by axiomatizing the law of
the excluded middle (LEM) through $ \forall \vec{x}( p(\vec{x})\lor
\lnot p(\vec{x}))$, one can enforce that a predicate $ p $ is
interpreted classically, and negation of this predicates becomes
classical.
Actually, \citeN{Bruijn-PearceETAL-QuanEquiLogiHybr-07} used a slightly generalized version of QEL that does not assume
uniqueness of names and includes equality
to show that the QEL theory obtained by adding such LEM axioms to the combination
$\combop\combpar = \fot\cup\lp$ of a FO theory  \fot and a logic
program \lp yields the \dllog semantics.


\nop{******* HIDE
\medskip
\noindent Related to the topic of combining rules and ontologies are
approaches that use logic programming techniques for
reasoning with  DL ontologies.  
\citeN{Motik-ReasDescLogiusin:06} defines a translation
from the DL \shiq to positive disjunctive logic programs in order to
use techniques from the field of disjunctive deductive databases to
obtain a more efficient query answering procedure.
\citeN{baral03:_knowl_repres_reason_and_declar_probl_solvin}
and \citeN{swift04:_deduc_in_ontol_via_asp} reduce reasoning in
the description logic $\mathcal{ALCQI}$ to query answering under the
Answer Set Semantics. \citeN{antoniou02nonmonotonic} considers
description logic inference as an input to defeasible reasoning.

**********}

\section{Conclusion}
\label{sec:conclusions}

We have defined various embeddings of non-ground
programs into first-order autoepistemic logic (FO-AEL) that
generalize respective embeddings of propositional logic programs into
standard AEL, and we have investigated their semantic properties. We
have shown that these embeddings are faithful, in the sense that the
stable models (or answer sets) of a given non-ground logic program \lp
are in one-to-one correspondence to the stable expansions of the
embeddings $ \embgen(\lp)$ with respect to objective ground atomic
formulas. Furthermore, we have analyzed the correspondences
between the embeddings at more fine-grained levels, revealing their
commonalities and differences.

Our results provide a basis and a stepping stone for the more
complex endeavor to combine classical knowledge bases and
non-ground logic programs in a uniform logical formalism (which
is one of the targets of the Semantic Web architecture), namely
the well-known and amply studied formalism of autoepistemic
logic. Indeed, since the combination of positive RIF rules with
RDF and OWL DL \cite{Bruijn-Comp:08} corresponds to one of the
combinations we studied, our results are directly applicable to
such combinations.

In this direction, we have investigated correspondences between simple
combinations of embeddings of logic programs with FO theories
for various classes of logic programs
and FO theories. The results of our investigation provide useful
insights into the behavior of different embeddings for logic programs
with respect to a context, given by a first-order theory,
and allows some conclusions about the
replaceability of one embedding by another without altering the behavior of
the combination. Based on the results in the present paper, more elaborated
combinations of logic programs with FO theories are investigated by
\citeN{debr-etal-08}, who show how well-known approaches to combining
rules and ontologies in the Semantic Web context can be embedded into
FO-AEL, like those of
\citeN{eite-etal-07}, \citeN{rosati06}, and \citeN{Motik+Rosati-FaitInteDescLogi:07}.
Notably, the \dllog approach
can be embedded into FO-AEL by adding
further axioms to the simple combination that we have considered here.%

Several issues remain for future work.  In the present paper,
we focused on semantic aspects of embeddings of logic programs,
but we did not address computational issues.  Since the
embeddings are easily computed, they may be exploited to
establish decidable fragments of combinations of rules and
ontologies, and to craft sound (but possibly incomplete)
algorithms for specific reasoning tasks for such combinations.
There are several promising starting points for devising
algorithms for computing stable expansions and/or autoepistemic
consequences in FO-AEL. \citeN{Niemelae-decicompautoreas:92}
presents a general procedure for computing stable expansions in
FO-AEL without quantifying-in.
\citeN{levesque00:_logic_of_knowl_bases} present a sound, but
incomplete proof theory   for the logic of only knowing, which
extends FO-AEL with standard names. Finally,
\citeN{Rosati-TowaFirsNonmReas:99} presents techniques for
reasoning with first-order MKNF (with standard names) with a
limited form of quantifying-in; the \mopnot operator in MKNF is
equivalent to $ \lnot \mop $ in autoepistemic logic
\cite{rosati97embedding}.

Other issues are extensions of the language used for logic programs.
Adding classical negation to the \embeb and \embeh is routine, and has
been done by \citeN{debr-etal-08} for FO-AEL with standard names.
Other interesting extensions include nesting
\cite{Lifschitz+TangETAL-NestExprLogiProg:99}, where the closeness
between nesting in logic programs and the logic MKNF suggests
that an embedding is straightforward, and aggregates
\cite{fabe-etal-jelia04,ferr-2005,DBLP:journals/tplp/PelovDB07,DBLP:journals/tplp/SonP07}.

Furthermore, in the present work, we considered embeddings of
logic programs interpreted under the stable model semantics,
which adopts a two-valued semantics.  It would be interesting
to consider also embedding of logic programs under many-valued
semantics, most importantly under the well-founded semantics
\cite{Gelder+RossETAL-WellSemaGeneLogi:91}, which is a
three-valued semantics for logic programs with negation that
has also been considered for combination of rules and
ontologies \cite{knor-etal-08,DBLP:conf/rr/DrabentHM07}.
Three-valued extensions of autoepistemic logic \cite{denecker03:_unifor,bonatti95:_autoep_logic_unify_framew_seman_logic_progr,Przymusinski-ThreNonmFormSema:91}
may be used as a starting point.


Lastly, the initial motivation for our work has been the
application to Semantic Web languages. Combinations of positive
RIF rules with RDF and OWL DL ontologies, as we have discussed
in the present paper, are just a first step. Nonmonotonic
extensions of RIF~\cite{Kifer-rif08}, and also the RDF Query
Language SPARQL~\citeyear{sparql}, are instances of the
combination problems we have sketched in the present paper. The
semantics of both nonmonotonic RIF and SPARQL can be expressed
in terms of nonmonotonic logic
programs~\cite{Kifer-rif08,angl-guti-2008,poll-2007}, but their
combination with OWL ontologies is still an open issue on W3C's
agenda in completing the Semantic Web
architecture~\cite{w3c2007}. We expect that our results can be
used to provide valuable insights towards the definition of a
\emph{unifying logic} encompassing the Semantic Web Ontology
(OWL, RDFS), Rules (RIF) and Query languages (SPARQL).

\appendix

\section{Appendix: Proofs of Sections \ref{sec:prop-embedd} and \ref{sec:prop-comb}}
\label{sec:proofs}

This appendix contains the remaining proofs of the results stated in
Sections~\ref{sec:prop-embedd} and \ref{sec:prop-comb}.
We remind the reader
that we omit the superscript $\mathrm{X}$ from
$ \xmodels$, $\xexpequiv_\gamma$, $ Cn^{\mathrm{X}}$, \xstable, etc.\
if the stated property holds regardless of whether $ X=E $ or $ X=A $.

\subsection*{Proofs of Section \ref{sec:prop-embedd}}
\label{sec:proofs-5}

\begin{proof}[of Proposition \ref{prop:corr-ground-formulas}]
Let $T$ be a  stable expansion of $\embeb(\lp)$ or $\embeh(\lp)$
(resp., of $\embhpdis(\lp)$ or $\embebdis(\lp)$).
We only consider $ T_\oga$ in the following; so, by Theorem~\ref{cor:corr-ground-atomic},
the choice between \embeb and \embeh (resp., $\embhpdis$ and $\embebdis$) is immaterial.
Define
$$\Gamma^\embop = \{\phi \in \flang_{g} \mid \embop(\lp)\models_{T_{\oga}}\phi\},
\ \embop\in\{\embeb,\embeh,\embhpdis,\embebdis\}.
$$
 By Proposition
\ref{prop:atoms-expansions}, $\embop(\lp)$ has a stable expansion
$T^\embop$ such that $\Gamma^\embop=T^\embop_{og}$.

\smallskip

\noindent First
we will show that $T^{\embeb}_{og} = T^{\embeh}_{og}$ by establishing
\begin{equation}
\label{eq:prop-4} T^{\embop}_{og}=\{\phi\in\flang_{g}\mid
T_{\oga}\models\phi\}.
\end{equation}
$T^{\embeb}_{og} = T^{\embeh}_{og}$
follows from this claim, thereby establishing the first part of the proposition,
concerning the embeddings \embeb and \embeh.

Every entailed objective ground formula is equivalent to a conjunction
of ground clauses $c=l_1\lor\dots\lor l_k$, where each
$l_i$ is either an atom $p_i$ or a negated atom $\lnot p_i$.
Clearly, $T_{\oga}\models c $ iff $l_i\in T_{\oga}$ for some $l_i$ in
$c$. To prove
(\ref{eq:prop-4}),
clearly $T^{\embop}_{og} \supseteq \{\phi\in\flang_{g}\mid
T_{\oga}\models\phi\} $. For the other inclusion,
suppose that $\embop(\lp)\models_{T_{\oga}}c$ (hence
$c\in T_{og}^\embop$), but $T_{\oga}\not\models c$. Hence, $l_i\notin
T_{\oga}$
for every $l_i$ in $c$. Consider an arbitrary
interpretation $\inter$ such that
$\inter\models_{T_{\oga}}\embop(\lp)$; then,
$\inter\models_{T_{\oga}}c$. Let $\inter'$
result from $\inter$ by flipping the truth value of the atom
$p_i$ of each literal $l_i $ in $c$ such that $\inter \models_{T_{\oga}}l_i$; clearly,
$\inter'\nmodels_{T_{\oga}}c$.  We now show that
$\inter'\models_{T_{\oga}}\embop(\lp)$; this contradicts
the assumption $\embop(\lp)\models_{T_{\oga}}c$ and
proves (\ref{eq:prop-4}).

Let $\alpha \in \embop(\lp)$ be an instance of an axiom that
originates from a rule in \lp. We show that
$\inter'\models_{T_{\oga}}\alpha$. Suppose first that some flipped atom $p_i$
occurs in the antecedent $\alpha_a$ of $\alpha$.  If
$\inter'\models_{T_{\oga}}\neg p_i$, then clearly
$\inter'\models_{T_{\oga}}\alpha$.  Otherwise,
$\inter\models_{T_{\oga}} \neg p_i$ and thus $\mop p_i$, which also
occurs in the antecedent of $\alpha$, is false in $\inter'$.  Hence,
$\inter' \models_{T_{\oga}}\alpha$.
Suppose then that $\alpha$ has no flipped $p_i$
in $\alpha_a$, and that
$\inter' \nmodels_{T_{\oga}} \alpha$. Hence,
$\inter' \models_{T_{\oga}} \alpha_a$ and $\inter'
\models_{T_{\oga}} \neg p_i$ for some
flipped $p_i$ that occurs in the consequent of $\alpha$.
As $\inter \models_{T_{\oga}} \alpha_a$, it follows
$p_i\in T_{\oga}$; as  $l_i\notin T_{\oga}$, we have
$l_i = \lnot p_i$ and thus $\inter \models_{T_{\oga}} \neg p_i$,
which implies $\inter' \nmodels_{T_{\oga}} \neg p_i$ by definition;
this is a contradiction. This proves $\inter' \models_{T_{\oga}} \alpha$.

As unique names axioms in $\embop(\lp)$ are clearly satisfied in
$\inter'$, $\inter'\models_{T_{\oga}}\embop(\lp)$, thereby establishing the claim (\ref{eq:prop-4}) and thus the first part of the proposition.

\medskip
\noindent  For the second part concerning
  $\embhpdis(\lp)$ and $\embebdis(\lp)$, we  exploit
the $\PIA$ axioms: thanks to them
each objective ground atom $\alpha $ or
its negation $\lnot\alpha $ is included in the stable expansion
$T^\tau$, for $\tau\in\{\embhpdis,\embebdis\}$. Thus,
$T^\tau_{og}=\{\phi\in\flang_{g}\mid T_{\oga}\models\phi\}$ (by
structural induction), from which
$T^{\embhpdis}_{og} = T^{\embebdis}_{og}$ follows immediately.
\end{proof}

\begin{proof}[of Proposition \ref{prop:corr-safe-formulas}]
Let
$T$ be a stable expansion
of $\embeb(\lp)$ (resp., $ \embeh(\lp)$).  By
Theorem~\ref{cor:corr-ground-atomic}, we know that
\begin{equation}
\label{eq:proof-eb-eh-safe-1}
\embeb(\lp) \expequiv_{\oga} \embeh(\lp),
\end{equation}
hence, $ T_{\oga} = T'_{\oga}$ for some stable expansion $T'$  of $ \embeh(\lp)$ (resp., $ \embeb(\lp)$).

We proceed as follows.  We first  claim that (\dag) given a safe
program \lp and an interpretation \inter,
one can construct a named interpretation $\inter'$
such that
$\inter\models_{T_{\oga}} \embop(\lp)$ iff
$\inter'\models_{T_{\oga}}\embop(\lp)$, where $\embop\in \{\embeb, \embeh\}$. It follows that,
for every formula $ \phi$,
\begin{multline}
\label{eq-proof-eb-eh-safe-2}
\embop(\lp)\models_{T_{\oga}} \phi
\text{ iff for every named interpretation } \inter, \\
\inter
\models_{T_{\oga}}\embop(\lp) \text{ implies } \inter \models_{T_{\oga}}\phi
\end{multline}
We denote the latter---i.e., entailment restricted to named
interpretations---with the symbol
 $\models^{\names}_{T_{\oga}}$.

We then
claim that (\ddag) for every formula $ \phi \in \fmodl $, $
\embeb(\lp)\models^{\names}_{T_{\oga}}\phi $ iff $ \embeh(\lp)\models^{\names}_{T_{\oga}}\phi $;
combined with \eqref{eq:proof-eb-eh-safe-1} and \eqref{eq-proof-eb-eh-safe-2},
this establishes
$\embeb(\lp)\expequiv\embeh(\lp)$. We now proceed to prove
the individual claims (\dag) and (\ddag).

\medskip
\noindent (\dag)\quad
Let
\interdef be an interpretation.  Let the named interpretation $\inter^\names=\langle
\domain^\names,\cdot^{\funsym^\names}\rangle$ be as follows:
$\domain^\names=\{t^\funsym \mid t \in \names\}$;
for
every $n$-ary predicate symbol $p$, $p^{\funsym^\names}=
p^\funsym\cap(\domain^\names)^n$;
and for every $n$-ary function symbol $f$ and
$\vec{k}\in (\domain^\names)^n$, $f^{\funsym^\names}(\vec{k})=f^{\funsym}(\vec{k})$.

Consider $\embeb(\lp)$, which contains two kinds of formulas:
$\UNA$ axioms and axioms of the form $(\forall)\bigwedge(b_i\land\mop b_i)\land
\bigwedge\lnot\mop c_j\limpl h$. The former
are obviously satisfied in $\langle \inter^\names,T_{\oga}\rangle$ iff they are satisfied in
 $\langle \inter,T_{\oga}\rangle$, because they are variable-free. Consider
an open formula $\alpha = \bigwedge(b_i\land\mop b_i)\land \bigwedge\lnot\mop c_j\limpl
h$ and a variable assignment $B$ of $\inter$.  Since \lp is
safe,
every variable in $\alpha$ occurs in some $b_i$.  Therefore, if $B$
assigns any variable in $\alpha$ to an
unnamed individual, $(\inter,B)\nmodels_{T_{\oga}} \mop b_i $; hence  $(\inter,B)\models_{T_{\oga}} \alpha$.
If $B$ assigns all variables in $\alpha$
to named individuals,
then $(\inter,B) \models_{T_{\oga}} \alpha$
iff  $(\inter^\names,B) \models_{T_{\oga}} \alpha$, by construction of $\inter^\names$. The proof for the case of
$\embeh(\lp)$ is analogous; the antecedents of the implications in $
\embeb(\lp)$ and $ \embeh(\lp)$ are the same. This proves the claim (\dag).

\medskip

\noindent It remains to prove (\ddag).
Let $T$ be a stable expansion of $\embeb(\lp)$.\\
\noindent ($ \Rightarrow $)
By the shape of the formulas in $
\embeb(\lp)$ and $ \embeh(\lp)$, for every interpretation \inter
clearly $ \inter \models^\names_{T_{\oga}}\embeh(\lp)$ implies $ \inter
\models^\names_{T_{\oga}}\embeb(\lp)$.
Hence, for every formula $ \phi\in \fmodl $
it holds that $\embeb(\lp)\models^\names_{T_{\oga}}\phi $ implies
$\embeh(\lp)\models^\names_{T_{\oga}}\phi $.

\smallskip

\noindent
($ \Leftarrow $) We proceed
by contradiction. Suppose that $ \embeh(\lp)\models^\names_{T_{\oga}}\phi $, but $
\embeb(\lp)\nmodels^\names_{T_{\oga}}\phi $. Hence there must be a
named interpretation $ \inter $ such that $ \inter
\models_{T_{\oga}}\embeb(\lp) $ and $\inter\nmodels_{T_{\oga}}\phi$. We
will show that $ \inter \models_{T_{\oga}}\embeh(\lp)$, which contradicts the
assumption.

The $\UNA$ axioms in $ \embeh(\lp)$ are obviously satisfied in \inter, since they
are also in $ \embeb(\lp)$. Consider a formula $(\forall)\,\alpha \limpl
h \land \mop h \in\embeh(\lp)$ that is not satisfied in
$ \langle \inter, T_\oga  \rangle $, where $\alpha = \bigwedge (b_i\land
\mop b_i) \land \bigwedge \lnot \mop c_j$.
Since $\inter\models_{T_{\oga}}(\forall)\,\alpha\limpl h \in \embeb(\lp)$,
we have that, for some variable assignment $ B $,
$ (\inter, B) \models_{T_{\oga}}\alpha$ and
$ (\inter, B) \not \anymodels_{T_{\oga}}\mop
h$ (resp., $ (\inter, B) \not \allmodels_{T_{\oga}}\mop
h$); hence, $ h\varsub \notin T_{\oga}$ for all (resp., for some) name substitution(s)
\varsub associated with $ B $.  However, since $ (\inter, B)
\models_{T_{\oga}}\alpha$, and thus $(\inter, B) \anymodels_{T_{\oga}} \mop b_i$ and $(\inter, B) \not\anymodels_{T_{\oga}} \mop c_j$
(resp., $(\inter, B) \allmodels_{T_{\oga}} \mop b_i$ and $(\inter, B) \not\allmodels_{T_{\oga}} \mop c_j$),
it follows that $b_i\varsub_{b_i} \in T_{\oga} $ and $ c_j\varsub_{c_j} \notin
T_{\oga}$\footnote{Observe that $ b_i \varsub_{b_i} $ and $ c_j \varsub_{c_j} $ are
well-defined, because all individuals in \inter are named.}
for some/all
(resp., for all/some) name substitutions
$\varsub_{b_i},\varsub_{c_j}$ associated with $ B $.
Now, since $T$ is a
stable expansion of $ \embeb(\lp)$,
from
(\ref{eq-proof-eb-eh-safe-2}) and the $\UNA$ axioms contained in $\embeb(\lp)$ it follows that
$ h\varsub \in T_{\oga}$ for some (resp., for all) name substitution(s)
\varsub associated with $ B $.  Therefore,
$\inter\models_{T_{\oga}} (\forall)\,\alpha\limpl h \land \mop h $ and
$\inter \models_{T_{\oga}}\embeh(\lp)$, establishing the desired
contradiction.
This proves that $\embeh(\lp)\models^\names_{T_{\oga}} \phi $ implies $ \embeb(\lp)\models^\names_{T_{\oga}} \phi$.

\smallskip
\noindent The case of $T$ being a stable expansion of $\embeh(\lp)$
is analogous.
\end{proof}

\begin{proof}[of Proposition \ref{prop:corr-ground-programs}]
By the $\PIA$ axioms $b_i \limpl\mop b_i$ we can eliminate the modal atoms
of the form
    $\mop b_i$ from the antecedents of the axioms in $\embebdis(\lp)$
that originate from rules in \lp.
The remaining theory is the same as $\embhpdis(\lp)$ and thus the stable expansions
    correspond.
  \end{proof}

\begin{proof}[of Proposition \ref{prop:models-norm}]
As $\embeb(\lp)$ and $\embeh(\lp)$ differ only in that
the latter has a conjunction $h_1\land \mop h_1$ in the consequent of embedded rules
$\embeh(r)$ while the former simply has $h_1$, the first implication is
immediate. The second implication is argued similarly; each $\embeb(r)$
has a stronger antecedent than $\embhp(r)$.
  \end{proof}

\begin{proof}[of Proposition \ref{prop:models-disj}]
Similarly as $\embebdis(\lp)$ and $\embhpdis(\lp)$ in Proposition \ref{prop:models-norm},
$\embebdis(\lp)$ and $\embhpdis(\lp)$ only differ in
that the embedding $\embebdis(r)$ of each rule $r$ has a stronger antecedent than the
corresponding $\embhpdis(\lp)$, from which the first part follows.

For the second part, as in
Proposition \ref{prop:corr-safe-formulas}
it suffices to consider
the case of named individuals.
Consider any interpretation
$\langle \inter,T \rangle$ such that \inter
is named and $\inter \models_T \embebdis(\lp)$.
By the $\PIA$ axioms, for every variable assignment $B$,  associated
name substitution \varsub, and $a \,{\in}\,\flang_{\oga}$,
we have that $(\inter,B) \models_{T} a$
implies $\inter \models_T \mop a\varsub$. Hence, $\inter \models_T
\embehdis(\lp)$.
\end{proof}

\begin{proof}[of Lemma \ref{lem:rel-cons}]
Recall
that a formula $\phi\in \fmodl$ is an autoepistemic consequence of a theory
$\fot\subseteq \fmodl$ if
$\phi$ is included in every stable expansion of \fot. Recall also that, by
Theorem~\ref{cor:corr-ground-atomic}, the stable expansions of all
embeddings $\emb^{(\vee)}(\lp)$ correspond with respect to
$\flang_{\oga}$.  From this and the fact that, by Proposition
\ref{prop:atoms-expansions}, every stable expansion $T$
is determined by $T_{\oga}$ and
$\emb^{(\vee)}(\lp)$, we can conclude  $C_\chi^{(\vee)}
\subseteq C_\gamma^{(\vee)}$
if
$\inter \models_{T_{\oga}}\tau_\gamma^{(\vee)}(\lp)$ implies
$\inter \models_{T_{\oga}}\emb^{(\vee)}(\lp)$
for every interpretation
$\langle\inter,T_{\oga}\rangle$.
\begin{itemize}
\item $C_{EH} \longrightarrow C_{EH}^\vee$, $C_{EH}^\vee
\longrightarrow C_{EH}$:
follow from the fact that, if $\lp\in \narbclassprg$,
$\embeh(\lp)$ and  $\embehdis(\lp)$ coincide.

\item  $C_{EH} \dashrightarrow C_{EB}$:  follows from
Proposition~\ref{prop:corr-safe-formulas}.

\item  $C_{EB} \longrightarrow C_{EH}$, $C_{EB} \rightarrow C_{HP}$:
follows from
Proposition \ref{prop:models-norm}.

\item  $C_{HP} \longrightarrow C_{HP}^\vee$, $C_{EB} \rightarrow
  C_{EB}^\vee$:
if $\lp\in \narbclassprg$, then clearly $ \embhp(\lp)\subseteq \embhpdis(\lp)$,
so $\inter \models_{T_{\oga}}\embhpdis(\lp)$ implies
$\inter \models_{T_{\oga}}\embhp(\lp)$. Likewise for $ \embeb(\lp)$ and
$ \embebdis(\lp)$.

\item   $C_{EB}^\vee \longrightarrow  C_{HP}^\vee$:   $C_{EB}^\vee \subseteq C_{HP}^\vee$ follows
from Proposition \ref{prop:models-disj}.

  \item   $C_{EH}^\vee \dashrightarrow C_{EB}^\vee$: follows from
  Proposition~\ref{prop:models-disj}.
 \end{itemize}
\end{proof}

\begin{proof}[of Lemma \ref{lem:not-rel-cons}]
\mbox{}\\

\vspace*{-1.25\baselineskip}

\begin{itemize}
\item $C^{\lor}_{EH} \not\subseteq C^{\lor}_{HP}$: Consider $
\lp=\{p(x);\; q(c) \mid c \in \names\}$, which is not safe.
Then $\embehdis(\lp)$ has a single stable expansion $T$, and
since $\forall x.p(x)\land\mop p(x)$ is in $T$, in every model
of $T$ all individuals must be named. Hence $T$, and thus $ C^{\lor}_{EH}$, contains $\forall x.q(x)$.
Also $\embhpdis(\lp)$ has a single stable expansion,
but individuals may be unnamed in its models, and so $\forall
x.q(x)\notin C^{\lor}_{HP}$ holds.

\item $ C_{EB}^\lor \not\subseteq C_{HP}$: Consider $ \lp =
\{b \leftarrow a \}$, which is clearly safe. Then $ \embebdis(\lp)$ has a single
stable expansion, and $a\notin T$, which implies
$\lnot \mop a \in T$. As $T$ contains the contrapositive $ \lnot \mop
a \limpl \lnot a $ of the $\PIA$
axiom $a \limpl \mop a $, $T$ contains also $ \lnot a $.  Consequently, $ \lnot a
\in C_{EB}^\lor$. Since $ \embhp(\lp)$ contains
no $\PIA$ axioms, $ \lnot a \not \in C_{HP}$.

\item $ C_{HP} \not\subseteq C^{\lor}_{EB}$: Consider the safe program $ \lp =
\{q(x)\leftarrow p(x)\}$.  The single stable expansion of $
\embhp(\lp)$ contains $ \forall x( p(x)\limpl q(x))$, while the single
stable expansion of $ \embebdis(\lp) $ does not.  Hence, $
\forall x( p(x)\limpl q(x)) \in C_{HP}$, but $ \forall x (p(x)\limpl
q(x)) \not \in C^{\lor}_{EB}$.
\hfill\mbox{} \qed
\end{itemize}
\end{proof}

\subsection*{Proofs of Section \ref{sec:prop-comb}}
\label{sec:proofs-6}


\begin{proof}[of Proposition \ref{prop:positive-horn}]
We reduce the first part of the proposition to the second part
through skolemization.

Let  $ \lp \in \nsafeclass$, $ \fot \in \ghornclass $ and let
$\signature = \signature_\fot\cup \signature_\lp$ be the
combined signature. To ease argumentation, we use a fresh unary
predicate $ d $ as a domain predicate in $\lp$: let $ \lp' $
result from \lp by adding for each variable $x$ occurring in
rule $ r \in \lp $ the atom $ d(x) $ to the body of $r$, and
let $\fot' = \fot \cup \{d(t) \mid t \in \names_\signature \}
$.

As an easy lemma,  the stable expansions of \comb\combpar and $
\comb(\fot', \lp') $ ($ \combvar \in \{{EB},$ $ {EH} \} $)
correspond and agree on formulas in $\lang= \lang_\signature$.
Indeed, as every rule $ r \in \lp $ is safe, every variable $x$
in $ \emb(r) $ occurs in a positive modal atom $\mop b_i$ in
the antecedent; thus, given a model \ainter of \comb\combpar
and a variable assignment $B$, $(\inter,B)\anymodels_\Gamma
\mop b_i$ holds only if $x^B$ is a named individual, and thus
$(\inter', B)\models d(x)$, where $ \inter' $ is obtained from
\inter by setting the extension of $ d $ to the set of named
individuals. On the other hand, as $d$ occurs in $ \comb(\fot',
\lp') $  only in facts and positively in antecedents of
embedded rules $ \emb(r) $, we can, in every model \ainter of $
\comb(\fot', \lp') $, decrease the extension of $d$ to the set
of named individuals without compromising satisfaction. It
follows that if $T$  is a stable expansion of $ \comb(\fot,
\lp) $, then there exists a stable expansion $S$ of
 $ \comb(\fot', \lp') $ such that $S_\oga = T_\oga \cup \{d(t) \mid t
\in \names_\signature \}$, and vice versa.

Now let $ \fot''$ be a skolemization of $\fot'$,  let $
\signature' $ (resp., $ \signature''$) be the combined
signature of $\lp' $ and $ \fot' $ (resp., $\lp' $ and $ \fot''
$), and let $\lang'$ (resp., $\lang''$) be the resulting FO
language. We now establish correspondence between the stable
expansions of $\Delta' = \comb(\fot', \lp')$ and $\Delta'' =
\comb(\fot'', \lp')$ with respect to $\lang'_{\ga} $. As
$(\fot'', \lp') \in \hornclass{\times}\narbclassprg$ and $\lp'$
is positive, the first part of the proposition then follows
from second.

($ \Leftarrow $) Let $T''$ be a stable expansion of $\Delta''$. 
We note an important fact: (\dag) $d(t')\in T''$ such that
$t'\notin\names_{\Sigma'}$ (i.e., $t'$ is a Skolem term)
implies that $t=t'\in T''$ for some $t\in \names_{\Sigma'}$ (as
$d$ occurs in $\Delta''$  only in ground facts and in
antecedents of embedded rules). We show that, for every $\alpha
\in \lang'_\ga$, it holds that $\Delta'\anymodels_{T''_\oga}
\alpha$ iff $\alpha\in {T''_\oga}$ (i.e.,
$\Delta''\anymodels_{T''_\oga} \alpha$); by Proposition
\ref{prop:atoms-expansions} this proves that there is a stable
expansion $T'$ of $ \Delta' $ such that $T'_\oga =
T''\cap\lang'_{\ga}$.

Suppose first that $\inter\anymodels_{T''_\oga}
\Delta'\cup\{\neg\alpha\}$. Then we can extend $\inter$ by
suitable interpreting the Skolem functions, thereby obtaining
some $\inter'$ such that $\inter'\models \fot''$. Clearly,
$\inter' \anymodels_{T''_\oga} \UNA_{\signature_\lp}$. Now
suppose that for some variable assignment $B$ and some variable
$x$ that occurs in $\emb(r)$ ($r\in\lp'$),
$(\inter',B)\anymodels_{T''_\oga} \mop d(x)$. Then, by (\dag),
$x^B$  is named by some  $t\in\names_{\Sigma'}$ and hence
$(\inter,B)\anymodels_{T''_\oga} \mop d(x)$ and the antecedent
of $\emb(r)$ is satisfied by $ \langle \inter, T''_\oga \rangle
$  with respect to $ B $.
 It follows that $\inter' \anymodels_{T''_\oga}\emb(r)$ and thus $\inter'\anymodels_{T''_\oga} \Delta''\cup\{\neg\alpha\}$.

Conversely, suppose $\inter'\anymodels_{T''_\oga}
\Delta''\cup\{\neg\alpha\}$. Let $\inter$ be the restriction of
$\inter'$ to $\Sigma'$. Then  clearly
$\inter\anymodels_{T''_\oga} \fot \cup \UNA_{\signature_\lp}$.
Furthermore, if $(\inter,B)\anymodels_{T''_\oga} \mop d(x)$ for
some variable $x$ that occurs in $\emb(r)$, then
$(\inter',B)\anymodels_{T''_\oga} \mop d(x)$; as
$\inter'\anymodels_{T''_\oga} \emb(r)$, it follows that
$\inter\anymodels_{T''_\oga} \emb(r)$. Hence,
$\inter\anymodels_{T''_\oga} \Delta'\cup\{\neg\alpha\}$.

($ \Rightarrow $) Now let $T'$ be a stable expansion of
$\Delta'$. We show that  $\Delta''$ has a  stable expansion
$T''$ such that $T'_\oga = T''\cap\lang'_{\ga}$.  Let $\Gamma$
be the smallest set $\Gamma\subseteq\lang''_\ga$ such that (a)
$T'_\oga\subseteq \Gamma$ and (b) $\Delta''\anymodels_\Gamma
\Gamma$; the set $\Gamma$ exists since
$\Delta''\anymodels_{T'_\oga}\beta$ for each $\beta \in
T'_{\oga}$ and $\Delta''\anymodels_{S}\beta$ implies
$\Delta''\anymodels_{S'}\beta$ for each $\beta \in
\lang''_\oga$ and $S\subseteq S'\subseteq \lang''_\oga$. Hence,
$\Delta''$ has a stable expansion $T''$ such that
$T''_\oga=\Gamma$. As $T'_\oga \subseteq \Gamma$, it remains to
show that $\Gamma\cap\lang'_\ga \subseteq T'_\oga$.

We can obtain $\Gamma$ as $\Gamma = \bigcup_{i\geq 0}
\Gamma^i$, where $\Gamma^0=T'_{oga}$ and $\Gamma^{i+1} = \{
\alpha \in \lang''_\ga \mid \Delta'' \anymodels_{\Gamma^i}
\alpha\}$, $i\geq 0$; furthermore, $d(t') \in \Gamma^i$ implies
that $t=t' \in \Gamma^i$ for some $t'\in\names_{\Sigma'}$ by
induction on $i\geq 0$ (note that $\fot$ is Horn). Now let $i$
be the least index such that $\Gamma^{i+1}\cap
\lang'_\ga\not\subseteq T'_\oga$. Hence, $ \Delta''
\anymodels_{\Gamma^i} \alpha$ for some $\alpha \in
\lang'_\ga\setminus \Gamma^i$,  and for each model
$(\inter,\Gamma^i)$ of $ \Delta''$ there is some variable
assignment $B$ and axiom $\emb(r)\in \Delta''$ such that
$(\inter,B)\anymodels_{\Gamma^i} b\land \mop b$ for each $b$ in
the antecedent of $\emb(r)$, in particular for each $d(x)$, and
$\emb(r)$ has consequent $h$ such that $\alpha=h\varsub$ for
some name substitution $\varsub$ associated with
 $B$. From minimality of $i$ and safety of $r$, we
may assume that $\varsub$ has only names from
$\names_{\signature'}$. But then
$\Delta'\anymodels_{T'_\oga}\emb(r)$ implies that $\alpha \in
T'_\oga$, which is a contradiction. Hence $\Gamma^{i+1}\cap
\lang'_\ga\subseteq T'_\oga$ holds for all $i\geq 0$ and thus
$\Gamma\cap\lang'_\ga \subseteq T'_\oga$.

\medskip

\noindent For the second part, we show that for any stable
expansion $ T $ of $\Delta\combvar=\comb\combpar$, $\combvar
\,{\in}\,\{\mbox{}_{HP}, \mbox{}_{EB}, \mbox{}_{EH} \} $, there
exists a stable expansion $ T' $ of
$\Delta'\combvar=\comb(\fot,gr_{\lang}(\lp)) $ such that $
T'_{oga}{=}\,T_{oga}$, and vice versa. The result then follows
by Propositions~\ref{prop:hp-eh-ground}
and~\ref{prop:eb-eh-safe}.

In what follows, let ${\signature_\lp}$ be the signature of
\lp, let ${\signature}$ be the combined signature of \fot and
\lp, and let $\Delta^{(')}\combvar[\Sigma_\lp]$ (resp.,
$\Delta^{(')}\combvar[\signature]$) result from
$\Delta^{(')}\combvar$ by removing the $\UNA_{\Sigma_\lp}$
axioms and adding, for each $a,b\in \names_{\signature_\lp}$
(resp., $a,b\in \names_{\signature}$), the atom $a\,{=}\,b$ if
$a\,{=}\,b \in T$ and the negated atom $\neg(a\,{=}\,b)$ if
$a\,{=}\,b\notin T$.

For the case $ x=HP $, we note that $\Delta_{HP}[{\Sigma_\lp}]$
and $\Delta_{HP}[{\Sigma}]$ are classical FO Horn theories. Let
$\alpha \in \lang_\ga$. Given that $T$ is a stable expansion of
$\Delta_{HP}$, Proposition~\ref{prop:atoms-expansions} implies
$\alpha \in T_\oga$ iff (a) $\Delta_{HP} \anymodels_{T_\oga}
\alpha$; the latter is equivalent to (b)
$\Delta_{HP}[{\Sigma_\lp}] \anymodels_{T_\oga} \alpha$. By
well-known minimal model properties of Horn theories (cf.\
Lemma~\ref{lem:intersection}), (b) iff (c)
$\Delta_{HP}[{\Sigma}] \anymodels_{T_\oga} \alpha$;
furthermore, if we replace all embedded rules $\embhp(r)$,
$r\,{\in}\,\lp$, in $\Delta_{HP}[{\Sigma}]$ with their
groundings with respect to $\names_{\Sigma}$,
we obtain $ \Delta'_{HP}[\Sigma]$, and we that (c) iff (d) 
$\Delta'_{HP}[{\Sigma}] \anymodels_{T_\oga} \alpha$.  The
latter is clearly equivalent to $\Delta'_{HP}
\anymodels_{T_\oga} \alpha$, which implies that $\Delta'_{HP}$
has some stable expansion $T'$ such that $T_\oga = T'_\oga$.
The converse direction is argued similarly.

In the cases $\combvar=E B $ and $\combvar=E H $,
$\Delta^{(')}_\combvar[{\Sigma_\lp}]$ and
$\Delta^{(')}_\combvar[{\Sigma}]$ are not classical FO Horn
theories. However, we can turn them into such theories: for
each modal atom $\mop p(\vec{x})$ in a rule, we view $\mop p$
as a fresh classical predicate, and add, for each ground atomic
formula $\alpha$ with predicate $p$,
 the
formula $\mop \alpha$ to \fot if $\alpha \in T$ and we add the
negated atom $\neg\mop\alpha$ to \fot if $\alpha \notin T$. One
can then straightforwardly verify that, for any $ \alpha \in
\lang_{ga} $, the resulting theory entails $ \alpha $ iff the
original theory entails $ \alpha $ with respect to $T$, i.e.,
if $ \Psi$ is the original theory, $ \Psi \anymodels_T \alpha
$. The argument then proceeds analogously to the case $x=HP$.
\end{proof}

  \begin{proof}[of Proposition \ref{prop:eb-eh-safe}]
We show that, given an FO interpretation \inter and a
stable expansion $T$ of \comb\combpar, with $\combvar\in\{EB,EH\} $,
$\inter\anymodels_T\combeb\combpar$ iff $\inter\anymodels_T \combeh\combpar$.
Recall that
\combeb\combpar and \combeh\combpar
only differ in embedding of rules $r$ from \lp, which are mapped to
$\embeb(r) = \alpha\limpl h $
in \combeb\combpar and to
\begin{equation}
\label{eq:embeh} \embeh(r) = \alpha\limpl (h\land \mop
h)
\end{equation}
in \combeh\combpar, where $\alpha = \textstyle{\bigwedge} (b_i
\land \mop b_i) \land \textstyle{\bigwedge} \lnot \mop c_j$.
Consider first $\combvar = EB$.

\smallskip

\noindent
($ \Leftarrow $) Clearly, every model of \combeh\combpar is a model of
\combeb\combpar.

\smallskip

\noindent
($ \Rightarrow $) Suppose $\inter\anymodels_T\combeb\combpar$.
Consider an axiom (\ref{eq:embeh}) such that $(\inter,\varass)\anymodels_T \alpha$.
It
follows that $b_i\in T$ and $ c_j \not\in T$, and, since
$\alpha \limpl h$ is in \combeb\combpar and $T$ is a stable
expansion thereof (and consequently closed under first-order entailment),
that $ h \in T$.
This proves $\inter\anymodels_T\combeh\combpar$.

\smallskip

\noindent The case $\combvar= EH$, i.e., $T$ is a stable
expansion of \combeh\combpar, is analogous.
\end{proof}

\begin{proof}[of Proposition \ref{prop:hpdis-ebdis-ground}]
Follows from the proof of Proposition
\ref{prop:corr-ground-programs}.
\end{proof}

\begin{proof}[of Proposition \ref{prop:hp-eh-ground}]
We first prove the result for $\fot\in  \hornclass$, and then extend
the result to $\fot\in \ghornclass$. We use the following Lemma.

\begin{lemma}
\label{lem:intersection}
Let $J$ be a {\rm (}countable{\rm )} index set, let $\inter_j = \langle
U_j,\interfunj\rangle$, for $j \in J$, be models of an FO theory
\fot in which all individuals are named, and let
\interdef be the ``intersection'' of all $\inter_j$,
defined as follows:
\begin{itemize}
  \item $U = \{ [ t^{I_1}, \ldots, t^{I_j},\ldots] \mid
  t\in \names \}$ {\rm (}recall that $\names$ is a set of
  ground terms{\rm )};
  \item for $n$-ary function symbols $f$, $f^I(u_1,\ldots,u_n) =
   [ v_1,\ldots,v_j,\ldots]$ such that $v_j
   = f^{I_j}(u_{1,j},\ldots,u_{n,j})$, for all $j\in J$, and $u_i=[
   u_{i,1},\ldots,u_{i,j},\ldots]$, for $i=1,\ldots,n$;
 \item for $n$-ary predicate symbols $p$, $\langle u_1,\ldots,u_n\rangle \in p^I$ iff
$\langle u_{1,j},\ldots, u_{n,j}\rangle \in p^{I_j}$, for all $j \in J$, where
$u_i=[ u_{i,1},\ldots,u_{i,j},\ldots]$, for $i=1,\ldots,n$.
   \end{itemize}
Then, if  \fot is Horn, $\inter$ is a model of \fot.
\end{lemma}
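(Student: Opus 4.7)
The plan is to reduce satisfaction of each Horn clause in $\inter$ to componentwise satisfaction in the $\inter_j$, exploiting the fact that, by construction, every element of $U$ is of the form $[t^{I_1}, t^{I_2}, \ldots]$ for some ground term $t \in \names$. The key preparatory step is to establish that term interpretation in $\inter$ is componentwise. First I would prove by induction on term structure that $t^I = [t^{I_1}, t^{I_2}, \ldots]$ for every ground term $t$: the base case ($t$ a constant) holds directly by the definition of $f^I$ for $0$-ary $f$, and the inductive step follows from the componentwise definition $f^I(u_1,\ldots,u_n) = [f^{I_j}(u_{1,j},\ldots,u_{n,j})]_j$. This simultaneously shows that the definition of $f^I$ actually lands in $U$, since $[f^{I_j}(t_1^{I_j},\ldots,t_n^{I_j})]_j = [(f(t_1,\ldots,t_n))^{I_j}]_j$ is the tuple associated with the ground term $f(t_1,\ldots,t_n)$.

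Next, I would lift this to open terms under a variable assignment. Given any variable assignment $B$ for $\inter$, since each $x^B$ lies in $U$, by the construction of $U$ there exists a ground term $t_x \in \names$ such that $x^B = t_x^I$. Hence for any term $s(\vec{x})$, $s(\vec{x})^{I,B} = s(\vec{t})^I = [(s(\vec{t}))^{I_j}]_j$, where $\vec{t}$ denotes the tuple $(t_{x_1},\ldots)$. Consequently, for any atomic formula $p(\vec{s}(\vec{x}))$ we obtain $(\inter, B) \models p(\vec{s})$ iff $\inter_j \models p(\vec{s}(\vec{t}))$ for every $j \in J$, by the definition of $p^I$; the same equivalence holds for equality atoms $s_1 = s_2$ since $s_1^{I,B} = s_2^{I,B}$ in $U$ iff the component tuples are equal, i.e., $s_1^{I_j,B} = s_2^{I_j,B}$ for all $j$. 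The hypothesis that every individual of each $\inter_j$ is named is what makes this correspondence work uniformly in~$j$.

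Finally, I would verify that each Horn clause $\phi \in \fot$ is satisfied in $\inter$. Writing $\phi$ in the general form $\forall \vec{x}. b_1 \land \cdots \land b_n \limpl h$ (allowing the head $h$ to be $\bot$ for negative clauses, and $n=0$ for facts), fix an assignment $B$ and let $\vec{t}$ be the ground terms with $\vec{x}^B = \vec{t}^I$ obtained above. If $(\inter, B) \models b_i$ for all $i$, then by the preceding paragraph $\inter_j \models b_i[\vec{t}/\vec{x}]$ for every $j$ and every $i$; since each $\inter_j$ satisfies $\phi$, we conclude $\inter_j \models h[\vec{t}/\vec{x}]$ for every $j$ (or derive a contradiction if $h = \bot$), and hence $(\inter, B) \models h$ by the atomic equivalence. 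Thus $\inter \models \phi$, completing the proof. The main obstacle is conceptual rather than computational: ensuring that a single ground term $\vec{t}$ witnesses $\vec{x}^B$ simultaneously in all components $\inter_j$. This is delivered for free by the definition of $U$ as a set of ground-term-indexed tuples, together with the named-individuals hypothesis; without these, the factoring through ground terms would fail and Horn satisfaction would not transfer to the intersection.
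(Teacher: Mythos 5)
Your proof is correct. Note that the paper does not actually prove this lemma: it is stated inside the proof of Proposition~\ref{prop:hp-eh-ground} and justified only by the remark that it ``generalizes the well-known intersection property of Herbrand models of Horn theories,'' plus the observation about ground equalities. Your write-up supplies the missing argument in the standard way: the induction showing $t^I=[t^{I_1},\ldots,t^{I_j},\ldots]$ for ground $t$ (which simultaneously verifies that $f^I$ is well defined and lands in $U$), the reduction of an arbitrary variable assignment to a ground substitution via the definition of $U$, the componentwise characterization of atomic satisfaction including equality, and the transfer of satisfaction of each universal Horn clause. One small correction to your closing remark: the factoring of $\vec{x}^B$ through ground terms is delivered by the construction of $U$ alone, not by the hypothesis that the $\inter_j$ are named; since \fot is a universal Horn theory, each $\inter_j$ satisfies its clauses in particular at the named points, so your argument never genuinely invokes namedness of the $\inter_j$ (that hypothesis matters for how the lemma is deployed in the surrounding proofs, where the intersection must account for all individuals of the $\inter_j$ and for the \PIA{} and \UNA{} axioms). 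This does not affect the correctness of what you wrote.
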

This lemma generalizes the well-known intersection property
of Herbrand models of Horn theories.  Observe that for ground terms $t,t'\in \names$ we
have $\inter \models t = t'$ iff $\inter_j \models t = t'$ for every $j\in J$.

To show that
$\combhpdis\combpar\anyexpequiv_\oga \combehdis\combpar$,
it is by Proposition~\ref{prop:atoms-expansions} sufficient
to show that given any stable expansion $T$  of (1) $\combehdis\combpar$, resp.\ (2)
$\combhpdis\combpar $,
for each $\alpha\in\flang_\ga$ it holds that
$\combhpdis\combpar \anymodels_T \alpha$ iff $\combehdis\combpar \anymodels_T
\alpha$. We leave out the parameters \combpar~in the remainder.

The \emph{if}
direction always holds, as every model \ainter of \combhpdis~is also a model of $\combehdis$:
both theories
contain $\fot$ and the same \UNA axioms; by the \PIA axioms we have that ($*$) for any ground atom $ \alpha $ occurring
in \lp, $\inter\models\alpha $ iff $ \alpha \in T $ and, consequently, for every rule $ r $ (which must be
ground, as \lp is ground),
$\inter\anymodels_T \embhpdis(r)$ iff
$\inter \anymodels_T \embehdis(r)$.
Thus, it remains to consider the \emph{only-if} direction.

Case 1. Let $T$ be a stable expansion of $\combehdis$.  Suppose
$\combhpdis\anymodels_T \alpha$, but $\combehdis\nmodels_T \alpha$. Let $\{
\alpha_j \mid j\in J\} = \{ \alpha' \in \flang_\ga \mid \combehdis\nmodels_T
\alpha'\}$ and let for each $\alpha_j$, $\inter_j$ be an interpretation such
that $\inter_j\anymodels_T \combehdis \cup \{\lnot\alpha_j\}$; such a $\inter_j$
must exist and, without loss of generality, $\inter_j$ has only named individuals (as $\combehdis$ is
a universal theory). Note that $\alpha = \alpha_j$ for some index
$j\in J$.

Let $\inter^*$ be the ``intersection'' of all $\inter_j$, constructed
as in Lemma~\ref{lem:intersection}.
We first establish $\inter^* \anymodels_T \combehdis$: by the lemma,
$\inter^* \anymodels_T \fot$; $\inter^*$ does not satisfy equalities that are not
satisfied by the $\inter_j$, satisfaction of the \UNA axioms follows; finally,
$\langle\inter^*,T\rangle$ does not satisfy antecedents of any $\embehdis(r)$,
for $ r \in \lp $,
that are not satisfied by some $ \langle \inter_j, T \rangle $, and so
$ \inter^*\anymodels_T \embehdis(r)$.

We now establish $\inter^* \anymodels_T \combhpdis$.
As $T$ is a stable expansion of \combehdis, every
$\alpha'\in\flang_\ga$ such that $\alpha'\in T$
is satisfied in $\inter^*$.
The $\PIA$ axioms ($\alpha' \limpl
\mop\alpha'$) are satisfied: if $\alpha'\notin T$ (i.e., $\mop\alpha'$ is false),
then $\alpha'$ is false in $\inter^*$, as for each $\alpha'\notin T$, there is
a $\inter_j $ such that $ \inter_j \not\models \alpha' $.
Satisfaction of the rule embeddings $ \embehdis(r)$  follows by the same argument as
in the \emph{if} direction.
Therefore, $\inter^*\anymodels_T
\combhpdis$. However, as $\alpha=\alpha_j$ for some $j$, $\inter^*\not \models\alpha$.
This implies $\combhpdis\nmodels_T \alpha$, a contradiction.

Case 2. Let $T$ be a stable expansion of $\combhpdis$.
Suppose again that $\combhpdis\anymodels_T \alpha$, but $\combehdis\not\anymodels_T
\alpha$. Hence there is some model $\langle \inter',T\rangle$
of $\combehdis \cup \{\lnot \alpha\}$ such that $\inter'$ is named. As
$\inter'\nmodels_T\combhpdis$ must hold, the difference of $\combhpdis$ and $\combehdis$
(which is only in rule embeddings) together with groundness of \lp
implies that $\alpha$ must occur in some rule head of \lp.
Let
$\langle \inter'',T\rangle$ be a model of $\combhpdis$ such that $\inter''$
is named. Such a model must exist: if $\combhpdis$ would have no such model, then
$\flang_\ga\subseteq T$ would hold as $T$ is a stable expansion of
\combhpdis. As \lp is ground, we can replace all non-negated modal atoms in $\PIA$
axioms and in rule embeddings $\embehdis(r)$ with $\top$. Modulo simplifications,
we then obtain $\combhpdis$. But then $\inter' \anymodels_T \combhpdis$, which is a contradiction.

Let $\inter$ be the ``intersection'' of $\inter'$ and $\inter''$ as in
Lemma \ref{lem:intersection}. We claim that $\inter\anymodels_T \combhpdis
$: $\langle \inter,T\rangle$ satisfies \fot, the $\UNA$, and the \PIA axioms
by  arguments analogous to Case~1. Satisfaction of the rule embeddings
$\embhpdis(r)$ follows from Property~($*$).

\medskip

\noindent
For establishing
$\combhp\combpar = \combeh\combpar$ for $\lp\in\ngroundclass$, the argument
is similar but slightly diverging, as there are no
$\PIA$ axioms. Even if Property~($*$) does not
hold, in (2) some $\inter''$ exists such that $\langle \inter'',T\rangle$
satisfies $\bigwedge_i b_i\land \mop h_i$, resp.\ $h_{k'}\land \mop
h_{k'}$, in an embedded rule $\embeh(r)$ iff
$\inter''\models \bigwedge_i b_i$, resp.\ $\inter''\models h_{k'}$, as
$EH$ is, modulo modal atoms $\mop\alpha$---which can be eliminated from
it, see below---, an FO Horn theory, and so
Lemma~\ref{lem:intersection} applies.

This proves the result for $\fot\in \hornclass$. To generalize it to $\fot\in
\ghornclass$, we exploit standard skolemization~\cite{Fitting-FirsOrdeLogiAuto:96}, by which we obtain from $\fot$ an
equi-satisfiable FO Horn theory $\fot'$.  We show that skolemization of
$\fot$ commutes with its combination with the ground program \lp.

In detail, as already shown,
$\combehdis(\fot',\lp)\anyexpequiv_{\oga}\combhpdis(\fot',\lp)$
and  $\combeh(\fot',\lp)\anyexpequiv_{\oga}\combhp(\fot',\lp)$
hold. Thus it remains to show for each $\chi\in \{ \mbox{}_{HP}^\lor, \mbox{}_{EH}^\lor, \mbox{}_{HP}, \mbox{}_{EH}\}$
that for every stable expansion $T$ of
$\combop\chi\combpar\subseteq \fmodl$, there exists some stable
expansion $ T'$ of $\combop\chi(\fot',\lp)\subseteq \fmodl'$  such that
$T\cap\flang_{ga}=T'\cap\flang_{ga} $, and vice versa.

We transform $\combop\chi\combpar$ to an FO theory $J_\chi(T)$
as follows: we replace each modal atom
$\mop\alpha$ in  $\combop\chi\combpar$ with $\top$ if $\alpha\in T$, and with
$\bot$ if $\alpha\notin T$. Intuitively, in $J_\chi(T)$ we fix the value of
$\mop\alpha$ according to $T$. Since $\lp$ is ground, $J_\chi(T)$ is
then indeed an FO theory.
Clearly, for every  $\alpha\in\flang_{ga}$, $\combop\chi\combpar\anymodels_T \alpha$
iff $J_\chi(T)\models\alpha $.

We construct $J'_\chi(T) $ from $\combop\chi(\fot',\lp)$ in the same way, and
obtain that for every $\alpha\in\flang_{ga}$, $\combop\chi(\fot',\lp)\anymodels_T
\alpha $ iff $J'_\chi(T)\models\alpha $. Now observe that $J'_\chi(T)$ is a
skolemization of $J_\chi(T)$, and thus equi-satisfiable; the same
holds for $ J_\chi(T) \cup \{\lnot \alpha \}$ and $
J'_\chi(T) \cup \{\lnot \alpha \}$, for arbitrary
$\alpha\in\flang_{ga}$, and thus
$ J_\chi(T) \models \alpha $ iff $ J'_\chi(T) \models \alpha $;
equivalently,
$$\combop\chi(\fot,\lp)\anymodels_{T_\oga} \alpha\text{ iff
}\combop\chi(\fot',\lp)\anymodels_{T_\oga}\alpha\text{.} $$
From Proposition \ref{prop:atoms-expansions}, we
can conclude that $ \combop\chi(\fot',\lp)$ has
some stable expansion
$T' $ such that $ T_{\oga}=T'\cap \lang_{ga}$.
(Note that $ T' $ might contain objective ground atoms that are
not in $ T_{\oga}$, due to the Skolem functions.) The converse
is obtained analogously.
\end{proof}

\begin{proof}[of Proposition \ref{prop:hpdis-ebdis-prop}]
Follows from the proof of Proposition
\ref{prop:corr-ground-formulas} and the fact that there is a $\PIA$
axiom for
each propositional symbol that occurs in any formula in
$\embhpdis(\lp)$ or $\embebdis(\lp)$.
\end{proof}

\begin{proof}[of Proposition \ref{prop:eb-eh-prop-merged}]
To show the first part, let $T$ be a stable expansion of \combeb\combpar. We show that for every $\phi \in \lang_{g}$,
$\combeb\combpar\anymodels_T \phi$ iff $\combeh\combpar\anymodels_T \phi$.
\smallskip

\noindent
($\Rightarrow$) Trivial, as every model of \combeh\combpar is a model of
\combeb\combpar (see also Proposition~\ref{prop:models-norm}).

\smallskip

\noindent
($\Leftarrow$) Let $\phi \in \lang_{g}$. Suppose $\combeh\combpar\anymodels_T\phi$, but
$\combeb\combpar\nmodels_T\phi$, and so there is a \interdef such that
$\inter\anymodels_T\combeb\combpar$ and $\inter\nmodels_T\phi$.
Without loss of generality, we may assume that \inter is named
by straightforward adaptation of Lemma~\ref{lem:named-model}; note
that individuals do not affect satisfaction of propositional formulas,
and unnamed individuals to not affect satisfaction of ground formulas.
By analogous argument to the proof of the \emph{if} part of (\ddag) in
Proposition~\ref{prop:corr-safe-formulas}, we can conclude
$\inter\anymodels_T\combeh\combpar$, a contradiction.
\smallskip

\noindent
The case of $T$ being a stable expansion of \combeh\combpar is analogous.
The proof of the second part is obtained from the proof of Proposition \ref{prop:corr-safe-formulas}
by replacing $ \emb(\lp) $ with $\comb(\fot,\lp) $ and the observation that
the domain of interpretation $ \domain $ has no bearing on the satisfaction
of propositional formulas.
\end{proof}

\begin{proof}[of Proposition \ref{prop:counterexamples}]\mbox{} \\
\noindent(\ref{c1}):\quad Consider the program $\lp=\{
r  \leftarrow p;
r  \leftarrow q;
s  \leftarrow s\} $
and the theory $ \fot=\{p \lor q, \lnot s \limpl t \} $.
The combinations $ \combop\chi\combpar $, with $ \chi \in
\{\mbox{}_{HP},\mbox{}_{EB},\mbox{}^\lor_{HP},\mbox{}^\lor_{EH} \} $, all have
one stable expansion $ T \chi $. Now $ T_{HP} $ contains $ r $ while $
T_{EB} $ does not, due to the modal operators in the antecedents in $ \embeb(\lp)$.
We have that $ s $ is not included in any $ T \chi $.  So, by the $\PIA$ axioms,
$
\lnot s $ and, consequently, $ t $ are included in $ T^\lor_{HP} $.  Neither is
included in $ T_{HP} $ or $ T^\lor_{EH} $.

\smallskip

\noindent(\ref{c2}):\quad Consider $ \lp=\{q \leftarrow p \} $.
The stable expansions of both $\combhp(\emptyset, \lp)$ and
$\combhpdis(\emptyset, \lp)$ contain $ p \limpl q $, whereas
the expansions of $\combeb(\emptyset, \lp)$ and
$\combehdis(\emptyset, \lp)$ do not; see also Example
\ref{ex:contrapositive}. In addition, $ \lnot p $ and $ \lnot q
$ are in the expansion of $\combhpdis(\emptyset, \lp)$, due to
the $\PIA$ axioms, but not in the expansions of
$\combhp(\emptyset, \lp)$ or $\combehdis(\emptyset, \lp)$.

\smallskip

\noindent(\ref{c3}):\quad Consider $ \lp=\{p(a);q(x)\leftarrow p(x)\}$. The single
expansion of $ \embhpdis(\lp) $ contains $\forall x \st p(x)\limpl q(x)$, whereas the expansion of
$ \embebdis(\lp) $ does not.

\smallskip

\noindent(\ref{c5}):\quad Consider $ \lp=\{ q'\leftarrow r(x), \dnot p(x);\,
s\leftarrow q;$ $s\leftarrow q';$ $r(a)\} $ and the theory $ \fot=\{p(a), p(b)\limpl q,r(b)\} $.  We have
\begin{align*}
\combhp\combpar = \{ & p(a), r(b), r(a), \\
                 & p(b) \limpl q,\; q  \limpl s,\; q' \limpl s, \\
                 & \forall x.r(x)\land\lnot \mop p(x) \limpl q'  \}.
\end{align*}
A given model \ainter of $\embhp(\lp)$ either satisfies $ a=b $
or $ a \neq b $---$\embhp(\lp)$ does not include UNA axioms
involving $b$. In the former case, $ p(b)$, and thus $ q $ and
$ s $, must be satisfied. In the latter case, assuming that $
p(a) \in T $ and $p(b) \notin T $, $ \mop p(x)$ is not
satisfied in case $ x $ is assigned to the individual denoted
by $ b $. Consequently, $ q' $ and $ s $ are satisfied.  One
can then verify that the single consistent stable expansion $ T
$ of \combhp\combpar is such that $ T_\oga=\{r(a), r(b), p(a),
s \} $.

The combination \combeb\combpar is like \combhp\combpar, except that
the axioms for $s$ are:
\begin{align*}
                 & \mop q \land q  \limpl s,\;  \mop q' \land q' \limpl s.
\end{align*}
The single stable expansion of \combeb\combpar does not contain $ s $, as neither
$ q $ nor $ q' $ can be derived. So, $q,q' \not \in T$, and thus $ s $ cannot
be derived. This disproves $\combhp\combpar\anyexpequiv_{\oga}\combeb\combpar$.

Consider the axiom corresponding to the first rule in \lp in
\combeh\combpar:
\begin{align*}
                 & \forall x \st\mop r(x) \land r(x)\land \lnot \mop p(x)  \limpl \mop q' \land q'.
\end{align*}
Let $ T' $ be a stable expansion of \combeh\combpar such that $ q',p(b) \notin T' $;
clearly, $ r(b), p(a) \in T' $.
Consequently, $\mop r(x) \land r(x)\land \lnot \mop p(x)$
is not satisfied in any model $ \langle \inter, T' \rangle $
for any variable assignment $ B $.
Consider $ B $ such that $ x^B=b^\inter $. Clearly, $ (\inter, B) \anymodels_{T'} \mop r(x)
\land r(x) $, and thus it must be the case that $ (\inter, B) \anymodels_{T'} \mop p(x) $.
Since $ p(b) \notin T' $, it must be the case that $\inter\models a=b$.  But then,
$\combeh\combpar\anymodels_{T'} a=b $, and so $ p(b) \in T' $, a contradiction.
As the expansions of \combeb\combpar and \combhp\combpar contain neither $ q' $ nor
$ p(b) $, this disproves $\combhp\combpar\anyexpequiv_{\oga}\combeh\combpar$ and
$\combeb\combpar\anyexpequiv_{\oga}\combeh\combpar$.

The combinations \combhpdis\combpar and \combebdis\combpar both have a stable expansion that contains $ q' $ and not $ q $ or $p(b)$: since $ q $ is not in the expansion,
$ \lnot q $, and also $ \lnot p(b)$, must be satisfied in every model, by
the $\PIA$ axioms.  Consequently, every model must satisfy $ a \neq b$ and thus also
$ q' $ and $ s $, which are hence included in the expansion. This
disproves $\combop\chi\combpar\anyexpequiv_{\oga}\combop\xi\combpar$  for
$ (\chi, \xi)$ = $(_{HP}, \mbox{}^\lor_{HP})$, $(_{HP}, \mbox{}^\lor_{EB})$,
$(_{EB}, \mbox{}^\lor_{EB})$, and
$(_{EB}, \mbox{}^\lor_{HP})$.

Consider the following modification of $ (\fot, \lp) $: in \fot replace $ q $ with $ q(b) $ and in \lp replace $ q $ and $ q' $ with $ q(x)$ and $ q'(x) $.
Observe that the embeddings $ \embebdis(\lp)$ and $ \embhpdis(\lp)$ do not contain $\PIA$ axioms involving $ q(b) $ or $ q'(b)$. By an argument analogous to the case of \combhp\ and \combeb, we obtain that $ s $ is included in the single stable expansion of \combhpdis\combpar, but not in the single expansion of \combebdis\combpar
and that neither expansion contains $q'(b)$ or $p(b)$.
This disproves $\combhpdis\combpar\anyexpequiv_{\oga}\combebdis\combpar$.

One can then argue, analogously to the case of the original combination above,
that there is no stable expansion $ T $ of \combeh\combpar that contains neither
$ q'(b) $ nor $ p(b) $. This disproves $\combeh\combpar\anyexpequiv_{\oga}\combebdis\combpar$
and $\combeh\combpar\anyexpequiv_{\oga}\combhpdis\combpar$.

\smallskip

\noindent(\ref{c8}):\quad Reconsider $\lp = \{p(a);\, p(x);\, q(x)
\leftarrow p(x)\}$ following Proposition \ref{prop:corr-safe-formulas}.
\end{proof}

\bibliography{references}

\begin{received}
Received Month Year; revised Month Year; accepted Month Year
\end{received}
\end{document}